\newcommand{\R}{\mathbb{R}}
\definecolor{ggreen}{cmyk}{0.7,     0,      0.9,      0}
\definecolor{viol}{cmyk}{0.3,1,0,0}
\definecolor{myred}{cmyk}{0.1, 1, 0.5, 0}
\definecolor{bblue}{rgb}{0.2, 0.29996, 0.8 }
\theoremstyle{plain}
\newtheorem{theorem}{Theorem}[section]
\newtheorem{remark}{Remark}
\newtheorem{examples}{Example}
\newtheorem{proposition}{Proposition}[section]
\newtheorem{definition}[theorem]{Definition}
\begin{document}


\title{\bf Spherically symmetric\\ elastic bodies in general relativity}

\author{
	\sc Artur Alho$^{1}$\thanks{Electronic address:{\tt
			artur.alho@tecnico.ulisboa.pt}}\,, Jos\'e Nat\'ario$^{1}$\thanks{Electronic address:{\tt
			jnatar@math.ist.utl.pt}}\,, Paolo Pani$^{2}$\thanks{Electronic
		address:{\tt paolo.pani@uniroma1.it}}\,  and Guilherme Raposo$^{3}$\thanks{Electronic
		address:{\tt graposo@ua.pt}}\\
	$^{1}${\small\em Center for Mathematical Analysis, Geometry and Dynamical Systems,}\\
	{\small\em Instituto Superior T\'ecnico, Universidade de Lisboa,}\\
	{\small\em Av. Rovisco Pais, 1049-001 Lisboa, Portugal.}\\
	$^{2}${\small\em Dipartimento di Fisica, Sapienza Universit\`a di Roma \& INFN Roma1,}\\
	{\small\em Piazzale Aldo Moro 5, 00185, Roma, Italy}\\
	$^{3}${\small\em Centre for Research and Development in Mathematics and Applications (CIDMA),}\\
	{\small\em Campus de Santiago, 3810-183 Aveiro, Portugal}
	}

\maketitle


\begin{abstract}
\noindent
The purpose of this review it to present a renewed perspective of the problem of self-gravitating elastic bodies under spherical symmetry. It is also a companion to the papers [Phys.\ Rev.\ D{\bf 105}, 044025 (2022)], [Phys.\ Rev.\ D{\bf 106}, L041502 (2022)], and [arXiv:2306.16584 [gr-qc]], where we introduced a new definition of spherically symmetric elastic bodies in general relativity, and applied it to investigate the existence and physical viability, including radial stability, of static self-gravitating elastic balls. We focus on elastic materials that generalize fluids with polytropic, linear, and affine equations of state, and discuss the symmetries of the energy density function, including homogeneity and the resulting scale invariance of the TOV equations. By introducing invariant characterizations of physically admissible initial data, we numerically construct mass-radius-compactness diagrams, and conjecture about the maximum compactness of stable physically admissible elastic balls.
\end{abstract}

\newpage

\tableofcontents

\newpage

\section{Introduction}
\label{sec:int}
Self-gravitating matter distributions in general relativity~(GR) satisfy the field equations \cite{1915SPAW.......844E}
\begin{subequations}\label{EE}
	\begin{align}
	& G_{\mu\nu}=8\pi T_{\mu\nu}\qquad (\mu,\nu=0,1,2,3), \label{Einstein}\\
	& \nabla_\mu T^{\mu}_{\nu}=0,\qquad \quad (\nu=0,1,2,3), \label{T}\\
	& g_{\mu\nu}u^{\mu}u^{\nu}=-1, \label{u}
	\end{align}
\end{subequations}
where $G_{\mu\nu}=R_{\mu\nu}-\frac{1}{2}R\,g_{\mu\nu}$ is the \emph{Einstein tensor}, $R_{\mu\nu}$ is the \emph{Ricci tensor}, and $R$ is and the \emph{Ricci scalar} of the spacetime metric $g_{\mu\nu}$. The matter content is described by the \emph{stress-energy tensor} $T_{\mu\nu}$, and $u^{\mu}$ is the \emph{mean 4-velocity field} of the particles making up the body, assumed to be a unit future-directed timelike vector field. Given $u^{\mu}$, the tensor $h_{\mu\nu}=g_{\mu\nu}+u_{\mu} u_{\nu}$, satisfying $h_{\mu\nu}u^{\nu}=0$, is a Riemannian metric on the tangent subspaces orthogonal to $u^\mu$, and $h^{\mu}_{\nu}=g^{\mu\sigma}h_{\sigma\nu}$ is the orthogonal projection operator on these subspaces. Assuming that there are no energy currents (for instance heat flow) in the frame of the particle's mean motion, the stress-energy tensor can be uniquely decomposed as
\begin{equation}\label{DefinitionT}
T_{\mu\nu}=\rho u_\mu u_\nu +\sigma_{\mu\nu}, \qquad \qquad \sigma_{\mu\nu}u^{\nu}=0,
\end{equation}
where  $\rho$ is the matter \emph{energy density}, and $\sigma_{\mu\nu}$ is the symmetric \emph{stress tensor}. The conservation equations~\eqref{T}, which are actually integrability conditions for~\eqref{Einstein}, yield
%
%
\begin{subequations}\label{ConsEq}
	\begin{align}
	&u^\mu\partial_\mu\rho+\rho h^{\mu\nu}\nabla_{(\mu}u_{\nu)}+\sigma^{\mu\nu}\nabla_{(\mu}u_{\nu)}=0, \label{ConsEn}\\
	&\rho u^\mu\nabla_\mu u^{\nu}+\sigma^{\nu}_{\mu}u^\lambda\nabla_\lambda u^{\mu}+h^{\mu}_{\sigma}h^{\nu}_{\lambda}\nabla_\mu \sigma^{\sigma\lambda}=0 \label{ConsMom}.
	\end{align}	
\end{subequations}
The above system is closed by postulating an \emph{equation of state} (EoS) relating the stress tensor $\sigma_{\mu\nu}$ with the energy density $\rho$. 
%
In the field of astrophysics, compact objects are typically idealized as self-gravitating  perfect fluids, where the \emph{isotropic stress (or pressure)} $p_\mathrm{iso}$, defined by the relation $\sigma_{\mu\nu} = p_\mathrm{iso} h_{\mu\nu}$, prevents gravitational collapse. The theory of thermodynamics states that, for \emph{adiabatic} and \emph{isentropic} fluids,
%
%
%
there exists a $C^{3}$ function $\widehat{\rho}:[0,+\infty)\rightarrow\mathbb{R}$, and a $C^2$ function $\widehat{p}_\mathrm{iso}:[0,+\infty)\rightarrow\mathbb{R}$, with $\rho=\widehat{\rho}(n)$ and 
\begin{equation}\label{FluidEoS}
\widehat{p}_\mathrm{iso}(n)=n\frac{d\widehat{\rho}(n)}{dn}-\widehat{\rho}(n),
\end{equation}
where $n$ is the \emph{particle (baryon) number density}. Equation~\eqref{ConsEn} is then equivalent to the particle number conservation law
\begin{equation}\label{nCons}
\nabla_\mu(n u^\mu)=0,
\end{equation}
whereas~\eqref{ConsMom} reduces to the relativistic Euler equation.

Physical compressible fluids should satisfy $\widehat{\rho}(n)>0$ and $\widehat{p}_\mathrm{iso}(n)>0$ for $n>0$. Moreover, the fluid \emph{adiabatic sound speed} $c_\mathrm{s}$, given by
\begin{equation}
c^2_\mathrm{s}\equiv\frac{dp_\mathrm{iso}}{d\rho}=\frac{n\frac{d\widehat{p}_\mathrm{iso}}{dn}}{\widehat{\rho}(n)+\widehat{p}_\mathrm{iso}(n)},
\end{equation}
is assumed to satisfy the \emph{strict hyperbolicity condition} of the system formed by the number density conservation and the relativistic Euler equations, and also the \emph{causality condition}, given by the two inequalities
\begin{equation}\label{velcaus}
0<c_\mathrm{s}\leq 1.
\end{equation}
%
%
%
%
%
This implies that $\widehat{p}_\mathrm{iso}(n)$ is strictly monotonically increasing, and so has an inverse function $n=n(p_\mathrm{iso})$, so that $\rho$ can be written as function of $p_\mathrm{iso}$. 
%
%
%
The two most popular families of relativistic fluids consists of models for which the \emph{adiabatic index}
\begin{equation}\label{AdFluid}
\gamma(n)\equiv\frac{n}{\widehat{p}_\mathrm{iso} }\frac{d\widehat{p}_\mathrm{iso}}{dn}>0
\end{equation}
is constant, and therefore
\begin{equation}
\widehat{p}_\mathrm{iso}(n)
=\mathrm{K} n^{\gamma},\qquad \mathrm{K}>0.
\end{equation}
Plugging~\eqref{AdFluid} into~\eqref{FluidEoS} leads to the following two examples of EoS, see e.g.  Appendix~B in Tooper~\cite{Top1965}:
%
%
\begin{examples}[Polytropic equation of state~\cite{Top1965}]\label{Ex1}
	The relativistic polytropic EoS consists of the 2-parameter family  given by
    \begin{equation} \label{eq:relpoly}
\rho = \mathrm{C} p^{\frac{1}{\gamma}}_\mathrm{iso}+\frac{p_\mathrm{iso}}{\gamma-1}=\mathrm{C} \mathrm{K}^{\frac{1}{\gamma}}n+\frac{\mathrm{K}}{\gamma-1}n^{\gamma}\qquad (\gamma\neq 1),
\end{equation}
	where $\mathrm{C}>0$ is a constant. The squared sound speed is
	\begin{equation}
c^2_\mathrm{s}=\frac{\gamma-1}{1 + \mathrm{C}(1-\frac{1}{\gamma})\mathrm{K}^{\frac{1}{\gamma}-1}n^{1-\gamma}}\qquad (\gamma\neq 1).
	\end{equation}
 The adiabatic index $\gamma$ is related to the polytropic index $\mathrm{n}$ (not to be confused with the particle number density $n$) by $\gamma=1+\frac{1}{\mathrm{n}}$, i.e. for $\gamma\in(0,+\infty)\setminus \{1\}$, the polytropic index $\mathrm{n}$ ranges over two disconnected intervals: $\mathrm{n}\in(-\infty,-1)$ for $\gamma\in(0,1)$, and $\mathrm{n}\in(0,+\infty)$ for $\gamma\in (1,+\infty)$. 
\end{examples}
\begin{remark}\label{Rem1}
    The case $\gamma=1$ corresponds to the limiting isothermal case $\mathrm{n}\rightarrow+\infty$, with
    \begin{equation}
    \rho= \ln{\left(p_\mathrm{iso}/\mathrm{C}\right)}p_\mathrm{iso}=\ln{\left(\mathrm{K} n/\mathrm{C}\right)}\mathrm{K}n  \qquad (\gamma=1)
    \end{equation}
    and sound speed
    \begin{equation}
  c^2_\mathrm{s}=   \frac{1}{1+\ln{(\mathrm{K}n/\mathrm{C})}} \qquad (\gamma=1).   
    \end{equation}
    The limiting case $\gamma\rightarrow +\infty$ corresponds to the incompressible limit $\mathrm{n}\rightarrow 0$, with constant energy density and infinite sound speed,
    \begin{equation}
  c^2_\mathrm{s}= +\infty \qquad (\gamma\rightarrow +\infty).   
    \end{equation}
	Since the energy per particle $\widehat{e}(n)=\widehat{\rho}(n)/n$ has a finite limit as $n\rightarrow0$ only for $\gamma>1$, and the sound speed satisfies the strict hyperbolicity and the causality conditions for all $n$ if and only if $1<\gamma\leq2$ $(\mathrm{n}\geq1)$, it is somewhat  natural to restrict the relativistic polytropes to this parameter range. However, we remark that neutron star cores are often approximated by an EoS with $\mathrm{n}\approx1/2$ $(\gamma\approx 3)$.
\end{remark}
\begin{examples}[Linear equation of state~\cite{Bon64,harrison1965gravitation}]\label{Ex2}
	The linear EoS consists of the 1-parameter family with constant sound speed 
	$c^2_\mathrm{s}=\gamma-1$ given by
	\begin{equation}
	\rho =\frac{p_\mathrm{iso}}{(\gamma-1)}=\frac{\mathrm{K}}{\gamma-1} n^{\gamma}\qquad (\gamma\neq 1),
	\end{equation}
obtained by taking the limit $\mathrm{C}\rightarrow 0$ in the relativistic polytropic EoS.
The strict hyperbolicity and causality conditions on the sound speed imply that $\gamma\in(1,2]$.
\end{examples}
\begin{remark}
	When $\gamma=1$, the linear EoS corresponds to a pressureless or dust fluid with $p_\mathrm{iso}=0$ and $\rho=\varrho$, where $\varrho=\mathfrak{m}n$ is the \emph{(baryon) mass density}, with $\mathfrak{m}$ the rest mass of the dust particles (baryons). In this case $c^2_\mathrm{s}=0$.
\end{remark}
Another popular model is the affine EoS: 
\begin{examples}[Affine equation of state]\label{Ex3}
	The affine EoS consists of the 2-parameter family with constant sound speed $c^2_\mathrm{s}=\gamma-1$ given by
	\begin{equation}
	\rho=\rho_0+\frac{p_\mathrm{iso}}{(\gamma-1)}=\frac{\gamma-1}{\gamma}\rho_0+\frac{\mathrm{K}}{\gamma-1} n^{\gamma} \qquad (\gamma\neq 1),
	\end{equation}
	with $\rho_0$ a positive constant, so that
	\begin{equation}
		\widehat{p}_\mathrm{iso}(n) = -\frac{\gamma-1}{\gamma}\rho_0+\mathrm{K} n^{\gamma}.
	\end{equation}
The strict hyperbolicity and causality conditions on the sound speed imply that $\gamma\in(1,2]$.
\end{examples}
\begin{remark}
	These types of matter models where $\rho$ remains positive when $p_\mathrm{iso}$ vanishes are often called \emph{liquids}. When $\gamma=2$ (that is, when the speed of sound equals the speed of light), the affine EoS corresponds to Christodoulou's hard phase model~\cite{Christodoulou:1995}. This class of equations of state includes the MIT bag model describing quark stars~\cite{Bodmer:1971we,Witten:1984rs,Farhi:1984qu}.
\end{remark}
Static asymptotically flat relativistic perfect fluid solutions are believed to be necessarily spherically symmetric, a conjecture known as the \emph{fluid ball conjecture}. This conjecture has been proved to hold for solutions with positive energy density, $\rho>0$, and nonnegative squared speed of sound, $c^2_\mathrm{s}\geq0$, by Masood-ul-Alam~\cite{Masood2007GReGr}. 

The polytropic fluids models of Example~\ref{Ex1} were introduced by Tooper~\cite{Top1965} in 1965, and were analyzed by Heinzle, R\"{o}hr \& Uggla~\cite{Heinzle_2003}. They are known to admit solutions with a regular center of symmetry and finite radius and mass for $\gamma>6/5$ ($0<\mathrm{n}<5$), see also~\cite{Heinzle_2002,Nambo:2021} and~\cite{Ramming2013SphericallySE,NILSSON2000292,RS91,Pfister_2011} for related power-law type equations of state. Fluids with linear EoS are scale-invariant, as shown by Taub \& Cahil~\cite{TC1971}. These models were analyzed by Collins~\cite{Collins:1985}, who showed that there are no solutions with a regular center and finite radius. On the other hand, affine equations of state, for which $\rho\rightarrow\rho_0>0$ as $p_\mathrm{iso}\rightarrow0$, always have solutions with finite radius, as shown by Rendall \& Schmidt~\cite{RS91} and Nilsson \& Uggla~\cite{NILSSON2000278}. The particular case of Christodoulou's hard phase material was investigated in detail by Fournodavlos \& Schlue~\cite{Fourn:2019}.

Once an EoS is given, spherically symmetric steady states of the Einstein-Euler equations consist of a 1-parameter family of solutions parameterized by the value of the baryon number density at the center of symmetry, $n_c$, or, equivalently, the central density $\rho_c=\widehat{\rho}(n_c)$, or the central pressure $p_c=\widehat{p}_\mathrm{iso}(n_c)$. An important problem concerns the stability of such perfect fluid ball solutions under radial perturbations. In 1965, Harrison et al.~\cite{harrison1965gravitation} conjectured that, in contrast with the Newtonian setting, spherically symmetric perfect fluid balls solutions are radially unstable for high central pressures corresponding to configurations beyond the configuration of maximum mass. A rigorous proof of the onset of stability/instability, and its relation with the turning point principle in the mass-radius $\mathcal{M}(\mathcal{R})$ curve, were recently proved in~\cite{Hadi2021StaIns} and~\cite{Hadi2021TurningPP}. The axisymmetric Einstein equations, linearized around the spherically symmetric steady states, have also been analyzed in~\cite{Kind_1993b}.

In the spherically symmetric, time-dependent case, pressureless fluids are described by explicit solutions known as the Lem\^aitre-Tolman-Bondi solutions~\cite{Lem33,Tol34,Bon47}, which were analyzed in detail by Christodoulou~\cite{Chr84}. A global understanding of the state space of self-similar solutions can be found in the work by Carr et al.~\cite{BJCarr_2001}. Of particular relevance are the solutions containing naked singularities of Ori \& Piran~\cite{PhysRevLett.59.2137,PhysRevD.42.1068}, whose stability has been recently addressed by Guo, Hadzic \& Jang~\cite{GHJ21}. Self-similar solutions for fluids undergoing diffusion have also been analyzed by Alho \& Calogero~\cite{AC17}, generalizing previous results on radiating exteriors~\cite{BOS89,FST96}. A local existence results for the initial boundary value problem was proved by Kind \& Ehlers~\cite{Kind_1993a}, and spherically symmetric shock development has been analyzed by Christodoulou \& Lisibach~\cite{Christodoulou2016}.

A perfect fluid model is the typical approximation to describe the interior of cold stars, because degenerate fermions behave as a weakly interacting gas at relatively small densities. However, nuclear interactions and QCD effects become crucial inside relativistic stars, and one might expect that the perfect fluid idealization will eventually break down, at least to some extent. Thus, solid phases of matter may be relevant for astrophysical compact objects, which is indeed the situation in the crust of a neutron star~\cite{Chamel:2008ca,Suleiman:2021hre}. Furthermore, the behaviour of matter inside a neutron star is still poorly known, especially in the core~\cite{Lattimer:2004pg}. This motivates exploring generalizations of the fluid models.
Exploring new models of self-gravitating objects is also relevant in the context of exotic compact objects that could mimic the properties of black holes~\cite{Cardoso:2019rvt}. Modelling these properties is relevant for tests of the nature of compact objects with gravitational wave~\cite{Maggio:2021ans} and electromagnetic observations. Generalizations of the fluid model would have richer phenomenology than fluid stars and also different from black holes. This includes observable quantities such as the mass-radius diagram and the moment of inertia, but also the multipolar structure, the tidal deformability, and the linear response to external perturbations, which have direct consequences for gravitational wave astronomy, in particular for the coalescence and merger of a binary system of compact objects.

A natural generalization of perfect fluids is to consider elastic materials~\cite{CarterQuintana,Beig:2002pk,Brown:2020pav}. Beside offering a more accurate description of the stellar interior~\cite{Rajagopal_2006,Rajagopal_2006b,Alford_2008}, elasticity might play a crucial role in constructing consistent models of exotic compact objects and black hole mimickers within and beyond GR~\cite{Cardoso:2019rvt,Carballo-Rubio:2018jzw}. Static spherically symmetric elastic configurations have been considered in the past by Park~\cite{Par00}, Karlovini \& Samuelsson~\cite{Karlovini:2002fc}, Frauendiener \& Kabobel~\cite{FK07}, and Andr\'easson \& Calogero~\cite{Andreasson:2014lka}. In addition to these works, elasticity has also been studied perturbatively to model the crust of a neutron  star~\cite{Chamel:2008ca,Suleiman:2021hre}, and linear radial perturbations of relativistic elastic balls have been treated by Karlovini, Samuelsson \& Zarroug in~\cite{Karlovini:2003xi} using a Lagrangian approach. 

Comparing to the perfect fluid case, very little is known when the matter content is described by elastic materials. The main obstacle is that relativistic elasticity is a geometrical Lagrangian field theory, where the configuration of a body is described by a projection map $\bm{\Pi}$ from the spacetime Lorentzian manifold $(\mathcal{S}, \bm{g})$ to a 3-dimensional Riemannian material space $(\mathcal{B}, {\bm \gamma})$, and it is often not clear how the configuration map relates to the matter fields, making it hard to close the system of equations in a natural way. This difficulty has been recently overcome in the spherically symmetric Newtonian setting by Alho \& Calogero~\cite{Alho:2019fup}, where a new Eulerian definition of elastic bodies consisting of balls or shells was introduced, opening the way to prove existence and uniqueness of steady states solutions for several elastic equations of state of power-law type~\cite{Alho:2018mro} (see also~\cite{Cal21} for the time-dependent equations of motion, where self-similar solutions were constructed). 

This paper introduces a new Eulerian definition of spherically symmetric elastic balls in general relativity (already used in the companion short papers~\cite{Alho:2021sli,Alho:2022bki,Alho:2023mfc}). This allows us to:
\begin{itemize}
	\item[i)] Give a new characterization of physically admissible initial data for static spherically symmetric elastic objects (Definition~\ref{IDstatic}), and apply it to investigate the existence and physical viability of ball solutions, including radial stability;
	\item[ii)] Discuss (spherically symmetric) elastic generalizations of the fluid equations of state of Examples~\ref{Ex1},~\ref{Ex2},  and~\ref{Ex3}, including scale invariance. 
	\item[iii)] Conjecture about the maximum compactness of spherically symmetric physically admissible and radially stable elastic static configurations.
\end{itemize}
%
The formalism and general framework introduced in this paper open up the possibility of analyzing other interesting problems, such as the existence and properties of relativistic  spherically symmetric elastic self-similar solutions, or spherically symmetric elastic shock development.

The paper is organized as follows: Section~\ref{sec:ballsmatter} reviews spherically symmetric matter distributions in General Relativity, with emphasis on the perfect fluid example. We introduce the precise definition of ball solutions,  their steady states, and linear radial perturbations thereof. It ends with a discussion of maximum compactness bounds and the Buchdahl limit. In Section~\ref{ElasticBalls} we give a new definition of spherically symmetric relativistic elastic balls, which is used to write the Einstein-elastic equations in spherical symmetry as first order system of partial differential equations. This section also contains the important definitions and properties of the reference state and the isotropic state, as well as of the different wave speeds. These are then used in Section~\ref{SS} to analyze the static case, and to introduce a definition of physically admissible initial data. In Section~\ref{RS} we deduce the linearized spherically symmetric Einstein-elastic equations around the steady state solutions, and in Section~\ref{Newtonian} we discuss the Newtonian limit. 
Section~\ref{Polytropes} discusses elastic polytropes that generalize the polytropic fluids of Example~\ref{Ex1}, including scale invariance in the Newtonian limit. In Section~\ref{LinEOS} we introduce elastic materials with constant wave speeds in the isotropic state, which generalize the linear equations of state of Example~\ref{Ex2}, including relativistic scale invariance, and the affine equations of state of Example~\ref{Ex3}. In Section~\ref{LinearES} we analyze elastic materials with constant longitudinal wave speeds, which also generalize fluids with linear and affine equations of state (the list of the various material models studied in this work, as well as their main properties, is given in Table~\ref{table} below). Appendix~\ref{App:A} contains an overview of relativistic elasticity, including wave speeds, the definitions of the reference state and the isotropic state, and examples of elastic materials, with emphasis given to quasi-Hookean materials. Appendix~\ref{App:B} contains a detailed derivation of the Einstein-elastic equations in spherical symmetry.

\begin{table}[H]
\begin{tabular}{|c|l|c|l|}
\hline
\textbf{Acronym} & \multicolumn{1}{c|}{\textbf{Model}} & \textbf{Reference} & \multicolumn{1}{c|}{\textbf{Features}}  \\ \hline
QP               & Quadratic Polytropic                          & Sec.~\ref{sec:QM}                 
& \begin{tabular}[c]{@{}l@{}}Generalization of polytropic fluid EoS:\\ - pre-stressed reference state, \\
- simplest model \end{tabular}   \\ \hline
NSI              & Newtonian Scale Invariant           & Sec.~\ref{NSI}                 
& \begin{tabular}[c]{@{}l@{}}Generalization of polytropic fluid EoS: \\ - pre-stressed reference state, \\ - scale-invariant in the Newtonian limit, \\ - includes quadratic model with $\mathrm{n}=1$ \end{tabular}  \\ \hline
LIS              & Linear Isotropic State              & Sec.~\ref{LIS}                
& \begin{tabular}[c]{@{}l@{}} Generalization of linear fluid EoS: \\  - pre-stressed reference state \\ - relativistically scale-invariant, \\ - constant wave speeds at the isotropic state\end{tabular} \\ \hline
AIS              & Affine Isotropic State              & Sec.~\ref{AIS}                
& \begin{tabular}[c]{@{}l@{}} Generalization of affine fluid EoS: \\
- stress-free reference state, \\ - constant wave speeds at the isotropic state\end{tabular}  \\ \hline
ACS             & Affine Constant Speed                                & Sec.~\ref{ACS}                 
& \begin{tabular}[c]{@{}l@{}} Generalization of affine fluid EoS: \\
- stress-free reference state, \\ - constant longitudinal wave speeds \end{tabular}            \\ \hline
LCS             & Linear Constant Speed                                & Sec.~\ref{LCS}                
&  \begin{tabular}[c]{@{}l@{}} Generalization of linear fluid EoS: \\ 
- pre-stressed reference state, \\ - constant longitudinal wave speeds \end{tabular}             \\ \hline
\end{tabular}\label{table}
\caption{Summary table of the elastic material models considered in this work.}
\end{table}

\subsubsection*{Notation and conventions:} 
We use geometrized units, where the speed of light $c$ and the Newtonian gravitational constant $G$ are both set to one, $c=G=1$, so that all quantities are measured in some power of the lenght unit $[L]$. The symbol ${}^\prime$ denotes differentiation of a real function of one real variable with respect to its argument. The symbols for the main quantities used in this work, as well as their geometrized dimensions, are the following:
\begin{itemize}
\item $\rho$ -- energy density $[L^{-2}]$;
\item $\varrho$ -- baryonic mass density (related to volume deformations) $[L^{-2}]$;
\item $\varsigma$ -- average baryonic mass (related to shear deformations) $[L^{-2}]$;
\item $\epsilon$ -- relativistic stored energy function $[L^{-2}]$;
\item $w$ -- deformation potential energy density $[L^{-2}]$;
\item $p_{\mathrm{rad}}$ -- radial pressure $[L^{-2}]$;
\item $p_{\mathrm{tan}}$ -- tangential pressure $[L^{-2}]$;
\item $p_{\mathrm{iso}}$ -- isotropic pressure $[L^{-2}]$;
\item $n$ -- particle number density $[L^{-3}]$;
\item $\delta$ -- normalized particle number density (related to volume deformations) $[L^{0}]$;
\item $\eta$ -- normalized average number of particles (related to shear deformations) $[L^{0}]$;
\item $\gamma$ -- adiabatic index $[L^{0}]$;
\item ${\rm n}$ -- polytropic index $[L^{0}]$;
\item ${\rm s}$ -- shear index $[L^{0}]$;
\item $\mathcal{K}$ -- polytropic scale factor $[L^{\frac{2}{\rm n}}]$;
\item $\lambda$ -- first Lam\'e parameter $[L^{-2}]$;
\item $\mu$ -- second Lam\'e parameter (shear modulus) $[L^{-2}]$;
\item $\nu$ -- Poisson ratio $[L^{0}]$;
\item $K$ -- bulk modulus $[L^{-2}]$;
\item $E$ -- Young modulus $[L^{-2}]$;
\item $L$ -- p-wave modulus $[L^{-2}]$;
\item $c_\mathrm{L}$ -- speed of longitudinal waves in the radial direction $[L^{0}]$;
\item $\tilde{c}_\mathrm{L}$ -- speed of longitudinal waves in the tangential direction $[L^{0}]$;
\item $c_\mathrm{T}$ -- speed of transverse waves in the radial direction $[L^{0}]$;
\item $\tilde{c}_\mathrm{T}$ -- speed of transverse waves in the tangential direction, oscillating radially $[L^{0}]$;
\item $\tilde{c}_\mathrm{TT}$ -- speed of transverse waves in the tangential direction, oscillating tangentially $[L^{0}]$;
\item $m$ -- Hawking mass $[L]$;
\item $\phi$ -- gravitational potential $[L^{0}]$;
\item $v$ -- radial velocity $[L^{0}]$;
\item $\mathcal{M}$ -- mass of the matter ball $[L]$;
\item $\mathcal{R}$ -- radius of the matter ball $[L]$;
\item $\mathcal{C}$ -- compactness of the matter ball $[L^{0}]$.
\end{itemize}

\newpage

\section{Self-gravitating relativistic balls of matter}
\label{sec:ballsmatter}
%
%
The Einstein equations in spherical symmetry are discussed in detail in Appendix~\ref{App:B}, both in Schwarzschild coordinates (Eulerian description) and in comoving coordinates (Lagrangian description). 
In spherical symmetry, the group $SO(3)$ acts by isometries on the space-time manifold, and the orbits are either round $2$-spheres or fixed points. Spherically symmetric bodies are then described by the four scalars $(\rho,p_\mathrm{rad},p_\mathrm{tan},v)$, functions of the coordinates on the quotient manifold, where $\rho$ is the \emph{energy density}, $p_\mathrm{rad}$ and $p_\mathrm{tan}$ are the \emph{radial} and \emph{tangential pressures}, and $v$ is the \emph{radial velocity}.

In Schwarzschild coordinates, the \emph{radial area function} $r$  (defined such that the area of the $SO(3)$ orbits is $4\pi r^2$) is taken as a coordinate on the quotient manifold. It is usually supplemented by a time coordinate $t$ defined  on the quotient manifold by requiring its level sets to be orthogonal to the level sets of $r$ in the quotient metric. Notice that $t$ is defined up to a rescaling of the form $t'=f(t)$, for $f$ an invertible smooth function. Hence, the spherically symmetric metric takes the form
\begin{equation}
g_{\mu\nu}dx^{\mu}dx^{\nu}=-e^{2\phi(t,r)}dt^2+\frac{dr^2}{1-\frac{2m(t,r)}{r}}+r^2(d\theta^2+\sin^2{\theta}d\varphi^2),
\end{equation}
where $\phi(t,r)$ is the \emph{relativistic gravitational potential} and $m(t,r)$ is the \emph{Hawking (or Misner-Sharp) mass function}. The matter 4-velocity is given by
\begin{equation}
u^{\mu}\partial_\mu=e^{-\phi}\langle v\rangle\partial_t+v\partial_r,\qquad \langle v\rangle(t,r)\equiv\left(1+\frac{v^2(t,r)}{1-\frac{2m(t,r)}{r}}\right)^{1/2},
\end{equation}
and the Cauchy tensor is given in Schwarzschild coordinates by
\begin{equation}
    \sigma_{\mu\nu}dx^{\mu}dx^{\nu}=p_\mathrm{rad}(t,r)\left(\frac{e^{2\phi}v^2}{1-\frac{2m}{r}}dt^2-\frac{2e^{\phi}\langle v\rangle v}{1-\frac{2m}{r}} dtdr+\frac{\langle v\rangle^2}{1-\frac{2m}{r}}dr^2\right)+p_\mathrm{tan}(t,r)r^2(d\theta^2+\sin^2{\theta}d\varphi^2).
\end{equation}
The Einstein equations~\eqref{EE} yield a single equation for the relativistic gravitational potential,
\begin{equation}\label{RelPot}
\partial_r\phi = \frac{1}{1-\frac{2m}{r}}\left[\frac{m}{r^2}+4\pi r\left(p_\mathrm{rad}+\frac{\rho+p_\mathrm{rad}}{1-\frac{2m}{r}}v^2\right)\right],
\end{equation}
whereas the Hawking mass solves
\begin{equation}\label{HwkMass}
\partial_t m = -4\pi r^2 (\rho+p_\mathrm{rad}) e^{\phi} v \langle v\rangle,\qquad 	\partial_r m = 4\pi r^2 \left(\rho +\frac{\rho+p_\mathrm{rad}}{1-\frac{2m}{r}}v^2\right).
\end{equation}
%
%
Note that there is no equation for $\partial_t \phi$, as the time-dependence of $\phi$ is not fixed at this point (there is still gauge freedom in the choice of the coordinate $t$).
Using the above equations for $\phi$ and $m$, equations~\eqref{ConsEq} for the conservation of energy and momentum yield
\begin{subequations}\label{ConsEM}
	\begin{align}
	&e^{-\phi}\langle v\rangle\partial_t\rho+v\partial_r\rho+(\rho+p_\mathrm{rad})\left(\frac{ve^{-\phi}\langle v\rangle}{1-\frac{2m}{r}+v^2}\partial_t v+\partial_r v\right) = -\frac{2}{r}(\rho+p_\mathrm{tan})v, \label{Cons1}\\
	&\frac{\rho+p_\mathrm{rad}}{1-\frac{2m}{r}+v^2}\left(e^{-\phi}\langle v\rangle\partial_t v+v\partial_r v\right)+\frac{ve^{-\phi}\langle v\rangle}{1-\frac{2m}{r}+v^2}\partial_t p_\mathrm{rad}+\partial_r p_\mathrm{rad}= \nonumber \\ &\qquad\qquad\qquad\qquad\qquad\qquad\qquad=\frac{2}{r} (p_\mathrm{tan}-p_\mathrm{rad})-\frac{\rho+p_\mathrm{rad}}{1-\frac{2m}{r}+v^2}\left(\frac{m}{r^2}+4\pi r p_\mathrm{rad}\right). \label{Cons2}
	\end{align}
\end{subequations}
We are interested in asymptotically flat regular ball solutions, i.e. solutions of~\eqref{RelPot}-\eqref{ConsEM} with a regular center of symmetry and compact support. The precise definition of regular and strongly regular balls of matter is given below.

\begin{definition}[Relativistic Ball of Matter]\label{Def1}
	Given $T>0$ and a function $\mathcal{R}\in C^{1}([0,T],\mathbb{R}^{+})$, define $\Omega=\left\{(t,r):0\leq t\leq T,0\leq r\leq \mathcal{R}(t)\right\}$. A \emph{relativistic self-gravitating ball of matter} with a regular center and boundary $\partial\Omega=\left\{(t,r):0\leq t\leq T, r = \mathcal{R}(t)\right\}$ is given by the set of functions $\phi,m,\rho,p_\mathrm{rad},p_\mathrm{tan},v\in C^{0}(\Omega)\cap C^{1}(\Omega\backslash\{r=0\})$ satisfying~\eqref{RelPot}-\eqref{ConsEM} in $\Omega\backslash \{r=0\}$ such that:
	\begin{itemize}
		\item[(i)] $m(t,0)=0$, and $\lim_{r\rightarrow 0^{+}}\frac{m(t,r)}{r}=0$ for all $t\in[0,T]$;
		\item[(ii)] $\rho(t,r)>0$ for $(t,r)\in\Omega\backslash\partial\Omega$;
		\item[(iii)] $p_\mathrm{tan}(t,0)=p_\mathrm{rad}(t,0)$ for all $t\in[0,T]$;
		\item[(iv)] $p_{\mathrm{rad}}(t,\mathcal{R}(t))=0$ for all $t\in[0,T]$;
		\item[(v)] 	$v(t,0)=0$, and $\displaystyle \frac{e^{\phi}v}{\langle v\rangle} (t,\mathcal{R}(t))=\frac{d\mathcal{R}(t)}{dt}$ for all $t\in[0,T]$.
	\end{itemize}
	Moreover, the ball is said to be:
	\begin{itemize}
	\item
	(radially) \emph{compressed} if $p_{\mathrm{rad}}(t,r)>0$ for all $(t,r)\in \Omega\backslash\partial\Omega$;
	\item
	(radially) \emph{stretched} if $p_{\mathrm{rad}}(t,r)<0$ for all $(t,r)\in \Omega\backslash\partial\Omega$;
	\item
	\emph{strongly regular at the center} if $\phi,m,\rho,p_\mathrm{rad},p_\mathrm{tan},v\in C^{1}(\Omega)$, and, for all $t\in[0,T]$:
 $$\lim_{r\rightarrow 0^{+}}\frac{m(t,r)}{r^2}= \lim_{r\rightarrow 0^{+}}\partial_r m(t,r)= \lim_{r\rightarrow 0^{+}}\partial_r \phi(t,r)=0;$$
        $$\lim_{r\rightarrow 0^{+}}\partial_r\rho(t,r)=\lim_{r\rightarrow 0^{+}}\partial_r p_\mathrm{rad}(t,r) = \lim_{r\rightarrow 0^{+}}\partial_r p_\mathrm{tan}(t,r)=0;$$
        $$ \lim_{r\rightarrow 0^{+}}\partial_r v(t,r)=\lim_{r\rightarrow 0^{+}}\frac{v(t,r)}{r}=\omega(t)\in C^{0}([0,T])$$
        $$\lim_{r\rightarrow 0^{+}}\partial_t \rho(t,r)=-3e^{\phi(t,0)}(\rho(t,0)+p_\mathrm{rad}(t,0))\omega(t);$$
        $$\lim_{r\rightarrow 0^{+}}\partial_t v(t,r)=0; \quad\lim_{r\rightarrow 0^{+}}\partial_t p_\mathrm{tan}(t,r)=\lim_{r\rightarrow 0^{+}}\partial_t p_\mathrm{rad}(t,r).$$
   	\end{itemize}
\end{definition}
Condition $(i)$ guarantees that the metric is regular at the center of symmetry. The physical condition $(ii)$ requires that energy density is positive in the interior of the ball, while $(iii)$ is the isotropic center condition. Condition $(iv)$ is the boundary condition, and $(v)$ implies that the total mass of the ball is conserved, as shown in the following proposition:
\begin{proposition} \label{totalmass}
	The total mass of the ball, $\mathcal{M}=m(t,\mathcal{R}(t))$, is conserved. 
\end{proposition}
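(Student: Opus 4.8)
The plan is to differentiate $\mathcal{M}(t)=m(t,\mathcal{R}(t))$ directly along the moving boundary and to show that the two resulting terms cancel exactly. Since $\mathcal{R}\in C^{1}$ and the boundary $r=\mathcal{R}(t)>0$ lies in the region $\Omega\backslash\{r=0\}$ where $m$ is $C^1$, the chain rule applies and gives
$$\frac{d\mathcal{M}}{dt}=\partial_t m\big(t,\mathcal{R}(t)\big)+\partial_r m\big(t,\mathcal{R}(t)\big)\,\frac{d\mathcal{R}}{dt}.$$

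First I would substitute the two expressions for $\partial_t m$ and $\partial_r m$ from~\eqref{HwkMass}, evaluated at $r=\mathcal{R}(t)$. The decisive simplification comes from the boundary condition $(iv)$ of Definition~\ref{Def1}, namely $p_\mathrm{rad}(t,\mathcal{R}(t))=0$, which eliminates every $p_\mathrm{rad}$ contribution. What remains is
$$\partial_t m\big|_{\mathcal{R}}=-4\pi\mathcal{R}^2\rho\,e^{\phi}v\langle v\rangle,\qquad \partial_r m\big|_{\mathcal{R}}=4\pi\mathcal{R}^2\rho\left(1+\frac{v^2}{1-\frac{2m}{\mathcal{R}}}\right),$$
with all matter quantities evaluated at $(t,\mathcal{R}(t))$.

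Next I would recognize the bracketed factor in $\partial_r m$ as $\langle v\rangle^2$, using the definition $\langle v\rangle^2=1+v^2/(1-\tfrac{2m}{r})$, so that $\partial_r m|_{\mathcal{R}}=4\pi\mathcal{R}^2\rho\,\langle v\rangle^2$. Then, inserting the kinematic boundary condition $(v)$, $\tfrac{d\mathcal{R}}{dt}=e^{\phi}v/\langle v\rangle$ at $r=\mathcal{R}(t)$, the second term becomes $4\pi\mathcal{R}^2\rho\,\langle v\rangle^2\cdot e^{\phi}v/\langle v\rangle=4\pi\mathcal{R}^2\rho\,e^{\phi}v\langle v\rangle$, which is exactly the negative of the first term. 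Hence $\frac{d\mathcal{M}}{dt}=0$ and $\mathcal{M}$ is constant in $t$.

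The computation is essentially algebraic, so I do not anticipate a deep obstacle; the only point requiring care is the justification of the chain rule, which needs $m$ to be differentiable up to the boundary along the curve $r=\mathcal{R}(t)$. Because Definition~\ref{Def1} only guarantees $C^1$ regularity on $\Omega\backslash\{r=0\}$, I would observe that the boundary stays bounded away from the center, so that the interior one-sided derivatives suffice and differentiating $\mathcal{M}(t)$ is legitimate. The conceptual content is that conditions $(iv)$ and $(v)$ are precisely matched—through the relativistic factor $\langle v\rangle$—so that the net flux of Hawking mass across the moving boundary vanishes.
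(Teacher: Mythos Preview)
Your proof is correct and essentially the same as the paper's: both compute the total time derivative of $m$ along the boundary using~\eqref{HwkMass} and conditions $(iv)$ and $(v)$. The only cosmetic difference is that the paper first packages the combination $e^{-\phi}\langle v\rangle\,\partial_t m+v\,\partial_r m$ into the identity $u^{\mu}\partial_\mu m=-4\pi r^2 p_{\mathrm{rad}} v$ (valid everywhere), and then observes that condition $(v)$ makes $\tfrac{d}{dt}$ along the boundary proportional to $u^{\mu}\partial_\mu$, so $(iv)$ immediately gives zero; you instead set $p_{\mathrm{rad}}=0$ first and verify the cancellation explicitly.
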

\begin{proof}
	The proof follows by computing the total time derivative of the mass along the boundary of the ball. First we note that
	\begin{equation}
	u^{\mu}\partial_\mu m(t,r)=e^{-\phi}\langle v\rangle \partial_t m+v\partial_r m=-4\pi r^2 p_\mathrm{rad} v.
	\end{equation}
	The boundary conditions $(iv)$ and $(v)$ then yield $\displaystyle \frac{d}{dt} m(t,\mathcal{R}(t))=0$.
\end{proof}
	Due to Birkhoff's theorem, in the region $r>\mathcal{R}(t)$ outside the matter support, where $\rho=p_\mathrm{rad}=p_\mathrm{tan}=0$ and $v$ is not defined, the metric is the standard Schwarzschild metric \cite{1916AbhKP1916..189S}, with $m(t,r)=\mathcal{M}$ and $\phi(t,r)=\frac{1}{2}\ln{(1-\frac{2\mathcal{M}}{r})}$.
Therefore, the \emph{asymptotic flatness condition} holds: 
\begin{equation}
\lim_{r\rightarrow+\infty}\phi(t,r)=0,\quad\quad\lim_{r\rightarrow+\infty}\frac{m(t,r)}{r}=0.
\end{equation}
Note that the first condition fixes the residual gauge freedom in the choice of $t$.
Under this condition, the relativistic gravitational potential has the explicit form
\begin{equation}\label{integrophi}
\phi(t,r)= - \int^{+\infty}_{r}\frac{1}{1-\frac{2m}{s}}\left[\frac{m}{s^2}+4\pi s\left(p_\mathrm{rad}+\frac{\rho+p_\mathrm{rad}}{1-\frac{2m}{s}}v^2\right)\right]ds,
\end{equation}
where $\lim_{r\to 0^+}\phi(t,r)=\phi_c(t), t\in[0,T]$, is finite for regular balls due to Proposition~\ref{SReg} below.

The system of equations~\eqref{ConsEM} is valid for any anisotropic spherically symmetric matter model. However, there are too many unknowns, namely $\rho$, $p_\mathrm{rad}$, $p_\mathrm{tan}$ and $v$, for the available equations, i.e., the system is undetermined. In order to close the system of equations, one needs to specify an EoS.

If the matter making up the body consists of a \emph{perfect fluid}, then  $p_\mathrm{rad}=p_\mathrm{tan}=p_\mathrm{iso}$, and $p_\mathrm{iso}>0$ for all $r\in[0,\mathcal{R}(t))$, where $p_\mathrm{iso}$ denotes the \emph{isotropic pressure}
\begin{equation}
p_\mathrm{iso}(t,r)=\frac{1}{3}(p_\mathrm{rad}(t,r)+2p_\mathrm{tan}(t,r)).
\end{equation}
In this case, the choice of EoS amounts to expressing the energy  density $\rho$ as a function of the number particle density $n$, $\rho=\widehat{\rho}(n)$, with $p_\mathrm{iso}=\widehat{p}_\mathrm{iso}(n)$ given by equation~\eqref{FluidEoS}. Equations~\eqref{ConsEM} then become a system of two equations for the two unknowns $n$ and $v$. The equation for the conservation of the particle number current~\eqref{nCons}, which in spherical symmetry is written as
%
\begin{align}\label{Consn}
& e^{-\phi}\langle v\rangle \partial_t n+v\partial_r n+\frac{nv} {1-\frac{2m}{r}+v^2}e^{-\phi}\langle v\rangle\partial_t v +n\partial_r v=-\frac{2}{r}nv,
\end{align}
is a consequence of~\eqref{Cons1} in this case, and~\eqref{Cons2} reduces to the spherically symmetric relativistic Euler equation,
\begin{equation}\label{EulerSS}
    \frac{n}{1-\frac{2m}{r}+v^2}\left(e^{-\phi}\langle v\rangle\partial_t v+v\partial_r v\right)+\frac{v c^2_\mathrm{s}}{1-\frac{2m}{r}+v^2}e^{-\phi}\langle v\rangle\partial_t n+c^2_\mathrm{s}(n)\partial_r n=-n\frac{\left(\frac{m}{r^2}+4\pi r p_\mathrm{iso}\right)}{1-\frac{2m}{r}+v^2}.
\end{equation}
For solutions with a regular center, we have
\begin{equation}
    \lim_{r\rightarrow0^+}n(t,r)=n(t,0)=n_\mathrm{c}(t), \quad t\in[0,T].
\end{equation}
Moreover from
\begin{equation}
   \lim_{r\rightarrow 0^+} \frac{\partial\rho}{\partial r}=\frac{(\widehat{\rho}(n_\mathrm{c}(t))+\widehat{p}_\mathrm{iso}(n_\mathrm{c}(t))}{n_c(t)}\lim_{r\rightarrow0^+}\frac{\partial n}{\partial r},
\end{equation}
the strong regularity condition $\lim_{r\rightarrow 0^+}\partial_r\rho(t,r)=0$ is equivalent to
\begin{equation}\label{SRnr}
    \lim_{r\rightarrow0^+}\partial_rn(r,t)=0, \quad t\in[0,T].
\end{equation}
Similarly, for solutions with a strongly regular center it must hold that
\begin{equation}\label{SRnt}
    \lim_{r\rightarrow 0^{+}}\partial_t n(t,r)=-3n_c(t)e^{\phi(t,0)}\omega(t) \quad t\in[0,T].
\end{equation}
\begin{proposition} \label{totalnumber}
The total number $\mathcal{N}$ of particles inside the ball is conserved. 
\end{proposition}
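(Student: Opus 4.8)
The plan is to mirror the proof of Proposition~\ref{totalmass}, replacing the Hawking mass by the conserved particle-number current $n u^\mu$. First I would give the total particle number its correct geometric meaning as the flux of $n u^\mu$ through the truncated spacelike slice $\{t=\mathrm{const},\, 0\le r\le\mathcal{R}(t)\}$. Since the future unit normal to the slice is $\nu_\mu = -e^{\phi}\,\delta^t_\mu$, one has $-n u^\mu \nu_\mu = n\langle v\rangle$, and with induced volume element $\sqrt{h}\,dr\,d\theta\,d\varphi = r^2\sin\theta\,(1-2m/r)^{-1/2}\,dr\,d\theta\,d\varphi$ the angular integration yields
\begin{equation}
\mathcal{N}(t)=\int_0^{\mathcal{R}(t)}\frac{4\pi r^2}{\sqrt{1-\frac{2m}{r}}}\,n\,\langle v\rangle\,dr .
\end{equation}
Equivalently, the integrand is $4\pi$ times $\sqrt{-g}\,n u^t$, with $\sqrt{-g}=e^{\phi}r^2\sin\theta\,(1-2m/r)^{-1/2}$ and $u^t=e^{-\phi}\langle v\rangle$; this observation is what ties the definition to the divergence form of the conservation law.

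Next I would rewrite the particle number conservation law~\eqref{nCons} as the coordinate divergence $\partial_\mu(\sqrt{-g}\,n u^\mu)=0$. Using $u^\mu\partial_\mu=e^{-\phi}\langle v\rangle\,\partial_t+v\,\partial_r$ and integrating over the angular variables, this becomes
\begin{equation}
\partial_t\!\left(\frac{r^2}{\sqrt{1-\frac{2m}{r}}}\,n\langle v\rangle\right)+\partial_r\!\left(e^{\phi}\frac{r^2}{\sqrt{1-\frac{2m}{r}}}\,n v\right)=0 ,
\end{equation}
which is precisely the spherically symmetric form~\eqref{Consn}. I would then differentiate $\mathcal{N}(t)$ in time, invoking the Leibniz rule to account for the moving boundary $r=\mathcal{R}(t)$, and substitute the displayed conservation law to turn the bulk time-derivative integral into boundary flux terms. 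The inner contribution at $r=0$ vanishes because $v(t,0)=0$ (and $r^2\to 0$), leaving only the outer terms:
\begin{equation}
\frac{d\mathcal{N}}{dt}=\frac{4\pi\mathcal{R}^2}{\sqrt{1-\frac{2m}{\mathcal{R}}}}\,n\big(t,\mathcal{R}(t)\big)\left(\langle v\rangle\,\frac{d\mathcal{R}}{dt}-e^{\phi}v\right)\bigg|_{r=\mathcal{R}(t)} .
\end{equation}
Boundary condition~$(v)$, in the form $e^{\phi}v=\langle v\rangle\,d\mathcal{R}/dt$ at $r=\mathcal{R}(t)$, makes the parenthesis vanish identically, so $d\mathcal{N}/dt=0$.

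The main obstacle is bookkeeping rather than any deep difficulty: one must identify $\mathcal{N}$ with the correct flux so that its integrand is exactly $\sqrt{-g}\,n u^t$, handle the moving boundary consistently through Leibniz, and invoke the regularity hypotheses of Definition~\ref{Def1} to justify differentiating under the integral sign and discarding the center contribution. The decisive step is entirely analogous to Proposition~\ref{totalmass}: condition~$(v)$ says the material boundary moves with the fluid, so no particle current crosses it and $\mathcal{N}$ is conserved.
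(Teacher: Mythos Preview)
Your proposal is correct and follows essentially the same approach as the paper: define $\mathcal{N}(t)$ as the flux integral, rewrite~\eqref{nCons} in the divergence form~\eqref{Divn}, apply the Leibniz rule for the moving boundary, and cancel the surface term using condition~$(v)$. The only difference is that you supply more explicit geometric motivation for the definition of $\mathcal{N}$ (which the paper relegates to a footnote) and spell out the vanishing of the center contribution.
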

\begin{proof}
The number of particles inside the ball is given by~\footnote{The total number of particles on a spacelike surface $\Sigma$ is given by $\mathcal{N}=\int_{\Sigma} n u^{\mu} n_{\mu} dV_3$, where $n_{\mu}$ is the unit past-pointing normal vector.}
\begin{equation}
\mathcal{N}(t)=4\pi\int^{\mathcal{R}(t)}_{0}\frac{\langle v\rangle n r^2}{\sqrt{1-\frac{2m}{r}}} dr.
\end{equation}
Since equation~\eqref{Consn} for the conservation of the particle number current can be written as
\begin{equation}\label{Divn}
\partial_t \left(\frac{\langle v\rangle n r^2}{\sqrt{1-\frac{2m}{r}}}\right) + \partial_r \left(\frac{v n e^\phi r^2}{\sqrt{1-\frac{2m}{r}}}\right) = 0,
\end{equation}
we have
\begin{equation}
\mathcal{N}'(t)=4\pi\mathcal{R}'(t)\left(\frac{\langle v\rangle n r^2}{\sqrt{1-\frac{2m}{r}}}\right)(t,\mathcal{R}(t)) - 4\pi\int^{\mathcal{R}(t)}_{0}\partial_r \left(\frac{v n e^\phi r^2}{\sqrt{1-\frac{2m}{r}}}\right) dr = 0,
\end{equation}
in view of the boundary condition $(v)$ in Definition~\ref{Def1}.
\end{proof}
For \emph{elastic} materials, there is the possibility of equilibrium configurations that are stretched (for instance, in~\cite{Costa:2018oim} Costa \& Nat\'ario allow for stretched regions in the Christodoulou hard phase material).
Moreover, there is no \emph{a priori} reason to specify the sign of $p_\mathrm{tan}$, as it might change along the body (although, as we will see, reality of the squared velocity of traverse waves might impose additional physical restrictions on $p_\mathrm{tan}$, see Definition~\ref{Speeds}). Moreover, the matching condition at the boundary of the star to an exterior Schwarzschild is solely $p_\mathrm{rad}(t,\mathcal{R}(t))=0$.
Thus, in general, $\rho$ and $p_\mathrm{tan}$ are discontinuous at the boundary, in contrast with standard fluids such as polytropes, where $\rho(t,\mathcal{R}(t))=0$ (there are, however, fluids -- such as those with an affine EoS -- that do have positive energy density on the boundary). 

In the static case, a theorem by Rendall \& Schimdt~\cite{RS91} provides the conditions for  uniqueness and $C^1$ regularity of relativistic spherically symmetric perfect fluid balls for a given baryon number density at the center $n_\mathrm{c}$. Later, in~\cite{BR93}, this analysis was generalized to anisotropic matter models (however, $p_\mathrm{tan}$ was assumed a priori to be $C^1$ up to the center of symmetry). Such arguments were applied to elastic material models in~\cite{Par00,FK07}. 

\begin{proposition}\label{SReg}
	For regular ball solutions, the following conditions hold:
	\begin{equation}
	\partial_r m (t,0)=0;\qquad \lim_{r\rightarrow 0^{+}}\frac{m(t,r)}{r^2}=0;\qquad \partial_r\phi(t,0)=0.
	\end{equation}
\end{proposition}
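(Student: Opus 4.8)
The plan is to read off all three limits directly from the Einstein constraint equations~\eqref{HwkMass} and~\eqref{RelPot}, using only the continuity of the matter fields on $\Omega$ together with the two center conditions from Definition~\ref{Def1}: $\lim_{r\to0^+}m/r=0$ (condition (i)) and $v(t,0)=0$ (condition (v)). Positivity of $\rho$, the isotropy condition (iii), and the boundary condition (iv) play no role here.

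First I would analyze the source of the radial mass equation $\partial_r m = 4\pi r^2\big(\rho+\frac{\rho+p_\mathrm{rad}}{1-2m/r}v^2\big)$. Since $m/r\to0$, the factor $1-2m/r$ tends to $1$ and stays bounded away from $0$ near the center, while $\rho$ and $p_\mathrm{rad}$ are bounded by continuity and $v^2\to v(t,0)^2=0$. Hence the bracket is bounded in a neighborhood of $r=0$ (in fact it tends to $\rho(t,0)$), so $\partial_r m = 4\pi r^2\cdot O(1)\to0$, which is the first claim.

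Next, to obtain $m/r^2\to0$ I would integrate: because the right-hand side of the mass equation extends continuously to $r=0$, $\partial_s m$ is bounded by some constant $M$ near the center, and $m(t,0)=0$ gives $m(t,r)=\int_0^r\partial_s m(t,s)\,ds$, whence $|m(t,r)|\le\frac{4\pi M}{3}r^3$ and $m/r^2\to0$. The third claim then follows by inspecting~\eqref{RelPot}: as $r\to0^+$ the prefactor $(1-2m/r)^{-1}\to1$, the term $m/r^2$ vanishes by the previous step, and $4\pi r\big(p_\mathrm{rad}+\frac{\rho+p_\mathrm{rad}}{1-2m/r}v^2\big)\to0$ because the bracket is bounded (again by continuity and $v\to0$), so $\partial_r\phi\to0$.

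The one genuine point to watch—more bookkeeping than obstacle—is regularity: since regular (as opposed to strongly regular) balls are only assumed $C^1$ away from $r=0$, the quantities $\partial_r m(t,0)$ and $\partial_r\phi(t,0)$ must be interpreted as one-sided limits $\lim_{r\to0^+}$, and the fundamental theorem of calculus invoked for $m$ on $[0,r]$ requires justification. Both are settled by the same observation: once $m/r\to0$ and $v(t,0)=0$ are inserted, the right-hand sides of~\eqref{HwkMass}–\eqref{RelPot} extend continuously to the center, so the one-sided derivatives exist and the integral representation of $m$ is valid.
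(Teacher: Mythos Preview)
Your argument is correct and follows essentially the same route as the paper: read the three limits off the constraint equations~\eqref{HwkMass} and~\eqref{RelPot} using continuity of the matter fields and the center conditions (i) and (v). The only cosmetic differences are that the paper obtains $\partial_r m(t,0)=0$ directly from condition~(i) as the one-sided difference quotient $\lim_{r\to0^+}m/r$, and then gets $m/r^2\to0$ via L'H\^opital rather than by integrating the mass equation; your explicit handling of the regularity bookkeeping is a welcome addition that the paper's terse proof omits.
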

\begin{proof}
	From condition $(i)$ in Definition~\ref{Def1} it follows that $\partial_r m(t,0)=0$. Equation~\eqref{HwkMass} and L'H\^opital's rule then imply $\lim_{r\rightarrow 0^{+}}\frac{m(t,r)}{r^2}=0$. Using this fact in~\eqref{RelPot} yields $\partial_r\phi(t,0)=0$.
\end{proof}
In view of the above proposition, given a regular solution to the spherically symmetric Einstein equations with anisotropic matter models, the issue in proving that regular ball solutions are strongly regular at the center consists essentially in showing that the \emph{anisotropic pressure} $q(t,r)$, defined by
\begin{equation}
q(t,r)=p_\mathrm{tan}(t,r)-p_\mathrm{rad}(t,r),
\end{equation}
satisfies~\cite{1974ApJ...188..657B}
\begin{equation*}
\lim_{r\rightarrow 0^+}\frac{q(t,r)}{r}=0\quad t\in[0,T].
\end{equation*}
Therefore perfect fluids balls are are strongly regular at the center if 
\begin{equation}
\frac{d\widehat{p}_\mathrm{iso}}{dn}(n_\mathrm{c}(t))\neq0, \quad t\in[0,T],
\end{equation}
In particular, since $\rho+p_\mathrm{iso}>0$ for all $n>0$, the strict hyperbolicity condition $c^2_\mathrm{s}(n)>0$ implies the last condition.
In~\cite{Alho:2019fup} it was shown that a similar result holds for homogeneous and isotropic elastic materials in the static Newtonian setting. Such methods are easily applicable to the relativistic setting (see Section~\ref{SS}).

\subsection{TOV equation and radial stability of static configurations}
Assuming staticity, $\rho$, $p_\mathrm{rad}$ and $p_\mathrm{tan}$ are time-independent functions of the areal coordinate $r$ only, and the radial velocity vanishes, $v=0$. In this case, the system of equations~\eqref{ConsEM} reduces to a nonlinear integro-differential equation known as the Tolman-Oppenheimer-Volkoff (TOV) equation \cite{1939PhRv...55..374O}:
\begin{equation}\label{TOVeq}
\frac{dp_\mathrm{rad}}{dr} =\frac{2}{r}(p_\mathrm{tan}-p_\mathrm{rad})-\frac{\rho+p_\mathrm{rad}}{1-\frac{2m}{ r}}\left(\frac{m}{r^2}+4\pi r p_{\mathrm{rad}}\right),
\end{equation}
%
where the Hawking mass $m(r)$ and the relativistic gravitational potential $\phi(r)$ are explicitly given by
\begin{equation}\label{TOVmassdefdelta}
m(r)=4\pi\int^{r}_{0}\rho(s)s^2ds^2, \qquad \phi(r)= -\int^{+\infty}_r\frac{1}{1-\frac{2m}{s}}\left(\frac{m}{s^2}+4\pi s p_\mathrm{rad}\right)ds.
\end{equation}
Linear radial perturbations around the steady state solutions $(\mathring{\rho},\mathring{p}_\mathrm{rad},\mathring{p}_\mathrm{tan},\mathring{v}=0)$ are obtained by making the ansatz
$\rho(t,r)=\mathring{\rho}(r)+\rho_\mathrm{E}(t,r)$, $p_\mathrm{rad}(t,r)=\mathring{p}_\mathrm{rad}(r)+(p_\mathrm{rad})_\mathrm{E}(t,r)$, $p_\mathrm{tan}(t,r)=\mathring{p}_\mathrm{tan}(r)+(p_\mathrm{tan})_\mathrm{E}(t,r)$, $v(t,r)=v_\mathrm{E}(t,r)$, $\phi(t,r)=\mathring{\phi}(r)+\phi_\mathrm{E}(t,r)$, $m(t,r)=\mathring{m}(r)+m_\mathrm{E}(t,r)$, where the subscript $\mathrm{E}$ stands for ``Eulerian pertubations" and $\,\mathring{}\,$ denotes background quantities. In what follows, and in order to simplify the notation, we will drop the latter, as it is clear from the context when it would be present (the coefficients of the perturbation equations contain background quantities only).  The linearized system is given by 
\begin{equation}\label{LinPot}
\partial_r \phi_\mathrm{E}=\frac{1}{1-\frac{2m}{r}}\left[4\pi r(p_\mathrm{rad})_\mathrm{E}+\left(\frac{1}{r}+2\frac{d\phi}{dr}\right)\frac{m_\mathrm{E}}{r}\right] ,
\end{equation}
\begin{equation}\label{Linmass}
\partial_t m_\mathrm{E}=-4\pi r^2(\rho+p_\mathrm{rad})e^{\phi}v_\mathrm{E},\qquad \partial_rm_\mathrm{E}=4\pi r^2 \rho_\mathrm{E},
\end{equation}
\begin{subequations}
	\begin{align}
	&e^{-\phi}\partial_t\rho_\mathrm{E}+(\rho+p_\mathrm{rad})\partial_r v_\mathrm{E} =-\left(\frac{d\rho}{dr}+\frac{2}{r}(\rho+p_\mathrm{tan})\right)v_\mathrm{E}, \label{linearized1}\\
	&\frac{e^{-\phi}(\rho+p_\mathrm{rad})}{1-\frac{2m}{r}}\partial_t v_\mathrm{E}+\partial_r (p_\mathrm{rad})_\mathrm{E} = \frac{2}{r}\left((p_\mathrm{tan})_\mathrm{E}-(p_\mathrm{rad})_\mathrm{E}\right)-\frac{\rho+p_\mathrm{rad}}{1-\frac{2m}{r}}\left(\frac{m_\mathrm{E}}{r^2}+4\pi r (p_\mathrm{rad})_\mathrm{E}\right) \nonumber\\
	&\qquad\qquad\qquad\qquad\qquad\qquad\qquad -\frac{\frac{m}{r^2}+4\pi r p_\mathrm {rad}}{1-\frac{2m}{r}}\left(\rho_\mathrm{E}+(p_\mathrm{rad})_\mathrm{E}+\frac{\rho+p_\mathrm{rad}}{1-\frac{2m}{r}}\frac{2m_\mathrm{E}}{r}\right).\label{linearized2}
	\end{align}
\end{subequations}
We wish to solve this system of linear partial differential equations subject to the initial conditions
\begin{equation}(\phi_\mathrm{E}, m_\mathrm{E}, \rho_\mathrm{E}, (p_\mathrm{rad})_\mathrm{E}, (p_\mathrm{tan})_\mathrm{E}, v_\mathrm{E})(0,r)=(0,0,0,0,0,h(r)), \qquad 0<r<\mathcal{R},
\end{equation}
where $h(r)$ is an arbitrary function, and appropriate boundary conditions. The standard procedure is to introduce the \emph{radial displacement field} in the perturbed configuration, 
\begin{equation}
r\rightarrow r+\xi(t,r),
\end{equation}
satisfying
\begin{equation}
\xi(0,r)=0 ,\qquad 0<r<\mathcal{R}.
\end{equation}
We assume that $\xi\in C^{1}(\Omega)\cap C^{2}(\Omega\backslash\{r=0\})$ if the ball of matter is regular, and $\xi\in C^{2}(\Omega)$ if it is strongly regular. This is compatible with the perturbation equations below, given the extra regularity of $\rho$ and $\phi$ resulting from the TOV equations~\eqref{TOVeq} and \eqref{TOVmassdefdelta}.

The linear perturbed radial velocity is given in terms of $\xi$ by
\begin{equation}\label{LinVel}
v_\mathrm{E}(t,r) = e^{-\phi(r)}\partial_t \xi(t,r)
\end{equation}
so that 
\begin{equation}
    \partial_t\xi(0,r)=e^{\phi(r)}h(r),\qquad 0<r<\mathcal{R}.
\end{equation}
The Eulerian perturbation of the mass function in terms of the displacement field is obtained using the first equation in~\eqref{Linmass} and integrating:
\begin{equation}\label{LinMass}
m_\mathrm{E}(t,r)= -4\pi r^2 (\rho+p_\mathrm{rad}) \xi(t,r).
\end{equation}
Using~\eqref{LinVel} in equation~\eqref{linearized1}, and again integrating in $t$, yields
\begin{equation}\label{LinRho}
\rho_\mathrm{E}(t,r) = - (\rho+p_\mathrm{rad})\partial_r\xi(t,r)-\left[\frac{2}{r}(p_\mathrm{tan}-p_\mathrm{rad})+(\rho+p_\mathrm{rad})\left(\frac{2}{r}-\frac{d\phi}{dr}\right)+\frac{d\rho}{dr}\right]\xi(t,r).
\end{equation}
At this point, and in order to close the system of equations, one needs to postulate an EoS relating the pressures with the energy density. Once this is done, the linearized equation~\eqref{linearized2} results in a wave equation for the displacement field $\xi$.

By the characterization of regular ball solutions given in Definition~\ref{Def1}, the linearized equations should be solved subject to the regular center conditions
\begin{equation}\label{BoundCond1}
\lim_{r\rightarrow 0^+}v_\mathrm{E}(t,r)=0,\qquad \lim_{r\rightarrow 0^+}\left(\frac{m}{r}\right)_\mathrm{E}(t,r)=0 ,
\end{equation}
i.e., that the perturbed radial velocity vanishes at the center, and that perturbed metric is regular at the center of symmetry. This 
%
%
yields the conditions
\begin{equation}
\lim_{r\rightarrow 0^{+}} \partial_t\xi(t,r)=0, \qquad \lim_{r\rightarrow 0^{+}} r\xi(t,r)=0.
\end{equation}
Moreover, by Proposition~\ref{SReg}, the above implies
\begin{equation}\label{BoundCond2}
\lim_{r\rightarrow 0^+}\left(\frac{m}{r^2}\right)_\mathrm{E}(t,r)=0 ,
\end{equation}
and therefore
\begin{equation}\label{xiCenterCond}
\lim_{r\rightarrow 0^{+}} \xi(t,r)=0.
\end{equation}
In addition, ball solutions with a strongly regular center satisfy
\begin{equation}
    \lim_{r\rightarrow0^+}\frac{v_\mathrm{E}(t,r)}{r}=\lim_{r\rightarrow0^+}\partial_r v_\mathrm{E}(t,r)=\omega_\mathrm{E}(t),
\end{equation}
which implies that
\begin{equation}\label{sRg}
    \lim_{r\rightarrow 0^+}\frac{\xi(t,r)}{r}=\lim_{r\rightarrow 0^+}\partial_r \xi(t,r)=g(t)\in C^1([0,T]).
\end{equation}
At the surface of the ball, the Lagrangian perturbation of the radial pressure and of the mass function should vanish,~\footnote{See e.g. Karlovini et al~\cite{Karlovini:2003xi} for a discussion on the junction conditions and fast and slow phase transitions.} 
\begin{equation}\label{BoundCond}
(p_\mathrm{rad})_\mathrm{L}(t,\mathcal{R}(t)) = 0,\qquad m_\mathrm{L}(t,\mathcal{R}(t))=0
\end{equation}
(we use a subscript $\mathrm{L}$ to denote Lagrangian perturbations), stating that the particles on the boundary of the ball remain so in the perturbed configuration, and that the mass should be the same as in the background solution, so that the exterior Schwarzschild solution has the same total mass $\mathcal{M}$. To go from the Eulerian to the Lagrangian gauge in Schwarzschild coordinates, we use the change of gauge formula 
\begin{equation}
\Box_{\mathrm{L}}=\Box_{\mathrm{E}}+\xi(t,r)\frac{\partial}{\partial r}
\end{equation}
acting on scalars, see e.g.~\cite{Karlovini:2003xi}.

For background solutions with a perfect fluid, $\widehat{p}_\mathrm{iso}(n)$ and $\widehat{\rho}(n)$ are related by equation~\eqref{FluidEoS}, and the TOV equation~\eqref{TOVeq} results in an ordinary differential equation for the particle number density $n(r)$, which together with the equation for the mass $m(r)$ form a closed first order system describing static spherically symmetric perfect fluid objects. Due to the result in~\cite{RS91}, for each value of the central baryon number density $n_\mathrm{c}$ there exists a unique strongly regular solution of the TOV equation. The solutions with compact support in this 1-parameter family have a well defined total mass $\mathcal{M}(n_\mathrm{c})$ and radius $\mathcal{R}(n_\mathrm{c})$, and so determine a $C^1$ curve in the $\mathcal{M}-\mathcal{R}$ plane, the so-called $\mathcal{M}(\mathcal{R})$ curve.

In this case, relation~\eqref{FluidEoS} yields
\begin{equation}\label{LinEoS}
\rho_\mathrm{E}= \frac{d\widehat{\rho}}{dn} n_\mathrm{E}=(\widehat{p}_\mathrm{iso}+\widehat{\rho})\frac{n_\mathrm{E}}{n}, \qquad (p_\mathrm{iso})_\mathrm{E}=\frac{d\widehat{p}_\mathrm{iso}}{dn} n_\mathrm{E}=(\widehat{p}_\mathrm{iso}+\widehat{\rho}) c^2_\mathrm{s}\frac{n_\mathrm{E}}{n},
\end{equation}
and the linearization of the particle current conservation equation~\eqref{Consn} gives
\begin{equation} \label{LinCons}
e^{-\phi}\partial_t n_\mathrm{E}+n\partial_r v_\mathrm{E}=-\left(\frac{dn}{dr}+\frac{2}{r}n\right) v_\mathrm{E}.
\end{equation}
Now, using equation~\eqref{LinCons} together with~\eqref{LinVel}, we obtain
\begin{equation}
n_\mathrm{E}(t,r)=-n\partial_r\xi(t,r)-n\left[\frac{2}{r}-\frac{d\phi}{dr}+\frac{1}{n}\frac{dn}{dr}\right]\xi(t,r).
\end{equation}
The linearization of the relativistic Euler equation~\eqref{EulerSS} then results in a linear wave equation for the displacement field:
	\begin{align}\label{WEdisp}
	&\frac{e^{-2\phi}}{1-\frac{2m}{r}}\partial^2_t\xi(t,r) = c^2_\mathrm{s}\partial^2_r \xi(t,r) + \left[\frac{d}{dr}\left(c^2_\mathrm{s}-\phi\right)-c^2_\mathrm{s}\left(\frac{d\phi}{dr}-\frac{2}{r}-\frac{4\pi r(\rho+p_\mathrm{iso})}{1-\frac{2m}{r}}\right)\right]\partial_r\xi(t,r) \nonumber\\
 &+\left[2\left(\frac{d\phi}{dr}\right)^2-\frac{d}{dr}\left(c^2_\mathrm{s}\left(\frac{d\phi}{dr}-\frac{2}{r}\right)\right)+\frac{2m}{r^3\left(1-\frac{2m}{r}\right)}-\frac{4\pi r(\rho+p_\mathrm{iso})}{1-\frac{2m}{r}}c^2_\mathrm{s}\left(\frac{d\phi}{dr}-\frac{2}{r}\right)\right]\xi(t,r).
	\end{align}
%
%
%
Note that for solutions with a strongly regular center, the condition $\lim_{r\rightarrow 0^+} \partial_t v_{\rm E}(t,r)=0$ implies
\begin{equation}
\lim_{r\rightarrow 0^+} \partial^2_t \xi(t,r)=0.
\end{equation}
This implies that the right-hand side of~\eqref{WEdisp} must vanish in the limit $r\rightarrow 0^+$; a simple application of L'H\^opital's rule shows that the unbounded terms in this right-hand side cancel the term $c^2_\mathrm{s}\partial^2_r \xi(t,r)$ in this limit.

%
%
For a perfect fluid the boundary conditions at the surface of the star~\eqref{BoundCond} give
\begin{subequations}
\begin{align}
    (p_\mathrm{iso})_\mathrm{L}(t,\mathcal{R}(t)) &= -\frac{e^{\phi}}{\mathcal{R}^2}n(\mathcal{R})\frac{dp_\mathrm{iso}}{dn}(\mathcal{R})\partial_r(r^2e^{-\phi}\xi)(t,\mathcal{R}(t))=0, \label{LagBoundCond} \\ 
m_\mathrm{L}(t,\mathcal{R}(t)) &= -4\pi \mathcal{R}^2 p_\mathrm{iso}(\mathcal{R}) \xi(t,\mathcal{R}(t))=0.
\end{align}
\end{subequations}
The last condition is automatically satisfied, since $p_\mathrm{iso}(\mathcal{R})=0$. Equation~\eqref{LagBoundCond} suggests that one introduce the \emph{renormalized radial displacement field}
\begin{equation}
\zeta(t,r)=r^2e^{-\phi(r)}\xi(t,r),
\end{equation}
so that the wave equation~\eqref{WEdisp}, in the case of a perfect fluid, can be written in the form
\begin{equation}
W_{\mathrm{pf}}(r)\partial^2_t\zeta(t,r)-\partial_r\left(P_{\mathrm{pf}}(r)\partial_r\zeta(t,r)\right)-Q_{\mathrm{pf}}(r)\zeta(t,r)=0,
\end{equation}
where
\begin{align}
& r^2W_{\mathrm{pf}}(r)=\frac{(\widehat{\rho}+\widehat{p}_\mathrm{iso})e^{\phi}}{\left(1-\frac{2m}{r}\right)^{3/2}},\qquad r^2P_{\mathrm{pf}}(r)=\frac{(\widehat{\rho}+\widehat{p}_\mathrm{iso})c^2_\mathrm{s}e^{3\phi}}{\left(1-\frac{2m}{r}\right)^{1/2}}, \\
& r^2Q_{\mathrm{pf}}(r)=\frac{(\widehat{\rho}+\widehat{p}_\mathrm{iso})e^{3\phi}}{\left(1-\frac{2m}{r}\right)^{1/2}}\left[\left(\frac{d\phi}{dr}\right)^2+\frac{4}{r}\frac{d\phi}{dr}-\frac{8\pi \widehat{p}_\mathrm{iso}}{1-\frac{2m}{r}}\right].
\end{align}
The standard procedure for studying linear stability consists in looking for harmonic solutions, i.e., making the ansatz
\begin{equation}
\zeta(t,r)=\zeta_0(r)e^{i\omega t}, \qquad \zeta_0(r)=\zeta(0,r),
\end{equation}
which leads to a self-adjoint boundary value problem involving the squared oscillating frequency $\omega^2$:
\begin{equation}
\frac{d}{dr}\left(P_\mathrm{pf}\frac{d\zeta_0}{dr}\right)+\left[Q_\mathrm{pf}+\omega^2 W_\mathrm{pf}\right]\zeta_0=0,
\end{equation}
\begin{equation}
\lim_{r\rightarrow 0^{+}}\frac{\zeta_0}{r^2}=0,\qquad \lim_{r\rightarrow \mathcal{R}}P_\mathrm{pf}\frac{d\zeta_0}{dr}=0.
\end{equation}
Introducing $\chi_0(r)=
P_\mathrm{pf}d\zeta_0/dr$ leads to a first order system for the variables $(\zeta_0,\chi_0)$. The above equation is in standard Sturm-Liouville form. If $W_\mathrm{pf}>0$, $P_\mathrm{pf}>0$, and $r^2P_\mathrm{pf}$, $r^2\frac{dP_\mathrm{pf}}{dr}$, $r^2Q_\mathrm{pf}$, $r^2W_\mathrm{pf}$  are regular functions in $r\in[0,{\cal R}]$, then standard results from \emph{regular} Sturm-Liouville theory follow. 
The eigenvalues $\omega^{2}$ are real and non-degenerate, and form an ordered infinite sequence $\{\omega^2_j\}$,  $\omega^2_0<\omega^2_1<\omega^2_2<...$, where $j$ is the number of distinct radial nodes of the corresponding eigenfunction in the interval $(0,\mathcal{R})$. The eigenfunctions (also called normal modes) corresponding to different eigenvalues are orthogonal with respect to the weight function $W$. The eigenfunction of the fundamental mode (or ground state), corresponding to the eigenvalue $\omega^2_0$, has no nodes, so that $\zeta_0$ has the same sign in $(0,\mathcal{R})$. If $\omega^2_0>0$ then all $\omega_j$ are real, all solutions are purely oscillatory, and the steady sate is said to be \emph{linearly stable}. If $\omega_0^2<0$ then the corresponding solution grows exponentially, and the steady state is said to be \emph{linearly unstable}.

As first realized in~\cite{harrison1965gravitation}, 
the stability of the modes of a perfect fluid ball solution changes (i.e. $\omega^2_j=0$ for some $j$) only at extrema of the curve $\mathcal{M}(\mathcal{R})$, where $d\mathcal{M}/d\mathcal{R}=0$; the mode becoming unstable at such extrema is even $(j=0,2,...)$ if $d\mathcal{R}/dn_\mathrm{c}<0$, and it is odd $(j=1,3,...)$ if $d\mathcal{R}/dn_\mathrm{c}>0$. Typically, one looks at the stability of the fundamental mode, characterized by $\omega_0$, which vanishes at some central density $n_\mathrm{c}$ yielding the configuration of maximum mass; the higher modes then become unstable for higher central densities still. These results on the spectral stability/instability and the turning point principle, as well as the resulting spiral structure of the $\mathcal{M}(\mathcal{R})$ curve, have been recently proved in~\cite{Hadi2021StaIns,Hadi2021TurningPP}.

\subsection{Bounds on compactness}
{\it What is the maximum compactness that a self-gravitating material object can support within GR?}
This fundamental question was addressed by Buchdahl in 1959, in the context of perfect fluid models~\cite{Buchdahl:1959zz}. He showed that self-gravitating, spherically symmetric, perfect fluid GR solutions satisfy the following bound on the compactness: 
\begin{equation}
\mathcal{C}\equiv \frac{\mathcal{M}}{\mathcal{R}}\leq\frac{4}{9}\approx 0.444.
\end{equation}
On the other hand, the horizon radius of a Schwarzschild black hole is $\mathcal{R}=2\mathcal{M}$, i.e. $\mathcal{C}=1/2$. Thus, as an important consequence, Buchdahl's theorem forbids the existence within GR of fluid objects whose compactness is arbitrarily close to the black hole limit, providing an important cornerstone for tests of the nature of compact objects~\cite{Cardoso:2019rvt}.

However, the bound in Buchdhal's theorem is achieved by an incompressible fluid \cite{1916skpa.conf..424S}, with infinite speed of sound, and therefore breaks causality. Moreover, in GR any physically viable matter model should also satisfy the energy conditions:
\begin{definition}[Energy conditions]\label{EC} 
	In spherical symmetry, the standard energy conditions are given by~\cite{HawkingEllis}: 
	\begin{subequations}
		\begin{align}
		\mathrm{SEC}&:\quad \rho+3p_\mathrm{iso}\geq 0,\quad \rho+p_\mathrm{rad}\geq 0, \quad \rho+p_\mathrm{tan}\geq 0; \\
		\mathrm{WEC}&:\quad \rho\geq 0,\quad \rho+p_\mathrm{rad}\geq 0, \quad \rho+p_\mathrm{tan}\geq 0; \\
		\mathrm{NEC}&:\quad \rho+p_\mathrm{rad}\geq 0, \quad \rho+p_\mathrm{tan}\geq 0; \\
		\mathrm{DEC}&:\quad \rho\geq |p_\mathrm{rad}|, \quad \rho\geq |p_\mathrm{tan}|,
		\end{align}
	\end{subequations}
	for the strong, weak, null, and dominant energy conditions, respectively.
\end{definition}
The incompressible fluid of Buchdahl's bound also violates the dominant energy condition.  
For perfect fluids, it has been recently conjectured that the so-called \emph{causal} Buchdahl bound holds (see~\cite{Urbano:2018nrs,Boskovic:2021nfs}):
\begin{equation}
\mathcal{C}_{\rm PA}^{\rm fluid}\lesssim 0.364\,,
\end{equation}
for physically admissible~(PA) configurations, that is, configurations whose speed of sound does not exceed the speed of light. A further reasonable physical condition is to require stability of the equilibrium configurations. In the fluid case, this constraint is more stringent than causality and imposes 
\begin{equation}
\mathcal{C}_{\rm PAS}^{\rm fluid}\lesssim 0.354
\end{equation}
for physically admissible, radially stable~(PAS) configurations~\cite{Lindblom:1984,Lattimer:2006xb,Urbano:2018nrs,Boskovic:2021nfs}. Both these bounds are attained by the affine fluid model of Example~\ref{Ex3} with $\gamma=2$, i.e.~$c^2_\mathrm{s}=1$.

Generalized versions of Buchdhal's bound, allowing for larger compactness, can be obtained for anisotropic materials (see~\cite{Guven:1999wm, Andreasson:2007ck, Karageorgis:2007cy, Urbano:2018nrs}). Indeed, compact objects made of strongly anisotropic materials (e.g., gravastars~\cite{Mazur:2004fk} and anisotropic stars~\cite{Raposo:2018rjn}) can have higher compactness and a continuous black hole limit $\frac{\mathcal{M}}{\mathcal{R}} \to \frac{1}{2}$~\cite{Pani:2015tga,Uchikata:2015yma,Uchikata:2016qku,Beltracchi:2021lez}.
However, the viability of such ultracompact models is questionable, since they either violate some of the energy conditions~\ref{EC}, or feature superluminal speeds of sound or ad-hoc thin shells within the fluid (see~\cite{Cardoso:2019rvt} for a discussion). On the other hand, viable models like boson stars (which also feature anisotropies in the matter fields) are not significantly more compact than an ordinary perfect fluid neutron star in the static case~\cite{Liebling:2012fv,Boskovic:2021nfs}.

For matter fields satisfying the DEC and with $p_{\rm rad}, p_{\rm tan}\geq0$, the maximum compactness is $\mathcal{C}_{\rm DEC}\lesssim 0.4815$~\cite{Karageorgis:2007cy}.
%
As we shall see, for elastic balls under radial compression, and imposing causality, we obtain solutions that have a maximum compactness consistent with, but not saturating, this upper bound. This happens for the ACS model in Table~\ref{table}. Further imposing radial stability will force physically admissible solutions to have an even smaller maximum compactness.

%
%

\newpage

\section{Relativistic elastic balls}\label{ElasticBalls}
The Einstein equations are closed by postulating an EoS relating the pressures and the density. In the case of a perfect fluid, the theory of thermodynamics states that this relation is given by~\eqref{FluidEoS} . In Appendix~\ref{App:B} we show that for spherically symmetric \emph{homogeneous} and \emph{isotropic} elastic materials similar relations hold, which we summarize in the following definition:
\begin{definition}[Relativistic Elastic Balls]\label{DefElaBall} 
	A self-gravitating  relativistic ball characterized by the functions  $\rho$, $p_\mathrm{rad}$, $p_\mathrm{tan}$ and $v$ is said to be composed of homogeneous and isotropic elastic matter if there exists a $C^3$ function $\widehat{\rho}:[0,+\infty)^2\rightarrow\mathbb{R}$, and $C^2$ functions $\widehat{p}_\mathrm{rad}:[0,+\infty)^2\rightarrow\mathbb{R}$ and $\widehat{p}_\mathrm{tan}:[0,+\infty)^2\rightarrow\mathbb{R}$, such that
\begin{subequations}
	\begin{align}
& \rho(t,r)=\widehat{\rho}(\delta(t,r),\eta(t,r)),\\ 
& p_\mathrm{rad}(t,r)=\widehat{p}_\mathrm{rad}(\delta(t,r),\eta(t,r)),\\ 
& p_\mathrm{tan}(t,r)=\widehat{p}_\mathrm{tan}(\delta(t,r),\eta(t,r)),
\end{align}
\end{subequations}
where
	\begin{equation}\label{DefDeltaEta}
	\delta(t,r)=\frac{n(t,r)}{n_0},\qquad \eta(t,r)=\frac{3}{r^3}\int^r_0 \left(1+\frac{v^2(t,s)}{1-\frac{2m(t,s)}{s}}\right)^{1/2}\frac{\delta(t,s)}{\left(1-\frac{2m(t,s)}{s}\right)^{1/2}}s^2 ds 
	\end{equation}
(with $n_0$ a positive constant representing the number density in a reference state) and
\begin{subequations}\label{DefP}\label{EoS}
	\begin{align}
	\widehat{p}_{\mathrm{rad}}(\delta,\eta)&
	=\delta\partial_\delta\widehat{\rho}(\delta,\eta)-\widehat{\rho}(\delta,\eta),\label{DefPrad} \\
	\widehat{p}_{\mathrm{tan}}(\delta,\eta)&
	= \widehat{p}_{\mathrm{rad}}(\delta,\eta)+\frac{3}{2}\eta \partial_{\eta}\widehat{\rho}(\delta,\eta)\label{DefPtan}.
	\end{align}
\end{subequations}
\end{definition}
\begin{remark}
The constitutive functions $\widehat{p}_\mathrm{rad}(\delta,\eta)$, and $\widehat{p}_\mathrm{tan}(\delta,\eta)$ are independent of the constant $n_0$ while $\widehat{\rho}(\delta,\eta)$ can depend on $n_0$ or not depending if the material is relativistic or ultra-relativistic (see Definition~\ref{defrelstoredenergy} below).
\end{remark}
\begin{remark}
    In the special case of a perfect fluid, the energy density depends only on $\delta$, i.e., $\rho(t,r)=\widehat{\rho}(\delta(t,r))$, and so $\partial_\eta\widehat{\rho}=0$.
\end{remark}
\begin{remark}
	The above definition of $\eta(t,r)$ assumes a flat material metric, and is only valid for ball solutions (see  Appendix~\ref{App:B} for more details). If the material metric is non-flat then in general it is not possible to obtain an explicit formula for $\eta$ in Schwarzschild coordinates. Moreover, for shell solutions the definition of $\eta(t,r)$ contains one more parameter, corresponding to the inner radius of the shell in material space. As a final remark, note that $\eta$ can also be written as
 \begin{equation}
 \eta(t,r)=\frac{\bar{N}(t,r)}{n_0},
 \end{equation}
 where $\bar{N}$ is the average number of particles in a ball of radius $r>0$,
 \begin{equation}
\bar{N}=\frac{3N(t,r)}{4\pi r^3}.
 \end{equation}
\end{remark}
\begin{remark}
	From the above definitions we have the important relation
	\begin{equation}\label{Maxwell}
	3\eta\partial_\eta\widehat{p}_\mathrm{rad}(\delta,\eta)=2\left(\delta\partial_\delta\widehat{q}(\delta,\eta)-\widehat{q}(\delta,\eta)\right),
	\end{equation}
	where
	\begin{equation}
	\widehat{q}(\delta,\eta) = \widehat{p}_\mathrm{tan}(\delta,\eta) - \widehat{p}_\mathrm{rad}(\delta,\eta).
	\end{equation}
	is the \emph{anisotropic pressure}.
\end{remark}
\begin{definition}[Stored energy function, energy per particle]
    The \emph{relativistic stored energy function} is the $C^2$ function given by $\epsilon(t,r)=\widehat{\epsilon}(\delta(t,r),\eta(t,r))$, where $\widehat{\epsilon}(\delta,\eta):(0,\infty)^2\rightarrow\mathbb{R}$ is related to the energy density by
    \begin{equation}
    \widehat{\rho}(\delta,\eta)=\delta \widehat{\epsilon}(\delta,\eta).
    \end{equation}
    The \emph{energy per particle} is then defined as
    $$
    e = \frac{\epsilon}{n_0}.
    $$ 
    \end{definition}

\begin{remark}    
    The radial and tangential pressures are obtained from the relativistic stored energy function as
    \begin{subequations}
    \begin{align}
    \widehat{p}_\mathrm{rad}(\delta,\eta) &= \delta^2\partial_{\delta}\widehat{\epsilon}(\delta,\eta),\\
    \widehat{p}_\mathrm{tan}(\delta,\eta) &= \widehat{p}_{\mathrm{rad}}(\delta,\eta)+\frac{3}{2}\delta\eta \partial_{\eta}\widehat{\epsilon}(\delta,\eta).
    \end{align}
    \end{subequations}
\end{remark}    
\begin{definition}[Isotropic state]\label{IsoState}
	The isotropic state corresponds to a state where $\delta=\eta$. The radial and tangential pressures satisfy the isotropic state conditions
 \begin{equation}\label{ISOCond}
\widehat{p}_\mathrm{rad}(\delta,\delta)=\widehat{p}_\mathrm{tan}(\delta,\delta)
 \end{equation}
 and
\begin{subequations}\label{ISOCond2}
 \begin{align}
& 2\partial_\delta\widehat{p}_\mathrm{rad}(\delta,\delta)+3\partial_\eta\widehat{p}_\mathrm{rad}(\delta,\delta)=2\partial_\delta\widehat{p}_\mathrm{tan}(\delta,\delta), \\
& \partial_\delta\widehat{p}_\mathrm{tan}(\delta,\delta)+3\partial_\eta\widehat{p}_\mathrm{tan}(\delta,\delta)=\partial_\delta\widehat{p}_\mathrm{rad}(\delta,\delta).
 \end{align}
\end{subequations}
\end{definition}
\begin{remark}\label{Isotropicrelations}From the formulas above, it is easily seen that in the isotropic state the energy density, the isotropic pressure and the anisotropic pressure satisfy
	\begin{subequations}
		\begin{align}
		&\delta\partial_\delta\widehat{\rho}(\delta,\delta) =\widehat{\rho}(\delta,\delta)+\widehat{p}_\mathrm{iso}(\delta,\delta),\qquad \qquad \qquad	\partial_\eta\widehat{\rho}(\delta,\delta)=0 ,\label{73a}\\
		&\partial_\delta\widehat{p}_\mathrm{iso}(\delta,\delta)=\partial_\delta\widehat{p}_\mathrm{rad}(\delta,\delta)+\partial_\eta\widehat{p}_\mathrm{rad}(\delta,\delta),\qquad \partial_\eta \widehat{p}_\mathrm{iso}(\delta,\delta)=0, \\
			&\widehat{q}(\delta,\delta)=0, \qquad \qquad \qquad \qquad \qquad \qquad \qquad \partial_\delta\widehat{q}(\delta,\delta)+\partial_\eta\widehat{q}(\delta,\delta)=0.		\label{ISOq}	
		\end{align}
	\end{subequations}	
	Other identities can be obtained by differentiation. The second equation in~\eqref{73a}, for example, yields
	\begin{equation}
	\partial^2_{\delta\eta}\widehat{\rho}(\delta,\delta)+\partial^2_\eta\widehat{\rho}(\delta,\delta)=0 .
	\end{equation}
\end{remark}

\begin{definition}[Reference state]\label{RefState}
		The reference state is an idealized state with $n\equiv n_0$,  $m\equiv 0$ (corresponding to a flat material metric, see Appendix~\ref{sec:materialmetric}), and $v\equiv0$, that is, $({{\delta}},\eta)=(1,1)$. The function $\widehat{\epsilon}(\delta,\eta)$ is assumed to satisfy the \emph{reference state condition} $\widehat{\epsilon}(1,1)=\rho_0$, where $\rho_0>0$ is the constant rest frame energy density, so that 
		\begin{equation}
		\widehat{\rho}(1,1)=\rho_0.
		\end{equation}
		The radial and tangential pressures satisfy the reference state condition
		\begin{equation}\label{PRS}
		\widehat{p}_{\mathrm{rad}}(1,1)=	\widehat{p}_{\mathrm{tan}}(1,1)=p_0.
		\end{equation}
   If $p_0=0$ then the reference state is said to satisfy the \emph{stress-free reference state condition}, while if $p_0\neq0$ it is said to satisfy the \emph{pre-stressed reference state condition}. Moreover, compatibility with linear elasticity requires
		\begin{subequations}\label{DevPreference}
			\begin{align}
			&\partial_\delta\widehat{p}_{\mathrm{rad}}(1,1)=\lambda+2\mu ,\qquad \partial_\eta \widehat{p}_{\mathrm{rad}}(1,1) =-\frac{4}{3}\mu, \\
			&\partial_\delta\widehat{p}_{\mathrm{tan}}(1,1)=\lambda ,\qquad\qquad \partial_\eta \widehat{p}_{\mathrm{tan}}(1,1) =\frac{2}{3}\mu,
			\end{align}
		\end{subequations}
		where $\lambda$ and $\mu$ are constants which can be identified with the \emph{first and second  Lam\'e parameters} of linear elasticity in the stress-free case.
\end{definition}
\begin{remark}
In the reference state, the energy density, the isotropic and the anisotropic pressures satisfy
\begin{subequations}
	\begin{align}
	&\widehat{\rho}(1,1)=\rho_0,\qquad\partial_\delta\widehat{\rho}(1,1)=\rho_0 +p_0,\qquad\quad \partial_\eta\widehat{\rho}(1,1)=0, \\
	&\widehat{p}_\mathrm{iso}(1,1)=p_0,\quad\partial_\delta\widehat{p}_\mathrm{iso}(1,1)=\lambda+\frac{2}{3}\mu,\qquad \partial_\eta \widehat{p}_\mathrm{iso}(1,1)=0, \\
	&\widehat{q}(1,1)=0,\qquad\partial_\delta\widehat{q}(1,1)=-2\mu, \qquad\qquad  \partial_\eta\widehat{q}(1,1)=2\mu.
	\end{align}
\end{subequations}
\end{remark}
\begin{remark}\label{RefStaGauge}
	When the model is stress-free, the reference state $(\delta,\eta)=(1,1)$ is uniquely defined. However, pre-stressed material models do not have a preferred reference state. In this case, a different reference state, compressed or expanded with respect to the original reference state, provides an equivalent description of the material, moving from the parameters $(\rho_0,p_0,\lambda,\mu)$ to new parameters $(\tilde\rho_0,\tilde{p}_0,\tilde\lambda,\tilde{\mu})$. The choice of reference state is thus akin to a gauge choice.
\end{remark}	
%
%
\begin{remark}
	Linear elasticity is fully characterized by two elastic constants, typically defined by the Lam\'e parameters $\lambda$ and $\mu$. Other elastic constants of interest are the \emph{p-wave modulus} $L$, the \emph{Poisson ratio} $\nu$, the \emph{Young modulus} $E$, and the \emph{bulk modulus} $K$, given in terms of the Lam\'e parameters by, respectively,
	\begin{equation}
	L=\lambda+2\mu,\qquad \nu=\frac{\lambda}{2(\lambda+\mu)},\qquad E=\frac{(3\lambda+2\mu)\mu}{\lambda+\mu},\qquad K=\lambda+\frac{2}{3}\mu.
	\end{equation} 
 For perfect fluids, we have $\mu=0$, $\nu=1/2$, $E=0$, and $K=L=\lambda$.
\end{remark}
\begin{remark}[Invariance under renormalization of the reference state] \label{InvRenRefSt} In the pre-stressed case, if we choose a new reference state which is compressed by a factor $f$ (in volume) with respect to an original reference state, then the new variables $(\tilde\delta,\tilde\eta)$ are related to the original variables $(\delta,\eta)$ by
\begin{equation}
\delta = f\tilde\delta, \qquad \eta = f\tilde\eta.
\end{equation}
The expression $\tilde\rho$ of the energy density as a function of the new variables must of course satisfy
\begin{equation}
    \tilde{\rho}(\tilde\delta,\tilde\eta) = \rho(\delta,\eta)=\rho(f\tilde\delta,f\tilde\eta).
\end{equation}
The new elastic parameters $\tilde\lambda$ and $\tilde\mu$ (and others) will in general differ from $\lambda$ and $\mu$, as they are calculated with respect to a different reference state. However, it is usually possible to construct combinations of these parameters which are invariant under the renormalization of the reference state. It is these combinations, rather than the standard elastic parameters, that carry physical meaning.
\end{remark}
\subsection{The deformation potential: relativistic and ultra-relativistic materials}
  \begin{definition}[Relativistic and ultra-relativistic materials]\label{defrelstoredenergy}
    A material is said to be \emph{relativistic} if its stored energy function consists of the sum of the positive constant rest frame mass density, 
    \begin{equation}
    \varrho_0=\mathfrak{m}n_0,    
    \end{equation}
    where $\mathfrak{m}$ is the rest mass of the particles making up the body, and a \emph{potential energy density}  $w(t,r)=\widehat{w}(\delta(t,r),\eta(t,r))$ due to the deformations,
	\begin{equation}\label{RSEF}
	\widehat{\epsilon}(\delta,\eta)	= \varrho_0 + \widehat{w}(\delta,\eta).
	\end{equation}
  A material is said to be \emph{ultra-relativistic} if the relativistic stored energy function $\widehat{\epsilon}(\delta,\eta)$ is independent of the baryonic rest mass  density,
  \begin{equation}\label{URSEF}
  \widehat{\epsilon}(\delta,\eta)	= \widehat{w}(\delta,\eta).
  \end{equation}
 The deformation potential energy is assumed to satisfy the reference state condition
    \begin{equation}
        \widehat{w}(1,1)=w_0,
    \end{equation}
where $w_0$ is a constant potential energy density due to the deformations of the reference state (so that $\rho_0=\varrho_0+w_0$ for relativistic materials, and $\rho_0=w_0$ for ultra-relativistic materials). The \emph{natural reference state condition} requires this constant to vanish for stress-free reference states, that is, $w_0=0$ whenever $p_0=0$.     
\end{definition}
    In~\cite{Alho:2021sli} we gave a slightly  different definition of $\widehat{w}$; however, the definition above is the consistent definition if $\widehat{w}$ is to be considered the potential energy density due to deformations, so that a stress-free (undeformed) reference state has zero potential energy.
\begin{remark}
	If the reference state is stress-free, $p_0=0$, then $(\delta,\eta)=(1,1)$ is a stationary point of $\widehat{w}(\delta,\eta)$. In this case, compatibility with linear elasticity implies that
			\begin{align}
			&\partial^2_\delta\widehat{w}(1,1)=\lambda+2\mu ,\qquad \partial^2_\eta \widehat{w}(1,1) =\frac{4}{3}\mu ,\qquad \partial^2_{\eta\delta}\widehat{w}(1,1)=-\frac{4}{3}\mu,  
			\end{align}
		and $(\delta,\eta)=(1,1)$ is a nondegenerate local minimum of the stored energy function if and only if the shear modulus is positive $(\mu>0)$ and the bulk modulus is also positive $(K>0)$. In the fluid limit, $\mu\rightarrow0^+$, this critical point becomes degenerate.
\end{remark}
\begin{remark}
Ultra-relativistic materials have a pre-stressed reference state, and are obtained from relativistic materials by taking the limit $\varrho_0\rightarrow0$, so that $\rho_0=w_0$. For example, the polytropic fluid of Example~\ref{Ex1} in this formalism is written as
\begin{equation}
\widehat{\rho}(\delta)=\delta\left(\varrho_0+\widehat{w}(\delta)\right), \qquad \widehat{w}(\delta)=\frac{\lambda}{\gamma(\gamma-1)}\delta^{\gamma-1}
\end{equation}
with the identifications
\begin{equation}
\mathrm{K}=\frac{\lambda}{\gamma n^{\gamma}_0},\qquad \mathrm{C}=\left(\frac{\gamma}{\lambda}\right)^{1/\gamma}\varrho_0,  
\end{equation}
and it is a relativistic material with $w_0=\frac{p_0}{\gamma-1}$, where $p_0 = \frac{\lambda}{\gamma}$, and so $\rho_0=\varrho_0+\frac{p_0}{\gamma-1}$. The fluid with linear EoS of Example~\ref{Ex2} is obtained from the polytropic fluid by taking the limit $\mathrm{C}\rightarrow 0$, i.e., $\varrho_0\rightarrow 0$, 
\begin{equation}
\widehat{\rho}(\delta)=\delta\widehat{w}(\delta), \qquad \widehat{w}(\delta)=\frac{\lambda}{\gamma(\gamma-1)}\delta^{\gamma-1}
\end{equation}
and it is ultra-relativistic with $\rho_0=w_0=\frac{p_0}{\gamma-1}$.
\end{remark}
 \begin{remark} \label{UltraRelExtraSym}
     For ultra-relativistic materials, there is an extra symmetry due to the invariance of $\rho$ under rescalings of the rest baryonic mass density,
     \begin{equation}
      \tilde{\varrho}_0    = f \varrho_0,
     \end{equation}
     while keeping $(\delta,\eta)$ fixed. Notice that this is quite different from renormalizing the reference state, which is simply a change in the description of a \emph{fixed} material: here we are modifying the material itself by adding more baryons per unit volume, so that its (gauge-invariant) elastic parameters will change.
 \end{remark}
    \begin{proposition}
The transformation
    \begin{equation}\label{SFPSTransfw}
        \widehat{w}^{(\mathrm{ps})}(\delta,\eta)=\widehat{w}^{(\mathrm{sf})}(\delta,\eta)-\alpha_0-p_0\delta^{-1},
      \end{equation}
      where $\alpha_0=-(w_0+p_0)$ (with $p_0\neq0$ and $w_0=\rho_0-\varrho_0^{(\mathrm{ps})}\neq0$), takes any given stress-free natural reference state material to a 1-parameter family (parameterized by $p_0$) of pre-stressed reference state materials (or vice-versa). This changes the energy density according to
     \begin{equation}\label{TransfRho}
         \widehat{\rho}^{(\mathrm{ps})}(\delta,\eta)=\widehat{\rho}^{(\mathrm{sf})}(\delta,\eta)+\left(-\alpha_0+\varrho^{(\mathrm{ps})}_0-\varrho^{(\mathrm{sf})}_0\right)\delta-p_0,
     \end{equation}
     where $\varrho^{(\mathrm{ps})}_0=\varrho^{(\mathrm{sf})}_0$ for relativistic materials, and $\varrho^{(\mathrm{ps})}_0=0$ for ultra-relativistic materials.
\end{proposition}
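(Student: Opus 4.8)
The plan is to verify the proposition by a direct computation showing that adding the $\delta$-dependent terms $-\alpha_0 - p_0\delta^{-1}$ to the deformation potential shifts both principal pressures by the \emph{same constant} $p_0$, while leaving the shear (i.e. $\eta$-dependent) structure untouched; the reference-state conditions and the energy-density formula~\eqref{TransfRho} then follow with little extra work. First I would record the three facts used throughout: from the stored-energy representation one has $\widehat{p}_{\mathrm{rad}} = \delta^2\partial_\delta\widehat\epsilon$ and $\widehat{p}_{\mathrm{tan}} = \widehat{p}_{\mathrm{rad}} + \frac{3}{2}\delta\eta\,\partial_\eta\widehat\epsilon$, together with $\widehat\rho = \delta\widehat\epsilon$.

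Next I would express $\widehat\epsilon^{(\mathrm{ps})}$ in terms of $\widehat\epsilon^{(\mathrm{sf})}$ in both cases. For a relativistic material $\widehat\epsilon = \varrho_0 + \widehat w$ with the rest-mass density unchanged ($\varrho_0^{(\mathrm{ps})}=\varrho_0^{(\mathrm{sf})}$), so~\eqref{SFPSTransfw} gives $\widehat\epsilon^{(\mathrm{ps})} = \widehat\epsilon^{(\mathrm{sf})} - \alpha_0 - p_0\delta^{-1}$; for an ultra-relativistic material $\widehat\epsilon = \widehat w$ with $\varrho_0^{(\mathrm{ps})}=0$, so $\widehat\epsilon^{(\mathrm{ps})} = \widehat\epsilon^{(\mathrm{sf})} - \varrho_0^{(\mathrm{sf})} - \alpha_0 - p_0\delta^{-1}$. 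In either case $\widehat\epsilon^{(\mathrm{ps})}$ differs from $\widehat\epsilon^{(\mathrm{sf})}$ by a constant plus the single term $-p_0\delta^{-1}$, so that $\partial_\eta\widehat\epsilon$ is unchanged and $\partial_\delta\widehat\epsilon^{(\mathrm{ps})} = \partial_\delta\widehat\epsilon^{(\mathrm{sf})} + p_0\delta^{-2}$. Inserting this into the pressure formulas is the crux: since $\delta^2\cdot p_0\delta^{-2}=p_0$ is constant, I obtain $\widehat{p}_{\mathrm{rad}}^{(\mathrm{ps})} = \widehat{p}_{\mathrm{rad}}^{(\mathrm{sf})} + p_0$, and because the added $\eta$-derivative vanishes the same offset propagates to $\widehat{p}_{\mathrm{tan}}^{(\mathrm{ps})} = \widehat{p}_{\mathrm{tan}}^{(\mathrm{sf})} + p_0$.

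Evaluating at the reference state $(\delta,\eta)=(1,1)$ and using that the starting material is stress-free, $\widehat{p}_{\mathrm{rad}}^{(\mathrm{sf})}(1,1)=\widehat{p}_{\mathrm{tan}}^{(\mathrm{sf})}(1,1)=0$, then yields $\widehat{p}_{\mathrm{rad}}^{(\mathrm{ps})}(1,1)=\widehat{p}_{\mathrm{tan}}^{(\mathrm{ps})}(1,1)=p_0\neq0$, i.e. a genuine pre-stressed reference state with the correct common value. I would also check the potential normalization: using the \emph{natural} reference-state condition $\widehat w^{(\mathrm{sf})}(1,1)=0$ and $\alpha_0=-(w_0+p_0)$,~\eqref{SFPSTransfw} gives $\widehat w^{(\mathrm{ps})}(1,1)=-\alpha_0-p_0=w_0$, consistent with $\rho_0=\varrho_0^{(\mathrm{ps})}+w_0$. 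Finally, multiplying $\widehat\epsilon^{(\mathrm{ps})}$ by $\delta$ and substituting $\delta\widehat\epsilon^{(\mathrm{sf})}=\widehat\rho^{(\mathrm{sf})}$ reproduces~\eqref{TransfRho} in both the relativistic and ultra-relativistic cases, the two being unified precisely by the convention $\varrho_0^{(\mathrm{ps})}=\varrho_0^{(\mathrm{sf})}$ versus $\varrho_0^{(\mathrm{ps})}=0$. The reverse direction is immediate because~\eqref{SFPSTransfw} is invertible, its inverse obtained by flipping the signs of $\alpha_0$ and $p_0$, and the advertised $p_0$-family arises simply from leaving $p_0$ free.

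The computation itself is routine; the step demanding the most care is the bookkeeping of the rest-mass term $\varrho_0$ across the relativistic/ultra-relativistic dichotomy. One must track that it is precisely the $-p_0\delta^{-1}$ piece that forces the pressures to shift by an exact constant (any other power of $\delta$ would spoil this), while the constant and linear-in-$\delta$ pieces of $\widehat\rho$ absorb both $\alpha_0$ and the change in $\varrho_0$. Beyond applying the sign conventions in $\alpha_0=-(w_0+p_0)$ and $w_0=\rho_0-\varrho_0^{(\mathrm{ps})}$ consistently, I expect no real obstacle.
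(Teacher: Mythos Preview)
Your proposal is correct and follows essentially the same approach as the paper's proof. The paper's argument is terser: it takes~\eqref{TransfRho} as part of the statement and simply plugs it into the defining relations $\widehat{p}_{\mathrm{rad}}=\delta\partial_\delta\widehat\rho-\widehat\rho$ and $\widehat{p}_{\mathrm{tan}}=\widehat{p}_{\mathrm{rad}}+\frac32\eta\partial_\eta\widehat\rho$ to obtain the constant shift $+p_0$ in both pressures. You do the equivalent computation via the stored-energy representation $\widehat{p}_{\mathrm{rad}}=\delta^2\partial_\delta\widehat\epsilon$, and in addition derive~\eqref{TransfRho} from~\eqref{SFPSTransfw} and verify the reference-state normalizations explicitly, so your write-up is somewhat more thorough than the paper's but not substantively different.
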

\begin{proof}
    Using~\eqref{TransfRho} into~\eqref{DefPrad}, yields
    \begin{equation}
        \widehat{p}^{(\mathrm{ps})}_\mathrm{rad}(\delta,\eta)=\widehat{p}^{(\mathrm{sf})}_\mathrm{rad}(\delta,\eta)+p_0.
    \end{equation}
    Inserting the above equation together with~\eqref{TransfRho} in~\eqref{DefPtan}, yields
    \begin{equation}
        \widehat{p}^{(\mathrm{ps})}_\mathrm{tan}(\delta,\eta)=\widehat{p}^{(\mathrm{sf})}_\mathrm{tan}(\delta,\eta)+p_0.
    \end{equation}
\end{proof}
\begin{definition}[Natural choice of $p_0$]
 The choice of the smallest value of $p_0$ such that the physical condition
     \begin{equation}\label{NatChoiceP0}
      \widehat{p}^{(\mathrm{ps})}_\mathrm{iso}(\delta,\delta)>0,
     \end{equation}
     holds for $\delta>0$ is called the \emph{natural choice} of the reference state pressure.
\end{definition}
\begin{examples}\label{AffineTransLin}
    The (stress-free reference state) affine fluid EoS can be obtained from the (pre-stressed reference state) linear EoS using the above transformation, the fact that $\varrho^{(\mathrm{ps})}_0=0$ for ultra-relativistic materials, and the identity
\begin{equation}
    c^2_\mathrm{s}(1)=\frac{\lambda}{\rho_0+p_0}=\gamma-1,
\end{equation}
    which yields the relation $\rho^{(\mathrm{ps})}_0+p_0=\frac{\lambda}{\gamma-1}=\rho^{(\mathrm{sf})}_0=\varrho^{(\mathrm{sf})}_0$, and therefore
    \begin{equation}
        \widehat{\rho}^{(\mathrm{sf})}(\delta)=\widehat{\rho}^{(\mathrm{ps})}(\delta)+p_0 = \frac{\lambda}{\gamma(\gamma-1)}\delta^{\gamma}+\frac{\lambda}{\gamma}.
    \end{equation}
\end{examples}
\begin{definition}[Stress-free natural reference state power-law deformation potentials~\cite{Alho:2019fup}]\label{PLDP}
 A stress-free natural reference state deformation potential function is said to be of \emph{power-law type} $(n_1,n_2,...,n_m)\in\mathbb{N}^{m}$, $m\geq2$, if it is of the form
\begin{equation}\label{SSDPF}
 \widehat{w}^{(\mathrm{sf})}(\delta,\eta)=\alpha_0+\sum^{m}_{j=1}\eta^{\theta_{j}}\sum^{n_j}_{i=1}\alpha_{ij}\left(\frac{\delta}{\eta}\right)^{\beta_{ij}},\quad  \alpha_0=-\sum^{m}_{j=1}\sum^{n_j}_{i=1}\alpha_{ij},
\end{equation}
 where the exponents $\theta_j, \beta_{ij}\in\mathbb{R}$, $i=1,...,n_j$, $j=1,...,m$ satisfy
\begin{itemize}
\item[(i)] $\theta_1<\theta_2<...<\theta_{m}$, $\beta_{1j}<\beta_{2j}<...<\beta_{n_{j}j}$, for all $j=1,...,m$;
\item[(ii)] if $n_j=1$ then $\theta_j\neq0$ and $\theta_j=\beta_{1j}$;
\item[(iii)] at least one of the exponents $\beta_{ij}$ is different from $0$ and $-1$,
\end{itemize}
and the real coefficients $\alpha_{ij}\neq0$ are a solution of the system of linear equations
 \begin{subequations}
     \begin{align}
         &\sum^{m}_{j=1}\sum^{n_j}_{i=1}\alpha_{ij}\theta_j =0, \quad \sum^{m}_{j=1}\sum^{n_j}_{i=1}\alpha_{ij}\theta^2_j =\lambda+\frac{2}{3}\mu, \quad \sum^{m}_{j=1}\sum^{n_j}_{i=1}\alpha_{ij}\beta^2_{ij}=\lambda+2\mu,  \label{PLSEF1st}\\
         &\sum^{n_j}_{i=1}\alpha_{ij}(\theta_j-\beta_{ij})=0, \quad j=1,...,m. \label{PLSEF2nd}
     \end{align}
 \end{subequations}
\end{definition}
For spherically symmetric power-law type deformation potential functions, the radial and tangential pressures are given by
\begin{subequations}\label{PPL}
\begin{align}
    \widehat{p}_\mathrm{rad}(\delta,\eta) &= \sum^{m}_{j=1}\eta^{1+\theta_j}\sum^{n_j}_{i=1}\alpha_{ij}\beta_{ij}\left(\frac{\delta}{\eta}\right)^{1+\beta_{ij}}, \\
    \widehat{p}_\mathrm{tan}(\delta,\eta) &=\frac{1}{2}\sum^{m}_{j=1}\eta^{1+\theta_j}\sum^{n_j}_{i=1}\alpha_{ij}(3\theta_j-\beta_{ij})\left(\frac{\delta}{\eta}\right)^{1+\beta_{ij}}.
\end{align}
\end{subequations}
Condition~\eqref{SSDPF} on the coefficient $\alpha_0$ corresponds to the natural reference state condition $\widehat{w}(1,1)=0$, while conditions~\eqref{PLSEF1st} arise from the stress-free reference state condition~\eqref{PRS} with $p_0=0$ and the linear elasticity compatibility conditions~\eqref{DevPreference}. The last condition~\eqref{PLSEF2nd} is in turn equivalent to the isotropic state condition~\eqref{ISOCond}. Condition $(i)$ only affects the order of the factors appearing in
the stored energy function (lexicographic order), while condition $(ii)$ is needed for consistency
with~\eqref{PLSEF2nd}. Finally, condition $(iii)$ ensures that the radial pressure is not independent of $\delta$.

This class of spherically symmetric deformation functions has been further analyzed in a subsequent work~\cite{ACL21}. The definition of Lam\'e-type deformation potential functions has been introduced as the subclass for which the system of linear equations~\eqref{PLSEF1st}--\eqref{PLSEF2nd} is determined, and hence the coefficients $\alpha_{ij}$ are uniquely given (in terms of the Lam\'e coefficients $(\lambda,\mu)$) by the exponents $\theta_j$, $\beta_{ij}$. It was also shown that the only admissible Lam\'e-type $(n_1,n_2,\dots,n_m)$ deformation potential functions (i.e., such that solutions $\alpha_{ij}$ exist for all values of $\mu\geq0$ and $K>0$) are 
 \begin{equation}
     m=2: \quad (1,3), \ (2,3);\qquad m=3:\quad (1,2,2),\ (1,1,2),\ (2,2,2),
 \end{equation}
		and permutations thereof. 
\begin{remark}
    Perfect fluids  are of type $(1,1,...)$ with $\mu=0$. In this case, only for $m=2$ does the system admit a unique solution of type $(1,1)$.
\end{remark}

  \subsection{The wave speeds}
Besides satisfying the energy conditions (see Definition~\ref{EC}), further restrictions on the choice of elastic law come from requiring causal wave propagation within the material. As shown by Karlovini \& Samuelsson~\cite{Karlovini:2002fc}, in spherical symmetry there are five independent wave speeds, of which only three can be written from a given spherically symmetric stored energy function $\widehat{\epsilon}(\delta,\eta)$ (equivalently $\widehat{\rho}(\delta,\eta)$). In general, the other two independent velocities must be obtained through the general form of the stored energy function without symmetries. The wave speeds without any symmetry assumption can be found in Appendix~\ref{App:A}, and the degenerate case in subsection~\ref{Deg}, which we use in Appendix~\ref{WSetadelta} to write the wave speeds in spherical symmetry as functions of the variables $(\delta,\eta)$ given in Definition~\ref{DefElaBall}. We summarize these results in the following definition:
\begin{definition}[Spherically symmetric elastic wave speeds]\label{Speeds}
	The squared speed of elastic longitudinal waves in the radial direction is given by
	\begin{equation}\label{cLR}
	c^2_\mathrm{L}(\delta,\eta)= \frac{\delta\partial_\delta \widehat{p}_{\mathrm{rad}}(\delta,\eta)}{\widehat{\rho}(\delta,\eta)+\widehat{p}_{\mathrm{rad}}(\delta,\eta)},
	\end{equation}
and	the squared speed of transverse waves in the radial direction is given by
	\begin{equation}\label{cTR}
	c^2_\mathrm{T}(\delta,\eta)= \frac{\widehat{p}_{\mathrm{tan}}(\delta,\eta) - \widehat{p}_{\mathrm{rad}}(\delta,\eta)}{\left(\widehat{\rho}(\delta,\eta)+\widehat{p}_{\mathrm{tan}}(\delta,\eta)\right)\left(1 - \left(\frac{\delta}{\eta}\right)^2\right)},
	\end{equation}
 while the squared speed of transverse waves in the tangential direction oscillating in the radial direction is given by 
	\begin{equation}\label{cTT}
	\tilde{c}^2_\mathrm{T}(\delta,\eta)= \left(\frac{\delta}{\eta}\right)^2\frac{\widehat{p}_{\mathrm{tan}}(\delta,\eta)-\widehat{p}_{\mathrm{rad}}(\delta,\eta)}{\left(\widehat{\rho}(\delta,\eta)+\widehat{p}_{\mathrm{rad}}(\delta,\eta)\right)\left(1 - \left(\frac{\delta}{\eta}\right)^2\right)}.
	\end{equation}
	Moreover the squared speed of longitudinal waves along the tangential direction, $\tilde{c}^2_\mathrm{L}$, and the squared speed of transverse waves in the tangential direction oscillating in the tangential direction, $\tilde{c}^2_\mathrm{TT}$, satisfy the relation
	\begin{equation}\label{RelWS}
 \begin{split}
 \tilde{c}^2_\mathrm{L}(\delta,\eta)-\tilde{c}^2_\mathrm{TT}(\delta,\eta)
&=\frac{\frac{3}{4}\eta\partial_{\eta}\widehat{\rho}(\delta,\eta)+\delta^2\partial^2_\delta \widehat{\rho}(\delta,\eta)+3\delta\eta\partial^2_{\eta\delta}\widehat{\rho}(\delta,\eta)+\frac{9}{4}\eta^2\partial^2_\eta\widehat{\rho}(\delta,\eta)}{\widehat{\rho}(\delta,\eta)+\widehat{p}_\mathrm{tan}(\delta,\eta)}. 
 \end{split}
	\end{equation}
\end{definition}
\begin{remark}
	This means that to construct physically admissible elastic stars one needs to specify not only $\widehat{\rho}(\delta,\eta)$ but also either $\tilde{c}^2_{\mathrm{L}}(\delta,\eta)$ or $\tilde{c}^2_\mathrm{TT}(\delta,\eta)$.
\end{remark}
Overall, reality of the wave speeds and causality require 
\begin{equation} \label{realcausal}
0\leq {c}^2_\mathrm{L,T}(\delta,\eta)\leq1,\qquad  0\leq \tilde{c}^2_\mathrm{L,T,TT}(\delta,\eta)\leq1.
\end{equation}
%
In the isotropic state $(\delta,\eta)=(\delta,\delta)$, the squared wave speeds satisfy
\begin{subequations}\label{WSIso}
	\begin{align}
	&c^2_\mathrm{L}(\delta,\delta)=\tilde{c}^2_\mathrm{L}(\delta,\delta)=\frac{\delta\partial_\delta\widehat{p}_\mathrm{rad}(\delta,\delta)}{\widehat{\rho}(\delta,\delta)+\widehat{p}_\mathrm{iso}(\delta,\delta)} ,\label{cLCenter}\\ &c^2_\mathrm{T}(\delta,\delta)=\tilde{c}^2_\mathrm{T}(\delta,\delta)=\tilde{c}^2_\mathrm{TT}(\delta,\delta)=-\frac{3}{4}\frac{\delta\partial_\eta\widehat{p}_\mathrm{rad}(\delta,\delta)}{\widehat{\rho}(\delta,\delta)+\widehat{p}_\mathrm{iso}(\delta,\delta)} \label{cTCenter}
	\end{align}
\end{subequations}
and
\begin{equation}\label{Kcenter}
\frac{d\widehat{p}_\mathrm{iso}}{d\widehat{\rho}}(\delta,\delta)=c^2_\mathrm{L}(\delta,\delta)-\frac{4}{3}c^2_\mathrm{T}(\delta,\delta).
\end{equation}
Note that for a perfect fluid $c^2_\mathrm{T}(\delta,\delta)=\tilde{c}^2_\mathrm{T}(\delta,\delta)=\tilde{c}^2_\mathrm{TT}(\delta,\delta)=0$. Causality together with the reality conditions demand
\begin{equation}
0\leq\delta\partial_\delta\widehat{p}_\mathrm{rad}(\delta,\delta)\leq\widehat{\rho}(\delta,\delta)+\widehat{p}_\mathrm{iso}(\delta,\delta),
\end{equation}
\begin{equation}
0\leq-\frac{3}{4}\delta\partial_\eta\widehat{p}_\mathrm{rad}(\delta,\delta)\leq\widehat{\rho}(\delta,\delta)+\widehat{p}_\mathrm{iso}(\delta,\delta)
\end{equation}
and
\begin{equation}
\delta\partial_\delta\widehat{p}_\mathrm{rad}(\delta,\delta)>-\delta\partial_\eta\widehat{p}_\mathrm{rad}(\delta,\delta).
\end{equation}
Specializing to the reference state $(\delta,\eta)=(1,1)$, the wave speeds satisfy
\begin{equation}\label{WSRefSt}
\tilde{c}^2_\mathrm{L}(1,1)=c^2_\mathrm{L}(1,1)=\frac{\lambda+2\mu}{\rho_0+p_0},\qquad c^2_\mathrm{T}(1,1)=\tilde{c}^2_\mathrm{T}(1,1)=\tilde{c}^2_\mathrm{TT}(1,1)=\frac{\mu}{\rho_0+p_0},
\end{equation}
and we must have
\begin{equation}
0\leq\lambda + 2 \mu \leq \rho_0 +p_0, \qquad 0\leq \mu \leq \rho_0 +p_0 .
\end{equation}
Furthermore
\begin{equation}
\frac{d\widehat{p}_\mathrm{iso}}{d\widehat{\rho}}(1,1)=c^2_\mathrm{L}(1,1)-\frac{4}{3}c^2_\mathrm{T}(1,1)=\frac{\lambda+\frac{2}{3}\mu}{\rho_0 +p_0},
\end{equation}
is assumed to be positive, which is equivalent to the condition
\begin{equation}\label{Kpos}
K=\lambda+\frac{2}{3}\mu>0.
\end{equation}
\begin{remark}
    The physical admissibility conditions $L\geq 0$, $\mu\geq0$ and $K>0$ (see Remark~\ref{modulicond}) imply that $E\geq0$ and $\nu\in(-1,\frac{1}{2}]$. In the fluid limit $\mu=0$ one has $\nu=\frac{1}{2}$ and $E=0$, while the lower bound on $\nu$ arises from the condition $K>0$.
\end{remark}

\subsubsection{The natural choice of $\tilde{c}_{\mathrm{L}}^2$ and $\tilde{c}_{\mathrm{TT}}^2$}

Motivated by relation~\eqref{RelWS}, we will now introduce a special class of spherically symmetric elastic materials for which the unknown squared speeds $\tilde{c}^2_\mathrm{L}$  and $\tilde{c}^2_\mathrm{TT}$ are linear functions of the first and second partial derivatives of $\widehat{\rho}$ only. Starting with the general expression
\begin{equation}
(\widehat{\rho}+\widehat{p}_\mathrm{tan})\tilde{c}^2_\mathrm{L} = A\delta^2\partial^2_\delta\widehat{\rho}+B\delta\eta\partial^2_{\eta\delta}\widehat{\rho}+C\eta^2\partial^2_\eta \widehat{\rho}+D\delta\partial_\delta\widehat{\rho}+E\eta\partial_\eta\widehat{\rho},
\end{equation}
with constants $A, B, C, D, E \in \mathbb{R}$, we obtain $A=1$, $B=C$ and $D=0$ from the isotropy condition $c_\mathrm{L}(\delta,\delta) = \tilde{c}_\mathrm{L}(\delta,\delta)$, so that 
\begin{equation}
(\widehat{\rho}+\widehat{p}_\mathrm{tan})\tilde{c}^2_\mathrm{L}  = \delta^2 \partial^2_\delta \widehat{\rho} + B \delta \eta \partial_\delta \partial_\eta \widehat{\rho} + B \eta^2 \partial^2_\eta \widehat{\rho} + E \eta \partial_\eta \widehat{\rho}.
\end{equation}
Using~\eqref{RelWS}, we obtain
\begin{equation}\label{LinWScTT}
(\widehat\rho + \widehat{p}_{\rm tan})\tilde{c}_{\mathrm{TT}}^2 = (B-3) \delta \eta \partial_\delta \partial_\eta \widehat\rho + \left(B-\frac94\right) \eta^2 \partial^2_\eta \widehat\rho + \left(E-\frac34\right) \eta \partial_\eta \widehat\rho,
\end{equation}
which in the isotropic limit satisfies~\eqref{cTCenter}, and therefore $B$ and $E$ are free parameters. The choice of $B$ and $E$ amounts to choosing the material.
When written in terms of $\widehat{p}_\mathrm{rad}$ and $\widehat{p}_\mathrm{tan}$, the two squared velocities read
\begin{equation} \label{cTBE}
 (\widehat\rho + \widehat{p}_{\rm tan})\tilde{c}_{\mathrm{L}}^2 = \left(1 - \frac23 E \right) \delta \partial_\delta \widehat{p}_{\rm rad} + \left(\frac13 B - E \right) \eta \partial_\eta \widehat{p}_{\rm rad} + \frac23 E \delta \partial_\delta \widehat{p}_{\rm tan} + \frac23 B \eta \partial_\eta \widehat{p}_{\rm tan}
\end{equation}
and
\begin{align}
 (\widehat\rho + \widehat{p}_{\rm tan})\tilde{c}_{\mathrm{TT}}^2 = & \left(1 - \frac23 E \right) \delta \partial_\delta \widehat{p}_{\rm rad} + \left(\frac13 B - E \right) \eta \partial_\eta \widehat{p}_{\rm rad} \nonumber \\
  & \qquad \qquad \qquad + \left(\frac23 E - 1 \right) \delta \partial_\delta \widehat{p}_{\rm tan} + \left(\frac23 B - \frac32 \right) \eta \partial_\eta \widehat{p}_{\rm tan}. \label{cTTBE}
\end{align}
\begin{definition}[The natural choice of $\tilde{c}_\mathrm{L}^2$ and $\tilde{c}_\mathrm{TT}^2$]\label{NaturalChoice}
The obvious choice to simplify these expressions is to set $E=\frac32$ and $B=\frac92$, yielding
\begin{equation}\label{cNC}
 \tilde{c}_\mathrm{L}^2(\delta,\eta) = \frac{\delta \partial_\delta \widehat{p}_{\rm tan}(\delta,\eta) + 3 \eta \partial_\eta \widehat{p}_{\rm tan}(\delta,\eta) }{\widehat\rho(\delta,\eta) + \widehat{p}_{\rm tan}(\delta,\eta)},\qquad  \tilde{c}_\mathrm{TT}^2(\delta,\eta) = \frac{\frac32 \eta \partial_\eta \widehat{p}_{\rm tan}(\delta,\eta) }{\widehat\rho(\delta,\eta) + \widehat{p}_{\rm tan}(\delta,\eta)}.
\end{equation}
\end{definition}
%

\subsection{Eulerian Einstein-elastic equations}
Just as in the fluid case, the evolution equation for the variable $\delta(t,r)$ follows from the conservation of particle current, which can be written in the form $\nabla_\mu (\delta u^\mu)=0$. On the other hand, from the expression of $\eta(t,r)$ given in Definition~\ref{DefElaBall}, and using~\eqref{Divn}, it follows that
\begin{equation}\label{EtaEqs}
\partial_t\eta=-\frac{3}{r}\frac{e^{\phi}\delta v}{\left(1-\frac{2m}{r}\right)^{\frac{1}{2}}},\qquad \partial_r \eta = -\frac{3}{r}\left(\eta-\frac{\langle v\rangle \delta}{\left(1-\frac{2m}{r}\right)^{\frac{1}{2}}}\right).
\end{equation}
When written in terms of the variables $(\phi,m,\delta,\eta,v)$, the spherically symmetric Einstein-elastic equations in Schwarzschild coordinates yield the following first order system of partial differential equations:
\begin{subequations}\label{EinNEW}
\begin{align}
&e^{-\phi}\langle v\rangle \partial_t \delta+v\partial_r \delta +\frac{\delta v} {1-\frac{2m}{r}+v^2}e^{-\phi}\langle v\rangle\partial_t v +\delta\partial_r v=-\frac{2}{r}\delta v;\\
&\frac{1}{1-\frac{2m}{r}+v^2}\left(e^{-\phi}\langle v\rangle\partial_t v+v\partial_r v\right)+\frac{ve^{-\phi}\langle v\rangle}{1-\frac{2m}{r}+v^2}\left(c^2_\mathrm{L}(\delta,\eta)\delta^{-1}\partial_t\delta+\frac{\eta\partial_\eta \widehat{p}_\mathrm{rad}}{\widehat{\rho}(\delta,\eta) +\widehat{p}_{\mathrm{rad}}(\delta,\eta)}\eta^{-1}\partial_t\eta\right) \nonumber \\ 
&\qquad\qquad\qquad+c^2_\mathrm{L}(\delta,\eta)\delta^{-1} \partial_r\delta +s\left(\delta,\eta,v,\frac{m}{r}\right)\eta^{-1} \partial_r\eta=-\frac{\frac{m}{r^2}+4\pi r \widehat{p}_\mathrm{rad}}{1-\frac{2m}{r}+v^2} ; \\
& e^{-\phi}\langle v\rangle \partial_t\eta+v\partial_r \eta=-\frac{3}{r}\eta v;  \\
&e^{-\phi}\langle v\rangle\partial_t m +v\partial_r m=-4\pi r^2 \widehat{p}_\mathrm{rad}(\delta,\eta)v,  
\end{align}
\end{subequations}
together with
\begin{align}
& \partial_r \eta = -\frac{3}{r}\left(\eta-\frac{\langle v\rangle \delta}{\left(1-\frac{2m}{r}\right)^{\frac{1}{2}}}\right), \quad \partial_r m = 4\pi r^2 \left(\rho +\frac{\rho+p_\mathrm{rad}}{1-\frac{2m}{r}}v^2\right), \label{integro1} \\
& \partial_r\phi =\frac{\frac{m}{r^2}+4\pi r p_\mathrm{rad}}{1-\frac{2m}r} + \frac{4\pi r (\rho+p_\mathrm{rad})v^2}{\left(1-\frac{2m}r\right)^2}, \label{integro2}
\end{align}
where we introduced
\begin{equation}
\begin{split}
    s\left(\delta,\eta,v,\frac{m}{r}\right) = 
    &\frac{\eta\partial_{\eta}\widehat{p}_{\mathrm{rad}}(\delta,\eta)}{\left(\widehat{\rho}(\delta,\eta) +\widehat{p}_{\mathrm{rad}}(\delta,\eta)\right)}+\frac{2}{3}\left(\frac{\delta}{\eta}\right)^{-2}\frac{\left(1-\left(\frac{\delta}{\eta}\right)^2\right)}{\left(1-\frac{\langle v \rangle\left(\frac{\delta}{\eta}\right)}{\left(1-\frac{2m}{r}\right)^{\frac{1}{2}}}\right)}\tilde{c}^2_\mathrm{T}(\delta,\eta).
    \end{split}
\end{equation}
For solutions with a regular center of symmetry (Definition~\ref{Def1}), for which $\lim_{r\rightarrow 0^+} m(t,r)=0$ and $\lim_{r\rightarrow 0^+} v(t,r)=0$, we have
	\begin{equation}\label{RCcond}
	\lim_{r\rightarrow 0^+} \eta(t,r)=\lim_{r\rightarrow0^+}\delta(t,r)=\delta(t,0)=\delta_\mathrm{c}(t), \qquad t\in[0,T].
	\end{equation}
 	Hence, the center of symmetry corresponds to an isotropic state where all relations in Remark~\ref{Isotropicrelations} hold. In particular,
\begin{equation}
    s(\delta_\mathrm{c}(t),\delta_\mathrm{c}(t),0,0)=0, \qquad t\in[0,T].
\end{equation}
%
%
%
%

As seen in propositions~\ref{totalmass} and \ref{totalnumber}, two conserved quantities are the total number of particles $\mathcal{N}$ and the total mass $\mathcal{M}$:
\begin{equation}
	\mathcal{N} \equiv \frac{4\pi n_0}{3}\mathcal{R}(t)^3\eta(t,\mathcal{R}(t)), \qquad
	\mathcal{M} \equiv m(t,\mathcal{R}(t)).
\end{equation}

The system \eqref{EinNEW}-\eqref{integro2} can be seen as a system of integro-differential equations, where $\phi$ and $\eta$ can be obtained from \eqref{integro1} and \eqref{integro2} in the integral forms \eqref{integrophi} and \eqref{DefDeltaEta}. In the simpler case where gravity is neglected, i.e., in Minkowski spacetime, the system~\eqref{EinNEW}  can be written in the form $\bm{A}^{0}(\bm{u})\partial_t \bm{u}+\bm{A}^{r}(\bm{u})\partial_r \bm{u}=\bm{B}(\bm{u},r)$, with
\begin{equation}
\bm{u}=
\begin{pmatrix}
\delta \\
v \\
\eta 
\end{pmatrix}
,\quad
\bm{A}^{0}(\bm{u})=
\begin{pmatrix}
\langle v\rangle & \frac{\delta v}{\langle v\rangle} & 0 \\
\frac{v c^2_\mathrm{L}}{\langle v\rangle \delta} & \frac{1}{\langle v\rangle} & \frac{\partial_\eta \widehat{p}_\mathrm{rad}}{\widehat{\rho}(\delta,\eta) +\widehat{p}_{\mathrm{rad}}(\delta,\eta)} \\
0 &0 & \langle v\rangle 
\end{pmatrix}
,\quad 
\bm{A}^{r}(\bm{u})=
\begin{pmatrix}
v & \delta & 0 \\
\frac{c^2_\mathrm{L}}{\delta} & \frac{v}{\langle v\rangle^2} & \frac{s}{\eta} \\
0 & 0 & v 
\end{pmatrix}
.
\end{equation}
The eigenvalues of the matrix $\left( \bm{A}^{0} \right)^{-1} \bm{A}^{r}$ are
\begin{equation}
\lambda_0 = v_b, \qquad \lambda_+ = \frac{v_b+c_\mathrm{L}}{1+v_b c_\mathrm{L}}, \qquad \lambda_- = \frac{v_b-c_\mathrm{L}}{1-v_b c_\mathrm{L}},
\end{equation}
where $c_\mathrm{L}(\delta,\eta)=\sqrt{c^2_\mathrm{L}(\delta,\eta)}$ and
\begin{equation}
v_b = \frac{v}{\langle v\rangle}
\end{equation}
is the so-called \emph{boost velocity}, that is, the velocity of the material particles as measured by observers at rest. Notice that Einstein's velocity addition formula then implies that the eigenvalues $\lambda_+$ and $\lambda_-$ are the velocities of outgoing and ingoing longitudinal waves as measured by observers at rest.
Thus, the system is strictly hyperbolic if
\begin{equation}
    c^2_\mathrm{L}(\delta,\eta)>0 \quad \Leftrightarrow \quad \delta\partial_\delta\widehat{p}_\mathrm{rad}(\delta,\eta)>0,
\end{equation}
which is a stronger condition than simply requiring $c^2_\mathrm{L}(\delta,\eta)$ to be nonnegative (reality condition, see~\eqref{realcausal}). In particular, in the reference state the condition $\lambda+2\mu>0$ must hold.

\newpage

\section{Steady states}\label{SS}
%
In terms of the variables $\delta(r)$, $\eta(r)$ and $m(r)$, the TOV equations~\eqref{TOVeq} become a closed first order system,
\begin{subequations}\label{TOV2}
	\begin{align}
	c^2_{\mathrm{L}}(\delta,\eta) \frac{1}{\delta}\frac{d\delta}{dr} &=-s(\delta,\eta,\frac{m}{r})\frac{1}{\eta}\frac{d\eta}{dr}-\frac{\left(\frac{m}{r^2} +4\pi r \widehat{p}_{\mathrm{rad}}(\delta,\eta)\right)}{\left(1-\frac{2m}{r}\right)} , \label{TOV2a}\\
	\frac{d\eta}{dr} &=-\frac{3}{r}\left(\eta-\frac{\delta}{\left(1-\frac{2m}{r}\right)^{1/2}}\right),\label{Etastatic} \\
	\frac{dm}{dr}&=4\pi r^2 \widehat{\rho}(\delta,\eta) \label{massstatic}.
	\end{align}
\end{subequations}
Given an EoS~\eqref{EoS}, the system of equations~\eqref{TOV2} for the stellar structure should be solved subject to the regular center conditions
\begin{equation}\label{icRelSS}
    \eta(0)=\delta(0)=\delta_\mathrm{c}, \qquad m(0)=0.
\end{equation}
Moreover, the central density $\rho_c=\widehat\rho(\delta_c,\delta_c)$ and the central pressure $p_c=\widehat{p}_{\mathrm{iso}}(\delta_c,\delta_c)$ should satisfy some physical admissibility conditions:
\begin{definition}[Physically admissible initial data for static configurations]\label{IDstatic}
	Let $\rho_c=\widehat{\rho}(\delta_c,\delta_c)$, and  $p_c=\widehat{p}_\mathrm{iso}(\delta_c,\delta_c)$ denote the central density and pressure, respectively. An initial data set $(\rho_c,p_c)$ is said to be \emph{physically admissible} if
	\begin{subequations}
		\begin{align}
		&\rho_c>0, \label{rhoc}\\
		&p_c>0\quad \text{for radially compressed balls}, \\
		&p_c<0\quad \text{for radially stretched balls},
		\end{align}
	\end{subequations}
	and
	\begin{equation}\label{InitialData}
		0<c^2_\mathrm{L}(\delta_c,\delta_c)\leq1,\quad 0\leq c^2_\mathrm{T}(\delta_c,\delta_c)\leq1,\quad c^2_\mathrm{L}(\delta_c,\delta_c)>\frac{4}{3}c^2_\mathrm{T}(\delta_c,\delta_c).
	\end{equation}
	In addition, the standard relativistic energy conditions at the center of symmetry should be satisfied:
	\begin{subequations}
		\begin{align}
		\mathrm{SEC}&:\quad \rho_c+3p_c\geq 0\quad\text{and} \quad \rho_c+p_c\geq 0; \label{SEC}\\
		\mathrm{WEC}&:\quad \rho_c\geq 0;\quad \rho_c+p_c\geq 0; \\
		\mathrm{NEC}&:\quad \rho_c+p_c\geq 0; \\
		\mathrm{DEC}&:\quad \rho_c\geq |p_c|. \label{DEC}
		\end{align}
	\end{subequations}
\end{definition}
\begin{remark}
	In practice, it suffices to consider~\eqref{rhoc}, the conditions on the wave speeds~\eqref{InitialData}, and either the dominant energy condition $(\mathrm{DEC})$~\eqref{DEC} if $p_c>0$, or the first inequality on the strong energy condition $(\mathrm{SEC})$~\eqref{SEC} if $p_c<0$.
\end{remark}
\begin{theorem}[Strong regularity]\label{strongregthm}
Let $(\phi,m,\rho,p_\mathrm{rad},p_\mathrm{tan})$ be a regular static self-gravitating elastic ball with initial data $(\phi_\mathrm{c},0, \rho_\mathrm{c},p_\mathrm{c},p_\mathrm{c})$, where $\rho_\mathrm{c}=\widehat{\rho}(\delta_\mathrm{c},\delta_\mathrm{c})$, $p_c=\widehat{p}_\mathrm{rad}(\delta_\mathrm{c},\delta_\mathrm{c})=\widehat{p}_\mathrm{tan}(\delta_\mathrm{c},\delta_\mathrm{c})$. If $\partial_\delta\widehat{p}_\mathrm{rad}(\delta_\mathrm{c},\delta_\mathrm{c})\neq0$ then $(\phi,m,\rho,p_\mathrm{rad},p_\mathrm{tan})$ is strongly regular, and the following estimates hold:
\begin{equation}
|m^{\prime}(r)|\leq C r^2, \qquad |\phi^\prime(r)|\leq C r,
\end{equation}
\begin{equation}\label{estistronreg}
|\rho^\prime(r)|+|p^\prime_\mathrm{rad}(r)|+|p^\prime_\mathrm{tan}(r)|+|\delta^\prime(r)|+|\eta^\prime(r)|\leq C r, \qquad 0\leq r\leq r_*,
\end{equation}
for some positive constants $C$ and $r_*$.
\end{theorem}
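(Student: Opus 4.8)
The plan is to work in the static gauge $v\equiv 0$ and analyze the closed first-order system \eqref{TOV2} near $r=0$. Two of the five estimates come essentially for free: from the mass equation \eqref{massstatic} and boundedness of $\widehat\rho(\delta,\eta)$ near the (isotropic) center one gets $|m'(r)|=4\pi r^2\widehat\rho\le Cr^2$, whence $m/r^2=O(r)$; feeding this into \eqref{integro2} (with $v=0$) gives $|\phi'(r)|\le Cr$. These reproduce and quantify Proposition~\ref{SReg}. The remaining content is the $O(r)$ bound on $\delta',\eta',\rho',p'_\mathrm{rad},p'_\mathrm{tan}$, and by the chain rule (with $\widehat\rho,\widehat p_\mathrm{rad},\widehat p_\mathrm{tan}\in C^2$) it suffices to prove $|\delta'|,|\eta'|\le Cr$, i.e. $\delta-\delta_c,\eta-\delta_c=O(r^2)$. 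In view of the Bondi criterion for strong regularity of anisotropic balls \cite{1974ApJ...188..657B} (namely $q/r\to0$), and since $\widehat q(\delta,\delta)=0$ by \eqref{ISOq} so that $q=\widehat q(\delta,\eta)=O(|\eta-\delta|)$, this same estimate on $w:=\eta-\delta$ also yields strong regularity.

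First I would recast \eqref{TOV2a}--\eqref{massstatic} as a system of integral equations. The hypothesis $\partial_\delta\widehat p_\mathrm{rad}(\delta_c,\delta_c)\neq0$, together with $\widehat\rho+\widehat p_\mathrm{rad}\neq0$ at the center, gives $c^2_\mathrm{L}(\delta_c,\delta_c)\neq0$ through \eqref{cLR}, so by continuity $c^2_\mathrm{L}$ is bounded away from zero on a small interval $[0,r_*]$ and \eqref{TOV2a} can be solved for $\delta'$. Equation \eqref{Etastatic} integrates to the averaging formula $\eta(r)=\tfrac{3}{r^3}\int_0^r F(\sigma)\sigma^2\,d\sigma$ with $F=\delta(1-2m/r)^{-1/2}$, a smoothing operator for which $\delta-\delta_c=O(r^2)$ forces $\eta-\delta_c=O(r^2)$ (indeed with a gain factor $3/5$). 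The delicate term is the singular coupling $s\,\eta'/\eta$ in \eqref{TOV2a}, which I would handle by the identity $\eta'=-\tfrac3r(\eta-F)$ together with $\eta-F=\eta\bigl(1-\tfrac{\delta/\eta}{\sqrt{1-2m/r}}\bigr)$.

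The crucial observation is that this last factor is exactly the denominator appearing in the transverse-wave contribution to $s$. Writing $s=s_1+s_2$ with $s_1=\eta\partial_\eta\widehat p_\mathrm{rad}/(\widehat\rho+\widehat p_\mathrm{rad})$ and substituting the explicit form \eqref{cTT} of $\tilde c^2_\mathrm{T}$ into $s_2$, the factor $1-(\delta/\eta)^2$ cancels and $s_2=\tfrac23\,\widehat q/\bigl[(\widehat\rho+\widehat p_\mathrm{rad})(1-\tfrac{\delta/\eta}{\sqrt{1-2m/r}})\bigr]$; multiplying by $\eta-F$ then removes the remaining $0/0$ factor entirely, so that the coupling term in the equation for $\delta'$ reduces to $\tfrac{2\widehat q}{c^2_\mathrm{L}(\widehat\rho+\widehat p_\mathrm{rad})}\,\tfrac1r+\tfrac{3\partial_\eta\widehat p_\mathrm{rad}}{c^2_\mathrm{L}(\widehat\rho+\widehat p_\mathrm{rad})}\,\tfrac{\eta-F}{r}$. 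Both pieces are manifestly $O(r)$ once $|w|\le Kr^2$ is assumed, because $\widehat q=O(|w|)$ and $\eta-F=w-\delta\bigl((1-2m/r)^{-1/2}-1\bigr)=O(r^2)$. Hence the full integrand for $\delta$, including the gravitational source $\tfrac{\delta}{c^2_\mathrm{L}}\tfrac{m/r^2+4\pi r\widehat p_\mathrm{rad}}{1-2m/r}=O(r)$, is $O(\sigma)$, and $\delta-\delta_c=O(r^2)$.

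I would then close the argument by a fixed-point/contraction scheme (as in Rendall--Schmidt \cite{RS91} and Alho--Calogero \cite{Alho:2019fup}) on the Banach space of triples $(\delta,\eta,m)$ on $[0,r_*]$ with $|\delta-\delta_c|,|\eta-\delta_c|\le Kr^2$ and $|m|\le Kr^3$: the estimates above show the integral map preserves this space, and for $r_*$ small the smoothness of the constitutive functions and the lower bound on $c^2_\mathrm{L}$ make it a contraction. Uniqueness then identifies the given regular solution with the fixed point, so it satisfies $w=O(r^2)$, whence $\delta',\eta'=O(r)$, the stated estimate \eqref{estistronreg}, and strong regularity. The main obstacle is precisely the coupling coefficient $s$: it is of indeterminate $0/0$ type at the isotropic center, so its on-solution limit is genuinely path-dependent and, unlike the formal value $s(\delta_c,\delta_c,0,0)=0$, need not vanish; the cancellation exhibited above is what renders the singular term harmless and lets the otherwise standard contraction go through.
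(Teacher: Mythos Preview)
Your algebraic reduction of the $s_2$ piece is correct and gives the clean form
\[
\partial_\delta\widehat p_\mathrm{rad}\,\delta' \;=\; \frac{2\widehat q(\delta,\eta)}{r}+\frac{3\,\partial_\eta\widehat p_\mathrm{rad}(\delta,\eta)\,(\eta-F)}{r}+O(r),
\]
but this is where the argument stops one cancellation short. Each displayed term is only $O(|\eta-\delta|/r)$, and that linear-in-$w$ over $r$ structure does not make the Picard map a contraction for small $r_*$: in the weighted norm $\sup r^{-2}|\,\cdot\,|$ the Lipschitz constant of the right-hand side picks up a contribution of size $\bigl(2|\partial_\eta\widehat q_c|+3|\partial_\eta\widehat p_{\mathrm{rad},c}|\bigr)/|\partial_\delta\widehat p_{\mathrm{rad},c}|$, which is material-dependent and entirely independent of $r_*$. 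There is no reason it should be $<1$, so neither the self-map nor the contraction estimate closes as you describe. What actually rescues the situation is a second, deeper cancellation: the isotropic-state identity $3\partial_\eta\widehat p_\mathrm{rad}(\delta,\delta)+2\partial_\eta\widehat q(\delta,\delta)=0$ (a consequence of \eqref{Maxwell} and \eqref{ISOq}) forces the \emph{sum} of your two singular terms to be $O\bigl((\eta-\delta)^2/r\bigr)+O(m/r^2)$ rather than $O(|\eta-\delta|/r)$. With that quadratic gain the Lipschitz constant becomes $O(Kr_*)$ and the scheme does close.

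The paper exploits exactly this second identity, but by a different route from yours: it Taylor-expands $\widehat q$ and $\partial_\eta\widehat p_\mathrm{rad}$ around $(\delta,\delta)$ so as to isolate a $(\eta-F)^2/r$ remainder, introduces $u(r)=r\bigl(\delta/\sqrt{1-2m/r}\bigr)'$, turns the averaging formula for $\eta$ into the bound $|\eta-F|^2\le\tfrac1{5r}\int_0^r u^2$ via Cauchy--Schwarz, and arrives at the nonlinear integral inequality $u\le C\bigl(r^2+\tfrac1r\int_0^r u^2\bigr)$, which is handled by the Stachurska lemma. This works \emph{directly on the given regular solution} and needs no separate uniqueness input, whereas your plan constructs a strongly regular fixed point and must then argue that any merely regular ball coincides with it; that is an extra step you would still have to supply.
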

\begin{proof}
    The proof is similar to the proof of the strong regularity theorem in the Newtonian setting given in~\cite{Alho:2019fup}. Since we are interested in the behaviour of regular ball solutions as $r\rightarrow0^+$, we may restrict to the domain $r\in[0,r_*)$, where $r_*$ can be chosen arbitrarily small. Since $\widehat{\rho}(\delta_\mathrm{c},\delta_\mathrm{c})>0$, then by continuity $\widehat{\rho}(\delta,\eta)>0$ in $r\in[0,r_*)$, and from~\eqref{massstatic}, $m^\prime(r)\leq C r^2$, which implies
    \begin{equation}\label{Boundmass}
    m(r)\leq C r^3, \qquad r< r_*.
    \end{equation}
    Moreover, since the constitutive functions are assumed to be $C^2$, by Taylor's theorem there exist functions $h_1(\delta,\eta)$ and $h_2(\delta,\eta)$, bounded in a small disk $D$ around $(\delta,\delta)$, such that
    \begin{equation}
    \partial_\eta\widehat{p}_\mathrm{rad}(\delta,\eta)=\partial_\eta\widehat{p}_\mathrm{rad}(\delta,\delta)+ h_1(\delta,\eta)(\eta-\delta)
    \end{equation}
    and
    \begin{equation}
    \widehat{q}(\delta,\eta)=\widehat{q}(\delta,\delta)+ \partial_\eta \widehat{q}(\delta,\delta)(\eta-\delta)+h_2(\delta,\eta)(\eta-\delta)^2.
    \end{equation}
    Plugging into~\eqref{TOV2a} and using the fact that at the isotropic state (see Definition~\ref{IsoState}) $\widehat{q}(\delta,\delta)=0$ and $3\partial_\eta\widehat{p}_\mathrm{rad}(\delta,\delta)+2\partial_\eta\widehat{q}(\delta,\delta)=0$, we obtain
    \begin{equation}\label{Expdeltaprime}
    \begin{split}
    \partial_\delta\widehat{p}_\mathrm{rad}(\delta,\eta)\frac{d\delta}{dr} =&\left[h_1(\delta,\eta)+h_2(\delta,\eta)\right]\frac{\left(\eta-\left(1-\frac{2m}{r}\right)^{-\frac{1}{2}}\delta\right)^2}{r}\\
    &+\left[\delta h_1(\delta,\eta)-2\delta\eta h_2(\delta,\eta)\right]\frac{1-\left(1-\frac{2m}{r}\right)^{\frac{1}{2}}}{r\left(1-\frac{2m}{r}\right)^{\frac{1}{2}}} -\delta^2 h_2(\delta,\eta)\frac{\frac{2m}{r^2}}{\left(1-\frac{2m}{r}\right)} \\
    &-(\widehat{\rho}(\delta,\eta)+\widehat{p}_\mathrm{rad}(\delta,\eta))\frac{\left(\frac{m}{r^2} +4\pi r \widehat{p}_{\mathrm{rad}}(\delta,\eta)\right)}{\left(1-\frac{2m}{r}\right)}.
    \end{split}
    \end{equation}
    Now let
    \begin{equation}
    u(r)=r\left(\frac{\delta}{\left(1-\frac{2m}{r}\right)^{1/2}}\right)^\prime = \frac{1}{\left(1-\frac{2m}{r}\right)^{1/2}}\left(r\delta^\prime-\delta\frac{\frac{m}{r}-4\pi r^2\widehat{\rho}(\delta,\eta)}{1-\frac{2m}{r}}\right).
    \end{equation}
   If $\partial_\delta\widehat{p}_\mathrm{rad}(\delta_\mathrm{c},\delta_\mathrm{c})\neq 0$ then, for $r_*$ sufficiently small, $\inf_{r\in(0,r_*)}\partial_\delta\widehat{p}_\mathrm{rad}(\delta,\eta)\neq0$, and hence for regular ball solutions $u(r)\in C^{0}((0,r_*))$, with $\lim_{r\rightarrow 0^+}u(r)= 0$. Furthermore, from the definition of $\eta(r)$ and integration by parts we have
    \begin{equation}
        \eta(r)-\frac{\delta(r)}{\left(1-\frac{2m(r)}{r}\right)^{1/2}}=-\frac{1}{r^3}\int^{r}_{0}u(s)s^2 ds.
    \end{equation}
    By the Cauchy-Schwarz's inequality,
    \begin{equation}
        \left|\eta(r)-\frac{\delta}{\left(1-\frac{2m(r)}{r}\right)^{1/2}}\right|\leq\frac{1}{\sqrt{5r}}\left(\int^{r}_{0} u(s)^2 ds \right)^{1/2},
    \end{equation}
    and using
    \begin{equation}
        \left|\frac{1-\left(1-\frac{2m}{r}\right)^{\frac{1}{2}}}{r\left(1-\frac{2m}{r}\right)^{\frac{1}{2}}}\right|\leq \frac{m}{r^2}
    \end{equation}
    together with~\eqref{Boundmass}, we get from~\eqref{Expdeltaprime}
    \begin{equation}
        u(r)\leq C \left(r^2+\frac{1}{r}\int^{r}_{0} u(s)^2 ds\right)
    \end{equation}
    with $C$ a positive constant. The rest of the proof is identical to the proof in~\cite{Alho:2019fup}, i.e., using the Stachurska lemma we can conclude that for sufficiently small $r$ one has $u(r)\leq C r^2$, and hence $\delta^\prime (r)\leq C r$. Moreover
    \begin{equation}
        \left|\eta(r)-\frac{\delta(r)}{\left(1-\frac{2m(r)}{r}\right)^{1/2}}\right|\leq C r^2,
    \end{equation}
    and, by~\eqref{Etastatic}, $|\eta^{\prime}(r)|\leq Cr$. Finally, from the relations
    \begin{subequations}
\begin{align}
    \rho^\prime (r) &= \bigl(\widehat{\rho}(\delta,\eta)+\widehat{p}_\mathrm{rad}(\delta,\eta)\bigr)\delta^{-1}\delta^\prime(r)+\frac{2}{3}\widehat{q}(\delta,\eta)\eta^{-1}\eta^\prime(r), \\
    p^\prime_\mathrm{rad} (r) &= \partial_\delta\widehat{p}_\mathrm{rad}(\delta,\eta)\delta^\prime(r)+\partial_\eta\widehat{p}_\mathrm{rad}(\delta,\eta)\eta^\prime(r), \\
    p^\prime_\mathrm{tan} (r) &= \partial_\delta\widehat{p}_\mathrm{tan}(\delta,\eta)\delta^\prime(r)+\partial_\eta\widehat{p}_\mathrm{tan}(\delta,\eta)\eta^\prime(r),
\end{align}
    \end{subequations}
    the estimates~\eqref{estistronreg} follow.
\end{proof}
By the above Theorem, the first condition in~\eqref{InitialData} implies that regular solutions of~\eqref{TOV2} are strongly regular, and $\delta^{\prime}(r)\rightarrow 0$, $\eta^{\prime}(r)\rightarrow 0$ as $r\rightarrow 0^+$. In fact they are even analytic having the following Taylor expansion  
\begin{subequations}
    \begin{align}
        \delta(r) &=  \delta_c - \frac{2\pi}{3}\frac{\delta_c}{c_{\rm L}^2(\delta_c,\delta_c)}\left( \rho_c + 3 p_c - 2 \rho_c \left(c_{\rm L}^2(\delta_c,\delta_c) + \frac{4}{3} c_{\rm T}^2(\delta_c,\delta_c)\right)\right)r^2+ {\cal O}(r^4);\\
        \eta(r) &= \delta_c -\frac{2}{5}\pi\frac{\delta _c}{c_{\rm L}^2(\delta_c,\delta_c)}\left(\rho _c+3 p_c-4  \rho _c \left(c_{\rm L}^2(\delta_c,\delta_c)+\frac{2 }{3 }c_{\rm T}^2(\delta_c,\delta_c)\right)\right)r^2 + {\cal O}(r^4);
        \\
        m(r) &= \frac{4 }{3} \pi \rho_c r^3 + \frac{8 \pi ^2}{15c_{\rm L}^2(\delta_c,\delta_c)}  \left(\rho _c-p_c\right) \left(  \rho_c + 3 p_c - 2\rho _c \left( c_{\rm L}^2(\delta_c,\delta_c)+\frac{4}{3} c_{\rm T}^2(\delta_c,\delta_c)\right)\right)r^5+ {\cal O}(r^7).
    \end{align}
\end{subequations}
as $r\rightarrow 0^{+}$.

\newpage

    \section{Radial perturbations}\label{RS}
Linearizing the Einstein equations~\eqref{EinNEW} around a solution of the static background equations~\eqref{TOV2} leads to equations for $(\phi_\mathrm{E},m_\mathrm{E},\delta_\mathrm{E},\eta_\mathrm{E},v_\mathrm{E})$.
In terms of the fundamental radial displacement field $\xi(t,r)$, the linearized velocity, mass and energy density are given by equations~\eqref{LinVel},~\eqref{LinMass}, and~\eqref{LinRho} respectively. Using Definition~\ref{DefElaBall}, it follows that for spherically symmetric elastic solid materials we have
\begin{subequations}
	\begin{align}
	\rho_\mathrm{E}&=(\widehat{\rho}+\widehat{p}_\mathrm{rad})\frac{\delta_\mathrm{E}}{\delta}+\frac{2}{3}\widehat{q}\frac{\eta_\mathrm{E}}{\eta}, \label{LinRhoEl}\\ 
 (p_\mathrm{rad})_\mathrm{E} &=\delta\partial_\delta\widehat{p}_\mathrm{rad}\frac{\delta_\mathrm{E}}{\delta}+\eta\partial_\eta\widehat{p}_\mathrm{rad} \frac{\eta_\mathrm{E}}{\eta}, \\
 (p_\mathrm{tan})_\mathrm{E} &=\delta\partial_\delta\widehat{p}_\mathrm{tan}\frac{\delta_\mathrm{E}}{\delta}+\eta\partial_\eta\widehat{p}_\mathrm{tan} \frac{\eta_\mathrm{E}}{\eta},
	\end{align}
\end{subequations}
which generalizes equation~\eqref{LinEoS} for fluids. The linearized quantity $\delta_\mathrm{E}$ is obtained from the particle current conservation equation, which yields the same equation as for fluids, i.e.
\begin{equation}\label{Lindelta}
\delta_\mathrm{E}(t,r) = -\delta \partial_r\xi(t,r)-\delta\left[\frac{2}{r}-\frac{d\phi}{dr}+\frac{1}{\delta}\frac{d\delta}{dr}\right]\xi(t,r),
\end{equation}
while the linearization of the equation~\eqref{EtaEqs} for $\eta$ yields
\begin{equation}
e^{-\phi}\partial_t\eta_\mathrm{E}=-\frac{3}{r}\frac{\delta}{\left(1-\frac{2m}{r}\right)^{1/2}}v_\mathrm{E}.
\end{equation}
Using~\eqref{LinRhoEl} and~\eqref{Lindelta} in~\eqref{LinRho}, leads to
\begin{equation}
\eta_\mathrm{E}(t,r)=-\frac{3}{r}\frac{\delta}{\left(1-\frac{2m}{r}\right)^{1/2}}\xi(t,r).
\end{equation}
Hence, \eqref{linearized2} becomes (using~\eqref{LinVel} and~\eqref{LinMass})
%
%
%
\begin{equation}
\begin{split}
  &\frac{e^{-2\phi}}{1-\frac{2m}{r}} \partial^2_t\xi(t,r)=c^2_\mathrm{L}\partial^2_r\xi(t,r) +\Big[\frac{d}{dr}\left(c^2_\mathrm{L}-\phi\right)+\frac{2}{r}\frac{\delta\partial_\delta \widehat{q}}{\rho+p_\mathrm{rad}}-\frac{3}{r}\frac{\eta\partial_\eta\widehat{p}_\mathrm{rad}}{\rho+p_\mathrm{rad}}\frac{\delta/\eta}{\left(1-\frac{2m}{r}\right)^{1/2}} \\
  &+c^2_\mathrm{L}\left(\frac{2}{r}-\frac{d\phi}{dr}+\frac{4\pi r (\rho+p_\mathrm{rad})}{1-\frac{2m}{r}}+\frac{2}{r}\frac{q}{\rho+p_\mathrm{rad}}\frac{\delta/\eta}{\left(1-\frac{2m}{r}\right)^{1/2}}\right)\Big]\partial_r\xi(t,r) \\
   &+\Big[2\left(\frac{d\phi}{dr}\right)^2+\frac{d}{dr}\left(c^2_\mathrm{L}\left(\frac{2}{r}-\frac{d\phi}{dr}\right)\right)+\left(\frac{2}{r}-\frac{d\phi}{dr}\right)c^2_\mathrm{L}\left(\frac{4\pi r(\rho+p_\mathrm{rad})}{1-\frac{2m}{r}}+\frac{2}{r}\frac{q}{\rho+p_\mathrm{rad}}\frac{\delta/\eta}{\left(1-\frac{2m}{r}\right)^{1/2}}\right) \\
   &+\frac{2m}{r^3\left(1-\frac{2m}{r}\right)}+\left(\frac{2}{r}-\frac{d\phi}{dr}\right)\frac{2}{r}\frac{q }{\left(p_{\rm rad}+\rho \right)}\frac{\delta/\eta}{\left(1-\frac{2m}{r}\right)^{1/2}}-\frac{8 \pi  q }{1-\frac{2 m}{r}}-\frac{6}{r^2 }\frac{\eta\partial_\eta q}{\left(p_{\rm rad}+\rho \right) }\frac{\delta/\eta}{\left(1-\frac{2m}{r}\right)^{1/2}}\\
   &-\frac{6}{r^2}\frac{\delta  \eta  \partial^2_{\eta \delta}\widehat{q}}{ \left(p_{\rm rad}+\rho \right)} \left(1-\frac{\delta/\eta }{\left(1-\frac{2m}{r}\right)^{1/2}}\right)-\frac{2}{r}\frac{2\delta \partial_\delta \widehat{q}}{ \left(p_{\rm rad}+\rho \right)} \left(\frac{1}{r}-\frac{4 \pi  r \left(p_{\rm rad}+\rho \right)}{1-\frac{2 m}{r}}+\frac{\delta /\eta }{\left(1-\frac{2m}{r}\right)^{1/2}}\left(\frac{2}{r}-\frac{d\phi}{dr}\right)-\frac{d\phi}{dr}\right)\\
   &-\frac{2}{r}\frac{\delta ^2 \partial^2_\delta q}{c_{\rm L}^2 \left(p_{\rm rad}+\rho \right)}\left( \frac{d\phi}{dr}-\frac{2}{r}\frac{\delta \partial_\delta q}{\left(p_{\rm rad}+\rho \right)}\left(1-\frac{ \delta/\eta}{\left(1-\frac{2m}{r}\right)^{1/2}}\right)-\frac{2}{r}\frac{\widehat{q}}{\left(p_{\rm rad}+\rho \right)}\frac{ \delta/\eta}{\left(1-\frac{2m}{r}\right)^{1/2}}\right)\Big]\xi(t,r),
\end{split}
   \end{equation}
which generalizes equation~\eqref{WEdisp} to the elastic solid setting. For strongly regular balls, it follows from  Theorem~\ref{strongregthm} and L'H\^opital's rule that
\begin{equation}
\lim_{r\rightarrow 0^+} \frac{\widehat{q}(\delta(r),\eta(r))}{r} = \lim_{r\rightarrow 0^+} \frac{\partial_{\delta}\widehat{q}(\delta(r),\eta(r))}{r} = \lim_{r\rightarrow 0^+} \frac{\partial_{\eta}\widehat{q}(\delta(r),\eta(r))}{r} = 0,
\end{equation}
while, as in the fluid case, the unbounded terms in this right-hand side cancel the term $c^2_\mathrm{s}\partial^2_r \xi(t,r)$ in this limit, so that
\begin{equation}
\lim_{r\rightarrow 0^+} \partial^2_t \xi(t,r)=0.
\end{equation}

At the surface of the ball conditions~\eqref{BoundCond} yield, in the elastic case,
\begin{equation}
    (p_\mathrm{rad})_\mathrm{L}(t,\mathcal{R}(t)) = -\frac{e^{\phi(\mathcal{R})}}{\mathcal{R}^2}\delta\partial_\delta\widehat{p}_\mathrm{rad}(\delta,\eta)\partial_r\left(r^2 e^{-\phi}\xi\right)(t,\mathcal{R}(t))-\frac{3}{\mathcal{R}}\eta\partial_\eta\widehat{p}_\mathrm{rad}(\delta,\eta)\xi(t,\mathcal{R}(t))=0.
\end{equation}
The perturbation variables are 
\begin{equation}
\zeta(t,r)\equiv r^2 e^{-\phi(r)}\xi(t,r),\qquad \chi(t,r)=-\frac{e^{2\phi(r)}}{\left(1-\frac{2m}{r}\right)^{1/2}}(p_\mathrm{rad})_\mathrm{L}(t,r),
\end{equation}
where $\zeta$, $\chi$ are the renormalized radial displacement and the  renormalized Lagrangian perturbation of the radial pressure $(p_\mathrm{rad})_\mathrm{L}$, respectively. 
%
%
%
%
%
Making the ansatz 
\begin{equation}
\zeta(t,r)=e^{i  \omega t}\zeta_0(r),\qquad \chi(t,r)=e^{i  \omega t}\chi_0(r),
\end{equation}
where $\zeta_0(r)=\zeta(0,r)$ and $\chi_0(r)=\chi(0,r)$, leads to an eigenvalue problem for the system of first order ordinary differential equations
\begin{subequations} \label{SturmLiouville}
	\begin{align}
	\delta\partial_\delta\widehat{p}_\mathrm{rad}(\delta,\eta)\frac{d\zeta_0}{dr} &=-\frac{3}{r}\eta\partial_\eta\widehat{p}_\mathrm{rad}(\delta,\eta)\zeta_0 + \left(1-\frac{2m}{r}\right)^{1/2}e^{-3\phi}r^2\chi_0, \\
	\delta\partial_\delta\widehat{p}_\mathrm{rad}(\delta,\eta) \frac{d\chi_0}{dr}&=\frac{3}{r}\eta\partial_\eta\widehat{p}_\mathrm{rad}(\delta,\eta)\chi_0-\left[Q+\omega^2W\right]\zeta_0,
	\end{align}
\end{subequations}
where
\begin{subequations}
	\begin{align} 
	Q &= \frac{e^{3\phi}}{r^2\left(1-\frac{2m}{r}\right)^{1/2}}\Big[\frac{4}{r^2}(\delta\partial_\delta\widehat{q}-\widehat{q})^2+\delta\partial_\delta\widehat{p}_\mathrm{rad}\Big(\frac{2}{r^2}\widehat{q}-\frac{6}{r^2}\eta\partial_\eta\widehat{q}-\frac{4}{r}\frac{dp_\mathrm{rad}}{dr} \nonumber\\
	&\qquad\qquad\qquad\qquad\qquad\qquad +\left(\frac{6}{r}\widehat{q}-\frac{dp_\mathrm{rad}}{dr}\right)\frac{d\phi}{dr}-8\pi\frac{(\widehat{\rho}+\widehat{p}_\mathrm{rad})}{\left(1-\frac{2m}{r}\right)}\widehat{p}_\mathrm{rad}\Big)\Big], \\
	W &=\frac{(\widehat{\rho}+\widehat{p}_\mathrm{rad})e^{\phi}}{r^2\left(1-\frac{2m}{r}\right)^{3/2}}\delta\partial_\delta\widehat{p}_\mathrm{rad} .
	\end{align}
\end{subequations}
This system of ordinary differential equations, coupled to the background equations~\eqref{TOV2}, must be solved subject to the boundary conditions~\eqref{BoundCond}, and~\eqref{BoundCond1}, i.e.,
\begin{equation}
\lim_{r\rightarrow 0^+}\frac{\zeta_0(r)}{r^2}=0,\qquad \lim_{r\rightarrow R}\chi_0(r)=0.
\end{equation}
The system~\eqref{SturmLiouville} is not in standard Sturm-Liouville form; however, it has been shown by Karlovini \& Samuelsson~\cite{Karlovini:2003xi} that the main Sturm-Liouville results still hold in the elastic case, assuming continuity of $\zeta$ and $\chi$. For strongly regular solutions the following expansions at the center of symmetry hold
 \begin{subequations}
     \begin{align}
         \zeta_0(r) &= e^{-\phi_\mathrm{c}} g_0 r^3 +\mathcal{O}(r^5), \\
         \chi_0(r) &= 3g_0 e^{2\phi_c} \delta_c \left(\partial_\eta p_{\rm rad}(\delta_c,\delta_c)+\partial_\delta p_{\rm rad}(\delta_c,\delta_c)\right)+{\cal O}(r^2)\,.
     \end{align}
 \end{subequations}
as $r\rightarrow 0^+$ with $g_0$ a constant that parameterizes the solutions, as follows from~\eqref{sRg}.

\newpage

    \section{The Newtonian limit}\label{Newtonian}
%
So far we have been using units such that the speed of light is $c=1$. If we reinstate $c$ into our equations and take the limit as $c\rightarrow \infty$ we obtain $\phi=\phi_N$, where $\phi_N$ is the Newtonian gravitational potential, $g_{rr}=1$, and\footnote{Note that ultra-relativistic materials do not have a Newtonian limit.}
	\begin{equation}\label{NewForm1}
	\delta_N(t,r) =\frac{\varrho(t,r)}{\varrho_0} ,\qquad \eta_N(t,r) =\frac{\bar{\varrho}(t,r)}{\varrho_0},
	\end{equation}
 where
\begin{equation}\label{NewForm2}
 \bar{\varrho}(t,r)=\frac{3m(t,r)}{4\pi r^3}=\frac{3}{r^3}\int^{r}_{0}\varrho(t,s)s^2 ds
\end{equation}
 is the averaged mass density of a ball of radius $r$, so that we recover the results in Alho \& Calogero~\cite{Alho:2018mro,Alho:2019fup} for the static case, and in Calogero~\cite{Cal21} for the time-dependent equations of motion.
Furthermore, if the constitutive functions can be deduced\footnote{This case corresponds to the so-called \emph{hyperelastic} materials.} from a potential deformation function $\widehat{w}(\delta,\eta)$ (called the Newtonian stored energy function), it has been further shown that\footnote{For clarity, we drop the subscripts from $\delta_N, \eta_N$.}
\begin{equation}\label{hyperDef}
\widehat{p}_{\mathrm{rad}}(\delta,\eta) = \delta^2 \partial_\delta \widehat{w}(\delta,\eta) \quad \, \quad \widehat{p}_{\mathrm{tan}}(\delta,\eta)=\widehat{p}_{\mathrm{rad}}(\delta,\eta)+\frac{3}{2}\delta\eta \partial_{\eta} \widehat{w}(\delta,\eta)\,.
\end{equation}
For perfect fluids, the stored energy function is a function of $\delta$ only, i.e.\ $w=\widehat{w}(\delta)$, and $p_\mathrm{iso}(t,r)=\widehat{p}_\mathrm{iso}(\delta(t,r))=\delta^2 \frac{d\widehat{w}}{d\delta}(\delta)$. 

In the Newtonian limit, the squared speed of elastic longitudinal waves in the radial direction~\eqref{cLR} reduces to
\begin{equation}\label{cLRN}
	c^2_\mathrm{L}(\delta,\eta)= \frac{\delta\partial_\delta \widehat{p}_{\mathrm{rad}}(\delta,\eta)}{\varrho_0\delta}.
\end{equation}
The squared speed of transverse waves in the radial direction~\eqref{cTR} is given by
	\begin{equation}\label{cTRN}
	c^2_\mathrm{T}(\delta,\eta)= \frac{\widehat{p}_{\mathrm{tan}}(\delta,\eta) - \widehat{p}_{\mathrm{rad}}(\delta,\eta)}{\varrho_0\delta\left(1 - \delta^2/\eta^2\right)},
	\end{equation}
 while the squared speed of transverse waves in the tangential direction and oscillating in the radial direction~\eqref{cTT} is given by 
	\begin{equation}\label{cTTN}
	\tilde{c}^2_\mathrm{T}(\delta,\eta)= \left(\frac{\delta}{\eta}\right)^2\frac{\widehat{p}_{\mathrm{tan}}(\delta,\eta) - \widehat{p}_{\mathrm{rad}}(\delta,\eta)}{\varrho_0\delta\left(1 - \left(\frac{\delta}{\eta}\right)^2\right)}.
	\end{equation}
Finally, equations~\eqref{cNC} for the natural choice of the squared speed of longitudinal waves along the tangential direction, $\tilde{c}^2_\mathrm{L}$, and of the squared speed of transverse waves in the tangential direction oscillating in the tangential direction, $\tilde{c}^2_\mathrm{TT}$, become
\begin{equation}\label{cNCN}
 \tilde{c}_\mathrm{L}^2(\delta,\eta) = \frac{\delta \partial_\delta \widehat{p}_{\rm tan}(\delta,\eta) + 3 \eta \partial_\eta \widehat{p}_{\rm tan}(\delta,\eta) }{\varrho_0\delta},\qquad  \tilde{c}_\mathrm{TT}^2(\delta,\eta) = \frac{\frac32 \eta \partial_\eta \widehat{p}_{\rm tan}(\delta,\eta) }{\varrho_0\delta}.
\end{equation}
Physical viability further requires that
\begin{subequations}\label{VelNew}
\begin{align}
    &c^{2}_\mathrm{L}(\delta,\eta)>0, \quad \tilde{c}^2_\mathrm{L}(\delta,\eta)>0, \quad c^{2}_\mathrm{T}(\delta,\eta)\geq0, \quad \tilde{c}^2_\mathrm{T}(\delta,\eta)\geq 0, \quad \tilde{c}^2_\mathrm{TT}(\delta,\eta)\geq 0, \\
    &c^{2}_\mathrm{L}(\delta,\delta)=\tilde{c}^2_\mathrm{L}(\delta,\delta)>0, \quad c^{2}_\mathrm{T}(\delta,\delta)=\tilde{c}^2_\mathrm{T}(\delta,\delta)=\tilde{c}^2_\mathrm{TT}(\delta,\delta)\geq0, \quad c^2_\mathrm{L}(\delta,\delta)-\frac{4}{3}c^2_\mathrm{T}(\delta,\delta)>0. 
\end{align}
\end{subequations}

\begin{remark}
In the non-relativistic theory of elasticity, the reality condition on the transverse wave speeds~\eqref{cTRN} is equivalent to the so called \emph{Baker-Erickson inequality}, see for example John~\cite{John63}. 
\end{remark}
The Newtonian limit of the system~\eqref{EinNEW} of spherically symmetric Einstein-elastic equations results in the local conservation of mass and momentum, and an equation for $\eta$:
\begin{subequations}\label{sssystem2}
	\begin{align}
	&\partial_t\delta+\frac{1}{r^2}\partial_r(r^2\delta v)=0,\\
	&\partial_t (\delta v)+\frac{1}{r^2}\partial_r(r^2\delta v^2)+c^2_\mathrm{L}(\delta,\eta)\partial_r\delta+s(\delta,\eta)\partial_r\eta =-\frac{4\pi\varrho_0}{3}\delta\eta r, \\
	&\partial_t\eta+v\partial_r\eta=-\frac{3}{r}\eta v, 
	\end{align}
\end{subequations}
where the function
\begin{equation}
    s(\delta,\eta)=\left(\frac{\partial_\eta\widehat{p}_\mathrm{rad}(\delta,\eta)}{\varrho_0}+\frac{2}{3\varrho_0}\frac{\widehat{q}(\delta,\eta)}{(\eta-\delta)}\right)
\end{equation}
vanishes on isotropic states:
\begin{equation}
    s(\delta,\delta)=0.
\end{equation}
Thus $\widehat{s}(\delta,\eta)$ is a measure of anisotropy/shear. 
%
The energy density $E$ of an hyperelastic ball is defined as
\begin{equation}
E(t,r)=\varrho_0\delta(t,r) \left(\frac{v(t,r)^2}{2}+\varrho_0^{-1}w(t,r)\right).
\end{equation}
Self-gravitating hyperelastic balls satisfy the local energy balance equation
\begin{equation}\label{localconsenergy}
\partial_t\left(E-\frac{1}{8\pi}\frac{m^2}{r^4}\right)+\frac{1}{r^2}\partial_r\left(r^2(p_\mathrm{rad}+E)v\right)=0.
\end{equation}
The Newtonian limit~\eqref{sssystem2} of the system~\eqref{EinNEW} has the form $\partial_t \bm{u} +\bm{A}(\bm{u})\partial_r \bm{u} =\bm{B}(\bm{u},r)$, with
\begin{equation}
\bm{u}=
\begin{pmatrix}
\delta \\
\delta v \\
\eta
\end{pmatrix}
,\quad
\bm{A}(\bm{u})=
\begin{pmatrix}
0 & 1 & 0 \\
-\frac{(\delta v)^2}{\delta^2}+c^2_\mathrm{L}(\delta,\eta) & 2\frac{(\delta v)}{\delta} & \widehat{s}(\delta,\eta)  \\
0 &0 & v 
\end{pmatrix}
,\quad
\bm{B}(\bm{u})=
\begin{pmatrix}
-\frac{2}{r}(\delta v) \\
-\frac{2}{r}\frac{(\delta v)^2}{\delta}-\frac{4\pi\varrho_0}{3}\eta\delta r \\
-\frac{3}{r}\eta v
\end{pmatrix}.
\end{equation}
The matrix $\bm{A}(\bm{u})$ has eigenvalues  $v+\sqrt{c^2_\mathrm{L}(\delta,\eta)}$, $v-\sqrt{c^2_\mathrm{L}(\delta,\eta)}$, and $v$. Thus, the system is strictly hyperbolic if $c^2_\mathrm{L}(\delta,\eta)>0$, i.e., if$\partial_\delta\widehat{p}_{\mathrm{rad}}(\delta,\eta)>0$ (see~\cite{Cal21}).
\subsection{Steady states}
Non-relativistic steady states are solutions of the system~\eqref{sssystem2} with $v=0$ and $\partial_t\delta =0$, i.e., solutions of
 	\begin{subequations}\label{CPSS_Rho}
 		\begin{align}
 	c^2_\mathrm{L}(\delta,\eta)	\frac{d\delta}{dr} &=\frac{3}{r}s(\delta,\eta)(\eta-\delta)-\frac{4\pi\varrho_0}{3} \eta \delta r    \label{SSdelta}\, ,\\
 	\frac{d\eta}{dr} &= -\frac{3}{r}\left(\eta-\delta\right) \label{SSeta},
   \end{align}
 	\end{subequations}
%
%
%
subject to the regular center conditions
\begin{equation}\label{icNewtSS}
    \eta(0)=\delta(0)=\delta_\mathrm{c}.
\end{equation}
In~\cite{Alho:2019fup} it was proved that solutions with a regular center are strongly regular if $c^2_\mathrm{L}\neq0$. In fact, such solutions are even analytic and have the following Taylor expansion as $r\rightarrow 0^{+}$:
\begin{equation}
    \delta(r)=\delta_\mathrm{c}+\frac{4\pi\varrho_0\delta^2_\mathrm{c}}{3 c^2_\mathrm{L}(\delta_c,\delta_c)} r^2 + \mathcal{O}(r^3) , \qquad \eta(r)=\delta_\mathrm{c}- \frac{4\pi\varrho_0\delta_\mathrm{c}}{5 c^2_\mathrm{L}(\delta_c,\delta_c )} r^2 + \mathcal{O}(r^3).
\end{equation}
The asymptotics for the mass density $\varrho(r)$ and the mass function $m(r)$ can be obtained straightforwardly by using relations~\eqref{NewForm1} and~\eqref{NewForm2}, while for the Newtonian gravitational potential one obtains the asymptotics
\begin{equation}
    \phi_N(r) =\phi_c+\frac{2\pi\varrho_0 \delta_c}{3} r^2 + \mathcal{O}(r^3), \quad r\rightarrow 0^{+}.
\end{equation}	
In~\cite{Alho:2018mro}, a more general form of the equations, allowing for studying shells, was given, and the particular example of the (non-hyperelastic) Seth model was studied in detail. The existence of single and multi-body configurations (consisting of a ball, or a vacuum core, surrounded by an arbitrary number of shells) was proved, along with sharp mass/radius inequalities. In the follow-up paper~\cite{Alho:2019fup}, a definition of spherically symmetric power-law materials, and a new dynamical systems formulation of the above equations, which made use of Milne-type scaling invariant variables, was introduced. These showed that the qualitative properties of power-law materials depend uniquely on two parameters $(\mathrm{n},\mathrm{s})$,\footnote{Related to the parameters $(a,b)$ in~\cite{Alho:2019fup} by $a = - 3(1 + \frac{1}{\mathrm{n}})$ and $b=-\frac{1}{\mathrm{s}}$.} which allowed for proofs of existence of balls for several elastic material laws (by making use of a quite general theorem on the asymptotic properties of regular center solutions). 
In particular, it was shown that if (strongly) regular balls exist then the following relation holds:
    \begin{equation}\label{MainNewRes}
        \varrho(r)<\bar{\varrho}(r),\quad r\in(0,\mathcal{R}].
    \end{equation}
Another important issue concerns the physical viability of such solutions, i.e., whether conditions~\eqref{VelNew} hold. Due to~\eqref{MainNewRes}, the velocities $c^2_\mathrm{T}$, and $\tilde{c}^2_\mathrm{T}$ are nonnegative if and only if 
\begin{equation}
 p_\mathrm{tan}(r)-p_\mathrm{rad}(r)\geq0, \quad    r\in(0,\mathcal{R}].
\end{equation}
Moreover, the existence of exact solutions for scale-invariant models with an irregular center of symmetry (discussed in detail by Chandrasekhar~\cite{Cha39} for  polytropic fluids), and the dynamical systems formulation and lower-dimensional boundaries for scale-invariant models, was briefly discussed. 
%
In~\cite{ACL21}, a subclass of power-law materials which just depend on the two elastic constants of linear elasticity (called \emph{Lam\'e materials}) was identified. Moreover, the existence of exact solutions with an irregular center of symmetry  was established. 
%
%
\subsubsection{Scale invariance, exact solutions, and Milne variables}
In the Newtonian theory of self-gravitating fluids, an important role is played by the polytropic EoS and the isothermal EoS. This is because for these models there exists the so-called \emph{homology theorems}~\cite{Cha39}. Roughly, a differential equation is said to admit a \emph{homology transformation} if, given a solution, a whole class of solutions can be obtained by a simple change of scale. 
In the case of fluids, the deformation potential is a function of a single variable, $\widehat{w}=\widehat{w}(\delta)$, and $\widehat{s}(\delta,\eta)=0$, so that the equation for steady states can be written as a second order ordinary differential equation for $\delta(r)$:
%
%
\begin{equation}\label{POissonSS}
\frac{d}{dr}\left(r^2 \frac{c^2_\mathrm{s}(\delta)}{\delta} \frac{d\delta}{dr} \right) = -4\pi\varrho_0\delta r^2.
\end{equation}
Let us first consider a general class of models which arise from the condition that the speed of sound $c^2_\mathrm{s}(\delta)$ is a homogeneous function of $\delta$ of some degree $\kappa$, i.e. 
\begin{equation}\label{fluidSI}
c^2_\mathrm{s}(\delta)=\frac{\lambda}{\varrho_0}\delta^{\kappa},
\end{equation}
with $\kappa\in\mathbb{R}$. The isotropic pressure is
\begin{equation}
    \widehat{p}_\mathrm{iso}(\delta) = 
    \begin{cases}
        \frac{\lambda}{ (\kappa+1)} \delta^{\kappa+1}+D, \qquad &\kappa\neq -1, \\
        \lambda\ln{(\delta)}+D, \qquad &\kappa= -1,
    \end{cases}
\end{equation}
and the associated deformation potential function is
\begin{equation}
\qquad \widehat{w}(\delta)=
\begin{cases}
    \frac{\lambda}{\kappa(\kappa+1)} \delta^{\kappa}-D\delta^{-1}+E, \qquad &\kappa\neq -1,0, \\
    \lambda\ln{(\delta)} -D\delta^{-1}+E, \qquad &\kappa=0, \\
    -\lambda\left(1+\ln{(\delta)}\right)\delta^{-1}-D\delta^{-1}+E, \qquad &\kappa=-1,
\end{cases}
\end{equation}
with $D,E\in\mathbb{R}$ integration constants. For a natural stress-free reference state, i.e. $\widehat{p}^{(\mathrm{sf})}_\mathrm{iso}(1)=0$ and $\widehat{w}^{(\mathrm{sf})}(1)=0$, these constants are given by $D=-\lambda/(\kappa+1)$ and $E=-\lambda/\kappa$ for $\kappa\neq-1,0$, $D=-\lambda$, $E=-\lambda$ for $\kappa=0$, and $D=0$, $E=\lambda$ for $\kappa=-1$. 

The usual polytropic  EoS~\footnote{Usually written in the standard form 
$p_\mathrm{iso}(\varrho)=\mathcal{K}\varrho^{1+\frac{1}{\mathrm{n}}}$, $\varrho^{-1}_{0}w(\varrho)=\mathrm{n}\mathcal{K}\varrho^{\frac{1}{\mathrm{n}}}$, where $\mathcal{K}=\frac{\mathrm{n}\lambda}{1+\mathrm{n}}/\varrho^{1+\frac{1}{\mathrm{n}}}_0$.} has $D=E=0$ and $\kappa=1/\mathrm{n}$, where $\mathrm{n}\neq-1,0$ is the polytropic index~\footnote{The subclass with $\kappa=0$, i.e. $\mathrm{n}\rightarrow+\infty$, is the isothermal EoS.}, i.e.
\begin{equation}
    \widehat{w}_{\mathrm{pol}}(\delta)=\frac{\mathrm{n}^2\lambda}{1+\mathrm{n}} \delta^{1/\mathrm{n}},
\end{equation}
and is characterized by a pre-stressed reference state with
\begin{equation}
    \widehat{p}^{(\mathrm{ps})}_\mathrm{iso}(1)=p_0=\frac{\mathrm{n}\lambda}{1+\mathrm{n}}, \qquad \widehat{w}^{(\mathrm{ps})}(1)=w_0=\frac{\mathrm{n}^2\lambda}{1+\mathrm{n}}=\mathrm{n}p_0.
\end{equation}
It can be obtained from the natural stress-free reference state stored energy function $\widehat{w}^{(\mathrm{sf})}(\delta)$ of power-law type $(1,1)$, with $\kappa\neq-1,0$, via the transformation~\eqref{SFPSTransfw}, with the natural choice of the reference state pressure given by $p_0=\mathrm{n}\lambda/(1+\mathrm{n})$. 
\begin{proposition}\label{scaling invariance_polytrope}
	For a polytropic EoS, if $\delta(r)$ is a solution of the second order differential equation~\eqref{POissonSS}, then so is $A^{\frac{2}{\mathrm{n}-1}}\delta(Ar)$, with $A$ an arbitrary positive real number.
\end{proposition}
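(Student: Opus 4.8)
The plan is to prove the statement by direct substitution, exploiting the single structural feature that makes \eqref{POissonSS} scale-covariant for a polytrope: the only non-constant coefficient, namely $c^2_\mathrm{s}(\delta)/\delta=(\lambda/\varrho_0)\,\delta^{1/\mathrm{n}-1}$, is a pure power of $\delta$. This homogeneity is exactly what underlies Chandrasekhar's homology transformation, and it guarantees that under $r\mapsto r/A$, $\delta\mapsto A^\alpha\delta$ every term of the equation acquires a common power of $A$. The whole argument therefore reduces to checking that, for a suitable exponent $\alpha$, the transformed left- and right-hand sides differ from the original ones only by that common factor, which then cancels.

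Concretely, I would set $\tilde\delta(r):=A^\alpha\delta(Ar)$ with $\alpha$ initially undetermined, write $s=Ar$, and record the two ingredients $\tilde\delta'(r)=A^{\alpha+1}\delta'(s)$ and $c^2_\mathrm{s}(\tilde\delta)/\tilde\delta=(\lambda/\varrho_0)\,A^{\alpha(1/\mathrm{n}-1)}\,\delta(s)^{1/\mathrm{n}-1}$. Assembling the flux $r^2\,(c^2_\mathrm{s}(\tilde\delta)/\tilde\delta)\,\tilde\delta'(r)$ and using $r^2=s^2/A^2$ shows it equals $A^{\alpha(1/\mathrm{n}-1)+\alpha-1}$ times the original flux evaluated at $s$. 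Differentiating in $r$ contributes one more factor $A$ through $d/dr=A\,d/ds$, and then invoking that $\delta$ satisfies \eqref{POissonSS} at $s$ turns the transformed left-hand side into $-4\pi\varrho_0\,A^{\alpha(1/\mathrm{n}-1)+\alpha}\,\delta(s)\,s^2$, while the transformed right-hand side is $-4\pi\varrho_0\,\tilde\delta\,r^2=-4\pi\varrho_0\,A^{\alpha-2}\,\delta(s)\,s^2$.

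Both sides carry the identical factor $\delta(s)\,s^2$, so $\tilde\delta$ solves \eqref{POissonSS} for all $r$ if and only if the powers of $A$ match, i.e. $\alpha(1/\mathrm{n}-1)+\alpha=\alpha-2$, which is the single scaling relation $\alpha\left(\tfrac{1}{\mathrm{n}}-1\right)=-2$. Solving it fixes the amplitude exponent uniquely as $\alpha=\tfrac{2\mathrm{n}}{\mathrm{n}-1}$ — the standard density homology exponent — and, more generally, $\alpha=2/(1-\kappa)$ for the homogeneous sound speed \eqref{fluidSI} of degree $\kappa$. I expect no conceptual obstacle; the only place requiring genuine care is the bookkeeping of the powers of $A$, which enter from three independent sources — the amplitude prefactor $A^\alpha$, the chain-rule factors produced by differentiating $\delta(Ar)$, and the nonlinear factor $\delta^{1/\mathrm{n}-1}$ — so that a single dropped chain-rule factor or sign slip would corrupt the balance relation and hence the value of $\alpha$.
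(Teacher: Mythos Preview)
Your direct-substitution argument is correct in method and in arithmetic, and it is exactly the approach the paper uses for the elastic generalization (Proposition~\ref{scalinginvariance}); the paper gives no separate proof of Proposition~\ref{scaling invariance_polytrope}, treating it as the classical homology theorem.

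There is, however, a discrepancy you should flag explicitly rather than pass over: you derive $\alpha=\tfrac{2\mathrm{n}}{\mathrm{n}-1}$, whereas the proposition as printed claims the exponent $\tfrac{2}{\mathrm{n}-1}$. Your value is the correct one for $\delta$. This is cross-checked internally by the paper: the power-law solution immediately after the proposition has $\delta(r)\sim r^{-2\mathrm{n}/(\mathrm{n}-1)}$, the large-$r$ asymptotics give $\varrho=\mathcal{O}(r^{-2\mathrm{n}/(\mathrm{n}-1)})$, and the elastic version (Proposition~\ref{scalinginvariance}) gives exponent $\tfrac{2}{1-\theta_p}$, which for the polytropic fluid ($\theta_p=1/\mathrm{n}$) equals $\tfrac{2\mathrm{n}}{\mathrm{n}-1}$. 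The printed exponent $\tfrac{2}{\mathrm{n}-1}$ is the Lane--Emden homology exponent for $h=\delta^{1/\mathrm{n}}$ (cf.\ the Remark following the proposition), not for $\delta$ itself; the proposition evidently contains a typo. Since what you have actually proved is not the statement as written, you should say so in your write-up.
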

\begin{remark}
    The classical homology theorem for polytropic fluids makes use of the variable $h(r)=\delta(r)^{1/\mathrm{n}}$, which leads to the so-called \emph{Lane-Emden equation of index $\mathrm{n}$} \cite{1870AmJS...50...57L}. Solutions of the Lane-Emden equation with a regular center and compact support exist in the range $0\leq \mathrm{n} <5$,  while for $\mathrm{n}=5$ solutions with a regular center have infinite radius for any mass, and for $\mathrm{n}>5$ the polytropes have infinite mass and radius. The case  $\mathrm{n}=3$ is special, since the mass of the polytrope is independent of its radius. The Lane-Emden equation is also explicitly solvable for $\mathrm{n}=0,1,5$.
\end{remark}
Due to the homology theorem, equation~\eqref{POissonSS} admits scale-invariant solutions, obtained by making the power-law ansatz  
\begin{equation}\label{PLSol}
\delta(r)=c r^{p}
\end{equation}
and substituting into equation~\eqref{POissonSS}, which yields the relations 
\begin{equation}
p=-2\mathrm{n}/(\mathrm{n}-1) \quad  \text{ and } \quad c^{\mathrm{n}-1}=\left(\frac{\mathrm{n}\lambda}{4\pi\varrho^2_0}\right)\left(2\frac{2(\mathrm{n}-3)}{(\mathrm{n}-1)^2}\right). 
\end{equation}
Hence, for $\mathrm{n}>3$ we have the explicit solutions
\begin{equation}
\delta(r)=\left(\frac{\mathrm{n}\lambda}{4\pi\varrho^2_0}\right)^{\frac{\mathrm{n}}{\mathrm{n}-1}}\left(\frac{2(\mathrm{n}-3)}{(\mathrm{n}-1)^2}\right)^{\frac{\mathrm{n}}{\mathrm{n}-1}}r^{-\frac{2\mathrm{n}}{\mathrm{n}-1}}\,,
\end{equation}
which are singular at the center $r\rightarrow0^{+}$ and have infinite radius.

The second order differential equation~\eqref{POissonSS} has a general solution depending on two integration constants. However, one of these constants is determined by the homology constant $A$, and hence the second order equation can be reduced to a first order differential equation by finding a suitable set of scale-invariant variables, i.e., functions that are invariant under homologous transformations. One such set of functions consists of the \emph{Milne variables} $(u,v)\in(0,+\infty)^2$, defined by
\begin{equation}
u(r) = \frac{r}{m}\frac{dm}{dr}= 3\frac{\delta}{\eta} , \quad v(r)=-\frac{r}{(1+\mathrm{n})p_\mathrm{iso}}\frac{dp_\mathrm{iso}}{dr}=\frac{4\pi\varrho^{2}_0}{3\mathrm{n}\lambda}r^2\eta^{1-\frac{1}{\mathrm{n}}} \left(\frac{\delta}{\eta}\right)^{-\frac{1}{\mathrm{n}}}.
\end{equation}
These variables obey the first order system of autonomous differential equations (in the independent variable $\xi=\ln{r}$)
\begin{equation}\label{uvMilne}
r\frac{du}{dr} =  (3-u-\mathrm{n}v)u \,,\quad r\frac{dv}{dr} = (u+v-1)v,
\end{equation}
from which the single first order differential equation on the homology invariant variables follows:
\begin{equation}
\frac{v}{u}\frac{du}{dv} = -\frac{u-3+\mathrm{n}v}{u+v-1}.
\end{equation}
It should be noted that the Milne variable $v$ is positive only for $\mathrm{n}>0$. When the coefficient $c^2_\mathrm{s}(\delta)$ is not a homogeneous function, like in the case of asymptotically polytropic equations of state, the state space is higher-dimensional, since in this case there exists a scale-dependent variable which does not decouple from the Milne $(u,v)$ system (see~\cite{HEINZLE200318} for more details).
The 2-dimensional reduced state space for the polytropic EoS can be found in Figure 3 of~\cite{HEINZLE200318}. The self-similar power-law solutions appear as an interior fixed point with coordinates $(u,v)=\left(\frac{\mathrm{n}-3}{\mathrm{n}-1},\frac{2}{\mathrm{n}-1}\right)$. 

We now turn to the anisotropic elastic setting. Due to the additional term $s(\delta,\eta)$, it is no longer possible to write equation~\eqref{SSdelta} as a second order ordinary differential equation for $\delta$ alone. Nevertheless, and similarly to the fluid case, a homology theorem for power-law deformation functions can be easily established. First, we must find a condition for power-law materials so that $c^2_\mathrm{L}(\delta,\eta)$ and $s(\delta,\eta)$ are homogeneous functions of some degree $\kappa$, and then prove the invariance under homologous transformations of the first order integro-differential equation~\eqref{SSdelta}.
%
%
%
%
%
Let
\begin{equation}
I_j :=\{i\in\{1,...,n_j\}\,:\,\beta_{ij}\neq-1,\beta_{ij}\neq0\}.
\end{equation}
By condition $(iii)$ on $\beta_{ij}$ in Definition~\ref{PLDP}, $I_j$ is not empty for at least one $j\in\{1,...,m\}$.
\begin{proposition}\label{Homcs}
Given a power-law deformation potential  function $\widehat{w}(\delta,\eta)$ of type $(n_1,n_2,...,n_m)$,  $j=1,..,m$,  the functions $c^2_\mathrm{L}(\delta,\eta)$ and $s(\delta,\eta)$ are homogeneous functions of degree $\kappa$ if and only if there exists a unique  $p\in\{1,2,\cdots,m\}$ such that $I_p \neq \varnothing$, and in this case $\kappa=\theta_p$.
\end{proposition}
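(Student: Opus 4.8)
The plan is to write both $c^2_\mathrm{L}$ and $s$ as finite sums of functions homogeneous of the distinct degrees $\theta_1<\dots<\theta_m$, and then to determine which summands survive. Inserting the power-law pressures~\eqref{PPL} into the Newtonian wave speed~\eqref{cLRN} and into the Newtonian expression for $s(\delta,\eta)$, and setting $x=\delta/\eta$, I would obtain
\begin{equation*}
\varrho_0\,c^2_\mathrm{L}(\delta,\eta)=\sum_{j=1}^m\eta^{\theta_j}L_j(x),\qquad \varrho_0\,s(\delta,\eta)=\sum_{j=1}^m\eta^{\theta_j}S_j(x),
\end{equation*}
with $L_j(x)=\sum_i\alpha_{ij}\beta_{ij}(1+\beta_{ij})x^{\beta_{ij}}$, while $S_j(x)$ collects the $\partial_\eta\widehat{p}_\mathrm{rad}$ contribution together with the term $\frac{2}{3}\,g_j(x)/(1-x)$ coming from $\widehat{q}/(\eta-\delta)$, where $\eta^{1+\theta_j}g_j(\delta/\eta)$ is the degree-$(1+\theta_j)$ part of $\widehat{q}$ and $g_j(1)=0$ by the isotropy constraint~\eqref{PLSEF2nd}, so the apparent singularity at $x=1$ is removable. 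Each summand $\eta^{\theta_j}L_j(\delta/\eta)$ and $\eta^{\theta_j}S_j(\delta/\eta)$ is homogeneous of degree $\theta_j$.

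The next ingredient is the elementary decomposition lemma: if $f=\sum_j\phi_j$ with $\phi_j$ homogeneous of degree $\theta_j$, the $\theta_j$ pairwise distinct, and $f$ is homogeneous of degree $\kappa$, then $\phi_j\equiv0$ whenever $\theta_j\neq\kappa$. I would prove this by restricting to a ray $(\delta,\eta)=(t\delta_0,t\eta_0)$, where $\sum_j t^{\theta_j}\phi_j(\delta_0,\eta_0)=t^\kappa f(\delta_0,\eta_0)$ for all $t>0$ and the power functions $t\mapsto t^{\theta_j}$ are linearly independent.

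I would then identify the surviving coefficients. For $c^2_\mathrm{L}$, linear independence of the monomials $x^{\beta_{ij}}$ and $\alpha_{ij}\neq0$ give $L_j\not\equiv0$ exactly when some exponent $\beta_{ij}\notin\{0,-1\}$, that is, precisely when $I_j\neq\varnothing$. For $s$ I only need one implication, $I_j=\varnothing\Rightarrow S_j\equiv0$: if $I_j=\varnothing$ then all exponents lie in $\{0,-1\}$, so $n_j\le2$; the case $n_j=1$ is handled by condition~(ii) of Definition~\ref{PLDP}, which forces $\theta_j=\beta_{1j}=-1$, and the case $n_j=2$ (so $\beta_{1j}=-1$, $\beta_{2j}=0$) by the isotropy constraint~\eqref{PLSEF2nd}; in both cases $g_j(x)/(1-x)$ reduces to a constant and the remaining pieces cancel, yielding $S_j\equiv0$. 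I expect this step, which leans on the structural conditions defining power-law materials and on the removable singularity, to be the main obstacle.

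Finally I would assemble the equivalence. For the forward direction, homogeneity of $c^2_\mathrm{L}$ and the lemma force $L_j\equiv0$, hence $I_j=\varnothing$, for every $\theta_j\neq\kappa$; since at least one $I_j$ is nonempty, there is a unique $p$ with $I_p\neq\varnothing$, and necessarily $\theta_p=\kappa$. For the converse, a unique such $p$ makes $L_j\equiv0$ for $j\neq p$ and, by the implication above, $S_j\equiv0$ for $j\neq p$, so that $\varrho_0\,c^2_\mathrm{L}=\eta^{\theta_p}L_p(\delta/\eta)$ and $\varrho_0\,s=\eta^{\theta_p}S_p(\delta/\eta)$ are both homogeneous of degree $\theta_p$, whence $\kappa=\theta_p$.
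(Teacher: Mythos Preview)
Your proposal is correct and follows essentially the same approach as the paper: both arguments write $c^2_\mathrm{L}$ and $s$ as $\sum_j \eta^{\theta_j}$ times a function of $\delta/\eta$, observe that the $j$-th coefficient in $c^2_\mathrm{L}$ vanishes exactly when $I_j=\varnothing$, and then verify via the isotropy constraint~\eqref{PLSEF2nd} that the corresponding coefficient in $s$ also vanishes. Your treatment is slightly more explicit about the decomposition lemma and about the logical structure of the biconditional (using only $c^2_\mathrm{L}$ for necessity and only the implication $I_j=\varnothing\Rightarrow S_j\equiv0$ for sufficiency), whereas the paper writes out $s$ in closed form and evaluates the surviving sums directly; but the substance is the same.
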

\begin{proof}
For the power-law stored energy functions we have from~\eqref{PPL}
\begin{subequations}
\begin{align}
        c^2_\mathrm{L}(\delta,\eta) &= \varrho^{-1}_0\sum^{m}_{j=1}\eta^{\theta_j}\sum^{n_j}_{i=1}\alpha_{ij}\beta_{ij}(1+\beta_{ij})\left(\frac{\delta}{\eta}\right)^{\beta_{ij}},\label{c_PL} \\
        s(\delta,\eta) &= \varrho^{-1}_0\sum^{m}_{j=1}\eta^{\theta_j}\sum^{n_j}_{i=1}\alpha_{ij}(\theta_j-\beta_{ij})\left[-\frac{1-\left(\frac{\delta}{\eta}\right)^{1+\beta_{ij}}}{1-\left(\frac{\delta}{\eta}\right)}+\beta_{ij}\left(\frac{\delta}{\eta}\right)^{1+\beta_{ij}}\right].
\end{align}
\end{subequations}
Under the scaling transformation $(\delta,\eta)\rightarrow A (\delta,\eta)$, the quantity $\delta/\eta$ is invariant. Moreover the $i$-sums in the function $c^2_\mathrm{L}$ vanish for all $(\delta,\eta)$ if and only if $I_j$ are empty sets. Hence, $c^2_\mathrm{L}$ is a homogeneous function for all $(\delta,\eta)$ if and only if the $i$-sums vanish for all $I_j$ sets except one, i.e., if there is a unique $p\in\{1,2,\dots,m\}$ such that $I_p\neq\varnothing$. By definition, $\theta_p$ exists, is unique, and $\kappa=\theta_p$. Now, the $i$-sums on the empty sets $I_j$ in the function $s(\delta,\eta)$ give
	\begin{equation}
	\sum_{j\neq p}^m \eta^{\theta_j}\left( \alpha_{0j}\theta_j
	+\alpha_{-1j}(\theta_j+1)\right).
	\end{equation}
By condition~\eqref{PLSEF2nd}, each of these terms vanish, and $s(\delta,\eta)$ is a homogeneous function of degree $\kappa=\theta_p$. 
\end{proof}
The next proposition generalizes the polytropic homology theorem to hyperelastic materials with power-law deformation potentials.
\begin{proposition}\label{scalinginvariance}
	Let $\widehat{w}(\delta,\eta)$ be a power-law deformation potential function with a unique non-empty $I_j$ set. If $\delta(r)$ is a solution of the integro-differential equation~\eqref{SSdelta}, then so is $A^{\frac{2}{1-\theta_p}}\delta(Ar)$, with $A$ an arbitrary positive real number.
\end{proposition}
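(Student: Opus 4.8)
The plan is to verify directly that the rescaled function is again a solution, carefully accounting for the integral that defines $\eta$. Write $p=\frac{2}{1-\theta_p}$ and set $\tilde\delta(r)=A^p\delta(Ar)$. Since the proposition concerns the \emph{integro-differential} equation~\eqref{SSdelta}, the first task is to determine how the companion variable $\eta$ transforms. Using the Newtonian integral representation $\eta(r)=\frac{3}{r^3}\int_0^r\delta(s)\,s^2\,ds$ and the change of variables $u=As$, one finds $\tilde\eta(r)=\frac{3}{r^3}\int_0^r A^p\delta(As)\,s^2\,ds=A^p\,\eta(Ar)$. Thus $\tilde\delta$ and $\tilde\eta$ carry the \emph{same} scaling factor $A^p$, so that $\tilde\delta/\tilde\eta$ at $r$ equals $\delta/\eta$ at $Ar$; moreover $(\tilde\delta,\tilde\eta)$ automatically satisfies~\eqref{SSeta}, since that relation is equivalent to the integral definition of $\eta$.

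Next I would invoke Proposition~\ref{Homcs}: the hypothesis of a unique non-empty set $I_p\neq\varnothing$ guarantees that $c^2_\mathrm{L}$ and $s$ are homogeneous of degree $\kappa=\theta_p$, i.e. $c^2_\mathrm{L}(A\delta,A\eta)=A^{\theta_p}c^2_\mathrm{L}(\delta,\eta)$ and likewise for $s$. Evaluating the left-hand side of~\eqref{SSdelta} at $(\tilde\delta(r),\tilde\eta(r))$ and using $\frac{d\tilde\delta}{dr}=A^{p+1}\delta'(Ar)$ together with this homogeneity produces an overall factor $A^{p\theta_p+p+1}$ multiplying $c^2_\mathrm{L}(\delta(Ar),\eta(Ar))\,\delta'(Ar)$. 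The key step is then to substitute the fact that $(\delta,\eta)$ solves~\eqref{SSdelta} at the point $R=Ar$, replacing $c^2_\mathrm{L}\,\delta'$ at $R$ by the right-hand side $\frac{3}{R}s(\eta-\delta)-\frac{4\pi\varrho_0}{3}\eta\delta R$ evaluated at $R=Ar$.

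Collecting powers of $A$, the shear term reproduces $\frac{3}{r}s(\tilde\delta,\tilde\eta)(\tilde\eta-\tilde\delta)$ with matching exponent $A^{p\theta_p+p}$ for \emph{any} choice of $p$, whereas the Newtonian gravitational term acquires the factor $A^{p\theta_p+p+2}$, which must be compared with the $A^{2p}$ coming from $\frac{4\pi\varrho_0}{3}\tilde\eta\,\tilde\delta\,r$. Equality of these two powers forces $p\theta_p+p+2=2p$, that is $p(1-\theta_p)=2$, which is exactly $p=\frac{2}{1-\theta_p}$. Hence with this value of $p$ both terms match and $(\tilde\delta,\tilde\eta)$ satisfies~\eqref{SSdelta}, completing the verification. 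The only delicate point is the treatment of the nonlocal term: one must establish the integral identity $\tilde\eta(r)=A^p\eta(Ar)$ so that the joint homogeneity of $c^2_\mathrm{L}$ and $s$ in $(\delta,\eta)$ furnished by Proposition~\ref{Homcs} can be applied. Everything else is bookkeeping of exponents, the balance between the elastic and gravitational terms being precisely what selects the scaling weight $\frac{2}{1-\theta_p}$.
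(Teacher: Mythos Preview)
Your proof is correct and follows essentially the same approach as the paper: both arguments invoke Proposition~\ref{Homcs} for the homogeneity of $c^2_\mathrm{L}$ and $s$, establish that $\eta$ inherits the same scaling as $\delta$ via its integral definition, and then verify invariance of~\eqref{SSdelta} by direct substitution. The only cosmetic difference is that the paper changes the independent variable to $\tilde r=Ar$ and checks the equation in $\tilde r$, whereas you keep $r$ and track the powers of $A$ explicitly; your exponent bookkeeping makes the determination of $p=\frac{2}{1-\theta_p}$ somewhat more transparent.
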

\begin{proof}
	Let $\tilde{r}=Ar$, $\delta(r)=A^{\frac{2}{1-\theta_p}}\tilde{\delta}(\tilde{r})$. By the definition of $\eta$~\eqref{NewForm1}, it follows that $\eta(r)=A^{\frac{2}{1-\theta_p}}\tilde{\eta}(\tilde{r})$. Since $c^2_\mathrm{L}$ and $s$ are homogeneous functions of degree $\kappa=\theta_p$, we have
	\begin{equation}
	c^2_\mathrm{L}(\delta,\eta)=A^{\frac{2\theta_p}{1-\theta_p}}c^2_\mathrm{L}(\tilde{\delta},\tilde{\eta})\,,\quad s(\delta,\eta)=A^{\frac{2\theta_p}{1-\theta_p}}s(\tilde{\delta},\tilde{\eta}),
	\end{equation}
	and equation~\eqref{SSdelta} reads
	\begin{equation}
	c^2_\mathrm{L}(\tilde{\delta},\tilde{\eta})\tilde{r}\frac{d\tilde{\delta}}{d\tilde{r}} = 3s(\tilde{\delta},\tilde{\eta})(\tilde{\delta}-\tilde{\eta})- \frac{4\pi\varrho_0}{3}\tilde{r}^2\tilde{\delta}\tilde{\eta},
	\end{equation}
	which shows its invariance under the homology transformation.
\end{proof}

\begin{examples}
Some examples of scale-invariant materials of power-law type are:
\begin{itemize}
    \item[(i)] Polytropic fluid:
\end{itemize}
The power-law deformation potential  of type $(1,1)$ with a stress-free natural reference state,
    \begin{equation}
        \widehat{w}^{(\mathrm{sf})}_{\mathrm{pol}}(\delta)=-\mathrm{n}\lambda+\frac{\mathrm{n}\lambda}{(1+\mathrm{n})}\eta^{-1}\left(\frac{\delta}{\eta}\right)^{-1}+\frac{\mathrm{n}^2\lambda}{1+\mathrm{n}}\eta^{\frac{1}{\mathrm{n}}}\left(\frac{\delta}{\eta}\right)^{\frac{1}{\mathrm{n}}},
    \end{equation}
   has a unique non-empty $I_j$ set, with $\theta_p=1/\mathrm{n}$.
   \begin{itemize}
       \item[(ii)] Quasi-linear John materials:
   \end{itemize}
   Quasi-linear John materials are an example of type $(1,3,2)$, with stored energy function  
\begin{equation}
\begin{split}
    w^{(\mathrm{sf})}_{\mathrm{John}}(\delta,\eta)=&\frac{1}{2}(9\lambda+10\mu)+2\mu \eta^{-1}\left(\frac{\delta}{\eta}\right)^{-1}+(\lambda+2\mu)\eta^{-\frac{2}{3}}\left(2+2\left(\frac{\delta}{\eta}\right)^{-1}+\frac{1}{2}\left(\frac{\delta}{\eta}\right)^{-2}\right) \\
    &-(3\lambda+4\mu)\eta^{-\frac{1}{3}}\left(2+\left(\frac{\delta}{\eta}\right)^{-1}\right),
\end{split}
\end{equation}
     which belongs to a more general class of harmonic materials introduced by Fritz John~\cite{John63}. This material has a unique non-empty $I_j$ set, with $\theta_p=-2/3$, and hence it is scale-invariant.
      \begin{itemize}
       \item[(iii)] Lamé type:
   \end{itemize}
   For the class of hyperelastic power-law materials, if $I_j \geq 3$ then $I_j$ must be non-empty. Therefore, scale invariance requires such a set $I_j$ to be unique, and all others sets $I_k$ ($k\neq j$) to be empty, so they are of type $1$ or $2$. For type $(2,3)$ or $(1,3)$ and their permutations, $I_1$ ($I_2$ in the permuted case) is necessarily empty. The simplest choice is given by\footnote{The non-empty set $I_p$ is chosen such that $\beta_{1p}=-1$, and $\beta_{2p}=0$, which simplifies considerably the form of $c^{2}_\mathrm{L}$ given in~\eqref{c_PL}.}
    \begin{equation}
    \begin{split}
        \widehat{w}^{(\mathrm{sf})}_{(1,3)}(\delta,\eta)= &-\mathrm{n} K +\frac{\mathrm{n}K}{1+\mathrm{n}}\delta^{-1}+\eta^{\frac{1}{\mathrm{n}}}\left[-\left((\mathrm{s}-\mathrm{n})K+\mathrm{s}\frac{4\mu}{3}\right)\right.\\
        &\left.+\frac{1}{1+\mathrm{s}}\left(\frac{(\mathrm{s}-\mathrm{n})K}{1+\mathrm{n}}+\mathrm{s}\frac{4\mu}{3}\right)\left(\frac{\delta}{\eta}\right)^{-1} +\frac{\mathrm{s}^2}{1+\mathrm{s}}\left(K+\frac{4\mu}{3}\right)\left(\frac{\delta}{\eta}\right)^{\frac{1}{\mathrm{s}}}\right],
        \end{split}
    \end{equation}
    which belongs to a more general class of scale-invariant elastic materials introduced by Calogero in~\cite{Cal21}, containing also generalizations of the isothermal fluid EoS, corresponding to $\mathrm{n}\rightarrow+\infty$.    
\end{examples}
\begin{remark}
    The elastic polytropic material with a pre-stressed reference state can be obtained from the $(1,3)$ power-law deformation potential by using the transformation~\eqref{SFPSTransfw} with the natural choice $p_0=\mathrm{n}K/(1+\mathrm{n})$, resulting in the stored energy function 
\begin{equation}\label{NSE}
\begin{split}
\widehat{w}^{(\mathrm{ps})}_{(1,3)}(\delta,\eta)=    & \,\, \eta^{\frac{1}{\mathrm{n}}}\left[-\left((\mathrm{s}-\mathrm{n})K+\mathrm{s}\frac{4\mu}{3}\right)\right.\\
&\left.+\frac{1}{1+\mathrm{s}}\left(\frac{(\mathrm{s}-\mathrm{n})K}{1+\mathrm{n}}+\mathrm{s}\frac{4\mu}{3}\right)\left(\frac{\delta}{\eta}\right)^{-1} +\frac{\mathrm{s}^2}{1+\mathrm{s}}\left(K+\frac{4\mu}{3}\right)\left(\frac{\delta}{\eta}\right)^{\frac{1}{\mathrm{s}}}\right].
\end{split}
\end{equation}
In Section~\ref{Polytropes}, when discussing relativistic polytropes, we deduce the above deformation potential by starting with the simplest quasi-Hookean material (the quadratic model) which, in view of propositions~\ref{Homcs}-\ref{scalinginvariance}, is Newtonian scale-invariant for $\mathrm{n}=1$, and then generalize the scale invariance property for all $\mathrm{n}$. Moreover, starting with the pre-stressed relativistic stored energy function, the generalization to the isothermal EoS can be easily obtained by taking the limit $\mathrm{n}\rightarrow+\infty$. Other materials can be obtained from the limit $\mathrm{s}\rightarrow+\infty$.
\end{remark}

Similarly to perfect fluids with polytropic EoS, for scale-invariant power-law elastic materials equation~\eqref{SSdelta} also admits power-law solutions of the type~\eqref{PLSol}. A result about the existence of such exact solutions was presented in~\cite{ACL21}. Here we will discuss them in a dynamical systems context.
%
%
%

In~\cite{Alho:2019fup}, a theorem on the existence and uniqueness of ball solutions for power-law type stored energy functions was given, exemplified by several material laws, such as the Saint-Venant Kirchhoff, John, and Hadamard materials. The analysis relied on the introduction of a new dynamical systems formulation of the static Newtonian equations, which made use of the dimensionless variables
\begin{equation}\label{NewMilne}
    x(r)=\eta(r), \quad y(r)=\frac{\delta(r)}{\eta(r)},\quad z(r)=\frac{4\pi\rho^2_0}{3(\lambda+2\mu)}\eta(r)^{1-\frac{1}{n}}\left(\frac{\delta(r)}{\eta(r)}\right)^{\tau-\frac{1}{\mathrm{s}}}r^2,
\end{equation}
the new independent variable $\xi\in(-\infty,+\infty)$ defined by
\begin{equation}
\frac{d}{d\xi} = y^{\tau}\Gamma(x,y)r\frac{d}{dr},  
\end{equation}
and the functions $\Gamma(x,y)$ and $\Upsilon(x,y)$ defined by%
\begin{subequations}
\begin{align}
    \Gamma(x,y) &=\frac{\varrho_0\delta c^2_\mathrm{L}(\delta,\eta)}{(\lambda+2\mu)\eta^{1+\frac{1}{\mathrm{n}}}\left(\frac{\delta}{\eta}\right)^{1+\frac{1}{\mathrm{s}}}} 
    =\frac{y\partial_y \mathcal{P}_\mathrm{rad}(x,y)}{(\lambda+2\mu)x^{1+\frac{1}{\mathrm{n}}}y^{1+\frac{1}{\mathrm{s}}}} , \\
    \Upsilon(x,y) &= \frac{\varrho_0\left(\delta c^2_\mathrm{L}(\delta,\eta)+\eta s(\delta,\eta)\right)}{(\lambda+2\mu)\eta^{1+\frac{1}{\mathrm{n}}}\left(\frac{\delta}{\eta}\right)^{1+\frac{1}{\mathrm{s}}-\tau}} 
    =\frac{x\partial_x \mathcal{P}_\mathrm{rad}(x,y) +\frac{2}{3}\frac{\mathcal{Q}(x,y)}{1-y}}{(\lambda+2\mu)x^{1+\frac{1}{\mathrm{n}}}y^{1+\frac{1}{\mathrm{s}}-\tau}},
\end{align}
	\end{subequations}
 where $\mathcal{P}_\mathrm{tan}(x,y)=\widehat{p}_\mathrm{rad}(\delta,\eta)$, $\mathcal{P}_\mathrm{tan}(x,y)=\widehat{p}_\mathrm{tan}(\delta,\eta)$, and $\mathcal{Q}(x,y)=\mathcal{P}_\mathrm{tan}(x,y)-\mathcal{P}_\mathrm{tan}(x,y)$. The values of the parameters $(\mathrm{n},\mathrm{s},\tau)$ are such that the functions $\Gamma$, and $\Upsilon$ extend continuously to $(0,0)$, with $\Gamma(0,0)\neq0$ and $\Upsilon(0,0)\neq0$. The parameters $\mathrm{n}$ and $\mathrm{s}$ are the \emph{polytropic index} and the \emph{shear index}, respectively, while the constant $\tau$ is a regularity parameter and has no physical significance. Now, let $\theta_*=\min\{\theta_j : j=1,...,m, I_j\neq\varnothing\}$, and $\beta_*=\min\{\beta_{ij}: i\in I_j, j=1,...,m, I_j\neq\varnothing\}$. Since $\theta_j$ are increasing, then $\theta_*=\theta_p$, where $p$ is the lowest value of $j$ such that $I_j$ is not empty. Denoting by $\alpha_{(-1)}$ and $\alpha_{(0)}$ the coefficients $\alpha_{ij}$ of the term with exponent $\beta_{ip}=-1$ and $\beta_{ip}=0$, respectively, let
\begin{equation}
    \sigma = \alpha_{(0)}\theta_p+\alpha_{(-1)}(1+\theta_p).
\end{equation}
The following proposition was proved in~\cite{Alho:2019fup}.
\begin{proposition}

For hyperelastic power-law materials, the functions $\Gamma(x,y)$ and  $\Upsilon(x,y)$ satisfy $\Gamma(0,0)\neq0$, $\Upsilon(0,0)\neq0$ if and only if $\beta_*=\beta_{pq}$ for some (necessarily unique) $q\in I_p$. Moreover $\mathrm{n}=\theta^{-1}_*$, $\mathrm{s}=\beta^{-1}_*$, and
\begin{equation}
    \tau=
    \begin{cases}
        0,\qquad &\text{if}\quad \beta_*<0\quad\text{or}\quad\beta_*>0\quad\text{and}\quad\sigma=0, \\
        \beta_*  &\text{if}\quad \beta_*>0\quad\text{and}\quad\sigma\neq0.
    \end{cases}
\end{equation}
\end{proposition}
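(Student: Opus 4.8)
The plan is to reduce everything to an explicit expansion of $\Gamma$ and $\Upsilon$ in monomials of $x=\eta$ and $y=\delta/\eta$ and then read off the limits at $(0,0)$ term by term. First I would substitute the power-law expressions \eqref{c_PL} and \eqref{PPL}, together with the formula for $s(\delta,\eta)$ obtained in the proof of Proposition~\ref{Homcs}, into the definitions of $\Gamma$ and $\Upsilon$. Writing $\mathcal{P}_\mathrm{rad}=\sum_j x^{1+\theta_j}\sum_i\alpha_{ij}\beta_{ij}y^{1+\beta_{ij}}$ and similarly for $\mathcal{Q}$, both $\Gamma$ and the numerator of $\Upsilon$ become finite sums $\sum_j x^{1+\theta_j}(\cdots)$ whose inner factors are rational in $y$ with their only singularity at $y=1$; in particular each $1/(1-y)$ is analytic near $y=0$ and only modifies the \emph{higher-order} coefficients of the $y$-expansion, never the leading one.

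The second step is the $x\to0^{+}$ limit. Dividing by $x^{1+1/\mathrm{n}}$ produces $x$-exponents $\theta_j-1/\mathrm{n}$. The crucial observation is that for every $j$ with $I_j=\varnothing$ the inner sum vanishes identically: for $\Gamma$ because the coefficient $\beta_{ij}(1+\beta_{ij})$ in \eqref{c_PL} is zero when $\beta_{ij}\in\{0,-1\}$, and for $\Upsilon$ because the $\beta_{ij}\in\{0,-1\}$ contributions recombine into $\frac{y}{1-y}\big(\alpha_{(0)j}\theta_j+\alpha_{(-1)j}(1+\theta_j)\big)$, whose coefficient is exactly the left-hand side of \eqref{PLSEF2nd} restricted to those terms and hence vanishes. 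Thus such terms neither contribute nor blow up even when $\theta_j<\theta_*$, and among the surviving terms ($I_j\neq\varnothing$) one has $\theta_j-1/\mathrm{n}\ge0$ with equality precisely at $j=p$ as soon as $1/\mathrm{n}=\theta_*$. A finite nonzero limit therefore forces $\mathrm{n}=\theta_*^{-1}$ and leaves only the $j=p$ term.

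Next I would take $y\to0^{+}$. For $\Gamma$ the surviving expression is $\frac{1}{\lambda+2\mu}\sum_{i\in I_p}\alpha_{ip}\beta_{ip}(1+\beta_{ip})\,y^{\beta_{ip}-1/\mathrm{s}}$; by the strict monotonicity in condition (i) of Definition~\ref{PLDP} the minimum of $\beta_{ip}$ over $I_p$ is attained at a \emph{unique} index $q$, and a finite nonzero limit forces $1/\mathrm{s}=\min_{i\in I_p}\beta_{ip}$ with value $\frac{\alpha_{pq}\beta_{pq}(1+\beta_{pq})}{\lambda+2\mu}\neq0$ (nonzero since $\alpha_{pq}\neq0$ and $\beta_{pq}\neq0,-1$). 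With the canonical choice $\mathrm{s}=\beta_*^{-1}$ this limit is nonzero exactly when the \emph{global} minimum $\beta_*$ coincides with $\min_{i\in I_p}\beta_{ip}$, i.e.\ when $\beta_*=\beta_{pq}$ for some $q\in I_p$; otherwise all exponents $\beta_{ip}-\beta_*$ are strictly positive and $\Gamma(0,0)=0$. This yields the ``if and only if'' and fixes $\mathrm{s}=\beta_*^{-1}$.

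Finally, for $\Upsilon$ the $j=p$ numerator is $N_p(y)=\sum_{i}\alpha_{ip}y^{1+\beta_{ip}}\big[(1+\theta_p)\beta_{ip}+\frac{\theta_p-\beta_{ip}}{1-y}\big]$, whose leading behaviour I would determine by separating the $I_p$ terms, which start at power $1+\beta_*$ with coefficient $\alpha_{pq}\theta_p(1+\beta_{pq})$, from the $\beta_{ip}\in\{0,-1\}$ terms, which collapse to $\sigma\,\frac{y}{1-y}$ and thus start at power $1$. The minimal power of $N_p$ is then $\min(1+\beta_*,1)$, and matching it against the denominator power $1+1/\mathrm{s}-\tau=1+\beta_*-\tau$ gives the three cases: $\beta_*<0$ yields leading power $1+\beta_*$, hence $\tau=0$; $\beta_*>0$ with $\sigma\neq0$ yields leading power $1$, hence $\tau=\beta_*$; and $\beta_*>0$ with $\sigma=0$ annihilates the power-$1$ contribution to \emph{all} orders, so the minimum reverts to $1+\beta_*$ and again $\tau=0$. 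In each case the leading coefficient (either $\sigma$, or $\alpha_{pq}\theta_p(1+\beta_{pq})$ when $\theta_p\neq0$) is nonzero, so $\Upsilon(0,0)\neq0$. I expect the main obstacle to be precisely this last bookkeeping step: recognizing that the $\beta_{ip}\in\{0,-1\}$ monomials recombine into the single quantity $\sigma$ at order $y$, and tracking when this order-$y$ term dominates the order-$(1+\beta_*)$ term coming from $I_p$ — this competition is the entire source of the case distinction defining $\tau$.
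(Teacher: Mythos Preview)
The paper does not give its own proof of this proposition; it simply states that it ``was proved in~\cite{Alho:2019fup}'' and moves on. Your plan---writing $\Gamma$ and $\Upsilon$ explicitly as double sums of monomials $x^{\theta_j-1/\mathrm{n}}y^{\beta_{ij}-1/\mathrm{s}}$ (respectively $y^{\beta_{ij}-1/\mathrm{s}+\tau}$), eliminating the $I_j=\varnothing$ contributions via the isotropy relation~\eqref{PLSEF2nd}, and then matching leading exponents first in $x$ and then in $y$---is correct and is exactly the direct computation one would expect; the identification of $\sigma\,y/(1-y)$ as the sole surviving $\beta_{ip}\in\{0,-1\}$ contribution at $j=p$, and the resulting competition between the powers $1$ and $1+\beta_*$, is indeed the whole content of the case split for $\tau$. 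The one place worth sharpening is the ``if and only if'': you should make explicit that if $\beta_*$ were attained at some $j'>p$ rather than in $I_p$, then taking $1/\mathrm{s}=\min_{i\in I_p}\beta_{ip}>\beta_*$ would produce a monomial $x^{\theta_{j'}-\theta_p}y^{\beta_*-1/\mathrm{s}}$ with positive $x$-exponent and strictly negative $y$-exponent, which has no continuous extension to $(0,0)$; this is what forces $\beta_*=\beta_{pq}$ and not merely ``$\Gamma(0,0)=0$ otherwise''.
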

\begin{examples}
    Some examples of the functions $\Gamma(x,y)$ and $\Upsilon(x,y)$ are:
    \begin{itemize}
        \item[(i)] Polytropic fluid: $\mathrm{s}=\mathrm{n}$, $\sigma=0$, $\tau=0$, and $\Upsilon=\Gamma=1$.

        \item[(ii)] Quasi-linear John's materials: $\mathrm{n}=-\frac{3}{2}$, $\mathrm{s}=-\frac{1}{2}$,   $\Gamma=1$, $\sigma=-\frac{2}{3}(\lambda+2\mu)$, and 
\begin{equation}
    \Upsilon(y) = \frac{1}{3}(1+2y)
\end{equation}
        \item[(iii)] Type $(1,3)$ or $(3,1)$ elastic polytrope: $\Gamma=1$, $ \sigma=-\frac{(\lambda+2\mu)\mathrm{s}}{\mathrm{n}(1+\mathrm{s})}(\mathrm{s}-\mathrm{n})$, and
        \begin{subequations}
        \begin{align}
         &\Upsilon(y) = 1,\quad  \mathrm{s}=\mathrm{n}>0, \quad \tau=0; \\
         &\Upsilon(y) =\frac{\mathrm{s}}{\mathrm{n}}\left[\frac{(1+\mathrm{n})}{(1+\mathrm{s})}+\frac{(\mathrm{s}-\mathrm{n})}{(1+\mathrm{s})}\left(\frac{1-y^{-\frac{1}{\mathrm{s}}}}{1-y}\right)\right] ,\quad \mathrm{s}<0,\quad \tau=0;    \label{s<0}        \\
         &\Upsilon(y) =\frac{1}{1-y}\frac{\mathrm{s}}{\mathrm{n}}\left[-\frac{(\mathrm{s}-\mathrm{n})}{(1+\mathrm{s})}+y^{\frac{1}{\mathrm{s}}}-\frac{(1+\mathrm{n})}{(1+\mathrm{s})}y^{1+\frac{1}{\mathrm{s}}}\right] ,\quad\mathrm{s}\neq\mathrm{n},\quad \mathrm{s}>0,\quad \tau=\frac{1}{\mathrm{s}}.
        \end{align}
        \end{subequations}
    \end{itemize}
\end{examples}
If $\Gamma,\Upsilon\in C^1([0,+\infty)^2)$, we obtain the following $C^1$ dynamical system:
\begin{subequations}\label{MasterDS}
	\begin{align}
	\frac{dx}{d\xi} &=-3\Gamma(x,y) y^{\tau} (1-y) x, \\
	\frac{dy}{d\xi} &= \left[3\Upsilon(x,y)(1-y)-z\right]y, \\
	\frac{dz}{d\xi} &= \left[\Gamma(x,y) y^{\tau} \left(2-\frac{3}{\mathrm{n}}\left(\mathrm{n}-1\right)(1-y)\right)+\left(\tau-\frac{1}{\mathrm{s}}\right)(3\Upsilon(x,y)(1-y)-z)\right]z.
	\end{align}
\end{subequations}
If $\tau=0$ or $\tau\geq1$, the state space $\mathbf{S}=\{(x,y,z)\in\mathbb{R}^3: x>0, y>0, z>0\}$ can be extended in a $C^1$ manner to its invariant boundaries contained in the planes $\{x=0\}$, $\{y=0\}$ and $\{z=0\}$. In these variables, the isotropic states $(\delta,\eta)=(\delta,\delta)$ are the points $(x,y)=(x,1)$, and the reference state is $(x,y)=(1,1)$. Due to~\eqref{ISOCond}--\eqref{ISOCond2} and~\eqref{DevPreference}, the functions $\Gamma(x,y)$, $\Upsilon(x,y)$ satisfy
\begin{equation}
    \Upsilon(x,1)=\Gamma(x,1), \qquad \Upsilon(1,1)=\Gamma(1,1)=1.
\end{equation}
For scale-invariant models, the functions $\Gamma$ and $\Upsilon$ just depend on the variable $y$, and the scale-dependent variable $x$ decouples, leaving a reduced $2$-dimensional dynamical system for the scale-invariant variables $(y,z)$, which are related to the Milne homology invariant variables $(u,v)$ by $y=u/3$ and $z=\mathrm{n}v$. In the polytropic fluid limit, the system reduces to~\eqref{uvMilne}. 
For scale-invariant materials, the irregular center solutions appear as straight orbits in the full state space, see~\cite{Alho:2019fup}. In this case, the proper scale-invariant solutions of power-law type appear as an interior fixed point located at
\begin{equation}
(y_{\star},z_{\star})=\left(\frac{\mathrm{n}-3}{3(\mathrm{n}-1)},\frac{2\mathrm{n}\Upsilon\left(\frac{\mathrm{n}-3}{3(\mathrm{n}-1)}\right)}{\mathrm{n}-1}\right),
\end{equation}
which exists if $\mathrm{n}>3$ or $\mathrm{n}<0$ and $\Upsilon\left(\frac{\mathrm{n}-3}{3(\mathrm{n}-1)}\right)>0$, or if $0<\mathrm{n}<1$ and $\Upsilon\left(\frac{\mathrm{n}-3}{3(\mathrm{n}-1)}\right)<0$. Moreover, for $\mathrm{n}>3$ or $\mathrm{n}<0$, it follows that $y_\star<1$.
\subsubsection{Existence, uniqueness, and asymptotics for power-law type materials}
Here, we briefly review results on existence and uniqueness of solutions with a regular center and compact support, see~\cite{Alho:2019fup} for more details. For regular center solutions, it follows from~\eqref{icNewtSS} and~\eqref{NewMilne} that
\begin{equation}
\lim_{r\rightarrow0^{+}}x(r)=x_\mathrm{c}=\delta_\mathrm{c},\qquad  \lim_{r\rightarrow0^{+}}y(r)=1, \qquad \lim_{r\rightarrow0^{+}}v(r)=0,   
\end{equation}
which in the extended state space consists of a normally hyperbolic line fixed points 
on the invariant boundary $\{v=0\}$. The following theorem, whose proof can be found in~\cite{Alho:2019fup}, provides a global existence and uniqueness result for regular center solutions: 
\begin{theorem}
Let $\Gamma,\Upsilon\in C^1([0,+\infty)^2)$, and let $X_\flat\in(0,+\infty]$ be such that $\Gamma(x,1)>0$ for all $x\in(0,X_\flat)$, and $\Gamma(X_\flat,1)=0$ if $X_\flat<+\infty$. Assume further that:
\begin{itemize}
    \item[(i)] $\Gamma(x,y)>0$ for all $(x,y)\in[0,X_\flat)\times[0,1]$;
    \item[(ii)] $\Upsilon_0(y)=\Upsilon(0,y)>0$, and 
    $$
    \left[\left(\tau-\frac{1}{\mathrm{s}}\right)\Upsilon_{0}(y)-y\Upsilon^{\prime}_0(y)\right](1-y)+y\Upsilon_{0}(y)=y^{\tau}\left(1-\left(1+\frac{1}{\mathrm{n}}\right)(1-y)\right)\Gamma(0,y)>0,
    $$ 
    for all $y\in[0,1]$;
\item[(iii)] $\mathrm{s}<0$, and $\mathrm{n}>3$ or $\mathrm{n}<0$.
\end{itemize}
Then for all $x_c\in(0,X_\flat)$ there exists a unique solution trajectory of~\eqref{MasterDS} satisfying 
\begin{equation}
    \lim_{\xi\rightarrow-\infty}(x(\xi),y(\xi),z(\xi))=(x_\mathrm{c},1,0).
\end{equation}
Moreover, $x(\xi)<x_\mathrm{c}$, $y(\xi)<1$, for all $\xi\in(-\infty,+\infty)$, and
\begin{equation}\label{GlobAs}
    \lim_{\xi\rightarrow+\infty}(x(\xi),y(\xi),z(\xi))=(0,y_\star,z_\star),
\end{equation}
with $x(\xi)\sim e^{-\frac{2\mathrm{n}}{\mathrm{n}-1}\xi}$ as $\xi\rightarrow+\infty$.
\end{theorem}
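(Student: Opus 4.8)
The plan is to read~\eqref{MasterDS} as a $C^1$ flow on the closed octant $\{x\ge0,\,y\ge0,\,z\ge0\}$ and to identify the regular-center solutions with the unstable manifold of the line of fixed points $L=\{(x,1,0):0<x<X_\flat\}$. First I would linearise~\eqref{MasterDS} at a point $(x_c,1,0)\in L$. A direct computation of the Jacobian gives the eigenvalues $0$, $-3\Upsilon(x_c,1)$ and $2\Gamma(x_c,1)$, the zero eigenvalue being tangent to $L$; since $\Gamma(x_c,1)=\Upsilon(x_c,1)>0$ by the definition of $X_\flat$, the other two are respectively negative and positive, so $L$ is normally hyperbolic with a one-dimensional unstable fibre at each point. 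The unstable manifold theorem then yields, for each $x_c\in(0,X_\flat)$, a unique orbit with $\lim_{\xi\to-\infty}(x,y,z)=(x_c,1,0)$, and computing the unstable eigenvector shows it may be oriented so that this orbit enters the open set $R=\{0<x<x_c,\,0<y<1,\,z>0\}$. This settles existence and uniqueness.

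Next I would prove the monotonicity and confinement claims. Because $\Gamma>0$ on $[0,X_\flat)\times[0,1]$ by (i) and $y^\tau>0$ for $y>0$, the first equation gives $\dot x=-3\Gamma(x,y)\,y^\tau(1-y)\,x<0$ whenever $0<y<1$, so $x$ decreases strictly and $x(\xi)<x_c$ as long as the orbit remains in $\{y<1\}$. I would then verify that $R$ is positively invariant by checking the sign of the vector field on its faces: the identity in (ii) controls the flow on $\{x=0\}$ and near $\{y=1\}$ (forcing $\dot y<0$ there), while $\dot z>0$ near $\{z=0\}$; an a priori upper bound on $z$, extracted from the $\dot z$ equation using $\tau-\frac{1}{\mathrm{s}}=-\frac{1}{\mathrm{s}}>0$ (valid since $\mathrm{s}<0$ by (iii)), rules out escape through large $z$. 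This traps the orbit in a compact subset of $R$ and yields the inequalities $x<x_c$ and $y<1$.

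The heart of the proof, and the step I expect to be hardest, is the global limit $\lim_{\xi\to+\infty}(x,y,z)=(0,y_\star,z_\star)$, which I would obtain by reducing to the invariant boundary $\{x=0\}$. Since $x$ is monotone and bounded below, $x(\xi)\to x_\infty\ge0$, so the nonempty connected compact $\omega$-limit set $\Omega$ lies in $\{x=x_\infty\}$; invariance of $\Omega$ forces $\dot x\equiv0$ there, hence $y\equiv1$ if $x_\infty>0$, placing $\Omega$ on $L$. But transverse to $L$ the flow is a saddle, so no orbit other than a stable fibre can accumulate on $L$, contradicting that our orbit is the unstable fibre of $(x_c,1,0)$; therefore $x_\infty=0$ and $\Omega\subseteq\{x=0\}$. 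On this plane~\eqref{MasterDS} becomes the planar system driven by $\Gamma(0,y)$ and $\Upsilon_0(y)$, whose unique interior equilibrium in $\{0<y<1,\,z>0\}$ is $(y_\star,z_\star)$ with $y_\star=\frac{\mathrm{n}-3}{3(\mathrm{n}-1)}\in(0,1)$ (the interval being guaranteed by (iii)) and $z_\star=3\Upsilon_0(y_\star)(1-y_\star)$.

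I would finish by applying the Poincar\'e--Bendixson theorem on $\{x=0\}$: ruling out periodic orbits and heteroclinic cycles via a Dulac/monotonicity argument (the positivity in (ii) supplies a suitable Dulac factor, and the sign of $\dot z$ along the nullcline $z=3\Upsilon_0(y)(1-y)$ changes only at $y_\star$) forces $\Omega=\{(0,y_\star,z_\star)\}$, i.e.\ convergence. The stated decay rate then comes from linearising at this endpoint: with $\tau=0$ (again because $\mathrm{s}<0$) the exponent governing the $x$-direction is $\dot x/x\to-3\Gamma(0,y_\star)(1-y_\star)$, and substituting the fixed-point value $1-y_\star=\frac{2\mathrm{n}}{3(\mathrm{n}-1)}$ yields $x(\xi)\sim e^{-2\mathrm{n}\xi/(\mathrm{n}-1)}$.
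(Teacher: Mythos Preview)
The paper does not actually prove this theorem: it states the result and refers the reader to~\cite{Alho:2019fup} for the proof. So there is no in-paper argument to compare against. That said, your outline follows exactly the dynamical-systems route one expects (and that the cited reference uses): identify the line $L=\{(x,1,0)\}$ as normally hyperbolic, invoke the unstable manifold theorem for existence/uniqueness of the regular-center orbit, trap the orbit in a positively invariant box using the sign of $\dot x$ and the $z$-superlinear damping coming from $(\tau-\tfrac{1}{\mathrm{s}})=-\tfrac{1}{\mathrm{s}}>0$, push the $\omega$-limit set onto the invariant plane $\{x=0\}$, and finish with Poincar\'e--Bendixson there. Your eigenvalue computation at $(x_c,1,0)$ is correct, and your identification of $y_\star$ and $z_\star$ matches the paper.

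One genuine slip: in the last step you compute $\dot x/x\to-3\Gamma(0,y_\star)(1-y_\star)$ and then claim this equals $-\tfrac{2\mathrm{n}}{\mathrm{n}-1}$ after substituting $1-y_\star=\tfrac{2\mathrm{n}}{3(\mathrm{n}-1)}$. That substitution gives $-\tfrac{2\mathrm{n}}{\mathrm{n}-1}\,\Gamma(0,y_\star)$, so you are tacitly assuming $\Gamma(0,y_\star)=1$. In the examples treated in the paper one indeed has $\Gamma\equiv1$, but the theorem as stated allows general $\Gamma$; either note this assumption explicitly, or reformulate the asymptotic in terms of the original radial variable $r$ (since $d\xi/d\ln r=1/(y^\tau\Gamma)$ at the fixed point, the factor $\Gamma(0,y_\star)$ cancels and one recovers $x\sim r^{-2\mathrm{n}/(\mathrm{n}-1)}$, which is the physically meaningful statement quoted immediately after the theorem). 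Also, your Poincar\'e--Bendixson step is only sketched: saying ``the positivity in (ii) supplies a suitable Dulac factor'' is not yet a proof, and the exclusion of boundary equilibria and of cycles on $\{x=0\}$ is where most of the work in~\cite{Alho:2019fup} actually lies.
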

%
%
    %
%
%
	%
\begin{remark}
    Condition $(i)$ is equivalent to the strict hyperbolicity condition, while conditions $(ii)$ and $(iii)$ are more restrictive. For example, in the case of hyperelastic power-law materials with $\mathrm{s}>0$ and $\sigma\neq0$, the invariant boundary $\{y=0\}$ is a surface of non-hyperbolic fixed points $(x_0,0,z_0)$, and the right-hand side of~\eqref{MasterDS} is $C^1$ at $y=0$ only for $\mathrm{s}\leq1$. The above result can be easily extended to include $\mathrm{n}=3$, although the asympotics for large radius differ due to the existence of a center manifold in this case.
\end{remark}
The qualitative properties of the invariant boundary $\{x=0\}$ can be found in Figure~1 in~\cite{Alho:2019fup}, where the fixed point structure depends on the sign of $3\Upsilon_0(0)+\frac{\mathrm{s}}{\mathrm{n}}(\mathrm{n}-3)\Gamma_0(0)$.  The asymptotics are given by
 \begin{equation}
    \varrho(r) = \mathcal{O}(r^{-\frac{2\mathrm{n}}{\mathrm{n}-1}}) ,\quad 	  m(r)  = \mathcal{O}(r^{\frac{\mathrm{n}-3}{\mathrm{n}-1}}), \quad 	\phi_N(r) = \mathcal{O}(r^{-\frac{2}{(\mathrm{n}-1)}})\quad  \text{as}\quad r\rightarrow +\infty.
 \end{equation}
From the above global result one can construct elastic balls as follows: Assume that for balls under radial compression there exist $X_{\pm}\in[0,\infty]$, with $X_-<X_+$ such that $\mathcal{P}_{\mathrm{tan}}(x_c,1)=\mathcal{P}_{\mathrm{rad}}(x_c,1)>0$ for all $x_c\in(X_-,X_+)$, and $\mathcal{P}_{\mathrm{tan}}(X_\pm,1)=\mathcal{P}_{\mathrm{rad}}(X_\pm,1)=0$. For materials with a stress-free reference state $X_-=1$, while for pre-stressed reference state $X_-=0$. Let $X_\sharp=\min\{X_\flat,X_+\}$; then, under the assumptions of the above theorem, for any given $x_c<X_\sharp$ there exists a unique global solution and the limit~\eqref{GlobAs} holds. Hence, if the radial pressure becomes zero, $\mathcal{P}_\mathrm{rad}(x,y)=0$ for some values of $(x,y)\in(0,X_\sharp)\times(0,1)$, then a static self-gravitating elastic ball can be constructed by truncating the global solution.

As an example, let us take the scale-invariant elastic material given by the deformation potential~\eqref{NSE}, which was analyzed in detail by Calogero in~\cite{Cal21}. For balls under radial compression we must have $\gamma>0$, i.e. $\mathrm{n}>0$ or $\mathrm{n}<-1$. As shown in~\cite{Cal21} (see also Section~\ref{SBNSI} below), the radial pressures vanishes either at $x=0$ or at constant values of $y$, denoted by $y_\mathrm{b}$, and given in equation~\eqref{ybDef}. Now, for this model, $\Gamma=1>0$ for all $(x,y)$, so that condition $(i)$ of the theorem is satisfied, while, by~\eqref{s<0}, condition $(ii)$ is satisfied if and only if $\mathrm{n}<0$. Hence, in this case, regular compressed balls with compact support exist for $\mathrm{s}<0$, $\mathrm{n}<-1$ and $y_\star<y_\mathrm{b}$, i.e.
\begin{equation}
   \frac{\mathrm{n}-3}{3(\mathrm{n}-1)}<\left(1-\frac{\mathrm{n}(1+\mathrm{s})}{3(1+\mathrm{n})\mathrm{s}}\left(\frac{1+\nu}{1-\nu}\right)\right)^{\frac{\mathrm{s}}{1+\mathrm{s}}}.
\end{equation}
In~\cite{Cal21}, balls with compact support were also proved to exists for $0<\mathrm{n}<1$ and $\mathrm{s}\geq\mathrm{n}>0$. However, in the important parameter range $\mathrm{n}\geq1$ $(1<\gamma\leq2)$ this is still an open problem.

%
%

\subsection{Homogeneous (separable) solutions}
%
Newtonian self-gravitating self-similar fluid solutions were introduced by Goldreich \& Weber~\cite{GW80}, and by Makino~\cite{Mak92}, where it was shown that for polytropic fluid models they exist for $\gamma=\frac{4}{3}$, i.e., $\mathrm{n}=3$. In a recent paper~\cite{Cal21}, Calogero used the Newtonian scale-invariant stored energy function which generalizes polytropic fluid models, and showed that self-similar solutions also exist similarly to the fluid case. Following Goldreich \& Weber~\cite{GW80}, Makino~\cite{Mak92}, and Fu \& Lin~\cite{FL98}, the idea is to start by introducing the ansatz
\begin{equation}\label{ansatz}
r = a(t) z, \qquad \delta(t,r)=a(t)^{-3}\delta_0(z),\qquad  v(t,r)=\dot{a}(t)z.
\end{equation}
It follows that the mass is scale-invariant:
\begin{equation}\label{ansatz2}
m(t,r)=m_0(z).
\end{equation}
Moreover, the average mass density scales as $\delta$, i.e.
\begin{equation}\label{etascaling}
\eta(t,r)=a(t)^{-3}\eta_0(z).
\end{equation}
If in addition
\begin{equation}\label{CoefSI}
c^2_\mathrm{L}(\delta(t,r),\eta(t,r))= a(t)^{-1}c^2_\mathrm{L}(\delta_0(z),\eta_0(z)),\qquad s(\delta(t,r),\eta(t,r))= a(t)^{-1}s(\delta_0(z),\eta_0(z)),
\end{equation}
then the system~\eqref{sssystem2} reduces to
\begin{subequations}
	\begin{align}
	a(t)^2\ddot{a}(t) \delta_0(z) z+ c^2_\mathrm{L}(\delta_0(z),\eta_0(z)) \frac{d\delta_0}{dz} &=\frac{3}{z}s(\delta_0(z),\eta_0(z))(\eta_0(z)-\delta_0(z))-\frac{4\pi\varrho_0}{3}\eta_0(z) \delta_0(z) z, \\
	\frac{d\eta_0}{dz} &=-\frac{3}{z}(\eta_0(z)-\delta_0(z)),
	\end{align}
\end{subequations}
which can be solved by separation of variables by letting 
\begin{equation}\label{MasterEq1}
a(t)^2\ddot{a}(t)=\alpha=-\frac{\Lambda}{3},
\end{equation}
where $\Lambda\in\mathbb{R}$. The following proposition yields a description of solutions of the above equation. 
\begin{proposition}[Makino~\cite{Mak92}, and Fu \& Lin~\cite{FL98},  Hadzic \& Jang~\cite{HJ18}]
	Let $a(t)$ be a solution of~\eqref{MasterEq1} with initial conditions
	\begin{equation}
	a(0)=a_0>0,\qquad \dot{a}(0)=a_1.
	\end{equation}
        Then: 
	\begin{itemize}
		\item[1)] If $\Lambda<0$ then $a(t)>0$ for all $t>0$, and $a(t)\sim c_1(1+c_2 t)$ as $t\rightarrow+\infty$ for some positive constants $c_1$, $c_2$.
		\item[2)] If $\Lambda=0$ then $a(t)=a_0+a_1 t$.
		\item[3)] If $\Lambda>0$, let $a^{*}_1 = \sqrt{2\Lambda/3a_0}$.   Then:
		\begin{itemize}
			\item[i)] If $a_1>a^{*}_1$ then $a(t)>0$ for all $t>0$, and the asymptotics are given as in $(1)$; 
			\item[ii)] If $a_1=a^{*}_1$ then $a(t)>0$ for all $t>0$, and
			\begin{equation}
			a(t)=a_0 \left(1+\frac{3}{2}\frac{a_1}{a_0}t\right)^{\frac{2}{3}} , \qquad t\geq 0;
			\end{equation} 
			\item[iii)] If $a_1<a^{*}_1$ then there exists $T\in(0,+\infty)$ such that $a(t)>0$ in $(0,T)$, and $a(t)=k_1 (T-k_2t)^{\frac{2}{3}}$ as $t\rightarrow k^{-1}_2T$ for some positive constants $k_1,k_2$. In particular, $a(t)\rightarrow0$ as $t\rightarrow k^{-1}_2T$.
		\end{itemize}
	\end{itemize}
\end{proposition}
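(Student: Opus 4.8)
The plan is to read \eqref{MasterEq1} as the one–dimensional mechanical system $\ddot a = \alpha\, a^{-2}$ with $\alpha = -\Lambda/3$, valid while $a>0$, and to extract every claim from its conserved energy. First I would multiply by $\dot a$ and integrate to obtain the first integral
\[
E := \tfrac12 \dot a(t)^2 + \frac{\alpha}{a(t)} = \tfrac12\dot a(t)^2 - \frac{\Lambda}{3a(t)},
\]
which is constant along solutions; evaluating at $t=0$ gives $E = \tfrac12 a_1^2 - \tfrac{\Lambda}{3a_0}$. Since the right-hand side $\alpha\, a^{-2}$ is smooth for $a>0$, Picard's theorem yields a unique maximal solution, which can only fail to extend to the future if $a\to 0$ or $a\to\infty$ in finite time. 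Thus the whole proposition reduces to a phase-line analysis driven by the sign of $\Lambda$ (equivalently the sign of $E$), reading off monotonicity and turning points from $\dot a^2 = 2E - 2\alpha/a = 2E + \tfrac{2\Lambda}{3a}$.

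For $\Lambda<0$ we have $\alpha>0$, so $E = \tfrac12 a_1^2 + \alpha/a_0 > 0$ and the potential $\alpha/a$ is a positive, infinite barrier at the origin; hence $a(t)\ge \alpha/E>0$ for all $t$, ruling out collapse, while $\ddot a>0$ forces $\dot a$ to increase with $\dot a^2\to 2E$, so $a(t)\to\infty$ and $\dot a\to\sqrt{2E}$, giving global existence and the asymptotically linear growth $a(t)\sim c_1(1+c_2 t)$. The case $\Lambda=0$ is immediate, since $\ddot a\equiv 0$ integrates to $a(t)=a_0+a_1 t$.

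For $\Lambda>0$ the force is attractive, $\ddot a<0$. Writing $a_1^{*} = \sqrt{2\Lambda/(3a_0)}$, so that $\tfrac12 (a_1^{*})^2 = \Lambda/(3a_0)$, gives $E = \tfrac12\bigl(a_1^2 - (a_1^{*})^2\bigr)$. If $a_1>a_1^{*}$ then $E>0$ and $\dot a^2 = 2E + 2\Lambda/(3a)>0$ never vanishes, so $\dot a$ stays positive, $a\to\infty$ and $\dot a\to\sqrt{2E}$, giving the same linear asymptotics as for $\Lambda<0$. If $a_1=a_1^{*}$ then $E=0$ and the equation reduces to the separable first-order ODE $\dot a = \sqrt{2\Lambda/(3a)}$, integrating to $\tfrac23 a^{3/2} = \tfrac23 a_0^{3/2} + \sqrt{2\Lambda/3}\,t$; substituting $\sqrt{2\Lambda/3}=a_1 a_0^{1/2}$ yields exactly $a(t)=a_0\bigl(1+\tfrac32\tfrac{a_1}{a_0}t\bigr)^{2/3}$.

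Finally, if $a_1<a_1^{*}$ the solution must collapse: either $\dot a\le 0$ already at some time (immediately if $a_1\le 0$, or after reaching the turning point $a_{\max}=-\Lambda/(3E)>0$ when $0<a_1<a_1^{*}$, where $E<0$), after which $\ddot a<0$ keeps $\dot a$ negative and $a$ decreasing to $0$. The collapse time is finite because the quadrature $T=\int_0^{a_\flat}\bigl(2E+\tfrac{2\Lambda}{3s}\bigr)^{-1/2}\,ds$ converges, its integrand behaving like $\sqrt{3s/(2\Lambda)}$ as $s\to 0^+$; and near collapse the dominant balance $\tfrac12\dot a^2\approx \Lambda/(3a)$ gives $\tfrac23 a^{3/2}\approx \sqrt{2\Lambda/3}\,(T_c-t)$, hence $a(t)\sim k_1(T_c-t)^{2/3}$, i.e.\ the asserted $k_1(T-k_2 t)^{2/3}$ profile after relabelling constants. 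The main obstacle — the only step beyond routine energy bookkeeping — is precisely this sub-case: one must verify that the collapse is genuinely in \emph{finite} time (convergence of the quadrature) rather than an $a\to 0$ decay as $t\to\infty$, and then justify that the leading near-singularity behaviour is exactly the $2/3$ power law. These are standard facts for this Emden/Kepler-type equation, established in Makino~\cite{Mak92}, Fu \& Lin~\cite{FL98}, and Hadzic \& Jang~\cite{HJ18}, whose arguments can be invoked directly.
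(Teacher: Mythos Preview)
The paper does not give its own proof of this proposition; it simply records the statement and attributes it to Makino~\cite{Mak92}, Fu \& Lin~\cite{FL98}, and Had\v{z}i\'c \& Jang~\cite{HJ18}. Your argument via the conserved first integral $E=\tfrac12\dot a^2-\Lambda/(3a)$ is exactly the standard approach used in those references, and each case is handled correctly: the positive barrier for $\Lambda<0$ prevents collapse and forces $\dot a\to\sqrt{2E}$; the separable integration at $E=0$ gives the explicit $2/3$ power; and for $a_1<a_1^{*}$ the convergence of the quadrature near $a=0$ together with the dominant balance $\tfrac12\dot a^2\sim\Lambda/(3a)$ yields finite-time collapse with the stated profile. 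There is nothing to compare against in the paper itself, and no gap in what you wrote.
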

This leads to the system of differential equations
\begin{subequations}\label{MasterEq2}
	\begin{align}
	c^2_\mathrm{L}(\delta_0,\eta_0)  \frac{d\delta_0}{dz} &=  \frac{3}{z}s(\delta_0,\eta_0)(\eta_0-\delta_0) -\delta_0 \frac{4\pi\varrho_0}{3}\left( \eta_0 - \frac{\Lambda}{4\pi\varrho_0}\right) z, \\
	\frac{d\eta_0}{dz} &=-\frac{3}{z}(\eta_0-\delta_0).
	\end{align}
\end{subequations}
When $\Lambda=0$ we recover the steady state equations. The notational choice of the constant $\Lambda$ is due to the fact that this set of equations is effectively equivalent to the Newtonian problem of static self-gravitating matter with a cosmological constant $\Lambda$.\footnote{In spherical symmetry, the Poisson equation with a cosmological constant reads
\begin{equation}
    \frac{1}{r^2}\frac{d}{dr}\left(r^2\frac{d\phi_N}{dr}\right)=4\pi\rho-\Lambda,
\end{equation}
which can be written as a set of two first order equations, one for the Newtonian potential, $\frac{d\phi_N}{dr}=\frac{m}{r^2}-\frac{\Lambda}{3}r$, and one for the mass, $\frac{dm}{dr}=4\pi r^2\rho$, see~\cite{Ehlers}.}  If $\Lambda\leq0$ then a similar analysis as in the previous section applies, and strongly regular ball solution can be proved to exist. When $\Lambda>0$ the problem is more involved. For the initial data
\begin{equation}
   \delta_0(0)  = \frac{\Lambda}{4\pi\varrho_0}
\end{equation}
solutions have constant mass density,
\begin{equation}
 \delta^{\prime}_0(z)=0, \qquad m_0(z)= \frac{\Lambda}{3}z^3,
\end{equation}
for which the pressure is constant, and therefore no ball solutions exist. For polytropic fluids, Fu \& Lin have given a characterization of the initial data for which ball solutions exist (see also the numerical work in~\cite{refId0}, where, for $\mathrm{n}=3$, it is shown that ball solutions exist for $\delta_\mathrm{0}(0)\gtrsim\frac{\Lambda}{24\pi\varrho_0} 10^{3}$).
%
%
%
%
For the Newtonian stored energy function~\eqref{NSE}, the scaling~\eqref{CoefSI} is satisfied only for $\mathrm{n}=3$ ($\gamma=4/3$).
As in the fluid case, it can be shown that the ball solutions satisfying  $a_1=\sqrt{\frac{2\Lambda}{3a_0}}$ have zero total energy, corresponding to true self-similar solutions, see Figure~(1) in~\cite{HJ18}.%

\subsection{Linear elastic wave solutions}
Linearizing the equations around the natural reference state, that is, taking $\delta=1+\check{\delta}$, $\eta=1+\check{\eta}$, $v=0+\check{v}$ and ignoring nonlinear terms, yields (neglecting gravity)
\begin{subequations}\label{LinSS}
	\begin{align}
	&\partial_t\check{\delta}+\frac{1}{r^2}\partial_r(r^2\check{v})=0,\\
	&\partial_t\check{\eta}=-\frac{3}{r}\breve{v}, \\
	&\partial_t \check{v}+c^2_\mathrm{L}(1,1)\partial_r\check{\delta}+s(1,1)\partial_r\check{\eta}=0, 
	\end{align}
\end{subequations}
where 
\begin{equation}
c^2_\mathrm{L}(1,1)=\frac{\lambda+2\mu}{\varrho_0},\qquad 	s(1,1)=0.
\end{equation}
Taking the time derivative of the first equation and the divergence of the second equation leads to
\begin{equation}
\partial^2_{tt}(r\check{\delta})-c^2_{\mathrm{L}}(1,1)\partial^2_{rr}(r \check{\delta})=(\partial_t-c_{\mathrm{L}}\partial_r )(\partial_t+c_{\mathrm{L}}\partial_r )(r\breve{\delta})=0.
\end{equation}
The solutions to the above second order equation contain incoming and outgoing waves:
\begin{subequations}
	\begin{align}
	\check{\delta}(t,r)&=\frac{1}{r}f(r-c_\mathrm{L}t)+\frac{1}{r}g(r+c_\mathrm{L}t), \\
	\check{\eta}(t,r) &= \frac{3}{r^3}\int^{r}_{0}\bigl[f(s-c_\mathrm{L}t)+g(s+c_\mathrm{L}t)\bigr]s ds, \\
	\check{v}(t,r) &= -\frac{c_\mathrm{L}}{r^2}\int^{r}_{0}\bigl[g^{\prime}(s+c_\mathrm{L}t)-f^{\prime}(s-c_\mathrm{L}t)\bigr]sds.
	\end{align}
\end{subequations}

\newpage

\section{Elastic polytropes}\label{Polytropes}
By far, the most popular and extensively studied model of a perfect fluid consists of the \emph{relativistic (or Tooper) polytropic EoS}, introduced in~\cite{Top1965}, and mentioned in the introduction (see Example~\ref{Ex1} and Remark~\ref{Rem1}). Originally, the energy density and the isotropic pressure were written in the form
\begin{equation}\label{TooperPol}
\rho =\mathfrak{\varrho} +\mathrm{n}\mathcal{K}\varrho^{1+\frac{1}{\mathrm{n}}},
\end{equation}
\begin{equation}
p_\mathrm{iso}=\mathcal{K}\varrho^{1+\frac{1}{\mathrm{n}}},
\end{equation}
where $\varrho$ is the \emph{baryonic mass density}, $\mathrm{n}\neq\{0,-1\}$ is the \emph{polytropic index} (which is usually taken to be nonnegative), and $\mathcal{K}$ is a constant which is assumed to be positive. %

In this section\footnote{Some of the results in this section were announced in the companion paper~\cite{Alho:2021sli}.} we introduce material laws that continuously deform the polytropic EoS to the elastic solid setting. Other models are given in Example~\ref{QH} and Remark~\ref{ShearPol} in Appendix~\ref{ApEx}, which, in spherical symmetry, reduce to those in Example~\ref{QHSS}. They belong to the class of \emph{quasi-Hookean materials}. We start by introducing a material law that is the simplest possible generalization to the elastic setting, obtained by adding a quadratic correction~\footnote{Lower powers lead to ill-defined materials, as the transverse wave velocities $\tilde{c}_\mathrm{T}$ blow up in the isotropic state; higher powers lead to materials where these velocities are zero in the isotropic state.} $\sim \mu (\delta-\eta)^2$ to the energy density; this law contains one more parameter, the shear modulus $\mu>0$, that fully characterizes the deviation from the fluid limit $\mu\rightarrow 0$. Besides its simplicity, the quadratic model also serves the purpose of introducing the concept of invariants under renormalization of the reference state (see Remarks~\ref{RefStaGauge} and \ref{InvRenRefSt}).

An important property of polytropic fluids is that they lead to Newtonian equations of motion which are invariant under homologous transformations~\cite{Cha39}. While simple, the quadratic model happens to be scale-invariant in the Newtonian limit only when $\mathrm{n}=1$. To generalize the quadratic model to a new material law that has Newtonian scale-invariant properties, we make use of Proposition~\ref{Homcs}, which gives a characterization of spherically symmetric power-law deformation potential functions for which the Newtonian equations are scale-invariant. A more general procedure to construct spherically symmetric scale-invariant materials beyond the power-law case has been developed by Calogero in~\cite{Cal21} (see also Sideris~\cite{sideris2022expansion} for a more recent account of scale invariance). This leads to a material law which contains an additional parameter, the \emph{shear index} $\mathrm{s}$. Moreover, the scale invariance property implies that the boundary of the ball is characterized by constant values of the \emph{shear variable}, defined by (see e.g.~\cite{Karlovini:2002fc})~\footnote{In~\cite{Karlovini:2002fc} the shear variable is denoted by $z$. Here we use the notation introduced in~\cite{Alho:2018mro,Alho:2019fup} in the Newtonian case, where the variable $y$ is closely related to the homology invariant Milne $u$-variable~\cite{Cha39}.}
\begin{equation}
y=\frac{\delta}{\eta},
\end{equation}
a property first noticed by Karlovini \& Samuelsson in~\cite{Karlovini:2002fc}, and later emphasized and explored in the Newtonian setting by Calogero~\cite{Cal21} (see also the section on exact solutions in~\cite{Alho:2019fup}). 
\subsection{The quadratic model (QP)}\label{sec:QM}
The simplest possible generalization of polytropic fluid models to the elastic setting consists in adding a quadratic correction to the energy density, namely
\begin{equation}\label{QuadraticModel}
\widehat{\rho}(\delta,\eta) = \left(\rho_0-\frac{\mathrm{n}^2K}{1+\mathrm{n}}\right)  \delta + \frac{\mathrm{n}^2K}{1+\mathrm{n}} \delta^{1+\frac{1}{\mathrm{n}}} + \frac{2\mu}{3}\left(\eta-\delta\right)^2,
\end{equation}
which can be seen as belonging to the class of quasi-Hookean materials discussed in Appendix~\ref{ApEx}, see Example~\ref{QH} and Remark~\ref{ShearPol}(and also Example~\ref{QHSS} for their form when restricted to spherical symmetry). Indeed, it corresponds to the choices
\begin{equation}\label{Polcheckrho} 
\check{\rho}(\delta) = \left(\rho_0-\frac{\mathrm{n}^2K}{1+\mathrm{n}}\right)  \delta + \frac{\mathrm{n}^2K}{1+\mathrm{n}} \delta^{1+\frac{1}{\mathrm{n}}}, \quad 
\check{\mu}(\delta)=\frac{2\mu}{3}\delta^2,\quad S\left(\frac{\delta}{\eta}\right) = \left(\frac{\delta}{\eta}\right)^{-2}\left(1-\frac{\delta}{\eta}\right)^2.
\end{equation}
The associated relativistic stored energy function is given by
\begin{equation}\label{SEF_quadratic}
\widehat{\epsilon}(\delta,\eta)=\rho_0-\frac{\mathrm{n}^2 K}{1+\mathrm{n}}+\frac{\mathrm{n}^2 K}{1+\mathrm{n}} \delta^{\frac{1}{\mathrm{n}}}+\frac{2\mu}{3} \eta \left(-2+\left(\frac{\delta}{\eta}\right)^{-1}+\frac{\delta}{\eta}\right),
\end{equation}
which contains three parameters: the bulk modulus $K>0$, the shear modulus $\mu>0$, and the polytropic index $\mathrm{n}\neq0,-1$. The radial and tangential pressures are
\begin{subequations}
\begin{align}
&\widehat{p}_{\text{rad}}(\delta,\eta) = \frac{\mathrm{n} K}{1+\mathrm{n}}\delta^{1+\frac{1}{\mathrm{n}}}-\frac{2\mu}{3}\eta^2\left(1-\left(\frac{\delta}{\eta}\right)^2\right), \\
&\widehat{p}_{\text{tan}}(\delta,\eta) = \frac{\mathrm{n} K}{1+\mathrm{n}} \delta^{1+\frac{1}{\mathrm{n}}} +\frac{2\mu}{3}\eta^2\left(1-\left(\frac{\delta}{\eta}\right)\right)\left(2-\left(\frac{\delta}{\eta}\right)\right).
\end{align}
\end{subequations}
Therefore, in the limit $\mu\rightarrow0$ we recover a perfect fluid with a polytropic EoS~\eqref{TooperPol},
%
%
with the identifications
 \begin{equation}\label{fluidQ}
\varrho(\delta) =\varrho_0 \delta,\qquad \mathcal{K} = \frac{\frac{\mathrm{n} K}{1+\mathrm{n}}}{\varrho_0^{1+\frac{1}{\mathrm{n}}}},
\end{equation}
where the reference state baryonic mass density $\varrho_0$ is given by
\begin{equation}\label{varrho0}
\varrho_0=\rho_0-\mathrm{n}p_0=\rho_0-\frac{\mathrm{n}^2 K}{1+\mathrm{n}}.
\end{equation}
%
The constant $\mathcal{K}$ is well defined for all $\mathrm{n}$ only when the baryon mass density is positive, which in turn implies the upper bound for the bulk modulus:
\begin{equation}
K<\frac{(1+\mathrm{n})\rho_0}{\mathrm{n}^2},\quad \text{for}\quad \mathrm{n}\in(-1,+\infty)\backslash\{0\}.
\end{equation}
Moreover, positivity of the bulk modulus, $K>0$, implies that
\begin{subequations}
\begin{align}
&\mathcal{K}>0\quad \text{for}\quad \mathrm{n}\in(-\infty,-1)\cup(0,+\infty) ,\\
&\mathcal{K}<0\quad \text{for}\quad \mathrm{n}\in(-1,0).
\end{align}	
\end{subequations} 
The cases with $\mathcal{K}<0$ have no physical meaning in the fluid case, since this will lead to a negative radial pressure. However, there is nothing a priori preventing the existence of streched balls in the elastic setting.
%
\subsubsection{Invariants under renormalization of the reference state}
%
If we take a reference state that is compressed by a factor of $f$ (in volume) with respect to the original reference state, then the new variables $(\tilde\delta,\tilde\eta)$ are related to the original variables $(\delta,\eta)$ by
\begin{equation}
\delta = f \tilde\delta, \qquad  \eta = f \tilde\eta.
\end{equation}
Consequently, the energy density is
\begin{equation}
\widehat{\rho}(\tilde\delta,\tilde\eta) = \left(\rho_0- \frac{\mathrm{n}^2 K}{1+\mathrm{n}}\right) f \tilde\delta + \frac{\mathrm{n}^2 K}{1+\mathrm{n}} f^{1+\frac{1}{\mathrm{n}}} {\tilde\delta}^{1+\frac{1}{\mathrm{n}}} + \frac{2\mu}{3} f^2(\tilde\delta-\tilde\eta)^2.
\end{equation}
If, on the other hand, we write it in the same form as Eq.~\eqref{QuadraticModel},
\begin{equation}
\widehat\rho(\tilde\delta,\tilde\eta) = \left(\tilde\rho_0- \frac{\mathrm{n}^2 \tilde{K}}{1+\mathrm{n}}\right) \tilde\delta + \frac{\mathrm{n}^2 \tilde{K}}{1+\mathrm{n}} {\tilde\delta}^{1+\frac{1}{\mathrm{n}}} + \frac{2\tilde\mu}{3} (\tilde\delta-\tilde\eta)^2,
\end{equation}
we obtain
\begin{align}
&\tilde\rho_0 = \left(\rho_0 - \frac{\mathrm{n}^2 K}{1+\mathrm{n}}\left(1-f^{\frac{1}{\mathrm{n}}}\right)\right) f  ,\qquad \tilde{K} = K f^{1+\frac{1}{\mathrm{n}}}, \qquad \tilde\mu = \mu f^2 .
\end{align}
So, by changing the reference state, we obtain an equivalent description of the material, moving from the parameters $(\rho_0,K,\mu)$ to $(\tilde\rho_0,\tilde{K},\tilde\mu)$; the choice of reference state is akin to a choice of gauge. The fluid quantities~\eqref{fluidQ} are in fact invariant under renormalization of the reference state, since by the above transformation, and Eq.~\eqref{varrho0}, it follows that $\tilde{\varrho}_0=\varrho_0 f$.
Other invariant quantities under renormalization are 
\begin{equation}\label{ElQ}
\varsigma(\eta)=\varrho_0 \eta,\qquad \mathcal{E} = \frac{2}{3}\frac{\mu \varrho^{\mathrm{n}-1}_0}{\left(\frac{\mathrm{n}K}{1+\mathrm{n}}\right)^\mathrm{n}}\,,
\end{equation}
with $\mathcal{E}$ dimensionless. Recall that for static solutions with a regular center we must have $\eta(0)=\delta(0)=\delta_\mathrm{c}>0$, and so $\varsigma(\delta_\mathrm{c})=\varrho(\delta_\mathrm{c})=\varrho_\mathrm{c}$, which we assume to be positive. 
So $(\mathcal{K},\mathcal{E},\mathrm{n})$ completely characterize the elastic material; together with $\varrho_\mathrm{c}$, they should give a three-parameter family of elastic stars. In practical terms, once $\mathrm{n}$, $\mathcal{K}$ and $\mathcal{E}$ are fixed, one can pick any $\rho_0$ (since the choice of reference state is arbitrary) and then solve~\eqref{fluidQ} for $K$ and~\eqref{ElQ} for $\mu$, while  $\delta_\mathrm{c}$ is obtained from $\varrho_\mathrm{c}$. When $\mathrm{n}=1$, for instance, we obtain
\begin{equation}
K = \frac{1 + 2\mathcal{K} \rho_0 - \sqrt{1+4\mathcal{K} \rho_0}}{\mathcal{K}},\qquad \mu = \frac{3}{4}K\mathcal{E},\qquad \delta_\mathrm{c}=\frac{\varrho_\mathrm{c}}{\rho_0-\frac{K}{2}}.
\end{equation}
Moreover, changing $\mathcal{K}$ only affects the equilibrium configuration by an overall scale, and thus does not change the value of dimensionless ratios such as the compactness,  while $\mathcal{E}$ is already dimensionless. 
Finally, when written in terms of the invariant quantities $(\varrho,\varsigma,\mathcal{K},\mathcal{E})$, the EoS becomes independent of $\rho_0$:
\begin{equation}
\rho= \varrho + \mathrm{n} {\cal K} \varrho^{1+\frac{1}{\mathrm{n}}}+ {\cal E} {\cal K}^\mathrm{n}(\varsigma-\varrho)^2,
\end{equation}
\begin{equation}
p_{\rm rad}\equiv\varrho\frac{\partial \rho}{\partial \varrho}-\rho = {\cal K} \varrho^{1+\frac{1}{\mathrm{n}}} - {\cal E}{\cal K}^\mathrm{n}(\varsigma^2-\varrho^2),
\end{equation}
\begin{equation}
q\equiv\frac{3}{2}\varsigma\frac{\partial \rho}{\partial \varsigma} = 3{\cal E}{\cal K}^\mathrm{n} \varsigma(\varsigma-\varrho).
\end{equation}
%
%
Note also that the differential equation for $\eta$ in~\eqref{TOV2} can be written in the equivalent form for $\varsigma$,
\begin{equation}
    \partial_r \varsigma = -\frac{3}{r}\left(\varsigma - \frac{\varrho}{\left(1-2m/r\right)^{1/2}}\right)\,,
\end{equation}
and so all field equations can be written only in terms of the invariant quantities. 
\subsubsection{Numerical results}
In this section we will develop our first numerical analysis of elastic self-gravitating configurations. Due to the simplicity of the QP model, the study developed in this section will be the basis for the analysis of more complex models that will be performed in the remainder of this work. 

We follow the procedure outlined in Sec.~\ref{SS}, and construct elastic stars by solving the system of the TOV equations~\eqref{TOV2} with the quadratic EoS~\eqref{QuadraticModel}. Due to the reference state invariance mentioned above, we choose to work with the invariant parameters $({\cal K},{\cal E})$ and the matter ``densities'' $(\varrho, \varsigma)$.  Alternatively, one could choose to work directly with $(K, \mu, \rho_0)$ and with the functions $\delta$ and $\eta$.

For a fixed set of parameters, there exists a one-parameter family of solutions, characterized by $\varrho_\mathrm{c}$ (or $\delta_\mathrm{c}$). In Fig.~\ref{fig:MR} we show the mass-radius and compactness diagrams for the QP model for $\mathrm{n}=1/2$ and $\mathrm{n}=1$, on the left and right panels, respectively, and for some chosen values of the elastic parameter ${\cal E}$ (namely $0$, $10^{-2}$, $10^{-1}$ and $1/2$). Recall that ${\cal K}$ only changes the scale and does not affect dimensionless ratios such as the compactness.
\begin{figure}[ht!]
	\centering
	\includegraphics[width=0.95\textwidth]{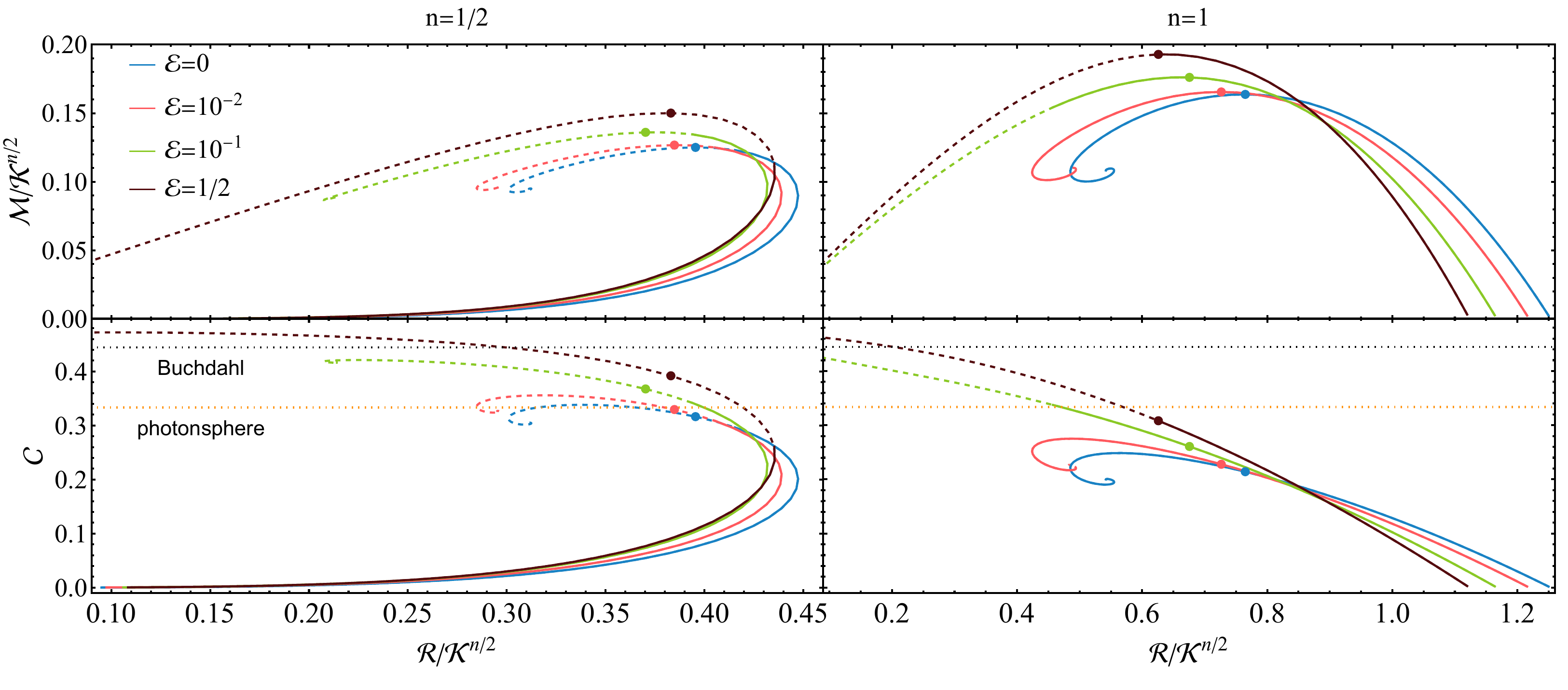} \\
	\caption{Mass-radius (top) and compactness-radius (bottom) diagrams for the quadratic polytropic model (QP) with $\mathrm{n}=1/2$ (left) and $\mathrm{n}=1$ (right). The dashed style denotes superluminal configurations, whereas the circular markers denote the maximum mass configuration, marking the onset of the radial instability.\label{fig:MR}
	}
\end{figure}

Notice that for low values of elasticity the mass-radius profile is qualitatively similar to the fluid case (compare the blue and red curves): for low densities there is an initial branch where the mass increases; once the mass reaches its maximum, there is a second branch where both the mass and the radius decrease for increasing central densities; finally, for sufficiently high densities there is a characteristic spiral-like behaviour. In contrast, for high enough values of elastic parameter, the second branch is much longer, and the spiral-like behaviour becomes smaller and eventually appears to vanish. A general property that can be observed in the diagrams is that elasticity contributes to increase the maximum mass and compactness of the solutions, potentially allowing for ultracompact ($\mathcal{M}/\mathcal{R}>1/3$) and beyond Buchdahl $(\mathcal{M}/\mathcal{R}>4/9)$ solutions (see bottom panels of Fig.~\ref{fig:MR}).
\begin{figure}[ht!]
	\centering
	
	\includegraphics[width=0.95\textwidth]{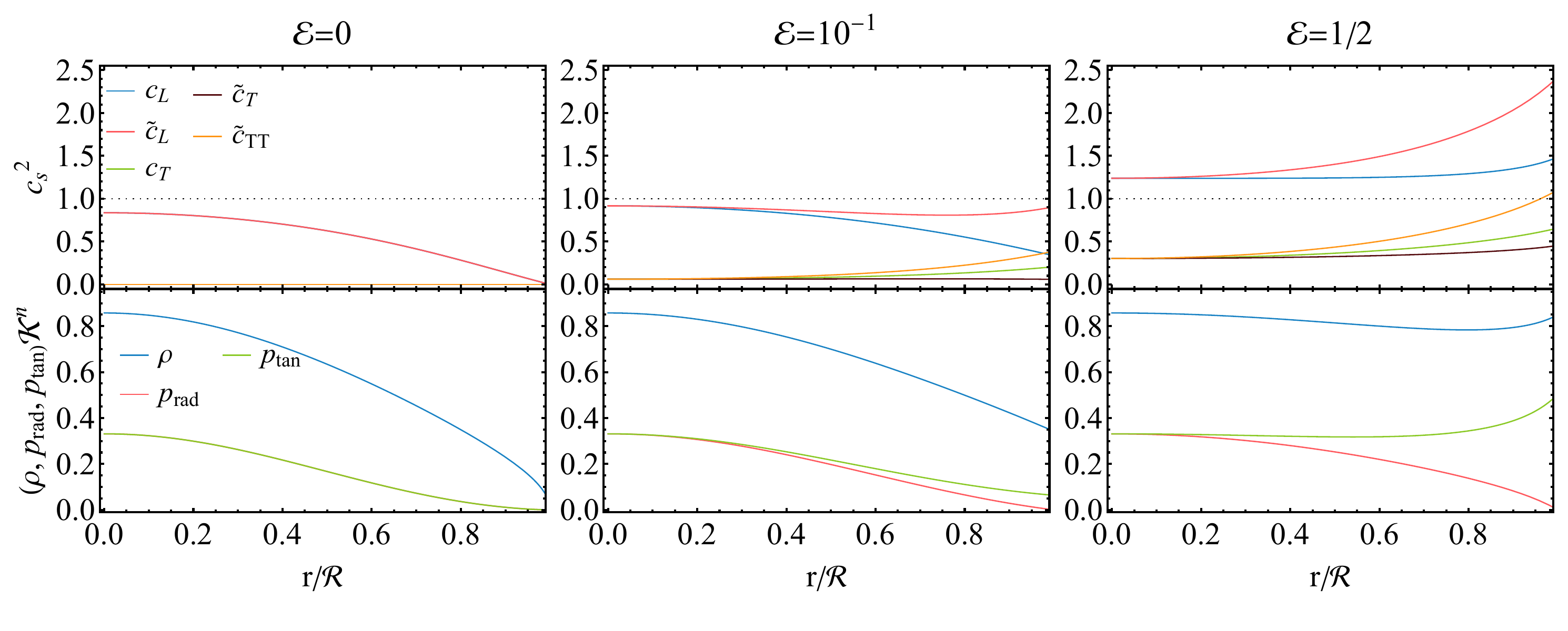} 
	\caption{
		Sound speeds (upper row) and density and pressure profiles (bottom row) for the quadratic polytropic model (QP) with $\mathrm{n}=1/2$. We compare the perfect fluid case (left panel) with two elastic configurations (middle and right panels). The solution represented in the middle panel satisfies all physical viability conditions and features a light ring ($\mathcal{M}/\mathcal{R}\approx0.35$). \label{fig:profiles_quadratic}
	}
\end{figure} 
Another important issue to address is whether these configurations satisfy the energy conditions~\eqref{EC} and have well-defined and subluminal wave propagation speeds~\eqref{realcausal}, that is, whether these ultracompact configurations are physically viable. Since higher values of elasticity contribute to higher sound speeds in the medium, it is crucial to monitor them to understand which regions of the parameter space are allowed. In Fig.~\ref{fig:MR}, we represent configurations for which one or more sound speeds are superluminal by a dashed style (using the natural choice given in Definition~\ref{NaturalChoice} for $\tilde{c}_{\mathrm{L}}$ and $\tilde{c}_{\mathrm{TT}}$). We see that in general the presence of elasticity can produce unphysical configurations. These typically appear in the high-density region of the solutions, but can extend to the lower density part of the curve if the elastic parameter is sufficiently high. While this behaviour is expected, especially in the case ${\rm n}< 1$, which can have superluminal waves in the fluid limit, it is interesting to note that these superluminal configurations can be found in cases where ${\rm n}\geq 1$, which always satisfy physical viability conditions in the fluid limit. In general, one also needs to check if the energy conditions are satisfied; however, in this particular model, we find that the energy conditions are always satisfied for materials with subluminal wave propagation. 
\begin{figure}[ht!]
	\centering
	\includegraphics[width=0.55\textwidth]{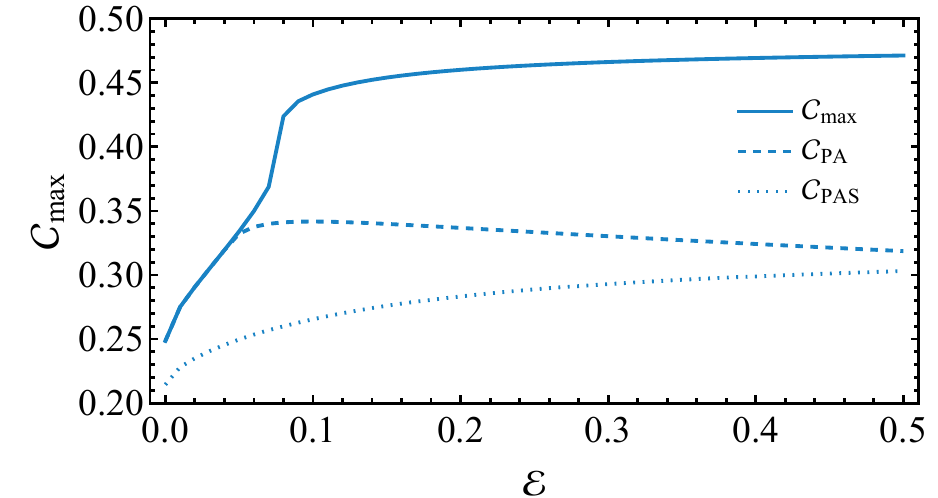} \\
	\caption{ Maximum compactness of elastic stars in the quadratic polytropic model (QP) with $\mathrm{n}=1$. The solid curve represents the absolute maximum compactness, while the dashed and dotted lines represents the maximum compactness of physically admissible configurations and radially stable configurations, respectively.\label{fig:maxcompquad}
	}
\end{figure}
To illustrate this point, we show the five independent sound speeds in Fig.~\ref{fig:profiles_quadratic} (top row), and also the profiles of the density, radial and tangential pressures (bottom row), for three representative configurations of a ${\rm n}=1/2$ polytrope with different values of ${\cal E}$ and the same central baryonic density. Not only do the sound speeds become larger at the center when the elasticity parameter increases, but also their profiles can grow within the star. We find that it is the sound speed of longitudinal waves along the tangential direction (red curve in top panel) that grows faster and breaks the causality limit (horizontal dotted line) first. Figure~\ref{fig:profiles_quadratic} also shows that the energy conditions are still satisfied even in the almost luminal and superluminal cases (middle and right panel, respectively), suggesting that causality is a stronger condition for physically admissible configurations. Interestingly, when we consider the physical viability conditions, all beyond-Buchdahl configurations are excluded, but for some parameters we still find viable configurations that are ultracompact and feature a light ring. In Fig.~\ref{fig:maxcompquad} we show a comparison of the  maximum compactness ${\cal C}_{\rm max}$ as function of ${\cal E}$  with and without assuming the viability conditions (dashed and solid lines, respectively). For simplicity, we chose ${\rm n}=1$, but qualitatively similar results hold also for other values of the index ${\rm n}$. One can qualitatively check that the transition around ${\cal E}\sim 0.08$ between the fast growing behaviour of ${\cal C}_{\rm max}$  and the saturation is related to the transition between a clear spiral-like profile of the unstable branch (e.g. red curve in right panel of Fig.~\ref{fig:MR}) and the long non-spiraling unstable branch (e.g. green curve in the right panel).
\begin{figure}[t]
	\centering
	\includegraphics[width=0.95\textwidth]{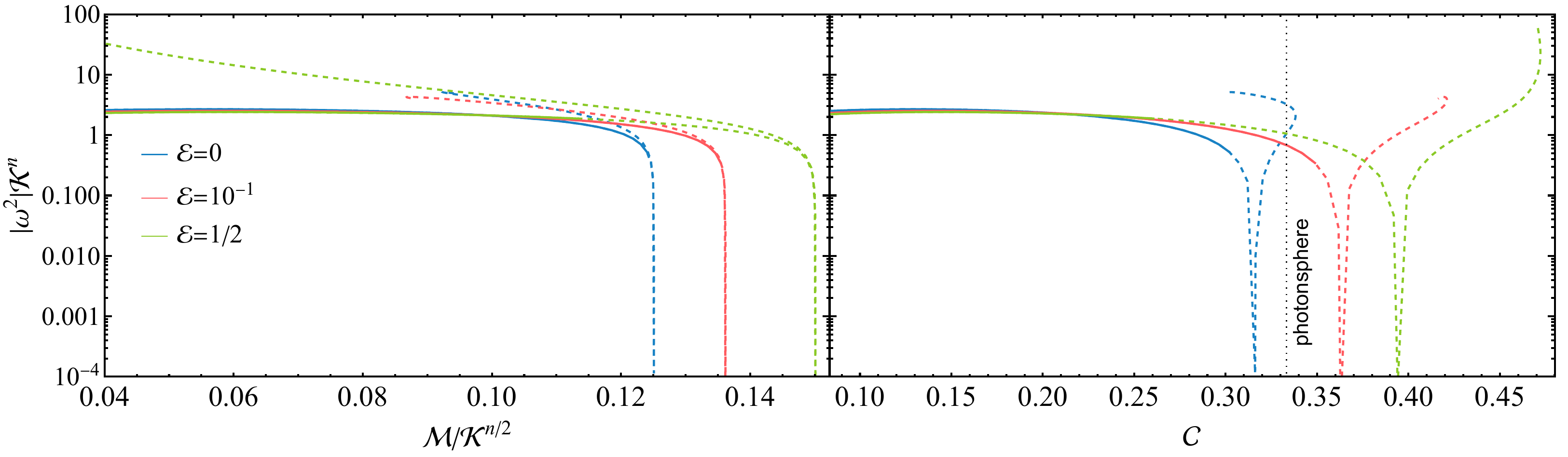} 
	\caption{Absolute value of the fundamental mode's squared frequency of radial perturbations in quadratic polytropic stars with a polytropic index of $\mathrm{n}=1/2$, displayed as a function of the mass (left) and compactness (right). Stable configurations ($\omega^2>0$) are depicted on the left side of the zero crossing point, while unstable configurations ($\omega^2<0$) are shown on the right side. Solid lines represent physically admissible configurations, while dashed lines indicate unphysical solutions. The zero crossing point corresponds to the maximum mass within numerical accuracy.
	\label{fig:stability}}
\end{figure} 

In addition to the physical admissibility, we also studied the radial stability of the self-gravitating configurations, following the procedure detailed in Sec.~\ref{RS}. In Fig.~\ref{fig:stability} we show the absolute value of the fundamental mode's squared frequency for the radial perturbations, as a function of the mass (left) and compactness (right), for the three values of ${\cal E}$ corresponding to those of Fig.~\ref{fig:profiles_quadratic}. We find that, similarly to perfect fluids, the threshold for radial stability is given by the turning point corresponding to the maximum mass, and thus we can classify the parts of the mass-radius diagram to the left and to the right of the maximum mass as the unstable and stable branch, respectively. Interestingly, we find that it is possible to obtain configurations of ultracompact objects with a light ring that are radially stable, as can be seen from the red curve in the right panel of Fig.~\ref{fig:stability}. We note that, however, at least for the QP model, these physically admissible and (radially) stable solutions appear to be restricted to a very narrow region of the parameter space.

\subsection{Model with a scale-invariant Newtonian limit (NSI)}\label{NSI}
%
Using Propositions~~\ref{Homcs}-\ref{scalinginvariance}, it is easy to check that the previous quadratic model leads to equations of motion which are invariant under homologous transformations in the Newtonian limit if and only if $\mathrm{n}=1$, since only in this case does the function $\widehat{\epsilon}(\delta,\eta)$ in~\eqref{SEF_quadratic} (and hence $\widehat{w}(\delta,\eta)$, since they differ by a constant for relativistic materials, see Definition~\ref{defrelstoredenergy}) have the required form
\begin{equation}\label{wPolEl}
 \widehat{\epsilon}(\delta,\eta)=\rho_0-\frac{K}{2}+\eta \left( -\frac{4\mu}{3}+\frac{2\mu}{3}\left(\frac{\delta}{\eta}\right)^{-1}+\left(\frac{2\mu}{3}+\frac{K}{2}\right)\left(\frac{\delta}{\eta}\right) \right).
\end{equation}
%
Given this characterization, it is straightforward to generalize the scale-invariant property for all $\mathrm{n}$: 
\begin{equation}\label{NEwtwn}
\widehat{\epsilon}(\delta,\eta)=\rho_0-w_0+\eta^{\frac{1}{\mathrm{n}}} \left( w_1+w_2\left(\frac{\delta}{\eta}\right)^{-1}+w_3\left(\frac{\delta}{\eta}\right)^{\frac{1}{\mathrm{n}}} \right), 
\end{equation}
which can be written in quasi-Hookean form (see Eq.~\eqref{QHEq} in Appendix~\eqref{ApEx} for details), with $\check{\rho}(\delta)$ as in~\eqref{Polcheckrho}, $w_0=\frac{\mathrm{n}^2K}{1+\mathrm{n}}$ and, for example,
\begin{equation}\label{QHPol}
\check{\mu}(\delta)=\frac{2\mu}{3}\delta^{1+\frac{1}{\mathrm{n}}},\qquad S\left(\frac{\delta}{\eta}\right)= \left(\frac{\delta}{\eta}\right)^{-\frac{1}{\mathrm{n}}} \left( \bar{w}_1+\bar{w}_2\left(\frac{\delta}{\eta}\right)^{-1}+\bar{w}_3\left(\frac{\delta}{\eta}\right)^{\frac{1}{\mathrm{n}}} \right),
\end{equation}
where $\bar{w}_1=\frac{3w_1}{2\mu}$, $\bar{w}_2=\frac{3w_2}{2\mu}$, and $\bar{w}_3=\frac{3}{2\mu}\left(w_3-\frac{\mathrm{n}^2 K}{1+\mathrm{n}}\right)$. Moreover, there is still freedom in changing the power of the shear variable $y=\delta/\eta$ in~\eqref{NEwtwn} from $1/\mathrm{n}$ to, say, $1/\mathrm{s}$. Doing so, using the reference state conditions given in Definition~\ref{RefState} and the isotropic conditions
of Definition~\ref{IsoState}, and solving for the coefficients, leads to the most general simple stored energy function generalizing polytropic fluids to the elastic setting:
\begin{equation}\label{PolyStore2}
\begin{split}
\widehat{\epsilon}(\delta,\eta)=& \rho_0-\frac{\mathrm{n}^2 K}{1+\mathrm{n}}+\eta^{\frac{1}{\mathrm{n}}}\Bigg[-\left((\mathrm{s}-\mathrm{n})K+\mathrm{s}\frac{4\mu}{3}\right) \\
&+\frac{1}{1+\mathrm{s}}\left(\frac{(\mathrm{s}-\mathrm{n})K}{1+\mathrm{n}}+\mathrm{s}\frac{4\mu}{3}\right)\left(\frac{\delta}{\eta}\right)^{-1}
+\frac{\mathrm{s}^2}{1+\mathrm{s}}\left(K+\frac{4\mu}{3}\right)\left(\frac{\delta}{\eta}\right)^{\frac{1}{\mathrm{s}}}\Bigg],
\end{split}
\end{equation}
where $\mathrm{n}$ is the polytropic index and $\mathrm{s}$ can be intrepreted as the shear index. When $\mathrm{s}=\mathrm{n}$, the model has the quasi-Hookean form~\eqref{QHPol} with $\bar{w}_1=-2\mathrm{n}$, $\bar{w}_2=2\mathrm{n}/(1+\mathrm{n})$, and $\bar{w}_3=2\mathrm{n}^2/(1+\mathrm{n})$, and when $\mathrm{s}=\mathrm{n}=1$ it is the same as the simpler quadratic model. We recover the usual relativistic polytropes in the limit $\mu\rightarrow0$ when $\mathrm{s}=\mathrm{n}$. From Eq.~\eqref{RSEF} in Definition~\ref{defrelstoredenergy}, the associated Newtonian stored energy function $\widehat{w}(\delta,\eta)$ has the form~\eqref{NSE}, satisfying $\widehat{w}(1,1)=w_0=\mathrm{n}p_0$.
%
%
Finally, notice that from~\eqref{PolyStore2} it is possible to obtain the materials laws for the limiting cases where $\mathrm{n}\rightarrow+\infty$, and/or $\mathrm{s}\rightarrow+\infty$. This leads to
\begin{equation}\label{eq:NSIstoresnlim}
\widehat{\epsilon}(\delta,\eta)=
    \begin{cases}
        &\rho_0+K \ln{(\eta )}+\left[-\left((\mathrm{s}-1)K+\mathrm{s}\frac{4\mu}{3}\right)+\frac{1}{1+\mathrm{s}}\left(-K+\mathrm{s}\frac{4\mu}{3}\right)\left(\frac{\delta}{\eta}\right)^{-1}\right.\\
        &\qquad\qquad\qquad\qquad\qquad\qquad\qquad\qquad\qquad\left.+\frac{\mathrm{s}^2}{1+\mathrm{s}}\left(K+\frac{4\mu}{3}\right)\left(\frac{\delta}{\eta}\right)^{\frac{1}{\mathrm{s}}}\right]  , \quad {\rm n} \to \infty; \\
        &\rho_0-\frac{\mathrm{n}^2 K}{1+\mathrm{n}}+\eta^{\frac{1}{\mathrm{n}}}\left[-\left((1-\mathrm{n})K+\frac{4\mu}{3}\right)\right.\\
        &\qquad\qquad\qquad\qquad\qquad\left.+\left(\frac{K}{1+\mathrm{n}}+\frac{4\mu}{3}\right)\left(\frac{\delta}{\eta}\right)^{-1} +\left(K+\frac{4\mu}{3}\right)\ln{\left(\frac{\delta}{\eta}\right)}\right], \quad {\rm s} \to \infty; \\
        &\rho_0+K \ln{(\delta)}+\frac{4 \mu }{3}\left[ -1+\left(\frac{\delta}{\eta}\right)^{-1}+\ln{ \left(\frac{\delta }{\eta }\right)}\right], \quad {\rm s, n} \to \infty.
    \end{cases}
\end{equation}

Due to Newtonian scale invariance, both the bulk modulus and the shear modulus transform under renormalization of the reference state as $K=\tilde{K}f^{1+\frac{1}{\mathrm{n}}}$ and $\mu=\tilde{\mu}f^{1+\frac{1}{\mathrm{n}}}$. Hence, in this case, the invariant $\mathcal{E}$ can be written in terms of a single dimensionless elastic parameter, the Poisson ratio $\nu\in(-1,\frac{1}{2}]$, as follows:
\begin{equation}
\mathcal{E}=\frac{\mathrm{n}+1}{\mathrm{n}}\frac{2\mu}{3K}=\frac{\mathrm{n}+1}{\mathrm{n}}\left(\frac{1-2\nu}{1+\nu}\right).
\end{equation}
%
%
In terms of the invariant parameters $(\mathcal{K},\nu)$, the energy density, is given by 
\begin{equation}
\begin{split}
\widehat{\rho}(\varrho,\varsigma) = \varrho +\mathrm{n}\mathcal{K} \varsigma^{1+\frac{1}{\mathrm{n}}}\Bigg[1 
	-\frac{1+\mathrm{n}}{\mathrm{n}}&\left(1-3\frac{\mathrm{s}}{\mathrm{n}}\left(\frac{1-\nu}{1+\nu}\right)\right)\left(1-\left(\frac{\varrho}{\varsigma}\right)\right)\\
 &-3\frac{(1+\mathrm{n})}{(1+\mathrm{s})}\left(\frac{\mathrm{s}}{\mathrm{n}}\right)^2\left(\frac{1-\nu}{1+\nu}\right)\left(1-\left(\frac{\varrho}{\varsigma}\right)^{1+\frac{1}{\mathrm{s}}}\right)\Bigg],
\end{split}
\end{equation}
the radial and tangential pressures are given by
\begin{equation}\label{PradNSI}
\widehat{p}_\mathrm{rad}(\varrho,\varsigma) = \mathcal{K} \varsigma^{1+\frac{1}{\mathrm{n}}}\left[1-3\frac{(1+\mathrm{n})}{(1+\mathrm{s})}\frac{\mathrm{s}}{\mathrm{n}}\left(\frac{1-\nu}{1+\nu}\right)\left(1-\left(\frac{\varrho}{\varsigma}\right)^{1+\frac{1}{\mathrm{s}}}\right)\right],
\end{equation}
\begin{equation}\label{PtanNSI}
\begin{split}
\widehat{p}_\mathrm{tan}(\varrho,\varsigma) = \mathcal{K} \varsigma^{1+\frac{1}{\mathrm{n}}}\Bigg[1&-\frac{3(1+\mathrm{n})}{2\mathrm{n}}\left(1-3\frac{\mathrm{s}}{\mathrm{n}}\left(\frac{1-\nu}{1+\nu}\right)\right)\left(1-\left(\frac{\varrho}{\varsigma}\right)\right)\\
&+\frac{3(1+\mathrm{n})}{2(1+\mathrm{s})}\frac{\mathrm{s}}{\mathrm{n}}\left(1-3\frac{\mathrm{s}}{\mathrm{n}}\right)\left(\frac{1-\nu}{1+\nu}\right)\left(1-\left(\frac{\varrho}{\varsigma}\right)^{1+\frac{1}{\mathrm{s}}}\right)	\Bigg],
 \end{split}
\end{equation}
and the isotropic and anisotropic pressures are
\begin{equation}
\begin{split}
\widehat{p}_\mathrm{iso}(\varrho,\varsigma)= \mathcal{K}\varsigma^{1+\frac{1}{\mathrm{n}}}\Bigg[1&-\frac{1+\mathrm{n}}{\mathrm{n}}\left(1-3\frac{\mathrm{s}}{\mathrm{n}}\left(\frac{1-\nu}{1+\nu}\right)\right)\left(1-\left(\frac{\varrho}{\varsigma}\right)\right) \\
&-3\frac{1+\mathrm{n}}{1+\mathrm{s}}\left(\frac{\mathrm{s}}{\mathrm{n}}\right)^2\left(\frac{1-\nu}{1+\nu}\right)\left(1-\left(\frac{\varrho}{\varsigma}\right)^{1+\frac{1}{\mathrm{s}}}\right)\Bigg],
\end{split}
\end{equation}
\begin{equation}
\begin{split}
\widehat{q}(\varrho,\varsigma)= \frac{3}{2}\mathcal{K} \varsigma^{1+\frac{1}{\mathrm{n}}}\Bigg[&-\frac{1+\mathrm{n}}{\mathrm{n}}\left(1-3\frac{\mathrm{s}}{\mathrm{n}}\left(\frac{1-\nu}{1+\nu}\right)\right)\left(1-\left(\frac{\varrho}{\varsigma}\right)\right)\\
&+3\frac{1+\mathrm{n}}{1+\mathrm{s}}\frac{\mathrm{s}}{\mathrm{n}}\left(1-\frac{\mathrm{s}}{\mathrm{n}}\right)\left(\frac{1-\nu}{1+\nu}\right)\left(1-\left(\frac{\varrho}{\varsigma}\right)^{1+\frac{1}{\mathrm{s}}}\right)\Bigg].
\end{split}
\end{equation}
\begin{remark}\label{rempiso}
Just as for the relativistic polytropes, the energy density can be written in the form
\begin{equation}
    \widehat{\rho}(\varrho,\varsigma)=\varrho +\mathrm{n}\widehat{p}_\mathrm{iso}(\varrho,\varsigma).
\end{equation}
\end{remark}
Given the spherically symmetric energy density function $\widehat{\rho}(\varrho,\varsigma)$, the three independent wave speeds $c_\mathrm{L}$, $c_\mathrm{T}$, and $\tilde{c}_\mathrm{T}$ given in Definition~\ref{Speeds} read
\begin{equation}\label{cLNSI}
c^2_\mathrm{L}(\varrho,\varsigma) = \frac{3\mathcal{K}\frac{1+\mathrm{n}}{\mathrm{n}}\left(\frac{1-\nu}{1+\nu}\right)\left(\frac{\varrho}{\varsigma}\right)^{1+\frac{1}{\mathrm{s}}}\varsigma^{1+\frac{1}{\mathrm{n}}}}{	 \widehat{\rho}(\varrho,\varsigma)+\widehat{p}_\mathrm{rad}(\varrho,\varsigma)},
\end{equation}
	\begin{align}
	c^2_\mathrm{T}(\varrho,\varsigma)=\frac{\left(1-\left(\frac{\varrho}{\varsigma}\right)^2\right)^{-1}\frac{3}{2}\left(\frac{1+\mathrm {n}}{\mathrm{n}}\right)\mathcal{K}\varsigma^{1+\frac{1}{\mathrm{n}}}\Bigg[\left(3\frac{\mathrm{s}}{\mathrm{n}}\left(\frac{1-\nu}{1+\nu}\right)-1\right)\left(1-\left(\frac{\varrho}{\varsigma}\right)\right)+3\frac{\mathrm{s}}{1+\mathrm{s}}\left(1-\frac{\mathrm{s}}{\mathrm{n}}\right)\left(\frac{1-\nu}{1+\nu}\right)\left(1-\left(\frac{\varrho}{\varsigma}\right)^{1+\frac{1}{\mathrm{s}}}\right)\Bigg]}{\widehat{\rho}(\varrho,\varsigma)+\widehat{p}_\mathrm{tan}(\varrho,\varsigma)},
 \end{align}
 \begin{align}
  \tilde{c}^2_\mathrm{T}(\varrho,\varsigma)=\frac{\left(\frac{\varrho}{\varsigma}\right)^2}{\left(1-\left(\frac{\varrho}{\varsigma}\right)^2\right)}\frac{\frac{3}{2}\left(\frac{1+\mathrm {n}}{\mathrm{n}}\right)\mathcal{K} \varsigma^{1+\frac{1}{\mathrm{n}}}\Bigg[-\left(1-3\frac{\mathrm{s}}{\mathrm{n}}\left(\frac{1-\nu}{1+\nu}\right)\right)\left(1-\left(\frac{\varrho}{\varsigma}\right)\right)+3\frac{\mathrm{s}}{1+\mathrm{s}}\left(1-\frac{\mathrm{s}}{\mathrm{n}}\right)\left(\frac{1-\nu}{1+\nu}\right)\left(1-\left(\frac{\varrho}{\varsigma}\right)^{1+\frac{1}{\mathrm{s}}}\right)\Bigg]}{\widehat{\rho}(\varrho,\varsigma)+\widehat{p}_\mathrm{rad}(\varrho,\varsigma)},
	\end{align} 
where 
\begin{equation}
\begin{split}
    \widehat{\rho}(\varrho,\varsigma)+\widehat{p}_\mathrm{rad}(\varrho,\varsigma)=&	\varrho+(1+\mathrm{n})\mathcal{K}\varsigma^{1+\frac{1}{\mathrm{n}}}\Big[1-\left(1-3\frac{\mathrm{s}}{\mathrm{n}}\left(\frac{1-\nu}{1+\nu}\right)\right)\left(1-\left(\frac{\varrho}{\varsigma}\right)\right) \\
    &\qquad\qquad\qquad-3\frac{\mathrm{s}}{\mathrm{n}}\left(\frac{1-\nu}{1+\nu}\right)\left(1-\left(\frac{\varrho}{\varsigma}\right)^{1+\frac{1}{\mathrm{s}}}\right)\Big]\,,
\end{split}
\end{equation}
\begin{equation}
\begin{split}
    \widehat{\rho}(\varrho,\varsigma)+\widehat{p}_\mathrm{tan}(\varrho,\varsigma)=&\varrho+(1+\mathrm{n})\mathcal{K}\varsigma^{1+\frac{1}{\mathrm{n}}}\Bigg[1+\left(1-\frac{3}{2\mathrm{n}}\right)\left(1-3\frac{\mathrm{s}}{\mathrm{n}}\left(\frac{1-\nu}{1+\nu}\right)\right)\left(1-\left(\frac{\varrho}{\varsigma}\right)\right) \\
    &\qquad\qquad-3\left(1-\frac{3}{2}\frac{1-\frac{\mathrm{s}}{\mathrm{n}}}{1+\mathrm{s}}\right)\frac{\mathrm{s}}{\mathrm{n}}\left(\frac{1-\nu}{1+\nu}\right)\left(1-\left(\frac{\varrho}{\varsigma}\right)^{1+\frac{1}{\mathrm{s}}}\right)\Bigg]\,,
\end{split}
\end{equation}
while the remaining wave speeds $\tilde{c}_\mathrm{L}$, and $\tilde{c}_\mathrm{TT}$ satisfy the relation~\eqref{RelWS}. 
%
%
Below, when studying numerically the viability of such material models, we shall make use of the natural choice given in Definition~\ref{NaturalChoice}. It follows that 
	\begin{align}
	&\tilde{c}^2_\mathrm{L}(\varrho,\varsigma)= \nonumber \\
 &\frac{\frac{3}{2}\left(\frac{1+\mathrm {n}}{\mathrm{n}}\right)\mathcal{K}\varsigma^{1+\frac{1}{\mathrm{n}}}\left[2\frac{1-\nu}{1+\nu}-\left(1+\frac{3}{\mathrm{n}}\right)\left(1-3\frac{\mathrm{s}}{\mathrm{n}}\left(\frac{1-\nu}{1+\nu}\right)\right)\left(1-\frac{\varrho}{\varsigma}\right)+\left(1+\frac{3}{\mathrm{n}}\left(\frac{\mathrm{s}-\mathrm{n}}{1+\mathrm{s}}\right)\right)\left(1-3\frac{\mathrm{s}}{\mathrm{n}}\right)\left(\frac{1-\nu}{1+\nu}\right)\left(1-\left(\frac{\varrho}{\varsigma}\right)^{1+\frac{1}{\mathrm{s}}}\right)\right]}{\widehat{\rho}(\varrho,\varsigma)+\widehat{p}_\mathrm{tan}(\varrho,\varsigma)},
 \end{align}
 \begin{align}\label{NCNSI}
	\tilde{c}^2_\mathrm{TT}(\varrho,\varsigma)= \frac{\frac{3}{2}\left(\frac{1+\mathrm {n}}{\mathrm{n}}\right)\mathcal{K}\varsigma^{1+\frac{1}{\mathrm{n}}}\left[\frac{1-2\nu}{1+\nu}-\frac{3}{2\mathrm{n}}\left(1-3\frac{\mathrm{s}}{\mathrm{n}}\left(\frac{1-\nu}{1+\nu}\right)\right)\left(1-\frac{\varrho}{\varsigma}\right)+\frac{3}{2\mathrm{n}}\left(\frac{\mathrm{s}-\mathrm{n}}{1+\mathrm{s}}\right)\left(1-3\frac{\mathrm{s}}{\mathrm{n}}\right)\left(\frac{1-\nu}{1+\nu}\right)\left(1-\left(\frac{\varrho}{\varsigma}\right)^{1+\frac{1}{\mathrm{s}}}\right)\right]}{\widehat{\rho}(\varrho,\varsigma)+\widehat{p}_\mathrm{tan}(\varrho,\varsigma)}.
	\end{align}
%

%
\subsubsection{Center of symmetry}%
For static solutions with a regular center of symmetry, we have $\eta(0)=\delta(0)=\delta_\mathrm{c}>0$. The central density $\rho_\mathrm{c}=\widehat{\rho}(\delta_\mathrm{c},\delta_\mathrm{c})$ and the central pressure $p_\mathrm{c} =\widehat{p}_\mathrm{rad}(\delta_\mathrm{c},\delta_\mathrm{c})=\widehat{p}_\mathrm{tan}(\delta_\mathrm{c},\delta_\mathrm{c})$ are given by
\begin{equation}
\rho_\mathrm{c} = \varrho_\mathrm{c}+\mathrm{n}\mathcal{K}\varrho^{1+\frac{1}{\mathrm{n}}}_\mathrm{c}, \qquad
p_\mathrm{c} = \mathcal{K}\varrho^{1+\frac{1}{\mathrm{n}}}_\mathrm{c},
\end{equation}
while the wave speeds at the center of symmetry are
\begin{subequations}
	\begin{align}
	c^2_\mathrm{L}(\varrho_\mathrm{c},\varrho_\mathrm{c}) = \frac{3\left(\frac{1-\nu}{1+\nu}\right)\left(\frac{1+\mathrm{n}}{\mathrm{n}}\right)\mathcal{K}\varrho^{1+\frac{1}{\mathrm{n}}}_\mathrm{c}}{\varrho_\mathrm{c}+(1+\mathrm{n})\mathcal{K}\varrho^{1+\frac{1}{\mathrm{n}}}_\mathrm{c}} 
	,\qquad c^2_\mathrm{T}(\varrho_\mathrm{c},\varrho_\mathrm{c}) =\frac{\frac{3}{2}\left(\frac{1-2\nu}{1+\nu}\right)\left(\frac{1+\mathrm{n}}{\mathrm{n}}\right)\mathcal{K}\varrho^{1+\frac{1}{\mathrm{n}}}_\mathrm{c}}{\varrho_\mathrm{c}+(1+\mathrm{n})\mathcal{K}\varrho^{1+\frac{1}{\mathrm{n}}}_\mathrm{c}},
	\end{align}
\end{subequations}
so that
\begin{equation}
c^2_\mathrm{L}(\varrho_\mathrm{c},\varrho_\mathrm{c})-\frac{4}{3}c^2_\mathrm{T}(\varrho_\mathrm{c},\varrho_\mathrm{c}) = \frac{\left(\frac{1+\mathrm{n}}{\mathrm{n}}\right)\mathcal{K}\varrho^{1+\frac{1}{\mathrm{n}}}_\mathrm{c}}{\varrho_\mathrm{c}+(1+\mathrm{n})\mathcal{K}\varrho^{1+\frac{1}{\mathrm{n}}}_\mathrm{c}}.
\end{equation}
Given Definition~\ref{IDstatic} of physically admissible initial data for equilibrium configurations, we deduce the following necessary conditions for the existence of physically admissible regular ball solutions:

\begin{proposition}
	Necessary conditions for the existence of physically admissible radially compressed balls solutions in the NSI model are given by
	\begin{equation}
	\mathrm{n}<-1, \quad\mathcal{K}>0,\quad 
	\begin{cases}
	&\varrho_\mathrm{c} \geq \left(-(1+\mathrm{n})\mathcal{K}\left(1-\frac{3}{\mathrm{n}}\left(\frac{1-\nu}{1+\nu}\right)\right)\right)^{-\mathrm{n}},\quad -1< \nu \leq \frac{3+\mathrm{n}}{3+5\mathrm{n}}, \\
	&\varrho_\mathrm{c} \geq \left(-(\mathrm{n}-1)\mathcal{K}\right)^{-\mathrm{n}},\qquad\qquad\qquad\quad \frac{3+\mathrm{n}}{3+5\mathrm{n}}\leq \nu\leq\frac{1}{2},
	\end{cases}
	\end{equation}
 or
	\begin{equation}
	\mathrm{n}>0,  \quad \mathcal{K}>0,\quad
	\begin{cases}
	&0<\varrho_\mathrm{c}\leq \left((1+\mathrm{n})\mathcal{K}\left(\frac{3}{\mathrm{n}}\left(\frac{1-\nu}{1+\nu}\right)-1\right)\right)^{-\mathrm{n}},\quad -1<\nu<\frac{3-\mathrm{n}}{3+\mathrm{n}}, \\
	&0<\varrho_\mathrm{c}<\infty ,\qquad\qquad\qquad\qquad\qquad\quad\quad \frac{3-\mathrm{n}}{3+\mathrm{n}}\leq \nu\leq\frac{1}{2},
	\end{cases} 
	\end{equation}
	while for radially stretched ball solutions the necessary conditions are given by
	\begin{equation}
	-1<\mathrm{n}<0, \quad 
\mathcal{K}<0,\quad  
\begin{cases}
\varrho_\mathrm{c}\geq (-(3+\mathrm{n})\mathcal{K})^{-\mathrm{n}},\qquad\qquad\qquad\quad -1<\nu\leq\frac{3+7\mathrm{n}}{2(3+\mathrm{n})}, \\
\varrho_\mathrm{c}\geq (-(1+\mathrm{n})\left(1-\frac{3}{2\mathrm{n}}\left(\frac{1-2\nu}{1+\nu}\right)\right)\mathcal{K})^{-\mathrm{n}},\quad \frac{3+7\mathrm{n}}{2(3+\mathrm{n})}\leq \nu< \frac{1}{2}.
\end{cases}  
   \end{equation}
\end{proposition}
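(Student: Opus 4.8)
The plan is to specialize the physical admissibility conditions of Definition~\ref{IDstatic} to the isotropic center $\varrho=\varsigma=\varrho_\mathrm{c}$, feeding in the explicit center values of $\rho_\mathrm{c}$, $p_\mathrm{c}$, $c^2_\mathrm{L}(\varrho_\mathrm{c},\varrho_\mathrm{c})$ and $c^2_\mathrm{T}(\varrho_\mathrm{c},\varrho_\mathrm{c})$ computed just above, and to read off constraints on the four quantities $(\mathrm{n},\mathcal{K},\nu,\varrho_\mathrm{c})$. First I would fix the sign of $\mathcal{K}$: since $p_\mathrm{c}=\mathcal{K}\varrho_\mathrm{c}^{1+1/\mathrm{n}}$ with $\varrho_\mathrm{c}^{1+1/\mathrm{n}}>0$, requiring $p_\mathrm{c}>0$ (compressed) forces $\mathcal{K}>0$, while $p_\mathrm{c}<0$ (stretched) forces $\mathcal{K}<0$.

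Next I would pin down the admissible range of $\mathrm{n}$ by combining the bulk‑modulus positivity $c^2_\mathrm{L}-\tfrac43 c^2_\mathrm{T}=\bigl(\tfrac{1+\mathrm{n}}{\mathrm{n}}\bigr)\mathcal{K}\varrho_\mathrm{c}^{1+1/\mathrm{n}}/(\rho_\mathrm{c}+p_\mathrm{c})>0$ (the last inequality in~\eqref{InitialData}) with the null energy condition $\rho_\mathrm{c}+p_\mathrm{c}\geq0$ contained in~\eqref{SEC}--\eqref{DEC}. Writing $D:=\rho_\mathrm{c}+p_\mathrm{c}$, positivity of $c^2_\mathrm{L}$ already forces $D>0$, so the sign of $\bigl(\tfrac{1+\mathrm{n}}{\mathrm{n}}\bigr)\mathcal{K}$ must be positive. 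For $\mathcal{K}>0$ this rules out $\mathrm{n}\in(-1,0)$ and leaves the dichotomy $\mathrm{n}>0$ or $\mathrm{n}<-1$; for $\mathcal{K}<0$ it singles out $\mathrm{n}\in(-1,0)$. This is exactly the trichotomy appearing in the three cases of the statement, and it is the clean structural part of the proof.

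I would then convert each remaining admissibility inequality into a bound on $\varrho_\mathrm{c}$. Dividing the inequalities $c^2_\mathrm{L}\le1$, $c^2_\mathrm{T}\le1$ (together with $c^2_\mathrm{L}>0$, $c^2_\mathrm{T}\ge0$) and the relevant energy condition (DEC, i.e.\ $\rho_\mathrm{c}\ge p_\mathrm{c}$, when $p_\mathrm{c}>0$; the first SEC inequality $\rho_\mathrm{c}+3p_\mathrm{c}\ge0$ when $p_\mathrm{c}<0$, as in the Remark after Definition~\ref{IDstatic}) through by $\varrho_\mathrm{c}^{1+1/\mathrm{n}}>0$ isolates the quantity $\varrho_\mathrm{c}^{-1/\mathrm{n}}$ against a constant that is linear in $\mathcal{K}$ and an explicit function of $\mathrm{n},\nu$. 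Because $\varrho_\mathrm{c}\mapsto\varrho_\mathrm{c}^{-1/\mathrm{n}}$ is monotone increasing for $\mathrm{n}<0$ and decreasing for $\mathrm{n}>0$, each such inequality becomes a \emph{lower} bound on $\varrho_\mathrm{c}$ in the two cases with $\mathrm{n}<0$ and an \emph{upper} bound in the case $\mathrm{n}>0$ — matching the directions of the inequalities in the statement. The stated necessary condition is then the intersection of all these one‑sided bounds, namely the largest lower bound (respectively smallest upper bound).

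The hard part will be the bookkeeping needed to decide \emph{which} of the competing bounds is actually binding in each subregion of the $(\mathrm{n},\nu)$ parameter space, so that the thresholds $\tfrac{3+\mathrm{n}}{3+5\mathrm{n}}$, $\tfrac{3-\mathrm{n}}{3+\mathrm{n}}$ and $\tfrac{3+7\mathrm{n}}{2(3+\mathrm{n})}$ emerge precisely as the values of $\nu$ at which two of the bound‑coefficients coincide. Concretely, I would compare the coefficients $1-\tfrac{3}{\mathrm{n}}\tfrac{1-\nu}{1+\nu}$ (from $c^2_\mathrm{L}\le1$), $1-\tfrac{3}{2\mathrm{n}}\tfrac{1-2\nu}{1+\nu}$ (from $c^2_\mathrm{T}\le1$), and the energy‑condition coefficients $\tfrac{3+\mathrm{n}}{1+\mathrm{n}}$ (SEC) and $\tfrac{1-\mathrm{n}}{1+\mathrm{n}}$ (DEC), all measured against the reference coefficient $-(1+\mathrm{n})\mathcal{K}$ coming from $D>0$. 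Their differences are rational functions of $\nu$ with a single sign change, so monotonicity in $\nu$ shows the ordering flips exactly once, at the claimed threshold; setting two coefficients equal and solving for $\nu$ recovers the stated fractions. The delicate points are to verify that the coefficient signs are such that the bounds are genuine (positive bases raised to the power $-\mathrm{n}$), and that in each stated $\nu$-interval the displayed bound really dominates the others rather than being dominated by them; this consistency check across all three regimes is where the bulk of the careful casework lies.
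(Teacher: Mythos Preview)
Your approach is correct and is exactly the one the paper has in mind: the paper does not actually write out a proof, but simply states the result as a direct consequence of specializing Definition~\ref{IDstatic} to the center values of $\rho_\mathrm{c}$, $p_\mathrm{c}$, $c^2_\mathrm{L}$ and $c^2_\mathrm{T}$ computed immediately before the proposition. Your reconstruction --- fixing $\operatorname{sgn}\mathcal{K}$ from $\operatorname{sgn}p_\mathrm{c}$, extracting the trichotomy on $\mathrm{n}$ from the sign of $\frac{1+\mathrm{n}}{\mathrm{n}}\mathcal{K}$, and then doing the casework on the competing one-sided bounds in $\varrho_\mathrm{c}^{-1/\mathrm{n}}$ to locate the $\nu$-thresholds --- is precisely the intended derivation, including the use of DEC for $p_\mathrm{c}>0$ and SEC for $p_\mathrm{c}<0$ as singled out in the Remark after Definition~\ref{IDstatic}.
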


\begin{remark}
	The above physically invariant characterization of initial data for steady state solutions can be easily translated into conditions on $\delta_\mathrm{c}$ in terms of $K$, $\nu$ and $\mathrm{n}$, with $0<K<\frac{(1+\mathrm{n})\rho_0}{\mathrm{n}^2}$ for $\mathrm{n} \in (-1,+\infty)\setminus \{0\}$ .
\end{remark}

\subsubsection{Boundary of the ball}\label{SBNSI}%
The boundary of the ball is defined by the condition $p_\mathrm{rad}(\mathcal{R})=\widehat{p}_\mathrm{rad}(\varrho(\mathcal{R}),\varsigma(\mathcal{R}))=0$. For the present model this can be easily solved to yield
\begin{equation}\label{yboundaryeta}
\varsigma(\mathcal{R})=0\qquad\text{or}\qquad  \varrho(\mathcal{R})=y_{\mathrm{b}}\varsigma(\mathcal{R}),
\end{equation}
where $y_\mathrm{b}\geq0$ is a constant given by~\footnote{In terms of the invariant $\mathcal{E}$, this condition reads $y_\mathrm{b}\equiv \left(\frac{{\rm s} (2 {\rm n} \mathcal{E}+{\rm n}+1)}{2 {\rm n} {\rm s} \mathcal{E}-{\rm n}+{\rm s}}\right)^{\frac{\mathrm{s}}{1+\mathrm{s}}}$.}
\begin{equation}\label{ybDef}
y_\mathrm{b}\equiv \left(1-\frac{(1+\mathrm{s})\mathrm{n}}{3(1+\mathrm{n})\mathrm{s}}\left(\frac{1+\nu}{1-\nu}\right)\right)^{\frac{\mathrm{s}}{1+\mathrm{s}}}<1.
\end{equation} 
This means that the boundary is characterized by constant values of the shear scalar, a condition called \emph{constant shear boundary condition} in~\cite{Cal21}. The condition $y_\mathrm{b}\geq0$ restricts the allowed values of the shear index:\footnote{In terms of the invariant $\mathcal{E}$, this condition reads ${\rm s}\geq\frac{\rm n}{1 + 2 {\rm n} \mathcal{E}}$.} 
\begin{equation}\label{ShearBound}
\mathrm{s}\geq\left(\frac{{\rm n}}{3\frac{1-\nu}{1+\nu}+2\mathrm{n}\frac{1-2\nu}{1+\nu}}\right).
\end{equation}
At the boundary of the ball, it follows that tangential pressure is given by
\begin{equation}
p_\mathrm{tan}(\mathcal{R})=\widehat{p}_\mathrm{tan}(\varrho(\mathcal{R}),\varsigma(\mathcal{R}))= \frac{3}{2}\mathcal{K}\varsigma(\mathcal{R})^{1+\frac{1}{\mathrm{n}}}\left[1-\frac{\mathrm{s}}{\mathrm{n}}-\frac{1+\mathrm{n}}{\mathrm{n}}\left(1-\frac{3\mathrm{s}}{\mathrm{n}}\left(\frac{1-\nu}{1+\nu}\right)\right)\left(1-y_\mathrm{b}\right) \right],
\end{equation}
and the energy density is 
\begin{equation}
	\rho(\mathcal{R})=\widehat{\rho}(\varrho(\mathcal{R}),\varsigma(\mathcal{R}))=\varsigma(\mathcal{R})\left\{y_\mathrm{b}+\mathcal{K} \varsigma(\mathcal{R})^{\frac{1}{\mathrm{n}}}\left[(\mathrm{n}-\mathrm{s})
	-\frac{\mathrm{n}+1}{\mathrm{n}}\left(\mathrm{n}-3\mathrm{s}\left(\frac{1-\nu}{1+\nu}\right)\right)\left(1-y_\mathrm{b}\right)\right]\right\}.
\end{equation}
\begin{remark}\label{SBoundNSI}
At the boundary of the ball we have $p_\mathrm{rad}(\mathcal{R})=0$, and so
$$
p_\mathrm{iso}\left(\mathcal{R}\right)=\frac{2}{3}p_\mathrm{tan}\left(\mathcal{R}\right)=\frac{2}{3}q\left(\mathcal{R}\right).
$$
Therefore, it follows from Remark~\ref{rempiso} that, at the boundary of the ball, the energy density can be written as
\begin{equation}
\rho(\mathcal{R})=\varrho(\mathcal{R})+\frac{2\mathrm{n}}{3}p_\mathrm{tan}(\mathcal{R}).
\end{equation}
\end{remark}
Besides condition~\eqref{ShearBound}, other necessary conditions for the existence of physically admissible ball solutions in the NSI model can be obtained from the strict hyperbolicity and causality conditions of the wave speeds at the boundary. In particular, notice that if $y_\mathrm{b}=0$ then $c_{\mathrm L}^2(\mathcal{R})=0$, and the strict hyperbolicity condition is violated at the boundary.

\subsubsection{Numerical results}
%
	%
%
%
%
%
%

%
\begin{figure}[ht!]
	\centering
 \includegraphics[width=0.95\textwidth]{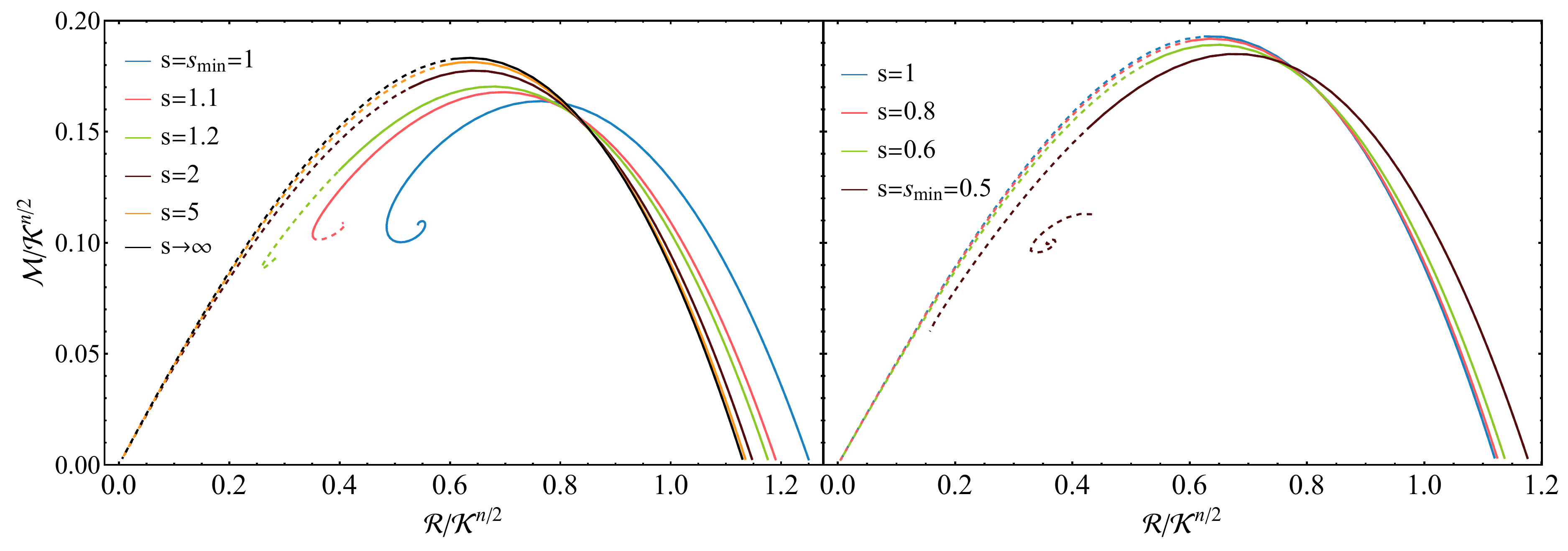}
	\caption{Mass-radius and compactness-radius diagrams for an Newtonian scale-invariant (NSI) model polytrope model with ${\rm n}=1$. Left: we set $\nu=\frac{1}{2}$ such that elastic effects enter in the model only via the shear index $s\geq 1$ (with ${\rm s=n}=1$ being the fluid case). Right: we set ${\cal \nu}=\frac13$ to explore shear index in the range ${\rm s}_{\rm min}<{\rm s}<{\rm n}$. 
 \label{fig:MRNSI_s_study}
	}
\end{figure}

In contrast with the quadratic model~\eqref{QuadraticModel} studied in Section~\ref{sec:QM}, the NSI model is characterized by two elastic parameters: the Poisson ratio $\nu$ (or equivalently, the dimensionless shear parameter ${\cal E}$), and a shear index ${\rm s}$. The effects of the Poisson ratio on the properties of the body are qualitatively similar to those studied in Section~\ref{sec:QM}, and thus we will focus our analysis on the effects of the shear index ${\rm s}$.
To isolate its effect, we will consider two cases with fixed Poisson ratio: $\nu=1/2$ and $\nu=1/3$. For simplicity of the analysis, we will consider only ${\rm n}=1$ polytropes. The two cases are represented in the left and right panels of Fig.~\ref{fig:MRNSI_s_study}, respectively.  In the former case, when ${\rm s}={\rm n}=1$, there are no elastic effects (blue line), and ${\rm s_{min}=n}=1$. In the latter case, although elasticity is already present through the Poisson ratio, we can analyze the effect of ${\rm s<n}$. From both panels of Fig.~\ref{fig:MRNSI_s_study} we are able to infer that the shear index ${\rm s}$ acts qualitatively in a similar fashion to the Poisson ratio, i.e., increasing the shear index leads to a higher maximum mass and a decrease in the radius of the star. For configurations with $\nu<1/2$ (see right panel), we also confirm these results by noting that decreasing the shear index decreases the effect of elasticity. 

Unlike the case of fluid stars with ${\rm n}=1$, which always satisfy the physical admissibility conditions for any central density, here the configurations are always superluminal for sufficiently high densities when elasticity is turned on. This effect can be seen by expanding the expressions of the longitudinal wave velocities close to the center of the star and for high densities\footnote{For simplicity, we show the expression for $c_{\rm L}$, but an analogous expression can be computed for $\tilde{c}_{\rm L}$, leading to the same conclusions.}:
\begin{equation}\label{eq:cLNSIcenter}
    {c}^2_\mathrm{L} = \left(\frac{1}{\rm n}+\frac{2 (1-2 \nu )}{{\rm  n}(1+\nu)}+{\cal O}(\varrho_c^{-\frac{1}{\rm n}})\right) + \frac{4 \pi  \mathcal{K} ({\rm n}+6)}{15 {\rm n}} {\varrho_c}^{1+\frac{1}{\rm n}} \left(\frac{2 (1-2 \nu )}{\nu +1}-\frac{\rm n-s}{\rm s}+{\cal O}(\varrho_c^{-\frac{1}{\rm n}})\right)r^2 +{\cal O}(r^4)\,.
\end{equation}
This equation allows us to infer that for ${\rm n}=1$ and very high central densities the material is superluminal at the center if the Poisson ratio is different from its corresponding fluid value (${\nu \neq 1/2}$). However, even when the ${\nu=1/2}$ and ${\rm n=1}$, the material can still be superluminal at high densities if ${\rm s \neq n}$. This can be seen by looking at the subleading coefficient in expression~\eqref{eq:cLNSIcenter}, which is positive when ${\rm s} >{\rm n}$. In the following section, when we discuss the LIS model, we will see that the longitudinal sound speed at the center, given by Eqs.~\eqref{eq:Linearspeedscenter}, corresponds exactly to the first term in of expansion~\eqref{eq:cLNSIcenter}. This is an example of the connection between the NSI model in the large density limit and the LIS model of the next section.

As elasticity increases, the configurations can become unphysical for lower densities, and superluminality can occur closer to the radius of the star. As previously done for the QP model, configurations that are unphysical are represented by dashed lines in Fig.~\ref{fig:MRNSI_s_study}. This unphysical branch becomes larger for increasing shear index, indicating that the maximum allowed central density for physically admissible configurations decreases with ${\rm s}$. Comparing the elastic curves in the left panel of Fig.~\ref{fig:MRNSI_s_study}, we note that the solutions saturate for large $s$, approaching the limit defined in Eq.~\eqref{eq:NSIstoresnlim} when ${\rm s}\to \infty$. The solutions corresponding to this limit are represented as a black solid line in Fig.~\ref{fig:MRNSI_s_study}.
\begin{figure}[ht!]
	\centering
 \includegraphics[width=0.49\textwidth]{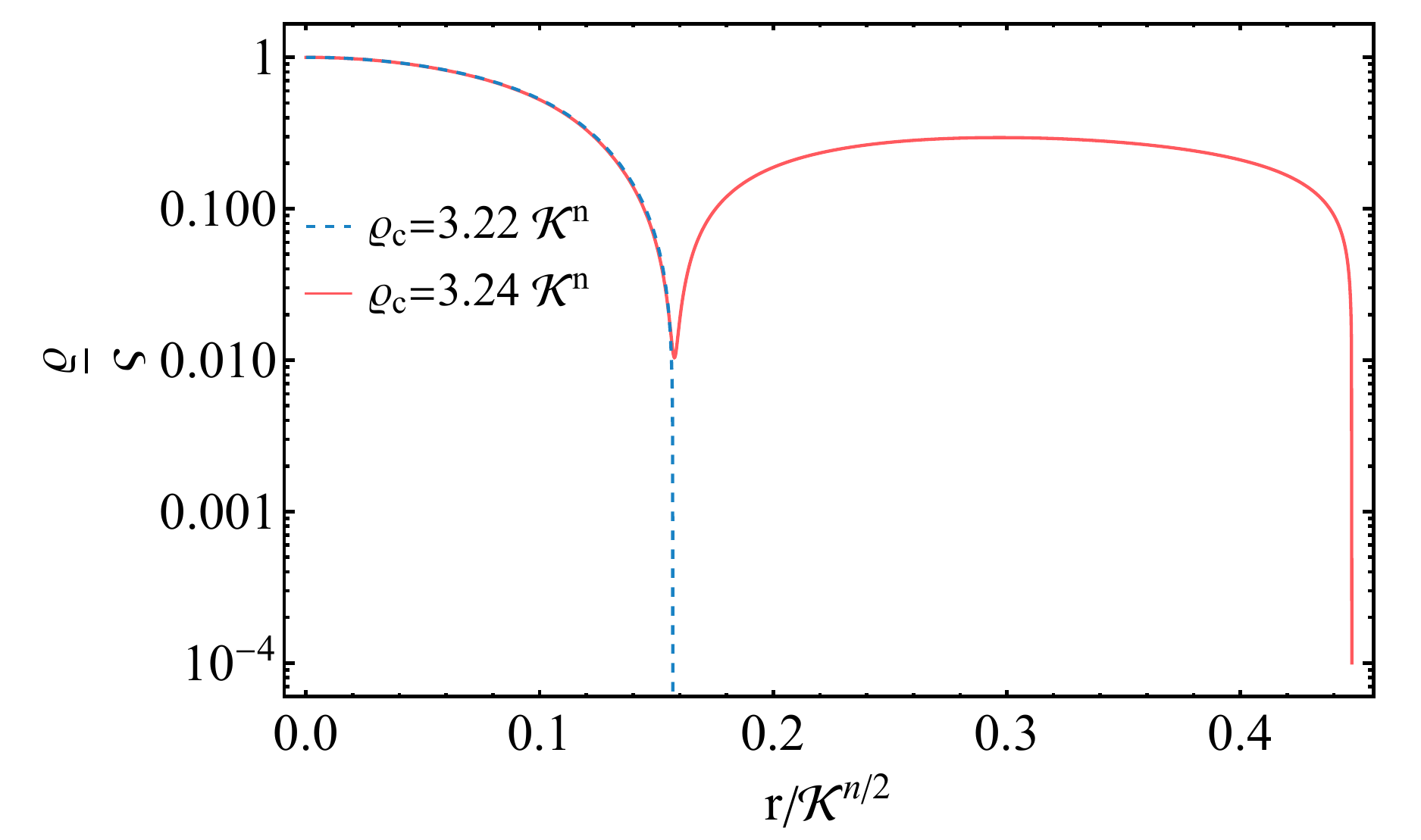}
 \includegraphics[width=0.49\textwidth]{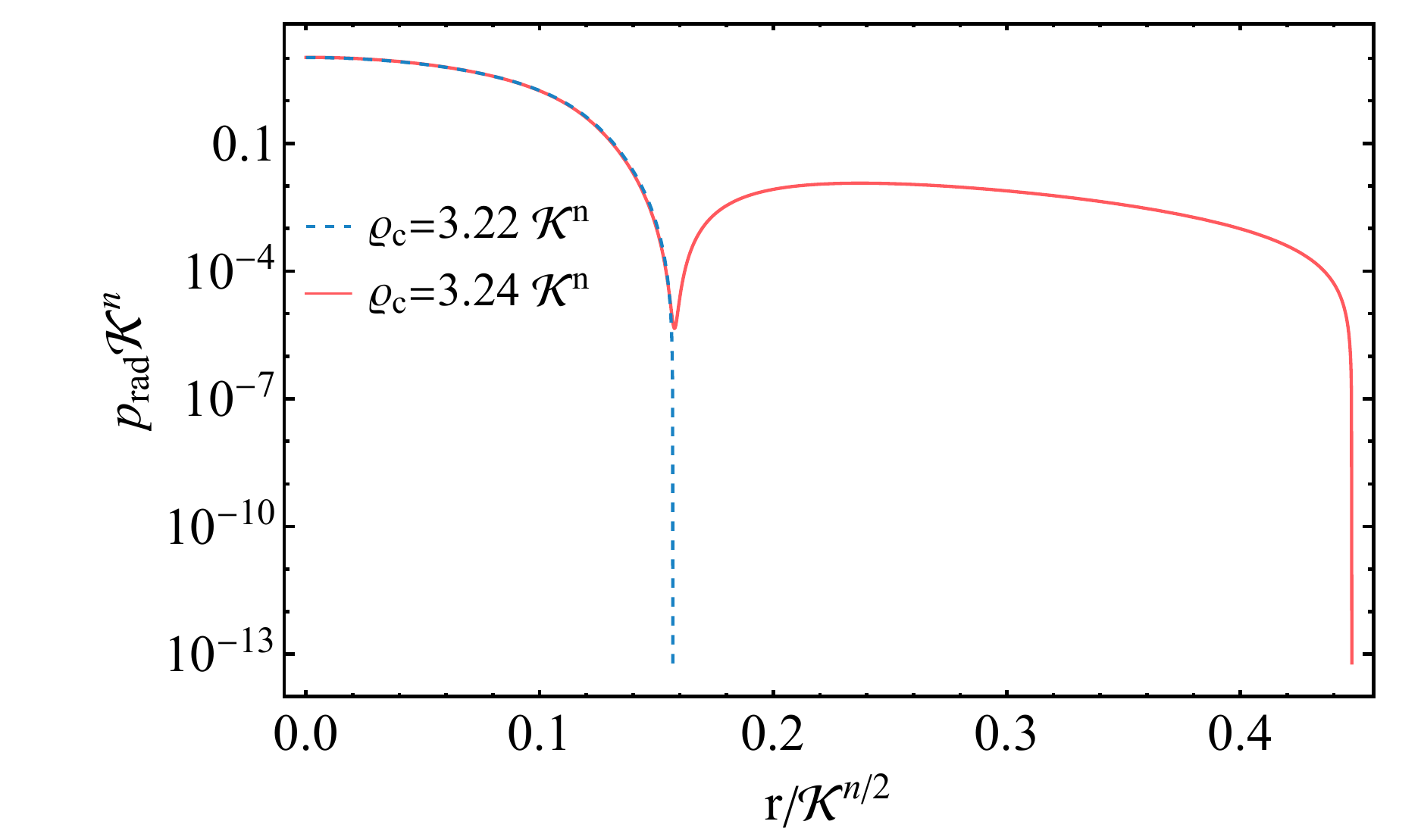}
	\caption{Shear ratio (left) and radial pressure (right) for a NSI polytrope with $\nu=1/3$, ${\rm n=1}$ and ${\rm s}={\rm s}_{\rm min}=1/2$. The blue curve corresponds to a configuration in the final part of the main branch of Fig.~\ref{fig:MRNSI_s_study}, whereas the red line corresponds to a configuration in the beginning of the secondary branch. \label{fig:shear_ratio}
	}
\end{figure}
Let us now turn our attention to the brown curve in the right panel of Fig.~\ref{fig:MRNSI_s_study}. We note that the mass-radius diagram for these parameters has two  disconnected branches, a main one, similar to the curves for the other choices of the parameters, and a smaller, spiral branch that appears in the high density regime. To better understand this behaviour, we look at the matter profiles in these two branches. Since the appearance of two disconnected curves is associated with a discontinuity of the radius of the star, it is particularly relevant to look at the conditions for which the radial pressure vanishes. As discussed in Eq.~\ref{ybDef} and above, the condition for the radial pressure to vanish is determined by the shear ratio $\varrho/\varsigma=y_\mathrm{b}$. In Fig.~\ref{fig:shear_ratio} we plot this quantity for two configurations, one in the end section of the main branch (blue curve), and a second one in the first section of the secondary branch (red curve). For the model parameters considered here we have $y_{\rm b}=0$. We see that, in the first and more standard case, the shear ratio decreases monotonically until it reaches $\varrho/\varsigma=y_\mathrm{b}$. The solution in the secondary branch initially overlaps with that of the first branch, but instead of decreasing monotonically, the shear index has a minimum near the point where the first solution reaches the boundary, departing from it. For larger radial coordinate, the shear index reaches a maximum which is followed by a decreasing behaviour until it reaches $y_{\rm b}$. This distinct behaviour profile is translated directly in the two different branches that we observe in the mass-radius diagram, and most likely signals a bifurcation where the larger ball splits into a smaller ball and an enveloping elastic shell (not considered in this work).

\begin{figure}[ht!]
	\centering
 \includegraphics[width=0.95\textwidth]{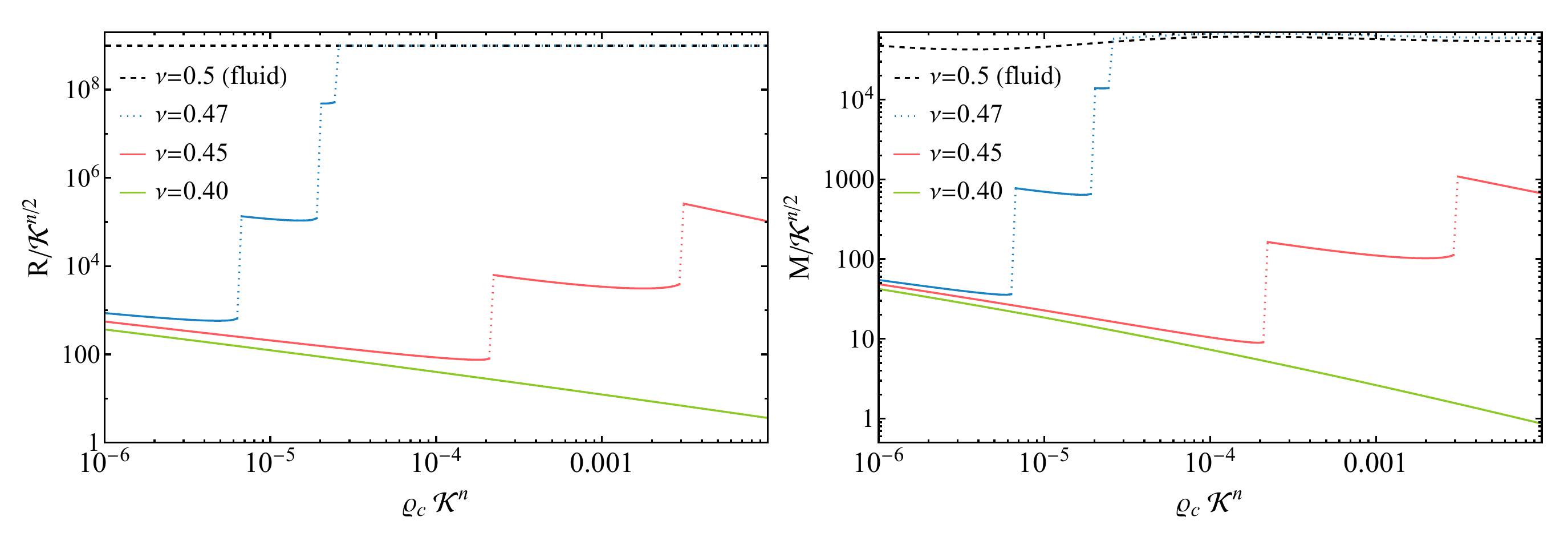} 
	\caption{Radius (left) and mass (right) diagrams of the NSI polytropic stars with ${\rm n=s=5}$, for different values of Poisson ratio, as a function of the corresponding values of the central baryonic density. Solid lines correspond to configurations which have bounded solutions, while the vertical (dotted) lines represent the sharp transition between two branches. The dashed black line corresponds to the numerical data for a fluid EoS ($\nu =1/2$), which coincides with the upper limits of the numerical integration domain (see text for details). \label{fig:MRNSI_s_studyRADIUS}
	}
\end{figure}
In addition to the above results for the polytropic index ${\rm n}=1$, our study also reveals that elasticity allows for the existence of stars with a polytropic index of ${\rm n}=5$, which are impossible for fluids. This remarkable result highlights the unique properties of elastic stars, which are not observable in fluid objects.
To illustrate this point, we present in Fig.~\ref{fig:MRNSI_s_studyRADIUS} the radius-central density and mass-central density diagrams for stars with a polytropic index ${\rm n}=5$ and different values of the Poisson ratio. The solutions that correspond to stars are represented by the colored solid lines. Our analysis shows that these solutions can have multiple branches (see the blue and red curves), with a sharp transitions between them (represented by dotted lines). We find that, for the same value of central density, the radius and mass of the solution decrease  when the elasticity increases (decreasing Poisson ratio). We also observe that, as  elasticity increases, the transition between different branches occurs for successively higher values of central density. However, for sufficiently high elasticity  (see green curve), the secondary branches vanish, and the solution is described by a single connected branch\footnote{We have checked that the solutions always lie in the primary branch when we consistently increase the central density up until a sufficiently large value ($\varrho\sim 10^6 {\cal K}^{- \rm n}$).}. The dashed black line corresponds to the fluid case and the values for the radius and mass correspond to the upper limit of the numerical domain, $r_{\rm max}=10^9 {\cal K}^{\rm n/2}$ (and the value mass function at this point). Since this behaviour is constant regardless of the value of $r_{\rm max}$, our analysis confirms that fluid solutions do not have a well-defined radius, and consequently no bounded solutions, for ${\rm n>5}$.

\begin{figure}[ht!]
	\centering
 \includegraphics[width=0.55\textwidth]{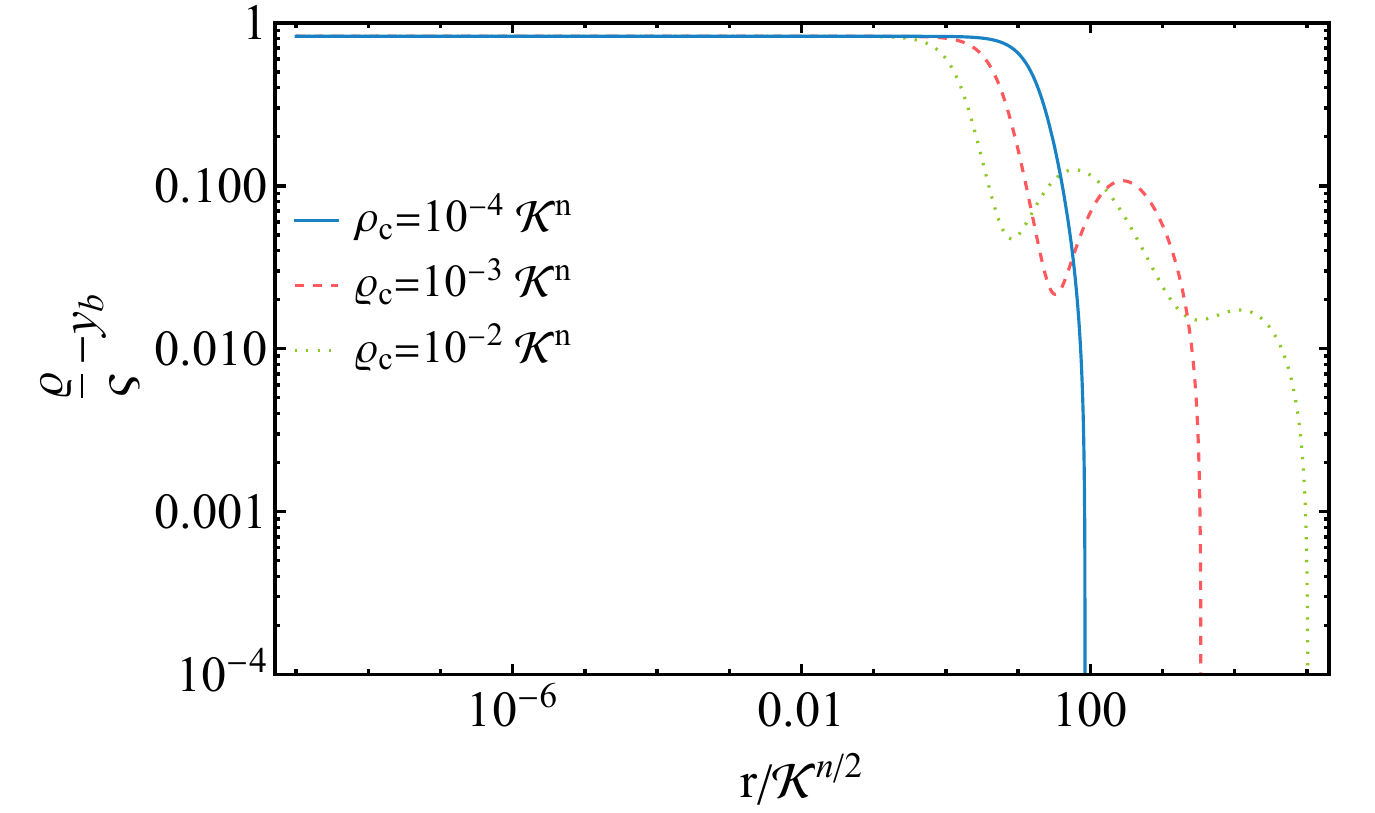} 
	\caption{Shear ratio profile for three configurations with ${\rm n=5}$ and $\nu=0.45$. The value of shear ratio at the boundary, given by Eq.~\eqref{ybDef}, is $y_{\rm b}\approx 0.172$. The values of central density are chosen such that each configuration is in an independent branch of the solution (see Fig..~\ref{fig:MRNSI_s_studyRADIUS}). The inflection points of the profile are directly related with the transition between branches of the solution.  \label{fig:shear_ratio_n=5}
	}
\end{figure}

We now turn our attention again to the different branches of the solutions. Similarly to the case of the brown curve in the right panel of Fig.~\ref{fig:MRNSI_s_study}, the appearance of new branches is directly related to the number of inflection points of the shear ratio profile, as can be seen in Fig.~\ref{fig:shear_ratio_n=5}. Solutions in the primary (lower density) branch are characterized by a shear ratio that monotonically decreases until it reaches $y_{\rm b}$, whereas solutions in the successive branches have a sharp inflection point in the shear radius profile where the ratio increases, followed by a subsequent decreasing behaviour.  As it can be seen from Fig.~\ref{fig:shear_ratio_n=5}, these inflection points can occur multiple times in the solutions, where each pair of inflections will correspond to a new transition between branches, probably associated to a ball-shell bifurcation. A more detailed investigation on the properties of these solutions is beyond the scope of this paper, and is left for future work.

Our main point here is to highlight the qualitative differences that occur due to elasticity, and to show that polytropes with ${\rm n}\geq5$ are possible only when elasticity is taken into account. The existence of such stars represents a departure from the standard fluid models, and highlights the unique properties of elastic stars. We remark that in this part of the analysis our main goal was to study the existence of stars under these conditions, and thus we did not consider the physical viability and the radial stability of such objects. Overall, our results provide valuable insights into the impact of elasticity on the properties of stars, with important implications for the study of astrophysical objects. 

\newpage

\section{Elastic materials with constant wave speeds in the isotropic state }\label{LinEOS}
In this section\footnote{Some of the results in this section were announced in the companion paper~\cite{Alho:2023mfc}.} we introduce a class of elastic materials with a pre-stressed reference state, $p_0\neq0$ (thus with no preferred reference state), and also a stress-free reference state material, $p_0=0$ (possessing a well defined reference state), which are elastic generalizations of perfect fluids with a \emph{linear} and an \emph{affine EoS}, respectively. In the fluid limit, these read:
\begin{subequations}\label{LinAfFluid}
	\begin{align}
    &\text{pre-stressed:}\quad p_\mathrm{iso}=c^2_\mathrm{s}\rho, \\
	&\text{stress-free:} \quad \quad p_\mathrm{iso}=c^2_\mathrm{s}(\rho-\rho_0).
	\end{align}
\end{subequations}	
Here the constant $c^2_\mathrm{s}$ is the fluid's sound speed,
\begin{equation}\label{Constsoundspeed}
c^2_\mathrm{s} =\frac{dp_\mathrm{iso}}{d\rho}=\gamma-1,
\end{equation}
where $\gamma$ is the \emph{adiabatic index}; we choose $\gamma\in(1,2]$ in order to satisfy the strict hyperbolicity and the causality conditions. These new elastic materials are characterized by having constant wave speeds in the isotropic state, so that
\begin{equation}
\frac{dp_\mathrm{iso}}{d\rho}=c^2_\mathrm{L}(\delta,\delta)-\frac{4}{3}c^2_\mathrm{T}(\delta,\delta)=\gamma-1.    
\end{equation}
Using the reference state identity
\begin{equation}\label{refStaId}
    \rho_0+p_0 = \frac{K}{\gamma-1},
\end{equation}
the longitudinal and transverse wave speeds in the isotropic state are constant and satisfy
\begin{equation}
    c^2_\mathrm{L}(\delta,\delta)=\frac{L}{K}(\gamma-1),\qquad c^2_\mathrm{T}(\delta,\delta)=\frac{\mu}{K}(\gamma-1).
\end{equation}

We discuss the pre-stressed and stress-free materials in the next two subsections.
\subsection{Linear isotropic state materials (LIS) and relativistic scale invariance} \label{LIS}
We start by considering the pre-stressed reference state materials, which generalise fluids with linear EoS. A remarkable property of these fluid models is that the resulting TOV equations are invariant under scaling transformations which follow from the positive homogeneity of the energy-density functional $\widehat{\rho}(\delta)$. For spherically symmetric power-law elastic materials we have the following result:
%
%
%
%
%
%
%
\begin{proposition}\label{scaling invariance}
	Let $\widehat{\rho}(\delta,\eta)$ be a positively homogeneous function of degree $\kappa$. If $\delta(r)$ is a solution to the integro-differential equation~\eqref{TOV2}, then so is $A^{\frac{2}{\kappa}}\delta(Ar)$, with $A$ an arbitrary positive real number.
\end{proposition}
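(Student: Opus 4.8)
The plan is to exhibit the rescaled fields explicitly and check that they satisfy the static system \eqref{TOV2} term by term. Given a solution $(\delta,\eta,m)$, I would set
\begin{equation}
\delta^\star(r)=A^{\frac{2}{\kappa}}\delta(Ar),\qquad \eta^\star(r)=A^{\frac{2}{\kappa}}\eta(Ar),\qquad m^\star(r)=A^{-1}m(Ar),
\end{equation}
and show that $(\delta^\star,\eta^\star,m^\star)$ is again a solution. The first task is to confirm that this ansatz is \emph{internally consistent} with the non-local structure of the problem: performing the substitution $u=As$ in the integral \eqref{DefDeltaEta} (with $v=0$) shows that $\eta^\star$ is indeed the averaged-density functional built from $\delta^\star$ and $m^\star$, while homogeneity of $\widehat{\rho}$ (see below) shows that $m^\star$ satisfies \eqref{massstatic}. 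Both computations hinge on the identity $m^\star(r)/r=m(Ar)/(Ar)$, so that the metric factor $1-\tfrac{2m}{r}$ is left pointwise invariant under the transformation.

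The key structural observation I would record next is how the constitutive functions behave under $(\delta,\eta)\mapsto A^{2/\kappa}(\delta,\eta)$. Since $\widehat{\rho}$ is positively homogeneous of degree $\kappa$, Euler's relation together with the definitions \eqref{DefP} imply that $\widehat{p}_{\mathrm{rad}}$ and $\widehat{p}_{\mathrm{tan}}$ are homogeneous of degree $\kappa$ as well. Hence the wave speeds \eqref{cLR}--\eqref{cTT}, being ratios of degree-$\kappa$ expressions, are homogeneous of degree $0$, and so is the shear function $s$ (its remaining, $m/r$-dependent factor involves only the scale-invariant ratio $\delta/\eta$ and the invariant combination $m/r$). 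Thus $c^2_\mathrm{L}(\delta^\star,\eta^\star)(r)=c^2_\mathrm{L}(\delta,\eta)(Ar)$ and $s(\delta^\star,\eta^\star,m^\star/r)(r)=s(\delta,\eta,m/r)(Ar)$, whereas $\widehat{\rho}$ and $\widehat{p}_{\mathrm{rad}}$ pick up a factor $A^2$.

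With these scalings in hand, the verification is mechanical. Differentiating gives $\tfrac{1}{\delta^\star}\tfrac{d\delta^\star}{dr}=A\,(\tfrac1\delta\tfrac{d\delta}{dr})\big|_{Ar}$ and likewise for $\eta^\star$, while the gravitational source obeys $\tfrac{m^\star}{r^2}+4\pi r\,\widehat{p}_{\mathrm{rad}}(\delta^\star,\eta^\star)=A\,(\tfrac{m}{r^2}+4\pi r\,\widehat{p}_{\mathrm{rad}})\big|_{Ar}$. Substituting into \eqref{TOV2a} and \eqref{Etastatic}, every term acquires a common overall factor (namely $A$ in the first equation and $A^{1+2/\kappa}$ in the second), which cancels, leaving precisely the original equations evaluated at the point $Ar$; since $(\delta,\eta,m)$ solves them there, $(\delta^\star,\eta^\star,m^\star)$ solves the system at $r$. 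The equation \eqref{massstatic} was already checked in the consistency step.

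I expect the main obstacle to be the non-local ($\eta$) part of the argument rather than the ODE bookkeeping: one must ensure that the rescaling ansatz is compatible with the integral definition of $\eta$ and with the appearance of $m/r$ inside the square-root metric factors, and in particular that the full $s$-function — including the transverse-wave term $\tilde{c}^2_\mathrm{T}$ and its $m/r$-dependent prefactor — is genuinely homogeneous of degree $0$. Once the scale-invariance of $c^2_\mathrm{L}$ and $s$ and the invariance of $m/r$ are established, the statement follows exactly as in the Newtonian Proposition~\ref{scalinginvariance}.
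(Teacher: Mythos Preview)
Your proposal is correct and follows essentially the same approach as the paper's proof: define the rescaled triple $(\delta^\star,\eta^\star,m^\star)$, use homogeneity of $\widehat{\rho}$ to deduce that $\widehat{p}_{\mathrm{rad}}$ picks up a factor $A^2$ while $c^2_\mathrm{L}$ and $s$ (together with the ratio $m/r$) are invariant, and then observe that the TOV equation~\eqref{TOV2a} transforms into itself. Your treatment is slightly more explicit than the paper's---you separately verify the integral definition of $\eta$, the mass equation~\eqref{massstatic}, and the $\eta$-equation~\eqref{Etastatic}, whereas the paper invokes these consistency checks more summarily---but the argument is the same in substance.
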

\begin{proof}
	Let $\tilde{r}=Ar$ and $\delta(r)=A^{\frac{2}{\kappa}}\tilde{\delta}(\tilde{r})$. By the definition of $\eta$ in~\eqref{DefDeltaEta}, it follows that $\eta(r)=A^{\frac{2}{\kappa}}\tilde{\eta}(\tilde{r})$. If $\widehat{\rho}(\delta,\eta)$ is homogeneous of degree $\kappa$, then 
 \begin{equation}\label{rhorescaleA}
   \widehat{\rho} (\delta,\eta)= A^2 \widehat{\tilde{\rho}}(\tilde{\delta},\tilde{\eta}),
 \end{equation}
and from the definition of mass in~\eqref{TOVmassdefdelta} it follows that 
 \begin{equation}
     m(r)=A^{-1} \tilde{m}(\tilde{r}),
 \end{equation}
so that the ratio $m(r)/r$ is invariant under the scaling transformation. Moreover, if $\widehat{\rho}$ is homogeneous function of degree $\kappa$, then $c^2_\mathrm{L}(\delta,\eta)$ and $s(\delta,\eta,\frac{m}{r})$ are invariant functions, i.e.,
	\begin{equation}
	c^2_\mathrm{L}(\delta,\eta)=\tilde{c}^2_\mathrm{L}(\tilde{\delta},\tilde{\eta})\,,\quad s(\delta,\eta,\frac{m}{r})=\tilde{s}(\tilde{\delta},\tilde{\eta},\frac{\tilde{m}}{\tilde{r}}).
	\end{equation}
	Therefore, Eq.~\eqref{TOV2a} reads
	\begin{equation}
	\tilde{c}^2_{\mathrm{L}}(\tilde{\delta},\tilde{\eta}) \frac{\tilde{r}}{\tilde{\delta}}\frac{d\tilde{\delta}}{d\tilde{r}} =-\tilde{s}(\tilde{\delta},\tilde{\eta},\frac{\tilde{m}}{\tilde{r}})\frac{\tilde{r}}{\tilde{\eta}}\frac{d\tilde{\eta}}{d\tilde{r}}-\frac{\left(\frac{\tilde{m}}{\tilde{r}} +4\pi \tilde{r}^2 \widehat{\tilde{p}}_{\mathrm{rad}}(\tilde{\delta},\tilde{\eta})\right)}{\left(1-\frac{2\tilde{m}}{\tilde{r}}\right)} ,
	\end{equation}
	which shows its invariance under the scaling transformation.
\end{proof}
In analogy with the linear EoS with constant adiabatic index of Example~\ref{Ex2}, we consider the elastic pre-stressed polytropes given by the stored energy function~\eqref{PolyStore2}, and assume that the whole contribution to the energy density is due to the deformation potential, by removing the contribution of the baryonic rest mass density. Hence, these models belong to the class of ultra-relativistic materials, with a reference state characterized by
\begin{equation}\label{p0Lin}
    p_0=\frac{K}{\gamma}, \qquad \rho_0 = \frac{K}{\gamma(\gamma-1)},
\end{equation}
and
\begin{equation}\label{w0Lin}
    \rho_0=w_0=\frac{p_0}{\gamma-1}.
\end{equation}
The resulting stored energy function is written in terms of the polytropic index $1/\mathrm{n}=\gamma-1$ as
\begin{equation}\label{PolyStore3}
\widehat{\epsilon}(\delta,\eta)=\eta^{\frac{1}{\mathrm{n}}}\left[-\left((\mathrm{s}-\mathrm{n})K+\mathrm{s}\frac{4\mu}{3}\right)+\frac{1}{1+\mathrm{s}}\left(\frac{(\mathrm{s}-\mathrm{n})K}{1+\mathrm{n}}+\mathrm{s}\frac{4\mu}{3}\right)\left(\frac{\delta}{\eta}\right)^{-1}+\frac{\mathrm{s}^2}{1+\mathrm{s}}\left(K+\frac{4\mu}{3}\right)\left(\frac{\delta}{\eta}\right)^{\frac{1}{\mathrm{s}}}\right],
\end{equation}
and the energy density is
\begin{equation}\label{eq:prestreconstiso}
\widehat{\rho}(\delta,\eta) =  K\eta^{1+\frac{1}{\mathrm{n}}}\left[\frac{\mathrm{n}^2 }{1+\mathrm{n}} 
	-\left(\mathrm{n}-3\mathrm{s}\left(\frac{1-\nu}{1+\nu}\right)\right)\left(1-\left(\frac{\delta}{\eta}\right)\right)-\frac{3\mathrm{s}^2}{(1+\mathrm{s})}\left(\frac{1-\nu}{1+\nu}\right)\left(1-\left(\frac{\delta}{\eta}\right)^{1+\frac{1}{\mathrm{s}}}\right)\right],
\end{equation}
which under the scaling transformation of $\delta$ and $\eta$ transforms as in~\eqref{rhorescaleA} with $\kappa=1+\frac{1}{\mathrm{n}}$. 

%
\begin{remark}\label{LinRhoPiso}
From Remark~\ref{rempiso}, by taking $\varrho_0 \to 0$, it follows that
\begin{equation}
\rho= {\rm n}p_{\rm{iso}}=\frac{{\rm n}}3(p_\mathrm{rad} + 2 p_\mathrm{tan}),
\end{equation}
with $p_\mathrm{rad}$ and $p_\mathrm{tan}$ as given in~\eqref{PradNSI}, and~\eqref{PtanNSI} respectively. The resulting wave speeds are also easily obtained from Eqs.~\eqref{cLNSI}--\eqref{NCNSI}.
\end{remark}
\subsubsection{Center of symmetry}%
For static solutions with a regular center of symmetry, we have $\eta(0)=\delta(0)=\delta_c>0$. The central density $\rho_c=\widehat{\rho}(\delta_c,\delta_c)$ and central pressure $p_c =\widehat{p}_\mathrm{rad}(\delta_c,\delta_c)=\widehat{p}_\mathrm{tan}(\delta_c,\delta_c)$ are given by
\begin{equation}
    \rho_c = \frac{\mathrm{n}^2K}{1+\mathrm{n}}\delta^{1+\frac{1}{\mathrm{n}}}_c, \qquad p_c =\frac{\mathrm{n}K}{1+\mathrm{n}}\delta^{1+\frac{1}{\mathrm{n}}}_c,
\end{equation}
while the wave speeds at the center of symmetry are simply given by
	\begin{equation} \label{eq:Linearspeedscenter}
	c^2_\mathrm{L}(\delta_\mathrm{c},\delta_\mathrm{c}) = \frac{3}{\mathrm{n}}\left(\frac{1-\nu}{1+\nu}\right)
	,\qquad c^2_\mathrm{T}(\delta_\mathrm{c},\delta_\mathrm{c}) =\frac{3}{2\mathrm{n}}\left(\frac{1-2\nu}{1+\nu}\right), 
	\end{equation}
so that
\begin{equation}
c^2_\mathrm{L}(\delta_\mathrm{c},\delta_\mathrm{c})-\frac{4}{3}c^2_\mathrm{T}(\delta_\mathrm{c},\delta_\mathrm{c}) = \frac{1}{\mathrm{n}}.
\end{equation}
Given the Definition~\ref{IDstatic} of physically admissible initial data for equilibrium configurations, we deduce the following necessary conditions for the existence of physically admissible (strongly) regular ball solutions:

\begin{proposition}
	Necessary conditions for the existence of physically admissible radially compressed balls solutions are given by
	\begin{equation}
	K>0,  \qquad \delta_\mathrm{c}>0,\qquad\mathrm{n}\geq 3\left(\frac{1-\nu}{1+\nu}\right),\qquad -1<\nu\leq\frac{1}{2}.
	  \end{equation}
\end{proposition}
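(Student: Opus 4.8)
The plan is to derive the stated necessary conditions directly from Definition~\ref{IDstatic} applied to the center of symmetry, using the explicit central wave speeds in~\eqref{eq:Linearspeedscenter}. Since the center is an isotropic state, the relevant admissibility requirements are: positivity of the central density $\rho_c > 0$; for radially compressed balls, positivity of the central pressure $p_c > 0$; the wave-speed conditions $0 < c^2_\mathrm{L}(\delta_c,\delta_c) \leq 1$, $0 \leq c^2_\mathrm{T}(\delta_c,\delta_c) \leq 1$, and $c^2_\mathrm{L}(\delta_c,\delta_c) > \tfrac{4}{3}c^2_\mathrm{T}(\delta_c,\delta_c)$; and the dominant energy condition $\rho_c \geq |p_c|$ (which, since $p_c>0$, reads $\rho_c \geq p_c$).

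First I would examine $\rho_c$ and $p_c$. From the given expressions $\rho_c = \frac{\mathrm{n}^2 K}{1+\mathrm{n}}\delta_c^{1+1/\mathrm{n}}$ and $p_c = \frac{\mathrm{n}K}{1+\mathrm{n}}\delta_c^{1+1/\mathrm{n}}$, so $\rho_c = \mathrm{n}\, p_c$. Requiring $\rho_c>0$, $p_c>0$ (compressed case) and $\delta_c>0$ forces $\frac{\mathrm{n}^2 K}{1+\mathrm{n}}>0$ and $\frac{\mathrm{n} K}{1+\mathrm{n}}>0$ simultaneously; these two together give $\mathrm{n}>0$ and $K>0$ (the combination $\mathrm{n}<-1$ with $K<0$ is excluded by needing $p_c>0$ as well, since then $\mathrm{n}K/(1+\mathrm{n})<0$). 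The DEC $\rho_c \geq p_c$ becomes $\mathrm{n}\,p_c \geq p_c$, i.e. $\mathrm{n}\geq 1$, which is implied by the wave-speed condition derived next and so need not be imposed separately.

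Next I would turn to the wave speeds. From~\eqref{eq:Linearspeedscenter}, $c^2_\mathrm{L}(\delta_c,\delta_c) = \frac{3}{\mathrm{n}}\left(\frac{1-\nu}{1+\nu}\right)$ and $c^2_\mathrm{T}(\delta_c,\delta_c) = \frac{3}{2\mathrm{n}}\left(\frac{1-2\nu}{1+\nu}\right)$. The condition $c^2_\mathrm{T}\geq 0$ requires $\frac{1-2\nu}{1+\nu}\geq 0$, hence $\nu\leq\tfrac12$ (with $\nu>-1$ from the admissible Poisson-ratio range), while $c^2_\mathrm{L}>0$ is automatic once $\mathrm{n}>0$ and $\nu<1$. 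The causality bound $c^2_\mathrm{L}\leq 1$ reads $\frac{3}{\mathrm{n}}\frac{1-\nu}{1+\nu}\leq 1$, which rearranges precisely to $\mathrm{n}\geq 3\left(\frac{1-\nu}{1+\nu}\right)$, the key constraint in the statement. I would then verify that the remaining inequalities are consequences rather than new restrictions: the bound $c^2_\mathrm{T}\leq 1$ follows from $c^2_\mathrm{L}\leq 1$ together with $c^2_\mathrm{L}>\tfrac43 c^2_\mathrm{T}$ (since $c^2_\mathrm{L}-\tfrac43 c^2_\mathrm{T}=\tfrac1{\mathrm{n}}>0$ makes that last inequality automatic), and that $\mathrm{n}\geq 3\frac{1-\nu}{1+\nu}\geq 1$ for $\nu\leq\tfrac12$, recovering the DEC.

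I do not anticipate a genuine obstacle here, since all the central quantities are given in closed form and the argument is a direct unpacking of the admissibility inequalities. The only point requiring slight care is bookkeeping: confirming that no admissibility inequality is redundant or, conversely, that the four listed conditions $K>0$, $\delta_c>0$, $\mathrm{n}\geq 3\frac{1-\nu}{1+\nu}$, $-1<\nu\leq\tfrac12$ are jointly \emph{sufficient} to imply every inequality in Definition~\ref{IDstatic} at the center (so that the stated conditions are exactly the necessary ones, neither too weak nor over-constrained). The mild subtlety is the interplay between the strong inequality $c^2_\mathrm{L}>\tfrac43 c^2_\mathrm{T}$, which holds identically as $1/\mathrm{n}>0$, and the causality bound on $c^2_\mathrm{T}$, which I would confirm is subsumed; I would present these implications explicitly to close the argument.
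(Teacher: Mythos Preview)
Your proposal is correct and follows exactly the approach intended in the paper: the proposition is stated there as a direct consequence of Definition~\ref{IDstatic} applied to the explicit central quantities, and you have carefully unpacked each inequality (positivity of $\rho_c$ and $p_c$, the bounds on $c^2_\mathrm{L}$ and $c^2_\mathrm{T}$, the condition $c^2_\mathrm{L}-\tfrac{4}{3}c^2_\mathrm{T}=1/\mathrm{n}>0$, and the DEC) and verified the claimed redundancies. The paper does not give further details beyond invoking Definition~\ref{IDstatic}, so your argument is precisely the omitted verification.
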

%

\subsubsection{Boundary of the ball}%
The boundary of the ball is defined by the condition $p_\mathrm{rad}(\mathcal{R})=0$. Just as for the NSI polytropes, the boundary condition yields Eqs.~\eqref{yboundaryeta}. However, in this case the energy density on the surface boundary is
\begin{equation}
\begin{split}
      \widehat{\rho}(\delta(\mathcal{R}),\eta(\mathcal{R}))=\mathrm{n}K\eta^{1+\frac{1}{\mathrm{n}}}(\mathcal{R})\left[\frac{(\mathrm{n}-\mathrm{s}) }{1+\mathrm{n}} 
	-\left(1-3\frac{\mathrm{s}}{\mathrm{n}}\left(\frac{1-\nu}{1+\nu}\right)\right)\left(1-y_\mathrm{b}\right)\right]\,,
\end{split}
\end{equation}
where $y_\mathrm{b}$ is given in~\eqref{ybDef}. 
\begin{remark}
From Remark~\ref{LinRhoPiso}, it follows that at the surface boundary of ball, the energy density can be written as
\begin{equation}
\rho(\mathcal{R})=\frac{2\mathrm{n}}{3}p_\mathrm{tan}(\mathcal{R}).
\end{equation}
\end{remark}

Besides condition~\eqref{ShearBound}, other necessary conditions for the existence of physically admissible ball solutions in the NSI model can be obtained from the strict hyperbolicity and causality conditions of the wave speeds at the boundary.

%

\subsubsection{Numerical results}
    %
      %

         %
    \begin{figure}[ht!]
    	\centering
         \includegraphics[width=0.65\textwidth]{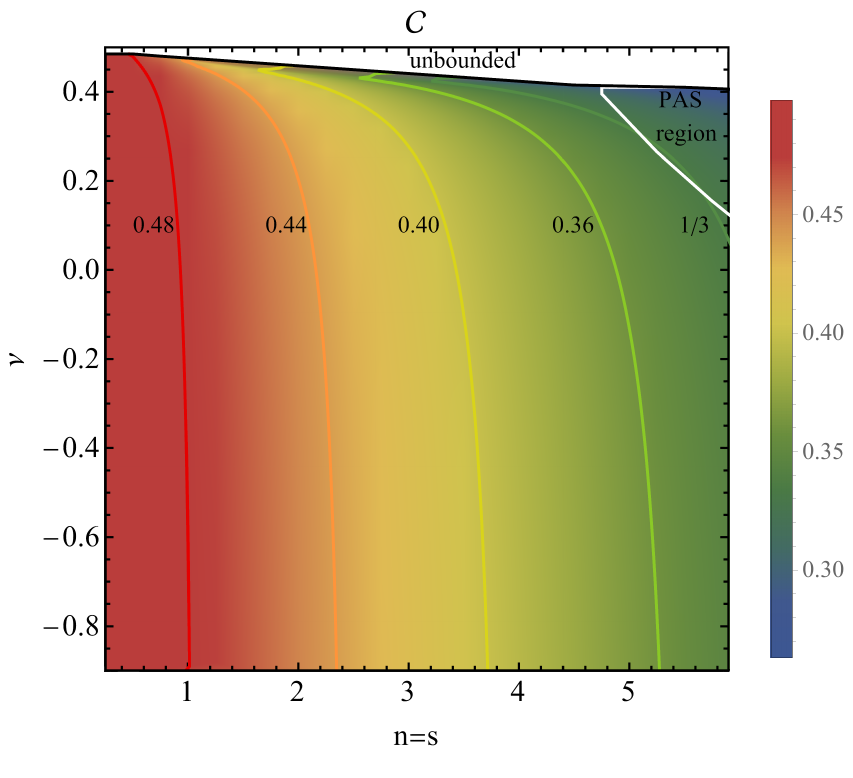} 
    	\caption{ Compactness of different solutions as a function of the Poisson ratio  $\nu$ and different ${\rm n}={\rm s}$.
    	\label{fig:CLIS}}
    \end{figure}
    Motivated by the fact that self-gravitating fluid solutions with a linear EoS do not exist, we start our analysis by considering whether there are bounded configurations for this model when elasticity is turned on. While for values of elastic parameters sufficiently close to the fluid limit we do not find bounded configurations, remarkably, for sufficiently high values of elasticity, we can find self-gravitating configurations. The parameter space of the solution is represented in Fig.~\ref{fig:CLIS}, where we show the compactness of the solutions as a function of the Poisson ratio $\nu$ and the polytropic index ${\rm n}$ (for ${\rm s}={\rm n}$). The colored area represents the region where we find bounded solutions, whereas the white region denotes where we could not find self-gravitating solutions.
    \begin{figure}[ht!]
    	\centering
    	\includegraphics[width=0.75\textwidth]{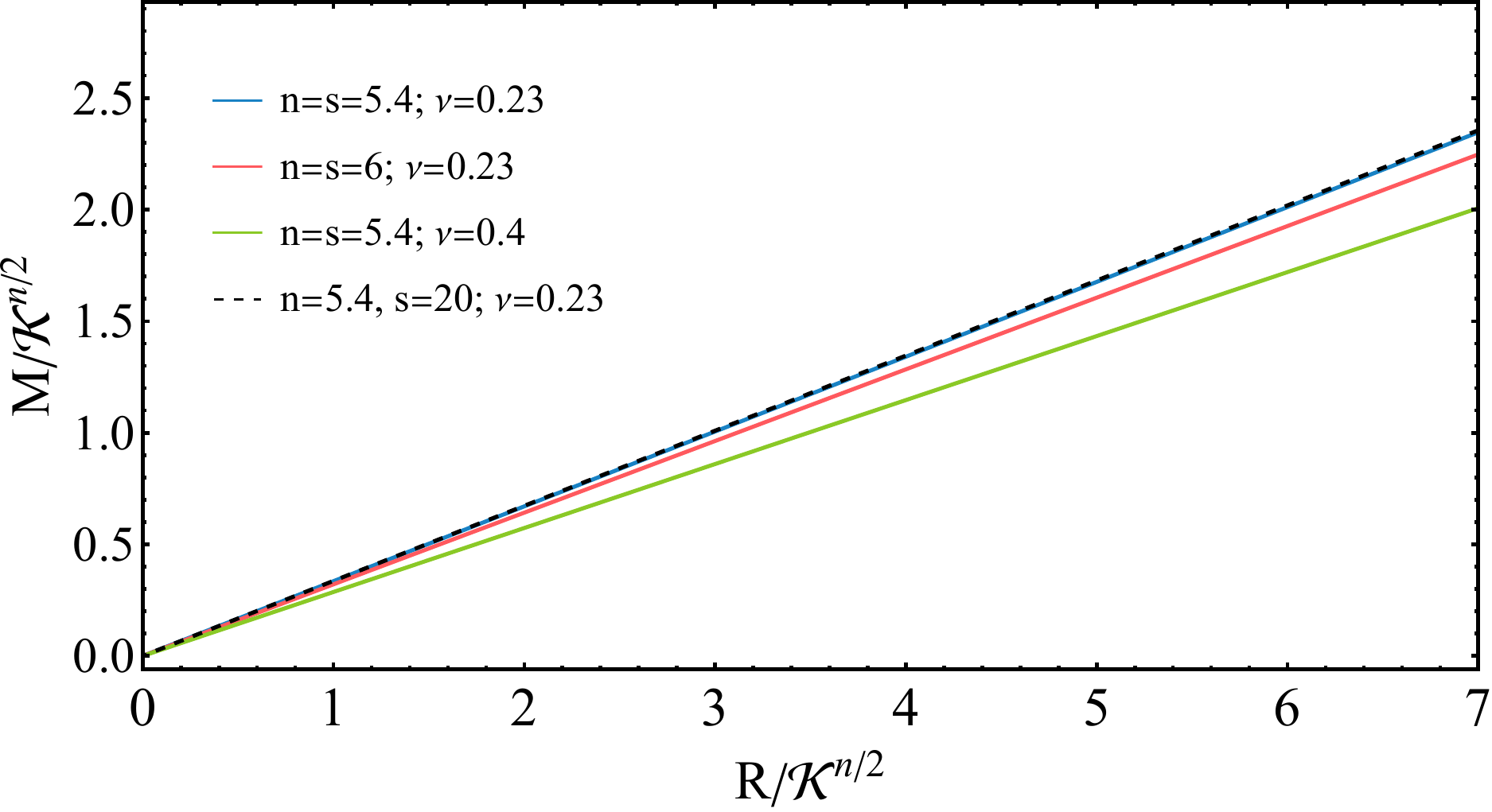} 
    	\caption{ Mass-radius diagram for some representative configurations described by the LIS model, which allows for scale-invariant solutions.
    	\label{fig:MRlinear}}
    \end{figure}
    Some representative examples of these configurations are shown in the mass-radius diagram on Fig.~\ref{fig:MRlinear}. As anticipated above, due to the scale invariance of the model, the mass-to-radius ratio is independent of the central value of the baryonic density, and therefore the mass-radius diagrams for these configurations are lines with constant slope (i.e., constant compactness for each set of model parameters). 
    By observing Fig.~\ref{fig:CLIS} we see that increasing the index ${\rm n}$ lowers the compactness of the solutions, while decreasing the Poisson ratio (increasing elasticity) increases the compactness of the solutions. Although the compactness of the solution can reach very high values, typically this happens for unphysical configurations, where some of the sound speeds in the material exceed the speed of light. The boundary of the region where we find physically admissible solutions is represented by a white curve on the top-right corner in Fig.~\ref{fig:CLIS}, corresponding both to  sufficiently high polytropic index ${\rm n}$ and Poisson ratio $\nu$. 

    \begin{figure}[ht!]
    	\centering
    	\includegraphics[width=0.75\textwidth]{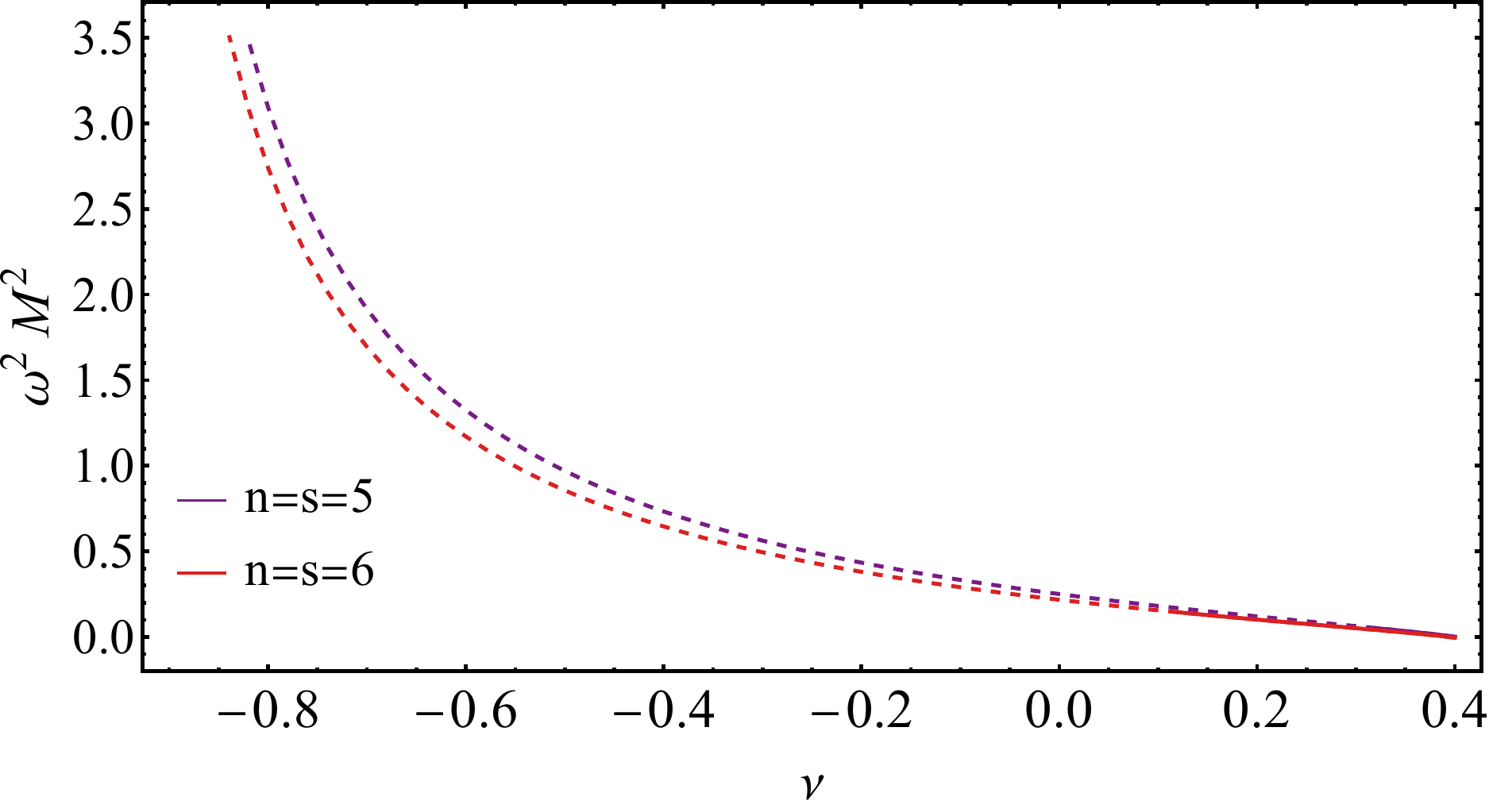} 
    	\caption{Squared frequency of the fundamental mode for linear radial perturbations of stars in the AIS model, as a function of the Poisson ratio  $\nu$, for some representative values of ${\rm n}={\rm s}$. The dashed and solid lines correspond to unphysical and physically admissible configurations, respectively. In all cases, we found no evidence for unstable modes. \label{fig:stabilitylinear}
    	}
    \end{figure}

    The linear mass-radius diagram is a truly remarkable feature which, to the best of our knowledge, has never been reported for viable matter within GR. Scale invariance implies that there is no maximum mass in the model, and so these solutions can exist with any mass. These characteristics make them akin to ordinary BHs, for which the mass is indeed a free parameter, and $\mathcal{M}/\mathcal{R}=1/2$ in the nonspinning case.
    Prompted by these quite unique properties, we analyze the radial stability of these solutions using the radial perturbation formalism presented in Sec.\ref{RS}. In Fig.~\ref{fig:stabilitylinear} we show the squared frequency of the fundamental mode as a function of the Poisson ratio of the material for some characteristic model parameters (${\rm n},{\rm s}$). In all cases that we have numerically explored, we always found $\omega^2>0$, indicating that the solutions are radially stable. 
    
    Finally, we analyzed the maximum compactness of physically admissible configurations for this model, and obtained the bound
    \begin{equation}\label{eq:boundlinearEOS}
        {\cal C}_{\rm max}^{\rm PAS}\lesssim 0.335\,,
    \end{equation}
    which is saturated for $\rm n=s\approx 5.4$ and $\nu \approx 0.23$ (solid blue curve in the left panel in Fig.~\ref{fig:MRlinear}).


    Another question that can be addressed is the effect of the shear index ${\rm s}$ in the solutions. Our preliminary analysis suggests that the effects of ${\rm s}$ are similar to those in the NSI model, i.e., increasing the mass and the compactness of the solution. Unfortunately, it does not seem to play a relevant role in the compactness of physically admissible solutions. This is  due to the fact that the increase of the compactness is counterbalanced by an increase in the sound speed of the solution within the star, which can turn physically admissible solutions into unphysical ones. 
    In the mass-radius diagram of Fig.~\ref{fig:MRlinear}, we also compare the curve corresponding to the solution that approaches our bound~\eqref{eq:boundlinearEOS} (blue) with a solution with the same parameters but much larger shear index ${\rm s}$ (black, dashed). We see that the profile is very similar to the previous case, with the exception that the increase in shear index was enough to make the speed of sound superluminal near the radius of the star.  We leave a more detailed analysis of the effects of the shear index for future work.

\newpage

\subsection{Affine isotropic state materials (AIS)} \label{AIS}
To generalize the fluid affine EoS of Example~\ref{Ex3}, we consider the stress-free reference state material with $p_0=0$ obtained from the LIS pre-stressed reference state material using the transformation~\eqref{TransfRho} with $\alpha_0=-(w_0+p_0)$ and $p_0$ given by Equations~\ref{p0Lin},~\ref{w0Lin}, together with the identity~\eqref{refStaId}, which gives $\varrho^{(\mathrm{sf})}_0=\rho^{(\mathrm{sf})}_0=\frac{K}{\gamma-1}$, and the fact that for ultra-relativistic materials $\varrho^{(\mathrm{ps})}_0=0$. This yields, similarly to the fluid case in Example~\ref{AffineTransLin}, the energy-density 
\begin{equation}\label{eq:strefreeconstiso}
\begin{split}
\widehat{\rho}(\delta,\eta) = \frac{\mathrm{n}K}{1+\mathrm{n}}+K\eta^{1+\frac{1}{\mathrm{n}}}\Bigg[\frac{\mathrm{n}^2 }{1+\mathrm{n}} 
	&-\left(\mathrm{n}- 3\mathrm{s}\left(\frac{1-\nu}{1+\nu}\right)\right)\left(1-\left(\frac{\delta}{\eta}\right)\right) \\
 & -\frac{3\mathrm{s}^2}{(1+\mathrm{s})}\left(\frac{1-\nu}{1+\nu}\right)\left(1-\left(\frac{\delta}{\eta}\right)^{1+\frac{1}{\mathrm{s}}}\right)\Bigg],
 \end{split}
\end{equation}
and the radial and tangential pressures
\begin{subequations}
    \begin{align}
        \widehat{p}_\mathrm{rad}(\delta,\eta) &=-\frac{K {\rm n}}{1 +{\rm n}} + \frac{K {\rm n}}{1 +{\rm n}} \eta^{1+\frac{1}{\mathrm{n}}}\left[1-3\frac{(1+\mathrm{n})}{(1+\mathrm{s})}\frac{\mathrm{s}}{\mathrm{n}}\left(\frac{1-\nu}{1+\nu}\right)\left(1-\left(\frac{\delta}{\eta}\right)^{1+\frac{1}{\mathrm{s}}}\right)\right],\\
        \widehat{p}_\mathrm{tan}(\delta,\eta) &=-\frac{K {\rm n}}{1 +{\rm n}} + \frac{K {\rm n}}{1 +{\rm n}} \eta^{1+\frac{1}{\mathrm{n}}}\Bigg[1-\frac{3(1+\mathrm{n})}{2\mathrm{n}}\left(1-3\frac{\mathrm{s}}{\mathrm{n}}\left(\frac{1-\nu}{1+\nu}\right)\right)\left(1-\left(\frac{\delta}{\eta}\right)\right)\\
&\qquad \qquad \qquad \qquad \quad +\frac{3(1+\mathrm{n})}{2(1+\mathrm{s})}\frac{\mathrm{s}}{\mathrm{n}}\left(1-3\frac{\mathrm{s}}{\mathrm{n}}\right)\left(\frac{1-\nu}{1+\nu}\right)\left(1-\left(\frac{\delta}{\eta}\right)^{1+\frac{1}{\mathrm{s}}}\right)\Bigg].
    \end{align}
\end{subequations}
\subsubsection{Center of symmetry}%
%
The central density $\rho_c=\widehat{\rho}(\delta_c,\delta_c)$ and central pressure $p_c =\widehat{p}_\mathrm{rad}(\delta_c,\delta_c)=\widehat{p}_\mathrm{tan}(\delta_c,\delta_c)$ are given by
\begin{equation}
\rho_c = \frac{K {\rm n}}{1+\mathrm{n}}\left(1+ \mathrm{n}\delta_c^{1+\frac{1}{\rm n}}\right)\,, \qquad p_c = \frac{K {\rm n}}{1+\mathrm{n}}\left(\delta_c^{1+\frac{1}{\rm n}}-1\right),
\end{equation}
while the wave speeds at the center of symmetry are
\begin{subequations}
	\begin{align}
	c^2_\mathrm{L}(\delta_\mathrm{c},\delta_\mathrm{c}) = \frac{3}{\mathrm{n}}\left(\frac{1-\nu}{1+\nu}\right)
	,\qquad c^2_\mathrm{T}(\delta_\mathrm{c},\delta_\mathrm{c}) =\frac{3}{2\mathrm{n}}\left(\frac{1-2\nu}{1+\nu}\right), 
	\end{align}
\end{subequations}
so that
\begin{equation}
c^2_\mathrm{L}(\delta_\mathrm{c},\delta_\mathrm{c})-\frac{4}{3}c^2_\mathrm{T}(\delta_\mathrm{c},\delta_\mathrm{c}) = \frac{1}{\mathrm{n}}.
\end{equation}
Given the Definition~\ref{IDstatic} of physically admissible initial data for equilibrium configurations, we deduce the following necessary conditions for the existence of physically admissible regular ball solutions:

\begin{proposition}
	Necessary conditions for the existence of physically admissible radially compressed balls solutions are given by
	\begin{equation}
	K>0, \qquad \delta_\mathrm{c}>1,\qquad \mathrm{n}\geq 3\left(\frac{1-\nu}{1+\nu}\right),\qquad -1<\nu\leq\frac{1}{2}.
   \end{equation}
\end{proposition}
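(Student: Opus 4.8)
The plan is to insert the explicit central values of the energy density, radial pressure, and wave speeds—already computed above for the AIS model—into the physical admissibility conditions of Definition~\ref{IDstatic}, and to read off the four stated inequalities one at a time. Since these are only claimed to be \emph{necessary} conditions, it suffices to show that each follows from the admissibility requirements; no sufficiency or completeness statement needs to be established.

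First I would exploit the wave-speed conditions~\eqref{InitialData}. The identity $c^2_\mathrm{L}(\delta_\mathrm{c},\delta_\mathrm{c})-\frac{4}{3}c^2_\mathrm{T}(\delta_\mathrm{c},\delta_\mathrm{c})=\frac{1}{\mathrm{n}}$, combined with the strict inequality $c^2_\mathrm{L}>\frac{4}{3}c^2_\mathrm{T}$, immediately forces $\mathrm{n}>0$. With $\mathrm{n}>0$ in hand, the causality bound $c^2_\mathrm{L}(\delta_\mathrm{c},\delta_\mathrm{c})=\frac{3}{\mathrm{n}}\frac{1-\nu}{1+\nu}\leq 1$ rearranges to $\mathrm{n}\geq 3\frac{1-\nu}{1+\nu}$, the third condition. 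The reality requirements $c^2_\mathrm{L}>0$ and $c^2_\mathrm{T}\geq 0$ become $\frac{1-\nu}{1+\nu}>0$ and $\frac{1-2\nu}{1+\nu}\geq 0$; a short sign analysis (ruling out $1+\nu<0$, which would contradict $c^2_\mathrm{L}>0$) gives $1+\nu>0$ and hence $-1<\nu\leq\frac{1}{2}$. I would also note that the remaining bound $c^2_\mathrm{T}\leq 1$ is then automatic, since using $\mathrm{n}\geq 3\frac{1-\nu}{1+\nu}$ one finds $c^2_\mathrm{T}=\frac{3}{2\mathrm{n}}\frac{1-2\nu}{1+\nu}\leq\frac{1-2\nu}{2(1-\nu)}\leq 1$ for $\nu\leq\frac{1}{2}$, so it contributes nothing new.

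Next I would use the sign conditions on the central density and pressure. Since $\delta_\mathrm{c}>0$ by definition and $\mathrm{n}>0$, the prefactor $\frac{\mathrm{n}}{1+\mathrm{n}}\left(1+\mathrm{n}\delta_\mathrm{c}^{1+\frac{1}{\mathrm{n}}}\right)$ appearing in $\rho_\mathrm{c}=\frac{K\mathrm{n}}{1+\mathrm{n}}\left(1+\mathrm{n}\delta_\mathrm{c}^{1+\frac{1}{\mathrm{n}}}\right)$ is strictly positive, so condition~\eqref{rhoc} ($\rho_\mathrm{c}>0$) is equivalent to $K>0$. Finally, for a radially compressed ball we require $p_\mathrm{c}>0$; substituting $p_\mathrm{c}=\frac{K\mathrm{n}}{1+\mathrm{n}}\left(\delta_\mathrm{c}^{1+\frac{1}{\mathrm{n}}}-1\right)$ and using $K>0$, $\mathrm{n}>0$ reduces this to $\delta_\mathrm{c}^{1+\frac{1}{\mathrm{n}}}>1$, i.e.\ $\delta_\mathrm{c}>1$ (since $1+\frac{1}{\mathrm{n}}>0$). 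This is precisely the feature distinguishing the AIS result from the LIS one, where $p_\mathrm{c}>0$ holds for every $\delta_\mathrm{c}>0$.

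The argument is a sequence of elementary inequality manipulations, so there is no serious analytic obstacle. The one point demanding care is the sign bookkeeping for $\nu$: one must secure $\mathrm{n}>0$ \emph{before} clearing denominators in the other wave-speed inequalities, and then eliminate the spurious branch $1+\nu<0$ using $c^2_\mathrm{L}>0$, rather than assuming $1+\nu>0$ at the outset. Everything else mirrors the proof of the corresponding LIS proposition almost verbatim, the only substantive change being the shift from $\delta_\mathrm{c}>0$ to $\delta_\mathrm{c}>1$ forced by the stress-free (affine) reference state.
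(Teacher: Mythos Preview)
Your proposal is correct and follows exactly the approach implied by the paper, which simply states that the conditions are deduced from Definition~\ref{IDstatic} using the explicit central values of $\rho_\mathrm{c}$, $p_\mathrm{c}$, and the wave speeds computed immediately above the proposition. The paper does not spell out the inequality manipulations you carry out, so your write-up is in fact more detailed than what appears there.
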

%
%
%
%
\subsubsection{Numerical results}
The AIS material does not possess the same scaling invariance properties as the LIS material, and thus the properties of the solution for different values of central density cannot be obtained from a single one by a scaling factor. To study these solutions, we follow the same methodology as in the previous section. We investigate the properties of the material for configurations with different central densities and for different model parameters, to infer what is the maximum compactness and compare it with the corresponding fluid model. The main results of this analysis are presented in Fig.~\ref{fig:Cnuaffine}, where we display the maximum compactness of physically admissible configurations with and without radial stability conditions (solid and dotted lines, respectively). By analyzing the plot, we see that there is a maximum in the profile of each curve, corresponding to a given value of ${\nu}$. Interestingly, we find that this maximum increases as ${\rm n}={\rm s}\to 1$, but also the corresponding value of ${\nu}$ approaches ${\nu}=0.5$ in this limit. As a conclusion, we find that the bounds for maximum compactness in the AIS models are given precisely by those in the fluid limit,
\begin{equation}
    {\cal C}_{\rm max}^{\rm PA}\lesssim 0.364\,, \qquad {\cal C}_{\rm max}^{\rm PAS}\lesssim 0.354\,,
\end{equation}
for $\rm n= 1$ and $\nu = 1/2$,
which allow for slightly higher compactness than the pre-stressed LIS model.

\begin{figure}[H]
	\centering
	\includegraphics[width=0.65\textwidth]{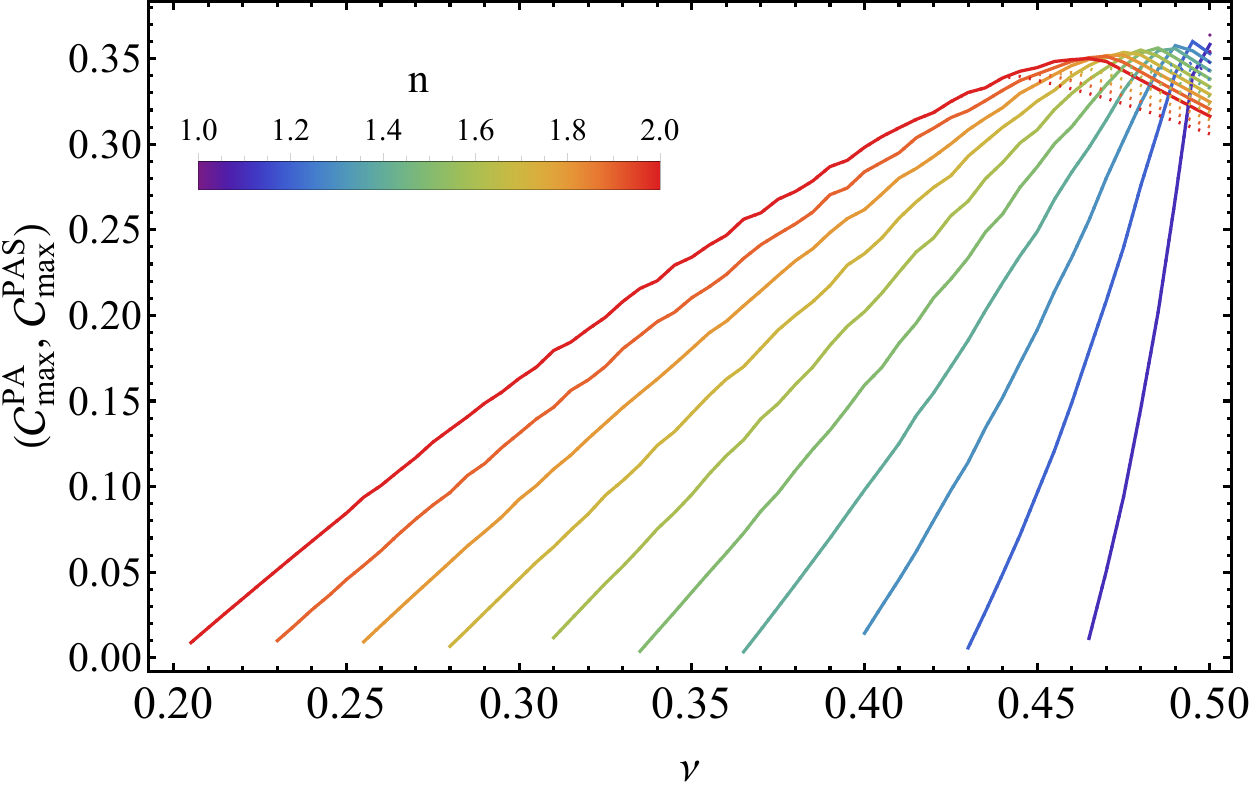} 
	\caption{Compactness of different solutions in the AIS model as a function of the Poisson ratio  $\nu$, for different values of ${\rm n}={\rm s}$. The solid  lines correspond to the maximum physically admissible compactness (${\cal C}_{\rm PA}$), while the dotted lines describe the maximum admissible and stable compactness (${\cal C}_{\rm PAS}$). The maximum compactness in both cases corresponds to the fluid model (${\nu =0.5}$) and ${\rm n=s=1}$. \label{fig:Cnuaffine}
	}
\end{figure}
\newpage

\section{Elastic materials with constant longitudinal wave speeds}\label{LinearES}
In this section\footnote{Some of the results in this section were announced in the companion paper~\cite{Alho:2022bki}.} we turn to material laws that are characterized by having constant longitudinal wave speeds,
\begin{equation}\label{CLConst}
c^2_\mathrm{L}(\delta,\eta)=\tilde{c}^2_\mathrm{L}(\delta,\eta)=\gamma-1,
\end{equation}
with $\gamma\in(1,2]$ in order to satisfy the strict hyperbolicity and the causality conditions. These materials, which can be found in Example~\ref{KarlSam} of Appendix~\ref{ApEx}, were introduced by Karlovini \& Samuelsson~\cite{Karlovini:2004ix}, and contain a \emph{third elastic constant} $C_{\mathrm{KS}}\in\mathbb{R}$. We consider both a stress-free reference state material, $p_0=0$ (possessing a well defined reference state), and a pre-stressed  reference state material, $p_0\neq0$ (thus with no preferred reference state), obtained from the previous one by the natural choice~\eqref{NatChoiceP0}. As in the case for the AIS and LIS materials discussed in Sec.~\ref{LinEOS}, they are elastic generalizations of perfect fluids with an \emph{affine} and \emph{linear EoS}, respectively, which have constant sound speed $c^2_\mathrm{s}=\gamma-1$ (see Table~\ref{table}).
From~\eqref{WSRefSt} and~\eqref{CLConst}, it follows that for this class of materials the following relation holds:
\begin{equation}
    \frac{L}{\rho_0+p_0}=\gamma-1,
\end{equation}
where recall that $L=\lambda+2\mu$.
%
\subsection{Stress-free materials (ACS)}\label{ACS}
%
The subfamily of stress-free reference state materials with constant longitudinal wave speeds has a reference state characterized by
\begin{equation}
p_0=0\qquad\text{and}\qquad \rho_0 =\frac{L}{\gamma-1}.
\end{equation}
The elastic law for these materials in spherical symmetry is given in Example~\ref{KarlSamSS} of Appendix~\ref{ApExSS}. It can be written in terms of the elastic constants $(L,\nu,\theta)$, where~\footnote{The definition of $\theta$ differs slightly from that in~\cite{Alho:2022bki}. The two definitions are related by $\theta_{\mathrm{new}}=\frac{\gamma}{\gamma-1}\theta_\mathrm{old}$.}
\begin{equation}
 \theta=\frac{\gamma C_{\mathrm{KS}}}{L} ,  
\end{equation}
is a third  dimensionless elastic constant. The energy density function is then given by 
%
%
%
\begin{equation}
\begin{split}
\frac{\gamma \widehat{\rho}(\delta,\eta)}{L}&=\frac{\gamma}{\gamma-1}+\left(\frac{1-(2-\gamma)\nu}{\gamma(\gamma-1)(1-\nu)}-\theta\right)\left(\delta^{\gamma}-1\right)+3\left(\frac{1}{\gamma}\left(\frac{1-2\nu}{1-\nu}\right)+\theta\eta^{\frac{\gamma}{3}}\right)\left(\eta^{\frac{\gamma}{3}}-1\right) \\
&\qquad\qquad\qquad +\eta^{\frac{\gamma}{3}}\left(\frac{1}{\gamma}\left(\frac{1-2\nu}{1-\nu}\right)+\theta \left(2\eta^{\frac{\gamma}{3}}-1\right)\right)\left(\left(\frac{\delta}{\eta}\right)^{\gamma}-1\right). 
\end{split}
\end{equation}
The radial and tangential pressures are 
\begin{equation}
\begin{split}
\frac{\gamma\widehat{p}_\mathrm{rad}(\delta,\eta)}{L}&=-\left(1-\frac{2}{\gamma}\frac{1-2\nu}{1-\nu}+\theta\right)+(\gamma-1)\left(\frac{1-(2-\gamma)\nu}{(\gamma-1)\gamma(1-\nu)}-\theta\right)\delta^{\gamma} \\
&+\left(-\frac{1}{\gamma}\frac{1-2\nu}{1-\nu}+\theta\right)\left(2-(\gamma-1)\left(\frac{\delta}{\eta}\right)^{\gamma}\right)\eta^{\frac{\gamma}{3}}+\theta\left(2(\gamma-1)\left(\frac{\delta}{\eta}\right)^{\gamma}-1\right)\eta^{\frac{2\gamma}{3}} ,
\end{split}
\end{equation}
\begin{equation}
\begin{split}
\frac{\gamma}{L}\widehat{p}_\mathrm{tan}(\delta,\eta)&=-\left(1+\theta - \frac{2\left(1-2\nu\right)}{\gamma \left(1-\nu\right)}\right)+\left(\frac{(\gamma -2) \nu +1}{\gamma  (1-\nu )}-(\gamma -1) \theta\right)\delta^\gamma\\
&+\left(\theta -\frac{1-2 \nu }{\gamma  (1-\nu )}\right) \left(2-\gamma +\left(\frac{\delta}{\eta}\right)^{\gamma }\right) \eta^{\gamma /3}+\theta   \left((\gamma -1)+(\gamma -2) \left(\frac{\delta}{\eta}\right)^{\gamma}\right)\eta^{\frac{2 \gamma }{3}},
\end{split}
\end{equation}
and from these we can compute the isotropic and anisotropic pressures:
\begin{subequations}
\begin{align}
    \frac{\gamma \widehat{p}_\mathrm{iso}(\delta,\eta) }{L}&= \frac{2}{\gamma}\frac{(1-2 \nu )}{(1-\nu )}-(\theta +1) + \left(\frac{1-(2-\gamma ) \nu}{\gamma  (1-\nu )}+(1-\gamma) \theta \right)\delta ^{\gamma } \nonumber \\ 
    +&\frac{1}{\gamma}\frac{\gamma -3}{3}\left(\frac{1-2 \nu }{1-\nu }-\gamma  \theta\right) \left(2+\left(\frac{\delta}{\eta}\right)^{\gamma }\right)\eta^{\gamma /3} +\frac{\theta }{3} (2 \gamma -3) \left(1+2 \left(\frac{\delta}{\eta}\right)^{\gamma }\right)\eta^{\frac{2 \gamma }{3}} ,\\
    \frac{\gamma\widehat{q}(\delta,\eta)}{L} &=\gamma\left[\left(\frac{1}{\gamma}\frac{1-2\nu}{1-\nu}-\theta\right)\eta^{\gamma/3}+\theta\eta^{2\gamma/3}\right]\left(1-\left(\frac{\delta}{\eta}\right)^{\gamma}\right).
\end{align}
\end{subequations}
Finally, the spherically symmetric transverse wave speeds~\eqref{cT1},~\eqref{cT2} are
\begin{equation}
c^{2}_\mathrm{T}(\delta,\eta)=\frac{\left[\left(\frac{1}{\gamma}\frac{1-2\nu}{1-\nu}-\theta\right)\eta^{\gamma/3}+\theta\eta^{2\gamma/3}\right]\left(1-\left(\frac{\delta}{\eta}\right)^{\gamma}\right)}{\left[\left(\frac{1}{\gamma-1}-\frac{1}{\gamma}\frac{1-2\nu}{1-\nu}-\theta\right)\delta^{\gamma}+\left(\frac{1}{\gamma}\frac{1-2\nu}{1-\nu}-\theta\right)\eta^{\gamma/3}+\theta\eta^{2\gamma/3}\left(1+\left(\frac{\delta}{\eta}\right)^{\gamma}\right)\right]\left(1-\left(\frac{\delta}{\eta}\right)^{2}\right)},
\end{equation}
\begin{equation}
\tilde{c}^{2}_\mathrm{T}(\delta,\eta)=\left(\frac{\delta}{\eta}\right)^{2-\gamma}\frac{\left[\left(\frac{1}{\gamma}\frac{1-2\nu}{1-\nu}-\theta\right)\eta^{\gamma/3}+\theta\eta^{2\gamma/3}\right]\left(1-\left(\frac{\delta}{\eta}\right)^{\gamma}\right)}{\left[\left(\frac{1}{\gamma-1}-\frac{1}{\gamma}\frac{1-2\nu}{1-\nu}-\theta\right)\eta^{\gamma}+\left(\frac{1}{\gamma}\frac{1-2\nu}{1-\nu}-\theta\right)\eta^{\gamma/3}+2\theta\eta^{2\gamma/3}\right]\left(1-\left(\frac{\delta}{\eta}\right)^{2}\right)},
\end{equation}
while~\eqref{cT3} reads
\begin{equation}
\tilde{c}^2_\mathrm{TT}(\delta,\eta)=\frac{\gamma}{2}\frac{\left(\frac{1}{\gamma}\frac{1-2\nu}{1-\nu}-\theta\right)\eta^{\gamma/3}+\theta\eta^{2\gamma/3}\left(\frac{\delta}{\eta}\right)^{\gamma}}{\left(\frac{1}{\gamma-1}-\frac{1}{\gamma}\frac{1-2\nu}{1-\nu}-\theta\right)\delta^{\gamma}+\left(\frac{1}{\gamma}\frac{1-2\nu}{1-\nu}-\theta\right)\eta^{\gamma/3}+\theta\eta^{2\gamma/3}\left(1+\left(\frac{\delta}{\eta}\right)^{\gamma}\right)}.
\end{equation}
\begin{remark}
    Notice that since for these models $\tilde{c}^2_\mathrm{L}(\delta,\eta)$ is known, by relation~\eqref{CLConst} the last wave speed could be obtained using Eq.~\eqref{RelWS}, to yield  
    \begin{equation}
        (\widehat\rho + \widehat{p}_{\rm tan})\tilde{c}_{\mathrm{TT}}^2 = -3\delta \eta \partial_\delta \partial_\eta \widehat\rho -\frac94 \eta^2 \partial^2_\eta \widehat\rho + \frac34 (2\gamma-3) \eta \partial_\eta \widehat\rho.
    \end{equation}
In particular, these materials belong to the class of materials satisfying~\eqref{LinWScTT} for the choice $B=0$ and $E=\frac32(\gamma-1)$.
\end{remark}
\subsubsection{Center of symmetry}%
For static solutions with a regular center of symmetry, we have $\eta(0)=\delta(0)=\delta_c>0$. The central density $\rho_c=\widehat{\rho}(\delta_c,\delta_c)$ and central pressure $p_c =\widehat{p}_\mathrm{rad}(\delta_c,\delta_c)=\widehat{p}_\mathrm{tan}(\delta_c,\delta_c)$ are given by
\begin{equation}
    \begin{split}
        \frac{\gamma\rho_\mathrm{c}}{L} 
 =\frac{\gamma}{\gamma-1} +(\delta^{\gamma/3}_\mathrm{c}-1)\Bigg[&\left(\frac{1-(2-\gamma)\nu}{\gamma(\gamma-1)(1-\nu)}-\theta\right)(\delta^{\gamma/3}_c-1)^2\\
 &+3\left(\frac{1-(2-\gamma)\nu}{\gamma(\gamma-1)(1-\nu)}\right)(\delta^{\gamma/3}_c-1)+\frac{3}{\gamma-1}\Bigg],
    \end{split}
\end{equation}

\begin{equation}
    \begin{split}
    \frac{\gamma p_c}{L} 
   =(\delta^{\gamma/3}_\mathrm{c}-1)\Bigg[&(\gamma-1)\left(\frac{1-(2-\gamma)\nu}{\gamma(\gamma-1)(1-\nu)}-\theta\right)(\delta^{\gamma/3}_\mathrm{c}-1)^2 \\
   &+3(\gamma-1)\left(\frac{1-(2-\gamma)\nu}{\gamma(\gamma-1)(1-\nu)}-\frac{\gamma\theta}{3(\gamma-1)}\right)(\delta^{\gamma/3}_\mathrm{c}-1)+\left(\frac{1+\nu}{1-\nu}\right)\Bigg].      
    \end{split}
\end{equation}
The dominant energy condition reads
\begin{equation}
\begin{split}
    \rho_c-p_c =& \frac{\gamma\rho_0}{L}+ (\delta^{\gamma/3}_\mathrm{c}-1)\Big[(2-\gamma)\left(\frac{1-(2-\gamma)\nu}{\gamma(\gamma-1)(1-\nu)}-\theta\right)(\delta^{\gamma/3}_\mathrm{c}-1)^2 \\
   &+3(2-\gamma)\left(\frac{1-(2-\gamma)\nu}{\gamma(\gamma-1)(1-\nu)}+\frac{\gamma\theta}{3(2-\gamma)}\right)(\delta^{\gamma/3}_\mathrm{c}-1)+4-\gamma-(2+\gamma)\nu\Big] \geq 0.
\end{split}
\end{equation}
The transverse wave speed at the center of symmetry, Eq.~\eqref{cTCenter}, is
\begin{subequations}
	\begin{align}
	c^2_\mathrm{T}(\delta_c,\delta_c) &=\frac{\gamma}{2}\frac{\theta(\delta^{\gamma/3}_\mathrm{c}-1)+\frac{1}{\gamma}\left(\frac{1-2\nu}{1-\nu}\right)}{\left(\frac{1-(2-\gamma)\nu}{\gamma(\gamma-1)(1-\nu)}-\theta\right)(\delta^{\gamma/3}_\mathrm{c}-1)^2+2\left(\frac{1-(2-\gamma)\nu}{\gamma(\gamma-1)(1-\nu)}\right)(\delta^{\gamma/3}_\mathrm{c}-1)+\frac{1}{\gamma-1}},
	\end{align}
\end{subequations}
and Eq.~\eqref{Kcenter} reads
\begin{equation}
\begin{split}
& c^2_\mathrm{L}(\delta_c,\delta_c)-\frac{4}{3}c^2_\mathrm{T}(\delta_c,\delta_c) = \\
& \frac{(\gamma-1)\left(\frac{1-(2-\gamma)\nu}{\gamma(\gamma-1)(1-\nu)}-\theta\right)(\delta^{\gamma/3}_\mathrm{c}-1)^2+2(\gamma-1)\left[\left(\frac{1-(2-\gamma)\nu}{\gamma(\gamma-1)(1-\nu)}\right)-\frac{\gamma\theta}{3(\gamma-1)}\right](\delta^{\gamma/3}_\mathrm{c}-1)+\frac{1}{3}\left(\frac{1+\nu}{1-\nu}\right)}{\left(\frac{1-(2-\gamma)\nu}{\gamma(\gamma-1)(1-\nu)}-\theta\right)(\delta^{\gamma/3}_\mathrm{c}-1)^2+2\left(\frac{1-(2-\gamma)\nu}{\gamma(\gamma-1)(1-\nu)}\right)(\delta^{\gamma/3}_\mathrm{c}-1)+\frac{1}{\gamma-1}}.
\end{split}
\end{equation}
Given the Definition~\ref{IDstatic} of physically admissible initial data for equilibrium configurations, we deduce the following necessary conditions for the existence of physically admissible regular ball solutions:

\begin{proposition}
	A set of necessary conditions for the existence of physically admissible radially compressed balls solutions is given by 
	\begin{equation}
		  \begin{cases}
		       &1<\delta_\mathrm{c}<+\infty,\quad \text{for} \quad 1<\gamma\leq\frac{3}{2},\quad -1<\nu\leq\frac{1}{2},\quad  0\leq\theta\leq\frac{3(\gamma-1)}{\gamma}\frac{1-(2-\gamma)\nu}{\gamma(\gamma-1) (1-\nu)}, \\
             &1<\delta_\mathrm{c}\leq 1-\frac{1}{\gamma\theta}\left(\frac{1-2\nu}{1-\nu}\right) ,\quad \text{for} \quad 1<\gamma\leq\frac{3}{2},\quad -1<\nu<\frac{1}{2},\quad  -\frac{3(2-\gamma)}{\gamma}\frac{1-(2-\gamma)\nu}{\gamma(\gamma-1) (1-\nu)}\leq\theta<0, \\
              &1<\delta_\mathrm{c}<+\infty,\quad \text{for} \quad \frac{3}{2}\leq\gamma\leq2,\quad -1<\nu\leq\frac{1}{2},\quad  -\frac{3(2-\gamma)}{\gamma}\frac{1-(2-\gamma)\nu}{\gamma(\gamma-1) (1-\nu)}\leq\theta\leq\frac{1-(2-\gamma)\nu}{\gamma(\gamma-1) (1-\nu)}. \\
		   \end{cases} 
	\end{equation}
	%
	%
	%
\end{proposition}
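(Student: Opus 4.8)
The plan is to apply Definition~\ref{IDstatic} directly at the center of symmetry, where $\delta=\eta=\delta_{\mathrm c}$, and to translate every admissibility requirement into an explicit inequality in the single variable $X:=\delta_{\mathrm c}^{\gamma/3}-1$, with coefficients depending on $(\gamma,\nu,\theta)$. For a radially compressed ball, the Remark following Definition~\ref{IDstatic} tells us it suffices to impose $\rho_{\mathrm c}>0$ \eqref{rhoc}, $p_{\mathrm c}>0$, the wave-speed conditions \eqref{InitialData}, and the dominant energy condition \eqref{DEC}, namely $\rho_{\mathrm c}\ge p_{\mathrm c}$. Because the model has constant longitudinal speed \eqref{CLConst}, the requirement $0<c^2_\mathrm{L}(\delta_{\mathrm c},\delta_{\mathrm c})\le1$ is simply $1<\gamma\le2$ and carries no new information; the nontrivial content of \eqref{InitialData} is therefore $0\le c^2_\mathrm{T}(\delta_{\mathrm c},\delta_{\mathrm c})\le1$ together with $c^2_\mathrm{L}-\tfrac43\,c^2_\mathrm{T}>0$, the latter being equivalent to positivity of the central bulk modulus $K>0$ via \eqref{Kcenter}.

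First I would establish that compression forces $X>0$, i.e. $\delta_{\mathrm c}>1$: at the reference value $X=0$ one has $p_{\mathrm c}=0$, and the bracketed cubic factor in the explicit expression for $p_{\mathrm c}$ derived above equals $\tfrac{1+\nu}{1-\nu}>0$ there, so (using $L>0$) the condition $p_{\mathrm c}>0$ selects $X>0$ near the reference state, while the remaining inequalities exclude the spurious branch $X<0$. Abbreviating $a:=\frac{1-(2-\gamma)\nu}{\gamma(\gamma-1)(1-\nu)}$, which is positive throughout $1<\gamma\le2$, $-1<\nu\le\tfrac12$, the central transverse speed reads $c^2_\mathrm{T}(\delta_{\mathrm c},\delta_{\mathrm c})=\bigl(\tfrac{\gamma}{2}\theta X+\tfrac12\tfrac{1-2\nu}{1-\nu}\bigr)\big/\bigl((a-\theta)X^2+2aX+\tfrac1{\gamma-1}\bigr)$, and the expressions for $p_{\mathrm c}$, for $K$ and for $\rho_{\mathrm c}-p_{\mathrm c}$ are polynomials in $X$ whose leading coefficients are governed by $a-\theta$ and whose linear coefficients involve $2\gamma-3$. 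The entire case structure of the statement emerges from the signs of these coefficients.

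The next step is the sign analysis. I would show the common denominator $(a-\theta)X^2+2aX+\tfrac1{\gamma-1}$, equal up to a positive factor to $\rho_{\mathrm c}+p_{\mathrm c}>0$, stays positive on the admissible range, so that each rational condition reduces to a polynomial one. Reality $c^2_\mathrm{T}\ge0$ then hinges on the sign of $\theta X+\tfrac1\gamma\tfrac{1-2\nu}{1-\nu}$: for $\theta\ge0$ this is automatic for all $X>0$, whereas for $\theta<0$ it fails beyond the threshold $X=-\tfrac1{\gamma\theta}\tfrac{1-2\nu}{1-\nu}$, which is exactly the upper cutoff on $\delta_{\mathrm c}$ appearing in the second case. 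The hyperbolicity/causality bound $c^2_\mathrm{T}<\tfrac34(\gamma-1)$ (i.e. $K>0$) and the DEC $\rho_{\mathrm c}\ge p_{\mathrm c}$ then carve out the admissible window of $\theta$: requiring the relevant quadratics in $X$ to stay nonnegative for the allowed $X$ produces the upper bound $\theta\le\tfrac{3(\gamma-1)}{\gamma}a$ when $1<\gamma\le\tfrac32$ and, once $2\gamma-3$ changes sign, the sharper cap $\theta\le a$ together with the lower bound $\theta\ge-\tfrac{3(2-\gamma)}{\gamma}a$ for $\tfrac32\le\gamma\le2$. Continuity at $\gamma=\tfrac32$, where $\tfrac{3(\gamma-1)}{\gamma}a=a$, matches the two regimes.

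The main obstacle is precisely that, unlike the LIS and AIS models where the central wave speeds are independent of $\delta_{\mathrm c}$, here every condition genuinely depends on $X$, so one cannot read off constraints on $(\gamma,\nu,\theta)$ pointwise. One must instead determine, for each sign regime of the coefficients, whether there is a nonempty interval of $X>0$ on which $p_{\mathrm c}>0$, $0\le c^2_\mathrm{T}\le\tfrac34(\gamma-1)$ and $\rho_{\mathrm c}\ge p_{\mathrm c}$ hold simultaneously, and then characterize that interval; this is where the careful bookkeeping of the cubic for $p_{\mathrm c}$ and of the quadratic numerator of $K$ enters. Since the statement asserts only \emph{necessary} conditions, I would not need joint sufficiency — it suffices to show each listed inequality is forced by admissibility at the center. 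I expect the lower bound $\theta\ge-\tfrac{3(2-\gamma)}{\gamma}a$ to be the most delicate, as it stems from keeping $\rho_{\mathrm c}-p_{\mathrm c}\ge0$ (equivalently the $X$-cubic in the displayed DEC expression nonnegative) rather than from the wave speeds, and hence requires combining the energy condition with the monotonicity of that cubic in $X$.
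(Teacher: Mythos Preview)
Your approach is correct and is precisely what the paper does, though the paper states the proposition without writing out a proof: it simply displays the central quantities $\rho_c$, $p_c$, $\rho_c-p_c$, $c^2_\mathrm{T}(\delta_c,\delta_c)$ and $c^2_\mathrm{L}-\tfrac43 c^2_\mathrm{T}$ (all already organized as polynomials in $\delta_c^{\gamma/3}-1$ with the recurring coefficient $\frac{1-(2-\gamma)\nu}{\gamma(\gamma-1)(1-\nu)}$) and then invokes Definition~\ref{IDstatic}. Your substitution $X=\delta_c^{\gamma/3}-1$ and abbreviation $a$ are exactly the natural variables of those displayed formulas, and your attribution of each bound to its source---the $\theta<0$ cutoff on $\delta_c$ from $c^2_\mathrm{T}\ge 0$, the upper cap on $\theta$ from $c^2_\mathrm{L}-\tfrac43 c^2_\mathrm{T}>0$ with the $\gamma=\tfrac32$ crossover between $\tfrac{3(\gamma-1)}{\gamma}a$ and $a$, and the lower bound from the DEC---matches the sign structure of the paper's coefficients.
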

\begin{figure}[ht!]
	\centering
	
	\includegraphics[width=0.80\textwidth]{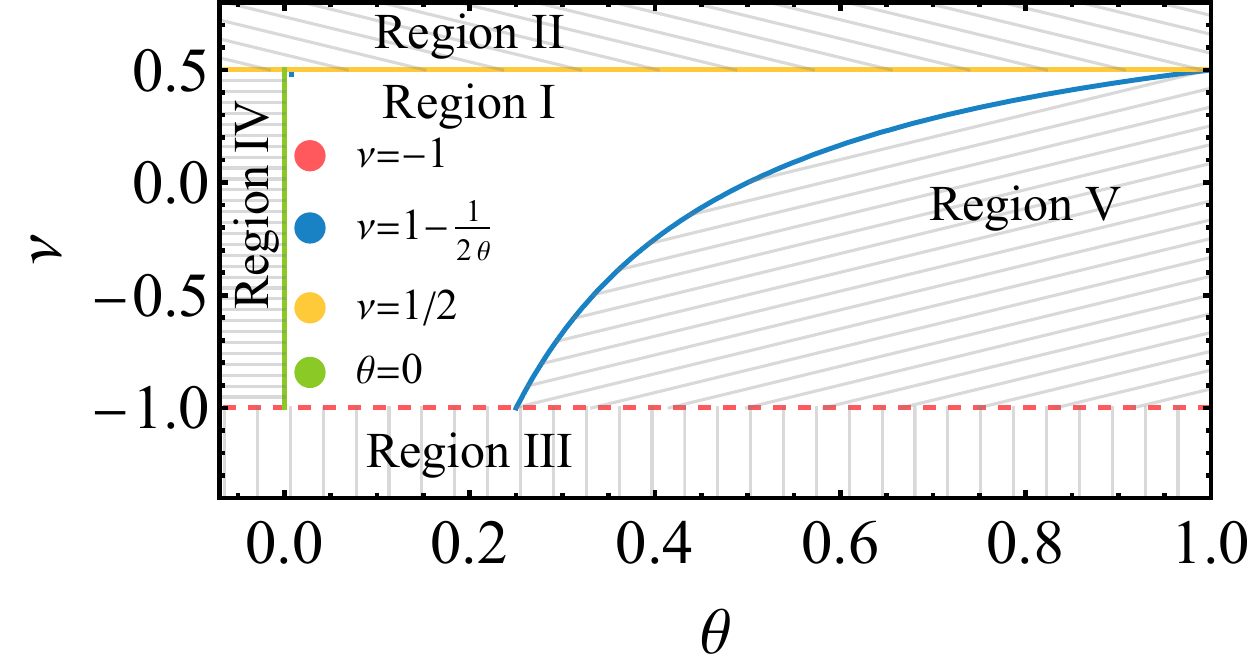} \\
	\caption{Parameter space of the ACS model. The necessary conditions for physical admissibility of the solutions split the parameter space into 5 independent regions. See main text for more details.\label{fig:ultrarigid_region}
	}
\end{figure}

Based on the above conditions, we can identify five regions in the two dimensional $(\theta-\nu)$ parameter space (see Fig.~\ref{fig:ultrarigid_region}): 

\begin{itemize}
    \item \textbf{Region I}: The necessary conditions for physical admissibility are satisfied for any $\delta_c>1$;
    \item \textbf{Region II}: Imaginary sound speeds of transverse waves when $\delta_c \to 1$;
    \item \textbf{Region III}: $c_T^2>3/4 c_L^2$ at the center for $\delta_c \to 1$ and central pressure is negative for large $\delta_c$; 
    \item \textbf{Region IV}: Breaks dominant energy condition and imaginary sound speeds of transverse waves when $\delta_c \to 1$;
    \item \textbf{Region V}: Imaginary sound speeds of transverse waves, negative energy density and pressure at the center for large $\delta_c$.
\end{itemize}

It is worthwhile to make two remarks. First, while Region I satisfies all necessary conditions for physically admissible solutions when $\delta_c>1$, nothing can be said about whether these conditions are sufficient. It is crucial to perform a numerical analysis of the model in order to explore if some of these conditions are broken somewhere within the star. We will do this analysis in the next section. Secondly, while the remaining regions do not satisfy the necessary conditions for \emph{every} central compression, they can still satisfy the necessary conditions for \emph{some} range of $\delta_c$. As an example, while in Region~II the solutions have imaginary sound speeds for configurations close to the reference state, for high values of $\delta_c$ this is no longer an issue, and we can find solutions which are free of pathologies. In our analysis, we will start by focusing first on solutions that have a physically admissible reference state, which restricts the parameter space to that of Regions~I and V. 

\newpage

\subsubsection{Numerical results}

We follow the same procedure of the previous sections and solve numerically the system of Einstein-elastic equations to compute the macroscopic properties of the configurations in the ACS model. A given self-gravitating configuration is described by five parameters: 4 elastic model parameters (the adiabatic index $\gamma$, the Poisson ratio $\nu$, a new third elastic constant $\theta$ and the reference state density $\rho_0$) and the central value for the compression $\delta_c$. Since it would be extremely time-consuming and inefficient to explore the full parameter space, we instead focus on configurations with $\gamma=2$. It is easy to see that this is the most interesting choice, as it will yield the physically admissible configurations with larger longitudinal speeds of sound $c_{\rm L}^2=\tilde{c}_{\rm L}^2=\gamma -1$. Additionally, we will also remove $\rho_0$ from the parameter space by setting $\rho_0=1$, as this amounts to fixing the scale for this analysis. Therefore, dimensionless quantities such as the compactness will not depend on this choice. 
\begin{figure}[ht!]
	\centering
 \includegraphics[width=0.475\textwidth]{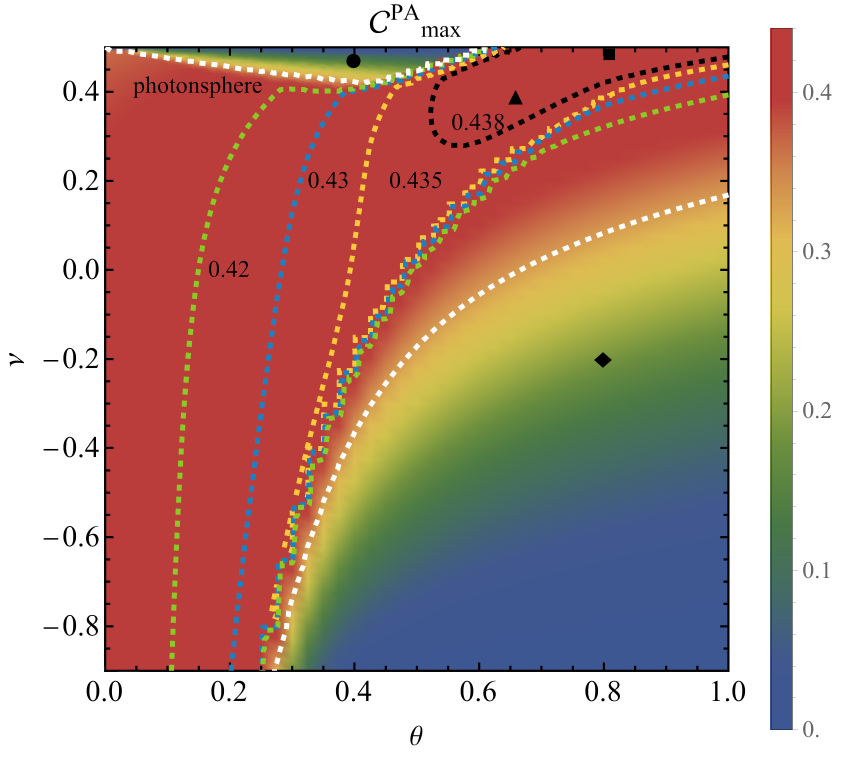}
 \includegraphics[width=0.475\textwidth]{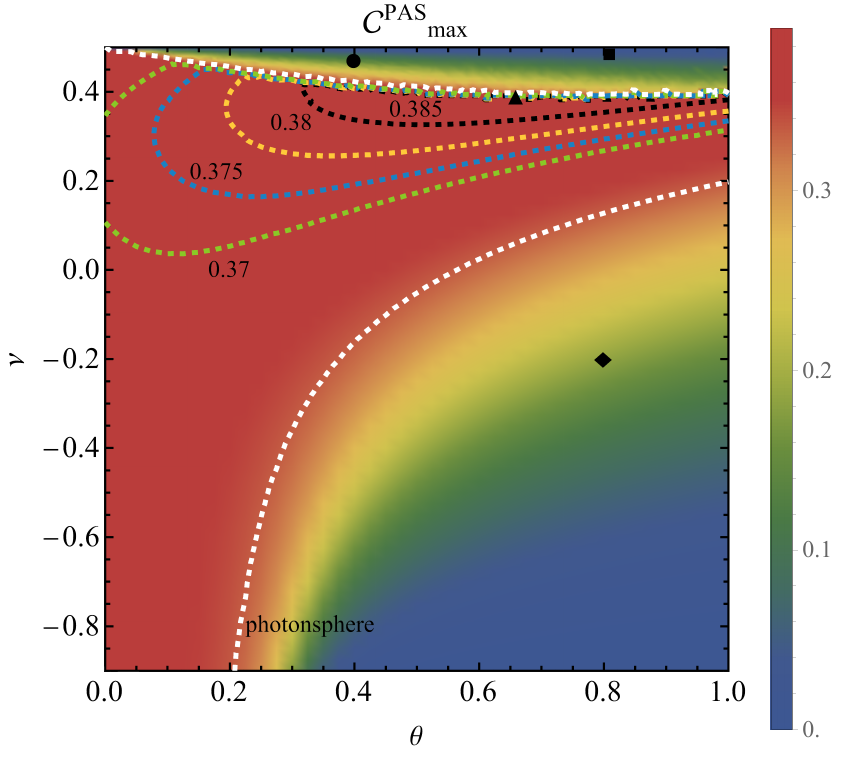} 
	\caption{Density plots of the maximum physically admissible (left) and maximum physically admissible and stable (right) compactness configurations for the ACS model. As expected, we see that the ACS model yields the highest compactness when comparing with other EoS. \label{fig:ACS_contours}
	}
\end{figure}

The main results of this analysis are presented in Fig.~\ref{fig:ACS_contours}, where we show in a density plot the maximum physically admissible (left) and physically admissible and radially stable (right) compactness possible for a configuration in the ACS model as a function of the two elastic parameters $(\nu,\theta)$. In Fig.~\ref{fig:ACS_contours} we restrict ourselves to the main region of physical interest $(0\leq \theta \leq 1, -1<\nu\leq 1/2)$. In both cases represented in the figure, we note that the contours are nested and, apart from numerical errors, there do not appear to be isolated regions with higher compactness. We note also here that the numerical noise associated with the jumps and sharp lines in the profile are related to the interpolation between different points in the three dimensional parameter space analysis. In contrast with the AIS model of Sec.~\ref{AIS}, here the maximum compactness is not reached in the fluid limit, but instead for some critical values of $\nu$ and $\theta$. Our analysis suggests that the maximum compactness that can be obtained for physically admissible configurations in this model is much higher than the maximum fluid bounds, and also than any of the previously studied elastic equations of state. For the ACS model and regular elastic parameters, we obtain the following bounds:
\begin{equation}\label{eq:ACS_bounds_region1}
    {\cal C}_{\rm max}^{\rm PA}\lesssim 0.443\,, \qquad {\cal C}_{\rm max}^{\rm PAS}\lesssim 0.384\,,
\end{equation}
for physically admissible and physically admissible and stable configurations, respectively. The points in the 2-dimensional elastic parameter space for which these bounds were obtained are marked by the black square and triangle, respectively, in Fig.~\ref{fig:ACS_contours}. It should be pointed out that the bound for physically admissible configurations is different from the value ${\cal C}^{\rm PA}_{\rm max}\lesssim 0.463$ presented in~\cite{Alho:2022bki}. The reason for this difference is related to the fact that the aforementioned bound was obtained considering only physical viability conditions, and not imposing any range for the elastic parameters. Removing the restrictions on the range of $\nu$ and exploring configurations with $\nu>1/2$ leads to the bound presented in~\cite{Alho:2022bki}. Although the configurations with $\nu>1/2$ can satisfy all physical viability conditions, as we have shown in~\cite{Alho:2022bki}, it is important to bear in mind that they are still exotic, in the sense that the matter composing the star has unphysical properties in its reference state.
\begin{figure}[ht!]
	\centering
 \includegraphics[width=0.65\textwidth]{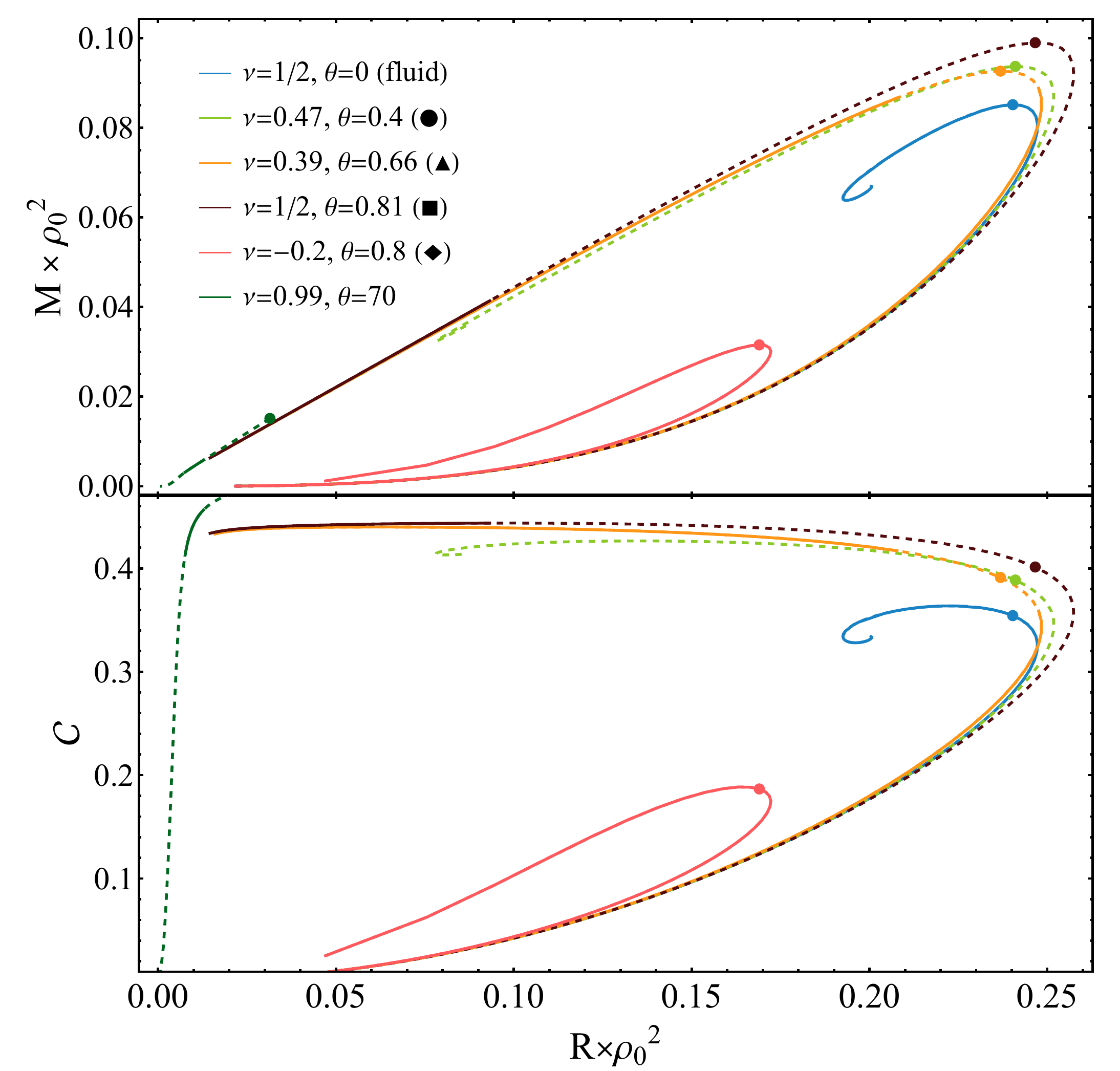}
	\caption{Mass-radius (top) and compactness-radius diagrams for the same representative configurations in the ACS model.\label{fig:MR_ACS}
	}
\end{figure}

For completeness, and to better understand Fig.~\ref{fig:ACS_contours}, the mass-radius and compactness-radius diagrams for the elastic solutions mentioned in this section are shown in Fig.~\ref{fig:MR_ACS}. The curves that contain the configurations that approximate the bounds~\eqref{eq:ACS_bounds_region1} are represented in orange and brown. With the exception of the exotic case with $\nu>1/2$, all configurations have typical mass-radius profiles when compared with the results obtained in the previous sections. 

The analysis of the mass-radius diagrams can  help us understand the low-compactness region presented towards the top of each panel in Fig.~\ref{fig:ACS_contours}, and also why the two panels are qualitatively different in this region of the parameter space. If we look at the configurations that lie close to $\nu = 1/2 $ and $\theta =0$ (e.g. black circle in Fig.~\ref{fig:ACS_contours} and light green curve in Fig.~\ref{fig:MR_ACS}), we see that they are unphysical (dashed curves) for high central compression, as $\tilde{c}_{\rm TT}^2(r)$  becomes negative within the star. As the elastic parameters increase (e.g. black square), the value of $\tilde{c}_{\rm TT}^2$ at the center also increases, and after some critical value of the elastic parameters it compensates the decreasing behaviour of $\tilde{c}_{\rm TT}^2$, allowing us to find physically admissible configurations at high density (see brown curve in Fig.~\ref{fig:MR_ACS}). A consequence of this behaviour is that some families of solutions can have unphysical configurations for intermediate values of central compression, as is represented in the orange curve in the mass-radius diagram. 

We now turn our attention to the lower compactness region in the bottom-right part of each panel. In this region, the lower compactness is a consequence of a decreasing behaviour of the central pressure with increasing central compression. This can be checked by taking the derivative of the central pressure with respect to the central compression,
\begin{equation}
    \frac{d p_{\rm rad}}{d{\delta_c}} = \frac{\delta_c (-2 \theta  (\nu -1)-1)}{2 (\nu -1)} + {\cal O} (1)\,,
\end{equation}
which  was represented in the $\delta_c\gg 1$ limit for clarity purposes. It is easy to see that in this limit the central pressure decreases when $\nu<1-1/(2\theta)$.  The red curve in Fig.~\ref{fig:MR_ACS} represents a particular configuration in this region (black diamond shape in Fig.~\ref{fig:ACS_contours}). 

Finally, the dark green curve represents the family of solutions that (approximately) yield the maximum physically admissible compactness when there is no restriction on the elastic parameters and that leads to the bound presented in~\cite{Alho:2022bki}. This solution is particularly exotic, in the sense that it is described by two branches, an initial branch at lower densities (not represented in the diagram), that has low compactness and yields typically unphysical configurations, and a second branch that appears at higher densities (represented in the figure), which is unstable and mostly unphysical, with the exception of a small section which satisfy all physical viability conditions.

	

\newpage

\subsection{Pre-stressed materials (LCS)} \label{LCS}
The family of natural pre-stressed materials with constant longitudinal wave speeds has a reference state characterized by 
\begin{equation}
p_0= \frac{\lambda+2\mu}{\gamma}-\frac{4\mu}{\gamma^2}+C_{\mathrm{KS}}\qquad\text{and}\qquad \rho_0 = \frac{\lambda+2\mu}{\gamma(\gamma-1)}+\frac{4\mu}{\gamma^2}-C_\mathrm{KS},
\end{equation}
where again $C_{\mathrm{KS}}\in\mathbb{R}$ is a third elastic constant. In Appendix~\ref{ApEx} we deduce the natural pre-stressed reference state material without symmetry assumptions in Example~\ref{KarlSam}, and its form in spherical symmetry is obtained in Appendix~\ref{ApExSS}, see~\eqref{PSCLSSS}.
The energy density is given by 
\begin{equation}\label{KSPSSS}
\begin{split}
\widehat{\rho}^{(\mathrm{ps})}(\delta,\eta) = &  \,\, \left(\frac{\lambda+2\mu}{\gamma(\gamma-1)} - \frac{2\mu}{\gamma^2}-C_\mathrm{KS}\right) \delta^\gamma  + \left(\frac{2\mu}{\gamma^2}-C_\mathrm{KS}\right)\eta^{\gamma/3}\left(2+\left(\frac{\delta}{\eta}\right)^\gamma\right) \\
&+ C_\mathrm{KS}\eta^{2\gamma/3}\left(1+2\left(\frac{\delta}{\eta}\right)^\gamma\right).
\end{split}
\end{equation}
%
 \subsubsection{Invariants under renormalization of the reference state}
%
If we take a reference state that is compressed by a factor of $f$ (in volume) with respect to the original reference state, then the new variables $(\tilde\delta,\tilde\eta)$ are related to the original variables $(\delta,\eta)$ by
\begin{equation}
\delta = f \tilde\delta, \qquad  \eta = f \tilde\eta.
\end{equation}
If we write the energy density as
\begin{equation}
\widehat{\rho}^{(\mathrm{ps})}(\delta,\eta) = A \delta^\gamma  + B \eta^{\gamma/3}\left(2+\left(\frac{\delta}{\eta}\right)^\gamma\right) + C_\mathrm{KS} \eta^{2\gamma/3}\left(1+2\left(\frac{\delta}{\eta}\right)^\gamma\right),
\end{equation}
then it can be rewritten as
\begin{equation}
\widehat{\rho}^{(\mathrm{ps})}(\tilde\delta,\tilde\eta) = A f^\gamma {\tilde\delta}^\gamma  + B f^{\gamma/3}{\tilde\eta}^{\gamma/3}\left(2+\left(\frac{\tilde\delta}{\tilde\eta}\right)^\gamma\right) + C_\mathrm{KS} f^{2\gamma/3}{\tilde\eta}^{2\gamma/3}\left(1+2\left(\frac{\tilde\delta}{\tilde\eta}\right)^\gamma\right).
\end{equation}
If on the other hand we write it as
\begin{equation}
\widehat{\rho}^{(\mathrm{ps})}(\tilde\delta,\tilde\eta) = \tilde{A} {\tilde\delta}^\gamma + \tilde{B}{\tilde\eta}^{\gamma/3}\left(2+\left(\frac{\tilde\delta}{\tilde\eta}\right)^\gamma\right) + \tilde{C}_\mathrm{KS}{\tilde\eta}^{2\gamma/3}\left(1+2\left(\frac{\tilde\delta}{\tilde\eta}\right)^\gamma\right),
\end{equation}
we obtain, setting $L=\lambda+2\mu$,
\begin{subequations}
\begin{align}
\tilde{L} &= \left[L-\gamma(\gamma-1)\left(\frac{2\mu}{\gamma^2}+C_\mathrm{KS}+\left(\frac{2\mu}{\gamma^2}-C_\mathrm{KS}\right)f^{-\gamma/3}\right)\left(1-f^{-\gamma/3}\right)\right] f^\gamma, \\
\tilde{\mu} &= \left(\mu-\frac{\gamma^2 C_\mathrm{KS}}{2}\left(1-f^{\gamma/3}\right) \right)f^{\gamma/3}, \\ \tilde{C}_\mathrm{KS} &= f^{2\gamma/3} C_\mathrm{KS}.
\end{align}
\end{subequations}
Therefore, by changing the reference state we obtain an equivalent description of the material, moving from the parameters $(L,\mu,C_\mathrm{KS})$ to $(\tilde{L},\tilde{\mu},\tilde{C}_\mathrm{KS})$; the choice a of reference state is akin to a choice of gauge. A convenient choice of gauge-invariant parameters are given by
\begin{equation}
     \mathcal{K}=\frac{(\gamma-1)A}{\varrho^{\gamma}_0}, \qquad \mathcal{E}_1=\frac{((\gamma-1)A)^{\frac{1}{3}\left(\frac{3-\gamma}{\gamma-1}\right)}B}{\varrho^{\frac{2\gamma}{3(\gamma-1)}}_0}, \qquad \mathcal{E}_2 = \frac{((\gamma-1)A)^{\frac{1}{3}\left(\frac{3-2\gamma}{\gamma-1}\right)}C_\mathrm{KS}}{\varrho^{\frac{\gamma}{3(\gamma-1)}}_0},
\end{equation}
where $\varrho_0$ is the baryonic mass density of the reference state. Note that $\mathcal{E}_1$ and $\mathcal{E}_2$ are dimensionless. Other (dimensionful) invariant quantities related to elasticity are
\begin{equation}
\varrho(\delta)=\varrho_0\delta, \qquad
 \varsigma(\eta)=\varrho_0\eta. 
\end{equation}
Written in terms of the above invariants, the energy density reads
\begin{equation}
    \widehat{\rho}^{(\mathrm{ps})}(\varrho,\varsigma)=\frac{\mathcal{K}}{\gamma-1}\varrho^{\gamma}+\mathcal{E}_1 \mathcal{K}^{-\frac{1}{3}\left(\frac{3-\gamma}{\gamma-1}\right)}\varsigma^{\gamma/3}\left(2+\left(\frac{\varrho}{\varsigma}\right)^{\gamma}\right)+\mathcal{E}_2\mathcal{K}^{-\frac{1}{3}\left(\frac{3-2\gamma}{\gamma-1}\right)}\varsigma^{2\gamma/3}\left(1+2\left(\frac{\varrho}{\varsigma}\right)^{\gamma}\right).
\end{equation}
%
%
\subsubsection{Invariants under scaling of the baryonic mass density}
%
Unlike in the case of the polytropes, the EoS for the LCS material is insensitive to the baryonic mass density (i.e., this material is ultra-relativistic, see Remark~\ref{UltraRelExtraSym}). Therefore, there is an extra  invariance, corresponding to scaling the reference state baryonic mass density (while keeping $\delta$ and $\eta$ fixed):
\begin{equation}
    \tilde\varrho_0 = f \varrho_0.
\end{equation}
Notice that this is quite different from renormalizing the reference state, which is simply a change in the description of a \emph{fixed} material: here we are modifying the material itself by adding more baryons per unit volume, so that its (gauge-invariant) elastic parameters will change. Under this scaling of $\varrho_0$, we have the new gauge-invariant quantities
\begin{equation}
    \tilde\varrho = f \varrho, \qquad \tilde\varsigma = f \varsigma,
\end{equation}
and so
\begin{equation}
    \widehat{\rho}^{(\mathrm{ps})}(\tilde\varrho,\tilde\varsigma)=\frac{\mathcal{K}}{\gamma-1}f^{\gamma}{\tilde\varrho}^{\gamma}+\mathcal{E}_1 \mathcal{K}^{-\frac{1}{3}\left(\frac{3-\gamma}{\gamma-1}\right)}f^{\gamma/3}{\tilde\varsigma}^{\gamma/3}\left(2+\left(\frac{\tilde\varrho}{\tilde\varsigma}\right)^{\gamma}\right)+\mathcal{E}_2\mathcal{K}^{-\frac{1}{3}\left(\frac{3-2\gamma}{\gamma-1}\right)}f^{2\gamma/3}{\tilde\varsigma}^{2\gamma/3}\left(1+2\left(\frac{\tilde\varrho}{\tilde\varsigma}\right)^{\gamma}\right).
\end{equation}
This can be rewritten as
\begin{equation}
    \widehat{\rho}^{(\mathrm{ps})}(\tilde\varrho,\tilde\varsigma)=\frac{\mathcal{\tilde K}}{\gamma-1}{\tilde\varrho}^{\gamma}+\mathcal{\tilde E}_1 \mathcal{\tilde K}^{-\frac{1}{3}\left(\frac{3-\gamma}{\gamma-1}\right)}{\tilde\varsigma}^{\gamma/3}\left(2+\left(\frac{\tilde\varrho}{\tilde\varsigma}\right)^{\gamma}\right)+\mathcal{\tilde E}_2\mathcal{\tilde K}^{-\frac{1}{3}\left(\frac{3-2\gamma}{\gamma-1}\right)}{\tilde\varsigma}^{2\gamma/3}\left(1+2\left(\frac{\tilde\varrho}{\tilde\varsigma}\right)^{\gamma}\right),
\end{equation}
provided that we define
\begin{equation}
    \mathcal{\tilde K} = f^\gamma \mathcal{K}, \qquad \mathcal{\tilde E}_1 = f^{\frac{2\gamma}{3(\gamma-1)}} \mathcal{E}_1, \qquad \mathcal{\tilde E}_2 = f^{\frac{\gamma}{3(\gamma-1)}} \mathcal{E}_2.
\end{equation}
So by changing the reference state baryonic mass density we move from the material with parameters $(\mathcal{K},\mathcal{E}_1,\mathcal{E}_2)$ to the material with parameters $(\mathcal{\tilde K},\mathcal{\tilde E}_1,\mathcal{\tilde E}_2)$, while leaving the energy density unchanged. This means that, up to a scale, these two materials have the same behaviour. If we set 
\begin{equation}
    \mathcal{E}_1 = \mathcal{E}^2 \cos \theta, \qquad \mathcal{E}_2 = \mathcal{E} \sin \theta,
\end{equation}
with $\mathcal{E}\geq 0$ and $\theta \in [0,2\pi]$, then the new adimensional elastic parameters $(\mathcal{E},\theta)$ transform under the rescaling as
\begin{equation}
    \mathcal{\tilde E} = f^{\frac{\gamma}{3(\gamma-1)}} \mathcal{E}, \qquad \tilde \theta = \theta.
\end{equation}
Therefore, up to a scale, the behaviour of LCS materials depends only on the parameter $\theta$.
Indeed, written in terms of the invariants 
\begin{equation}
    \bar{\varrho}=\frac{\mathcal{K}^{1/(\gamma-1)}}{\mathcal{E}^{3/\gamma}}\varrho, \qquad     \bar{\varsigma}=\frac{\mathcal{K}^{1/(\gamma-1)}}{\mathcal{E}^{3/\gamma}}\varsigma,
\end{equation}
the energy density becomes
\begin{equation}
    \widehat{\rho}^{(\mathrm{ps})}(\bar{\varrho},\bar{\varsigma})=\frac{\mathcal{E}^3}{\mathcal{K}^{1/(\gamma-1)}}\left[\frac{\bar{\varrho}^{\gamma}}{\gamma-1}+\cos{\theta}\bar{\varsigma}^{\gamma/3}\left(2+\left(\frac{\bar{\varrho}}{\bar{\varsigma}}\right)^{\gamma}\right)+\sin{\theta}\bar{\varsigma}^{2\gamma/3}\left(1+2\left(\frac{\bar{\varrho}}{\bar{\varsigma}}\right)^{\gamma}\right)\right].
\end{equation}
The radial and tangential pressures are given by
\begin{equation}
    \widehat{p}^{(\mathrm{ps})}_\mathrm{rad}(\bar{\varrho},\bar{\varsigma})=(\gamma-1) \widehat{\rho}^{(\mathrm{ps})}(\bar{\varrho},\bar{\varsigma})-\frac{\mathcal{E}^3}{\mathcal{K}^{1/(\gamma-1)}}\left(2\gamma\cos{\theta}\bar{\varsigma}^{\gamma/3}+\gamma\sin{\theta}\bar{\varsigma}^{2\gamma/3}\right),
\end{equation}
\begin{equation}
    \widehat{p}^{(\mathrm{ps})}_\mathrm{tan}(\bar{\varrho},\bar{\varsigma}) = (\gamma-1) \widehat{\rho}^{(\mathrm{ps})}(\bar{\varrho},\bar{\varsigma})-\frac{\mathcal{E}^3}{\mathcal{K}^{1/(\gamma-1)}}\left(\gamma\cos{\theta}\bar{\varsigma}^{\gamma/3}\left(1+\left(\frac{\bar{\varrho}}{\bar{\varsigma}}\right)^{\gamma}\right)+\gamma\sin{\theta}\bar{\varsigma}^{2\gamma/3}\left(\frac{\bar{\varrho}}{\bar{\varsigma}}\right)^{\gamma}\right),
\end{equation}
and the isotropic and anisotropic pressures are
\begin{equation}
    \widehat{p}^{(\mathrm{ps})}_\mathrm{iso}(\bar{\varrho},\bar{\varsigma}) =  (\gamma-1) \widehat{\rho}^{(\mathrm{ps})}(\bar{\varrho},\bar{\varsigma})-\frac{\mathcal{E}^3}{\mathcal{K}^{1/(\gamma-1)}}\left[\frac{2\gamma}{3}\cos{\theta}\bar{\varsigma}^{\gamma/3}\left(2+\left(\frac{\bar{\varrho}}{\bar{\varsigma}}\right)^{\gamma}\right)+\frac{\gamma}{3}\sin{\theta}\bar{\varsigma}^{2\gamma/3}\left(1+2\left(\frac{\bar{\varrho}}{\bar{\varsigma}}\right)^{\gamma}\right)\right],
\end{equation}
\begin{equation}
    \widehat{q}^{(\mathrm{ps})}(\bar{\varrho},\bar{\varsigma}) = \frac{\mathcal{E}^3}{\mathcal{K}^{1/(\gamma-1)}}\left(\gamma\cos{\theta}\bar{\varsigma}^{\gamma/3}+\gamma\sin{\theta}\bar{\varsigma}^{2\gamma/3}\right)\left(1-\left(\frac{\bar{\varrho}}{\bar{\varsigma}}\right)^{\gamma}\right).
\end{equation}
The transverse wave speeds are
\begin{equation}
c^{2}_\mathrm{T}(\bar{\varrho},\bar{\varsigma})= \frac{\left(\cos{\theta}\bar{\varsigma}^{\gamma/3}+\sin{\theta}\bar{\varsigma}^{2\gamma/3}\right)\left(1-\left(\frac{\bar{\varrho}}{\bar{\varsigma}}\right)^{\gamma}\right)}{\left[\frac{\bar{\varrho}^{\gamma}}{\gamma-1}+\cos{\theta}\bar{\varsigma}^{\gamma/3}+\sin{\theta}\bar{\varsigma}^{2\gamma/3}\left(1+\left(\frac{\bar{\varrho}}{\bar{\varsigma}}\right)^{\gamma}\right)\right]\left(1-\left(\frac{\bar{\varrho}}{\bar{\varsigma}}\right)^{2}\right)},
\end{equation}
\begin{equation}
\tilde{c}^{2}_\mathrm{T}(\bar{\varrho},\bar{\varsigma})=\left(\frac{\bar{\varrho}}{\bar{\varsigma}}\right)^{2-\gamma}\frac{\left(\cos{\theta}\bar{\varsigma}^{\gamma/3}+\sin{\theta}\bar{\varsigma}^{2\gamma/3}\right)\left(1-\left(\frac{\bar{\varrho}}{\bar{\varsigma}}\right)^{\gamma}\right)}{\left[\frac{\bar{\varrho}^{\gamma}}{\gamma-1}+\cos{\theta}\bar{\varsigma}^{\gamma/3}+2\sin{\theta}\bar{\varsigma}^{2\gamma/3}\right]\left(1-\left(\frac{\bar{\varrho}}{\bar{\varsigma}}\right)^{2}\right)},
\end{equation}
and
\begin{equation}
\tilde{c}^2_\mathrm{TT}(\bar{\varrho},\bar{\varsigma})=\frac{\gamma}{2}\frac{\cos{\theta}\bar{\varsigma}^{\gamma/3}+\sin{\theta}\bar{\varsigma}^{2\gamma/3}\left(\frac{\bar{\varrho}}{\bar{\varsigma}}\right)^{\gamma}}{\frac{\bar{\varrho}^{\gamma}}{\gamma-1}+\cos{\theta}\bar{\varsigma}^{\gamma/3}+\sin{\theta}\bar{\varsigma}^{2\gamma/3}\left(1+\left(\frac{\bar{\varrho}}{\bar{\varsigma}}\right)^{\gamma}\right)}.
\end{equation}
%

\subsubsection{Center of symmetry}%
For static solutions with a regular center of symmetry, $\eta(0)=\delta(0)=\delta_\mathrm{c}>0$, and so we have for the invariants $\bar{\varsigma}(\delta_c)=\bar{\varrho}(\delta_\mathrm{c})=\bar{\varrho}_\mathrm{c}$. The central energy density $\rho_\mathrm{c}=\widehat{\rho}(\delta_\mathrm{c},\delta_\mathrm{c})$ and the central pressure $p_\mathrm{c} =\widehat{p}_\mathrm{rad}(\delta_\mathrm{c},\delta_\mathrm{c})=\widehat{p}_\mathrm{tan}(\delta_\mathrm{c},\delta_\mathrm{c})$ are given by
\begin{equation}
\rho^{(\mathrm{ps})}_\mathrm{c} = \frac{\mathcal{E}^3\bar{\varrho}^{\gamma/3}_\mathrm{c}}{\mathcal{K}^{1/(\gamma-1)}}\left[\frac{\bar{\varrho}^{2\gamma/3}_\mathrm{c}}{\gamma-1}+3\sin{\theta}\bar{\varrho}^{\gamma/3}_\mathrm{c}+3\cos{\theta}\right],
\end{equation}
\begin{equation}
p^{(\mathrm{ps})}_\mathrm{c} = \frac{\mathcal{E}^3\bar{\varrho}^{\gamma/3}_\mathrm{c}}{\mathcal{K}^{1/(\gamma-1)}}\left[\bar{\varrho}^{2\gamma/3}_\mathrm{c}-(3-2\gamma)\sin{\theta}\bar{\varrho}^{\gamma/3}_\mathrm{c}-(3-\gamma)\cos{\theta}\right].
\end{equation}
From equation~\eqref{cTCenter}, the transverse wave speeds at the center of symmetry are
\begin{subequations}
	\begin{align}
	c^2_\mathrm{T}(\bar{\varrho}_\mathrm{c},\bar{\varrho}_\mathrm{c}) &= \frac{\gamma}{2}(\gamma-1)\frac{\sin{\theta}\bar{\varrho}^{\gamma/3}_\mathrm{c}+\cos{\theta}}{\bar{\varrho}^{2\gamma/3}_\mathrm{c}+2(\gamma-1)\sin{\theta}\bar{\varrho}^{\gamma/3}_\mathrm{c}+(\gamma-1)\cos{\theta}},
	\end{align}
\end{subequations}
so that
\begin{equation}
c^2_\mathrm{L}-\frac{4}{3}c^2_\mathrm{T}(\bar{\varrho}_\mathrm{c},\bar{\varrho}_\mathrm{c}) = (\gamma-1)\left[\frac{\bar{\varrho}^{2\gamma/3}_\mathrm{c}-\frac{2}{3}\left(3-2\gamma\right)\sin{\theta}\bar{\varrho}^{\gamma/3}_\mathrm{c}-\frac{1}{3}(3-\gamma)\cos{\theta}}{\bar{\varrho}^{2\gamma/3}_\mathrm{c}+2(\gamma-1)\sin{\theta}\bar{\varrho}^{\gamma/3}_\mathrm{c}+(\gamma-1)\cos{\theta}}\right].
\end{equation}

From the Definition~\ref{IDstatic} of physically admissible initial data for equilibrium configurations, we deduce the following necessary conditions for the existence of physically admissible regular ball solutions:
\begin{proposition}
	A set of necessary conditions for the existence of physically admissible radially compressed ball solutions is given by 
 \begin{equation}
  \mathcal{K}>0   ,\quad \mathcal{E}>0,
 \end{equation}
 and
	\begin{equation}\label{eq:nec_conds_ultrarigid_prestressed}
	 \begin{cases}
	&\frac{1}{2^{3/\gamma}}\left[(3-2\gamma)\sin{\theta}+\sqrt{(3-2\gamma)^2\sin^2{\theta}+4(3-\gamma)\cos{\theta}}\right]^{3/\gamma}<\bar{\varrho}_\mathrm{c}<\infty ,\quad1<\gamma\leq2, \quad  \theta\in[0,\frac{\pi}{2}];\\
    &(-\cot{\theta})^{3/\gamma}<\bar{\varrho}_\mathrm{c}<\infty ,\quad\frac{3}{2}\leq\gamma\leq2,\quad\theta\in(\frac{\pi}{2},\pi) ;\\
    &(-\cot{\theta})^{3/\gamma}<\bar{\varrho}_\mathrm{c}<\infty ,\quad 1<\gamma<\frac{3}{2},\quad 
    \theta\in(\pi-\arccos{a},\pi), \quad a=\frac{3(3-\gamma)}{2(3-2\gamma)^2}\left(1+\sqrt{1+\frac{4(3-2\gamma)^4}{9(3-\gamma)^2}}\right) ;\\
    & \frac{1}{2^{3/\gamma}}\left[(3-2\gamma)\sin{\theta}+\sqrt{(3-2\gamma)^2\sin^2{\theta}+4(3-\gamma)\cos{\theta}}\right]^{3/\gamma}<\bar{\varrho}_\mathrm{c}\leq(-\cot{\theta})^{3/\gamma},\quad \theta\in(\frac{3\pi}{2},2\pi).
	\end{cases} 
	\end{equation}
	%
	%
	%
\end{proposition}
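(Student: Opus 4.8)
The plan is to translate each physical-admissibility requirement of Definition~\ref{IDstatic} into an explicit inequality on the central compression, using the closed-form central expressions for $\rho_\mathrm{c}$, $p_\mathrm{c}$, $c^2_\mathrm{T}(\bar{\varrho}_\mathrm{c},\bar{\varrho}_\mathrm{c})$ and $c^2_\mathrm{L}-\tfrac43 c^2_\mathrm{T}(\bar{\varrho}_\mathrm{c},\bar{\varrho}_\mathrm{c})$ derived just above. First I would set $X=\bar{\varrho}_\mathrm{c}^{\gamma/3}>0$, which turns every relevant central quantity into a ratio of quadratics in $X$, all sharing the common positive prefactor $\mathcal{E}^3/\mathcal{K}^{1/(\gamma-1)}$ and the common denominator $D(X)=X^2+2(\gamma-1)\sin\theta\,X+(\gamma-1)\cos\theta$. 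The two overall conditions come first: $\mathcal{E}>0$ is forced by $\rho_\mathrm{c}>0$ (since $\mathcal{E}=0$ gives $\rho\equiv0$), while $\mathcal{K}>0$ is needed for the prefactor to be real and positive and for the strict-hyperbolicity identity $c^2_\mathrm{L}=\gamma-1>0$. With $\gamma\in(1,2]$ the longitudinal bounds $0<c^2_\mathrm{L}\leq1$ are then automatic and do not further restrict the parameters.

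Next I would read off the binding inequalities from the central formulas. Positivity of the central pressure, $p_\mathrm{c}>0$, reduces to $X^2-(3-2\gamma)\sin\theta\,X-(3-\gamma)\cos\theta>0$; reality of the transverse speed, $c^2_\mathrm{T}\geq0$, reduces to $\operatorname{sgn}(\sin\theta\,X+\cos\theta)=\operatorname{sgn}D(X)$; the shear condition $c^2_\mathrm{L}>\tfrac43 c^2_\mathrm{T}$ reduces to $X^2-\tfrac23(3-2\gamma)\sin\theta\,X-\tfrac13(3-\gamma)\cos\theta>0$; and the dominant energy condition $\rho_\mathrm{c}\geq p_\mathrm{c}$ yields one further quadratic inequality. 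Each is solved by the standard root formula, keeping careful track of the signs of $\sin\theta$ and $\cos\theta$, and I would also verify that causality $c^2_\mathrm{T}\leq1$ is non-binding in the claimed ranges.

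The core of the argument is then the case analysis in $\theta\in[0,2\pi]$, carried out in the same spirit as the companion ACS model of Section~\ref{ACS}. For $\theta\in[0,\tfrac{\pi}{2}]$ one has $\sin\theta,\cos\theta\geq0$, so $D(X)$ and the transverse-speed numerator are positive for all $X>0$, and one checks that $p_\mathrm{c}>0$ is the binding constraint; its positive root gives, after undoing $X=\bar{\varrho}_\mathrm{c}^{\gamma/3}$, the threshold $\tfrac{1}{2^{3/\gamma}}\bigl[(3-2\gamma)\sin\theta+\sqrt{(3-2\gamma)^2\sin^2\theta+4(3-\gamma)\cos\theta}\bigr]^{3/\gamma}$. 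For $\theta\in(\tfrac{\pi}{2},\pi)$ one has $\cos\theta<0$, so transverse-reality forces $X>-\cot\theta$, giving the bound $(-\cot\theta)^{3/\gamma}$; the split between $\gamma\geq\tfrac32$ and $\gamma<\tfrac32$ reflects whether the shear and DEC quadratics are automatically satisfied once $X>-\cot\theta$. For $\theta\in(\tfrac{3\pi}{2},2\pi)$, $\sin\theta<0$ reverses the transverse-reality inequality into the upper bound $\bar{\varrho}_\mathrm{c}\leq(-\cot\theta)^{3/\gamma}$, combined with the lower bound from $p_\mathrm{c}>0$.

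The hard part will be the subcase $1<\gamma<\tfrac32$, $\theta\in(\pi-\arccos a,\pi)$, where the quadratics governing $p_\mathrm{c}>0$, $c^2_\mathrm{L}>\tfrac43 c^2_\mathrm{T}$ and the DEC have competing roots and one must show their solution sets intersect the transverse-reality region $X>-\cot\theta$. The threshold angle $\pi-\arccos a$, with $a=\tfrac{3(3-\gamma)}{2(3-2\gamma)^2}\bigl(1+\sqrt{1+\tfrac{4(3-2\gamma)^4}{9(3-\gamma)^2}}\bigr)$, is precisely the value at which the relevant discriminant changes sign, i.e. where these competing constraints first become simultaneously satisfiable; extracting it amounts to solving the compatibility condition between two of the quadratics, which is the one genuinely delicate computation and the place where I would focus the detailed work.
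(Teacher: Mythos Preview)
Your approach is exactly the intended one: the paper does not give a detailed proof of this proposition, stating only that the conditions follow from Definition~\ref{IDstatic} applied to the explicit central expressions for $\rho_\mathrm{c}$, $p_\mathrm{c}$, $c^2_\mathrm{T}(\bar\varrho_\mathrm{c},\bar\varrho_\mathrm{c})$ and $c^2_\mathrm{L}-\tfrac43 c^2_\mathrm{T}$ displayed just above it. Your substitution $X=\bar\varrho_\mathrm{c}^{\gamma/3}$, which reduces every constraint to a quadratic in $X$ with the common denominator $D(X)$, and the subsequent quadrant-by-quadrant analysis in $\theta$, is precisely the computation the authors leave to the reader, in the same spirit as the region analysis carried out for the ACS model in Section~\ref{ACS}.

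One small correction: your justification that ``$\mathcal{K}>0$ is needed \ldots\ for the strict-hyperbolicity identity $c^2_\mathrm{L}=\gamma-1>0$'' is off, since for the LCS model $c^2_\mathrm{L}=\gamma-1$ identically, independent of $\mathcal{K}$. The positivity of $\mathcal{K}$ comes solely from requiring the prefactor $\mathcal{E}^3/\mathcal{K}^{1/(\gamma-1)}$ to be real and positive so that $\rho_\mathrm{c}>0$ and $p_\mathrm{c}>0$ can hold with the bracket expressions positive; this is harmless but worth stating correctly.
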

%

\subsubsection{Numerical results}
%
\begin{figure}[ht!]
	\centering
	\includegraphics[width=0.45\textwidth]{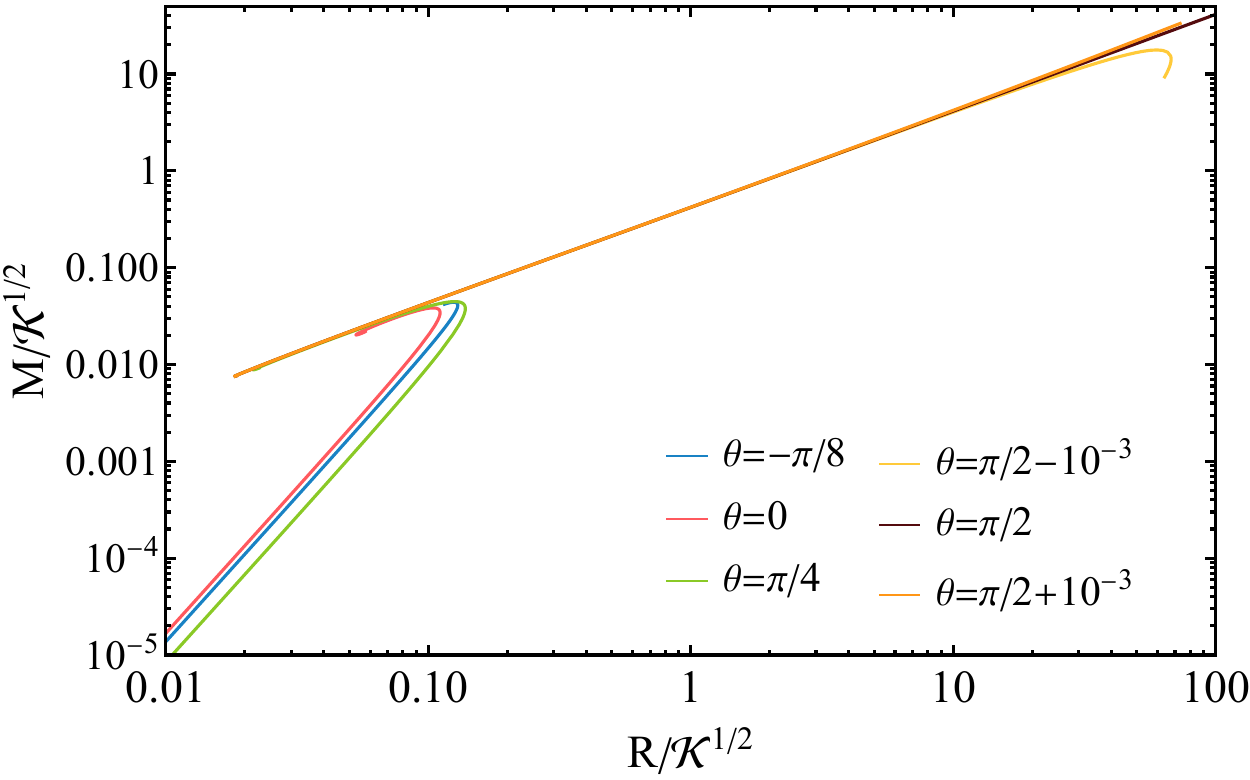} 
 \includegraphics[width=0.45\textwidth]{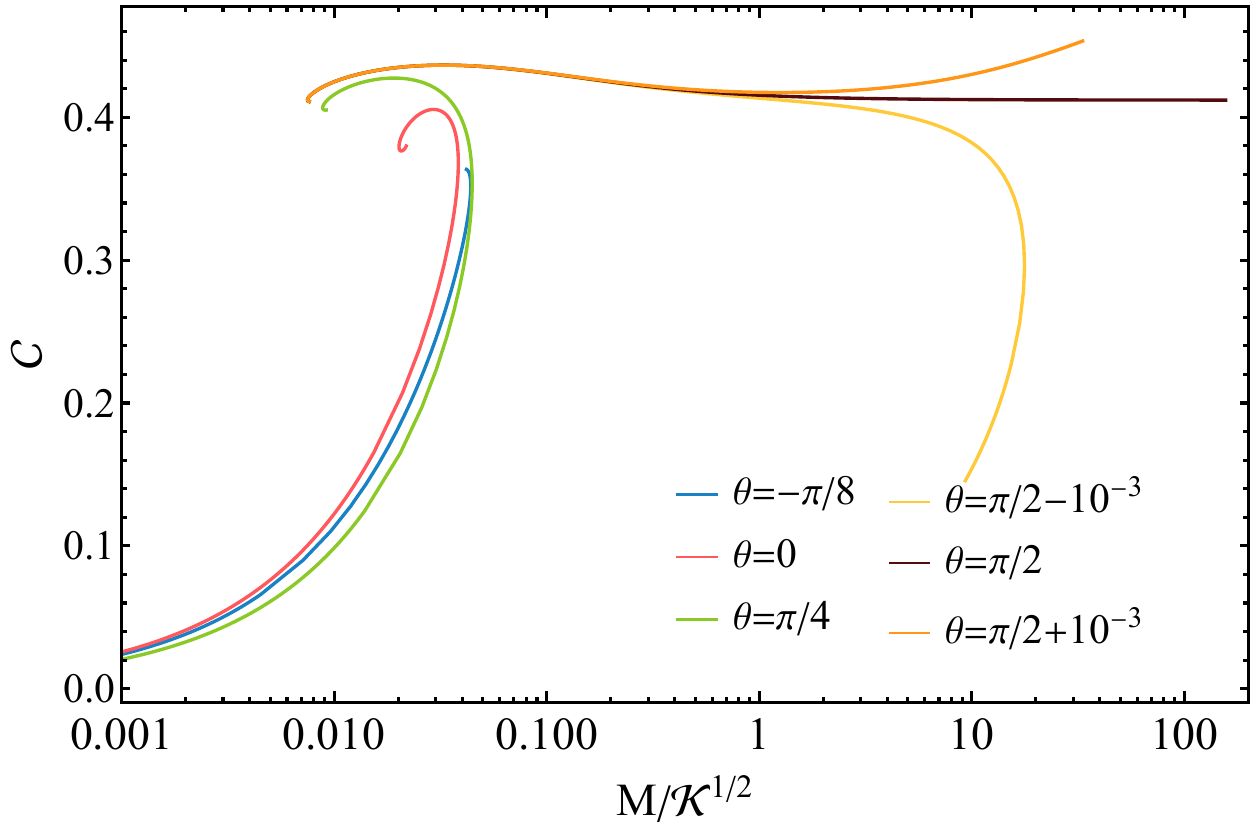} \\
	\caption{Mass-radius (left) and compactness-mass (right) diagram for the LCS material with $\gamma=2$, ${\cal E} =1$ and some representative values of $\theta$. \label{fig:MR_ultrarigid_prestressed}
	}
\end{figure}
In this section we are interested in the analysis of self-gravitating configurations for the LCS material. Interestingly, also in this case there are no solutions with a well-defined radius in the fluid limit (${\cal E}=0$), but when elasticity is turned on we can find bounded configurations. Some representative results are shown in the mass-radius and compactness-mass diagrams of Fig.~\ref{fig:MR_ultrarigid_prestressed}. 

Let us consider first the case where $0\leq\theta\leq \pi/2$, corresponding to both elastic parameters ${\cal E}_1$ and ${\cal E}_2$ being nonnegative (red, green, yellow and brown curves in Fig.~\ref{fig:MR_ultrarigid_prestressed}). From the first condition in~\eqref{eq:nec_conds_ultrarigid_prestressed}, we note that there is a (necessary) minimum central density required to have  physically admissible solutions, but there is no maximum. Increasing the central density from this minimum value up to a sufficiently large central density yields the curves in Fig.~\ref{fig:MR_ultrarigid_prestressed}, and our results suggest that within this range the solutions satisfy all conditions for physical viability. The matter profiles and non-trivial wave speeds for the two solutions in the boundary of this region are presented in the top panel of Fig.~\ref{fig:matter_LCS}. It is interesting to note that for $\theta=0$ (red curve) the solution exhibits a more standard mass-radius diagram, with a stable (right) branch where the mass increases with increasing central density up until a maximum mass, and a unstable (left) branch with decreasing mass for increasing central densities, whereas for  $\theta=\pi/2$ (brown curve) the stable branch vanishes and the solutions are all radially unstable. In general, solutions within this parameter range will behave similarly to $\theta=0$ (see green curve), unless $\theta$ is sufficiently close to $\pi/2$ (see yellow curve).

Let us now consider the region where $-\pi/2<\theta<0$. From conditions~\eqref{eq:nec_conds_ultrarigid_prestressed}, we find that physical viability imposes upper and lower limits on the central density. Another consequence is that there is a (necessary) critical minimum value of $\theta=\theta_1$ that allows for physically viable solutions. A representative solution in this region is shown as a blue curve in Fig.~\ref{fig:MR_ultrarigid_prestressed}, and its matter and sound speed profiles are in the bottom right panel of Fig.~\ref{fig:matter_LCS}. Interestingly, our analysis suggests that in this region conditions~\eqref{eq:nec_conds_ultrarigid_prestressed} are not only necessary but also sufficient to guarantee the physical viability of the solutions. 

In contrast with the previous behaviour, solutions with $\pi/2<\theta<\pi$ (orange curve) are more exotic and more complex to analyze. First, the necessary conditions for physical viability are no longer sufficient, since some conditions that are satisfied at the center can be violated elsewhere within the star (see e.g. dominant energy condition in the inset in the bottom right panel of Fig.~\ref{fig:matter_LCS}).  Our analysis also suggest that there is a critical value $\theta=\theta_2$ beyond which there are no physically viable solutions, and thus the range $\theta_1<\theta<\theta_2$ is the only physically relevant region to explore. The physically viable solutions are all unstable, similarly to case $\theta=\pi/2$. 
\begin{figure}[H]
	\centering
	\includegraphics[width=0.49\textwidth]{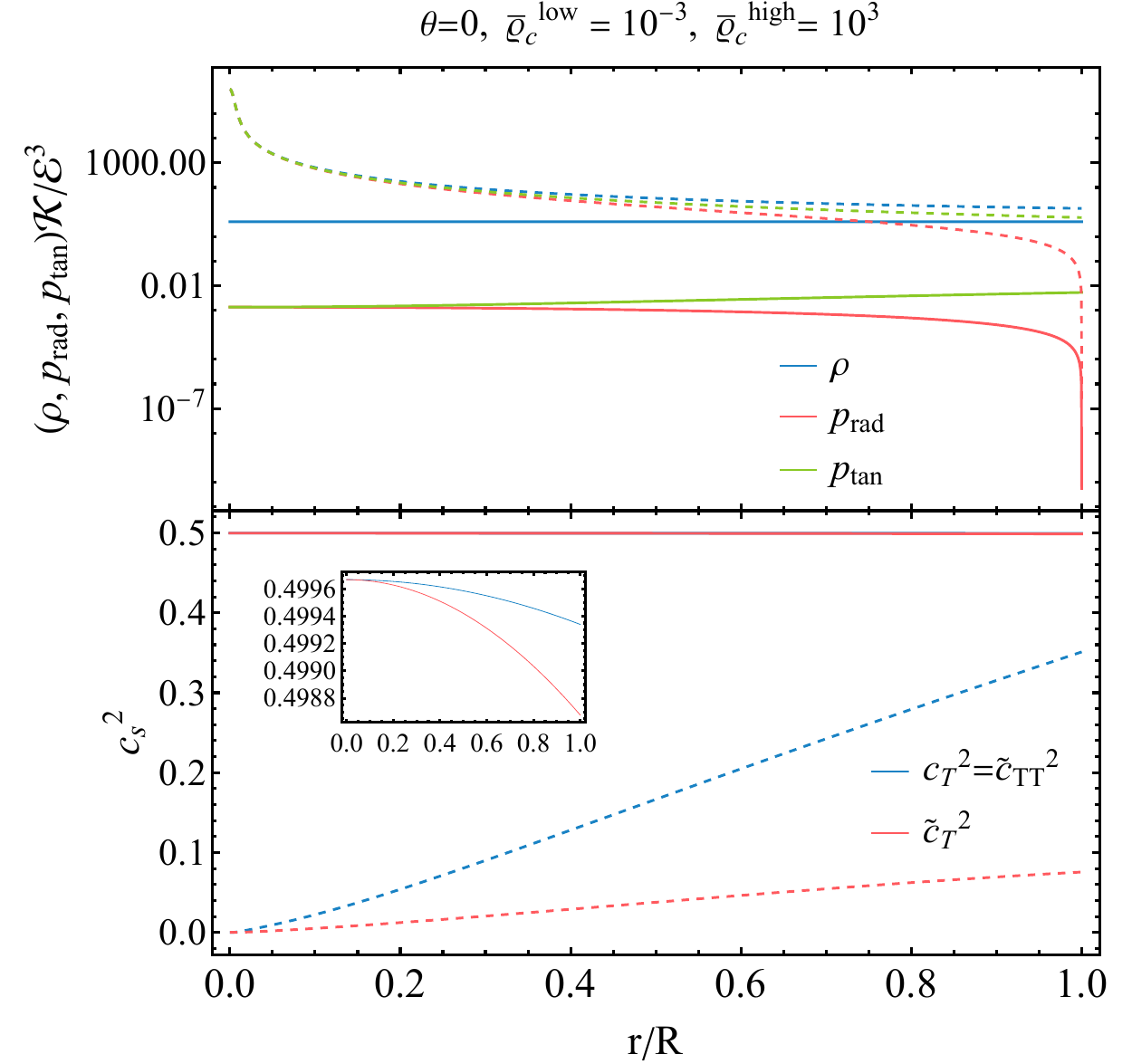}
 \includegraphics[width=0.49\textwidth]{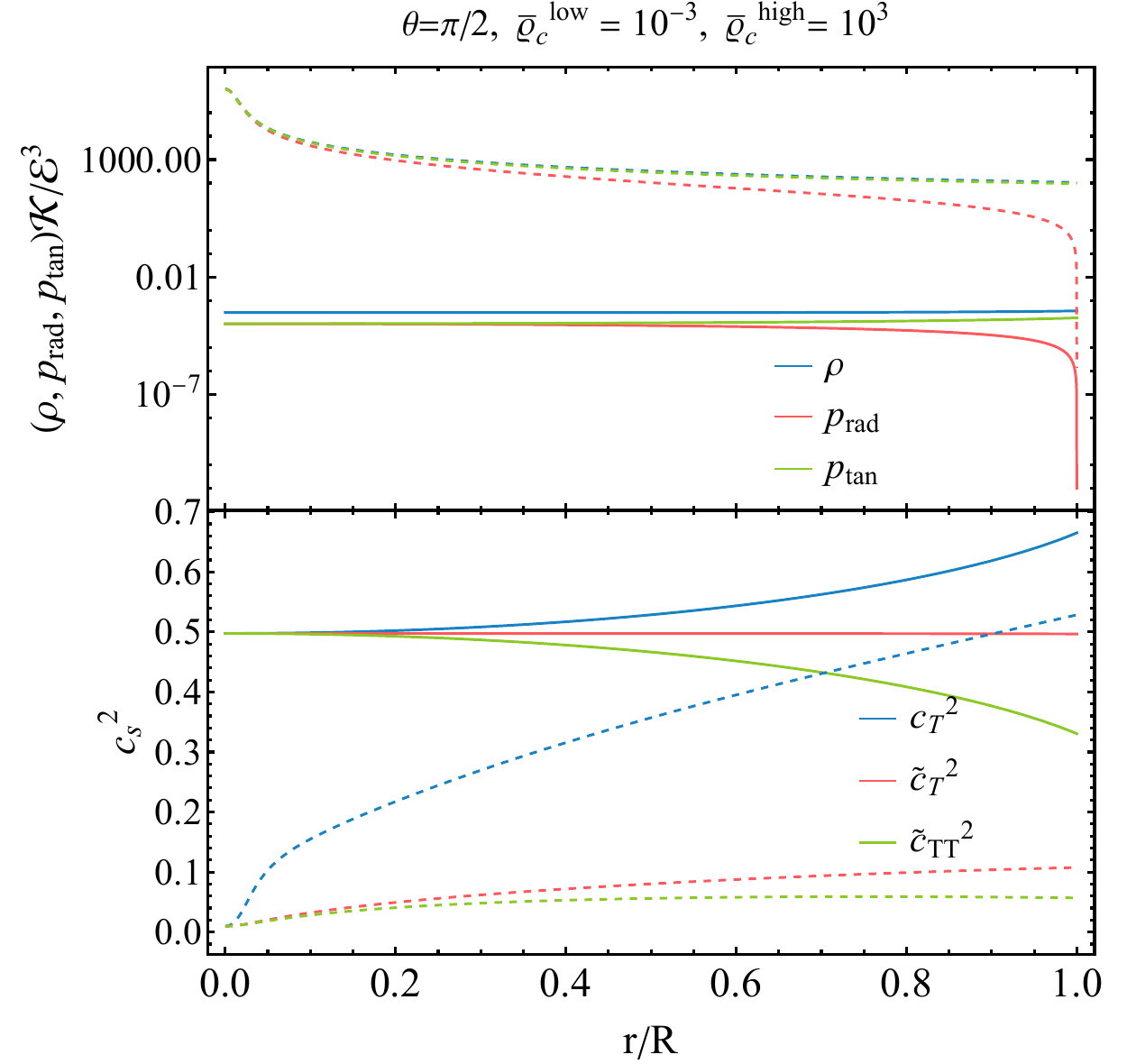}\\
\includegraphics[width=0.49\textwidth]{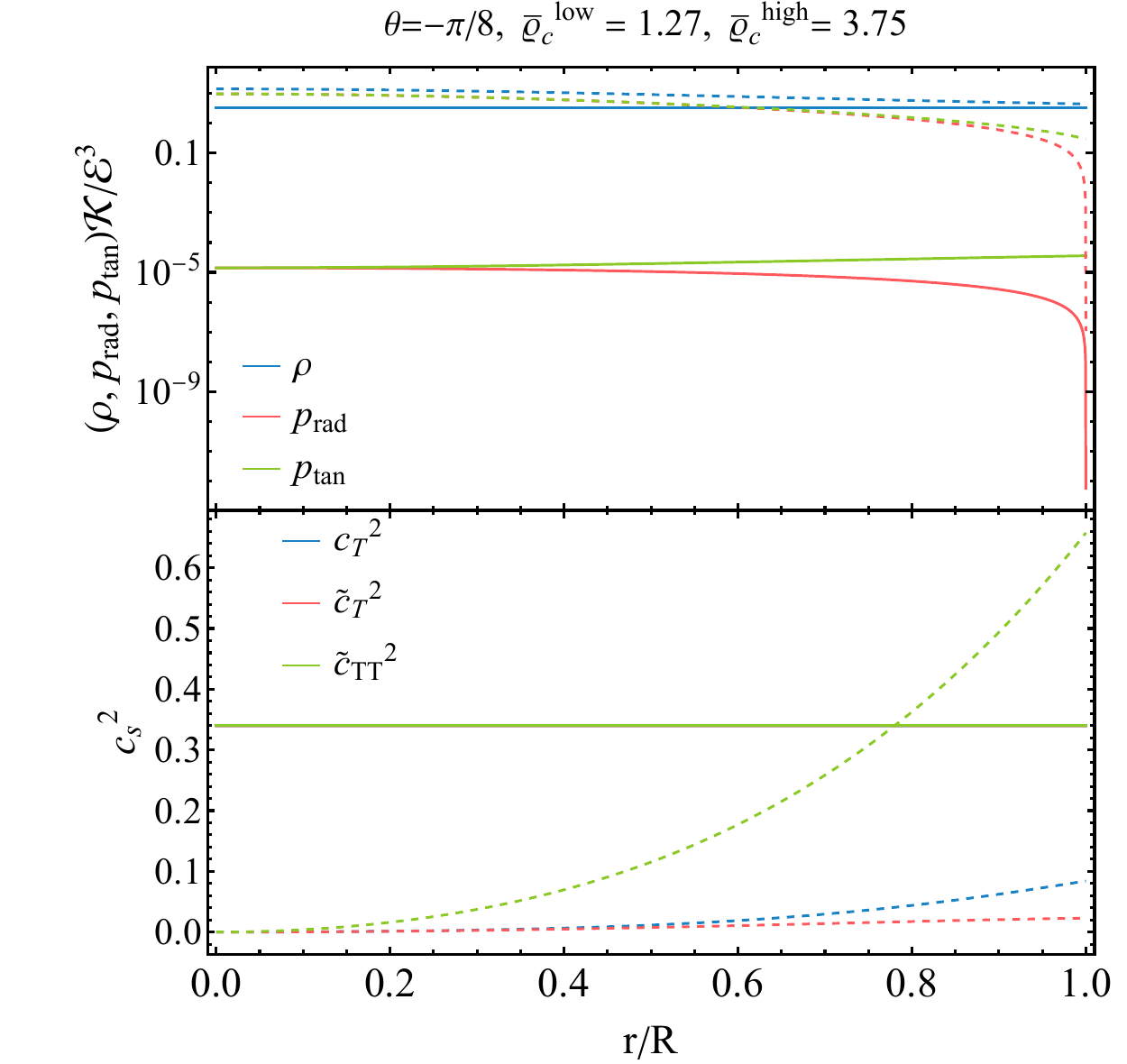}
  \includegraphics[width=0.49\textwidth]{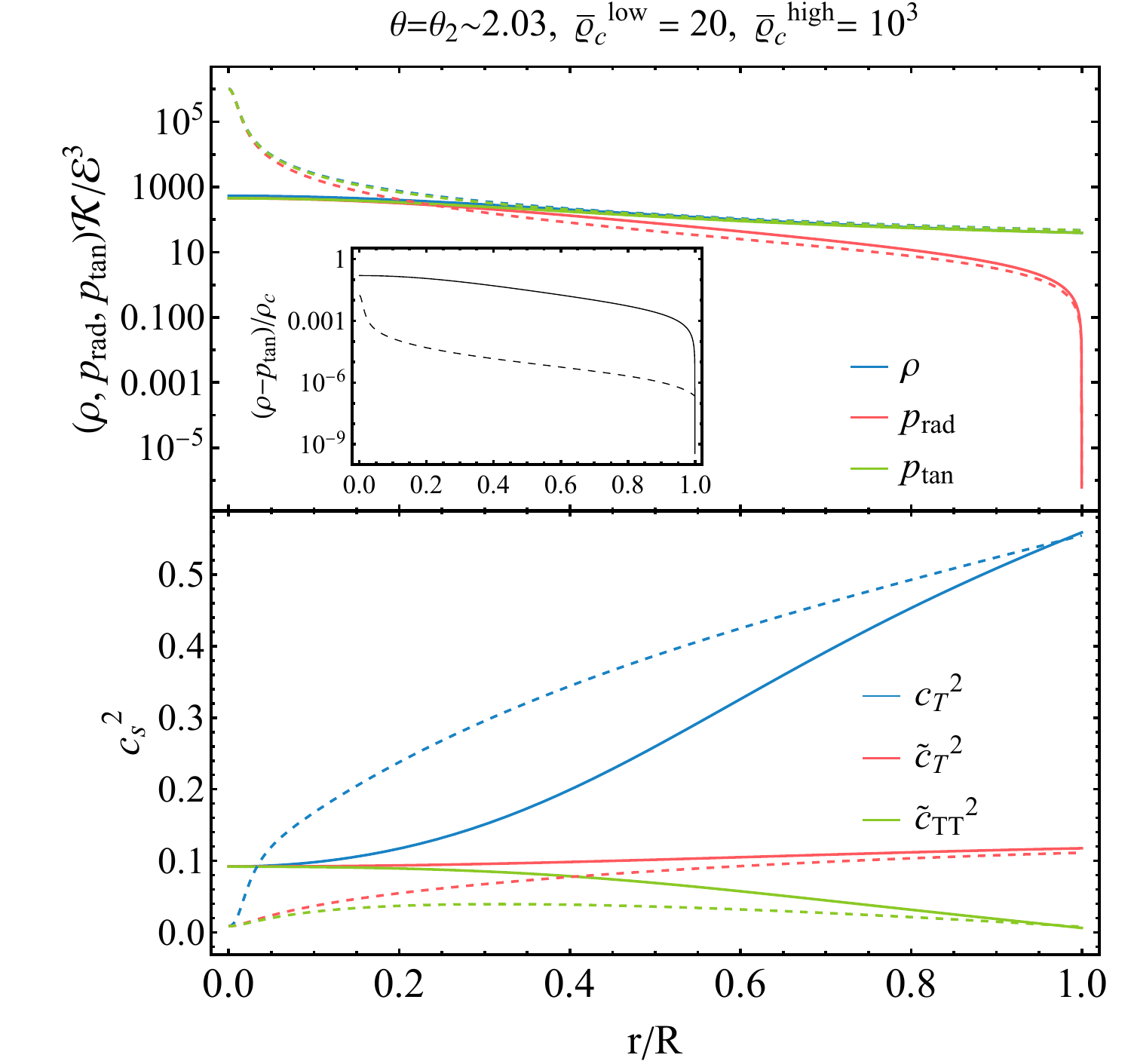}
	\caption{Matter and sound speed profiles of some characteristic configurations of the LCS model. Solid lines represent the profiles of a low central density configuration with $\varrho_c=\varrho_c^{\rm low}$, whereas the dashed lines represent a configuration with higher-central density~$\varrho_c=\varrho_c^{\rm high}$. The values of $\varrho_c^{\rm low}$ and  $\varrho_c^{\rm high}$ are, respectively, the lowest and highest values that lead to physically viable configurations (bottom row), or to the lowest and highest value of central density used in the numerical method (top row). \label{fig:matter_LCS}}
\end{figure}
%



We analyzed the maximum compactness of self-gravitating configurations of LCS material. The results are shown in Fig.~\ref{fig:C_max_prestressed}, where we focus exclusively on physically admissible (dashed curves) and radially stable physically admissible (solid line) configurations. As expected, we observe that (in general) the maximum compactness increases with the sound speed, as can be seen by comparing the red and blue curves in Fig.~\ref{fig:C_max_prestressed}.  In conclusion, we find the following bounds for the LCS model:
\begin{align}
 {\cal C}_{\rm PA}&\sim  0.451, \quad \theta \sim 1.67\,,\\
 {\cal C}_{\rm PAS}&\sim 0.368, \quad \theta \sim 0\,.
\end{align}
\begin{figure}[ht!]
	\centering
	\includegraphics[width=0.65\textwidth]{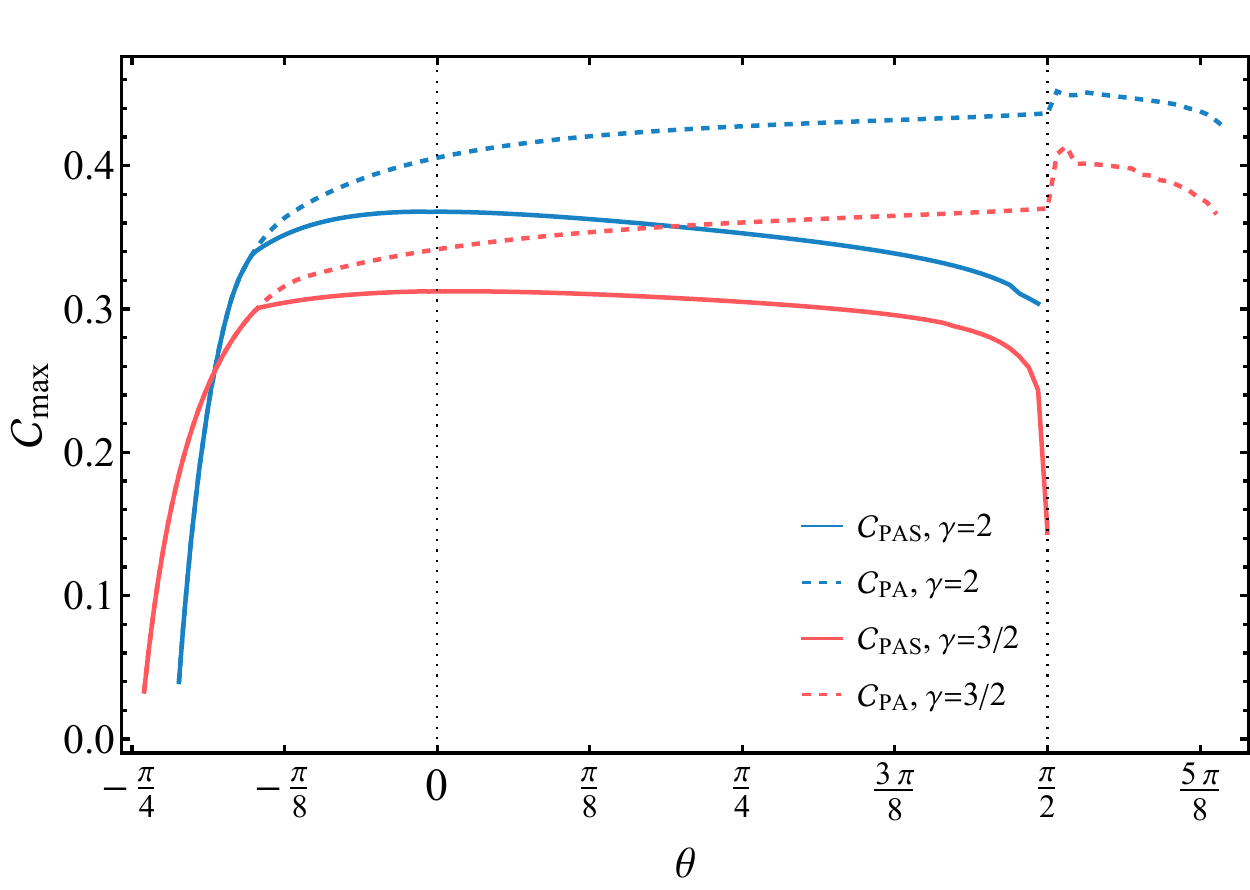} \\
	\caption{Maximum compactness of stars described by the LCS model for $\gamma =2$ (blue) and $\gamma=3/2$ (red). Dashed curves represent ${\cal C}_{\rm max}$ corresponding to physically admissible configurations, while solid lines represent physically admissible and stable configurations.\label{fig:C_max_prestressed}
	}
\end{figure}

\newpage

\section{Conclusions and outlook}
We have given a renewed overview of the problem of spherically symmetric elastic objects within GR.
Besides laying down the general formalism in all details, we have expanded the results presented in our recent papers~\cite{Alho:2021sli,Alho:2022bki,Alho:2023mfc}, and presented various new models (see Table~\ref{table} for a summary).
In particular, we have studied: 
\begin{enumerate}
    \item[i)] two models which deform the polytropic fluid EoS, namely QP and NSI, the latter allowing for a scale invariance in the Newtonian limit;
    \item[ii)] two models with constant wave speeds in the isotropic state, LIS and AIS, which are deformations of the linear and affine fluid EoS, respectively;
    \item[iii)] two models with a (single) constant longitudinal wave speed, ACS and LCS, which are deformations of the affine and linear fluid EoS, respectively.
\end{enumerate}
For all these models, we investigated the existence and physical viability (including the fulfillment of the energy conditions, absence of superluminal wave propagation, and radial stability) of static, spherically symmetric, self-gravitating elastic balls.

Our main results can be summarized as follows:
\begin{enumerate}
    \item Elasticity can allow for self-gravitating balls also when the latter do not exist in the corresponding fluid limit. As an example, we have found elastic stars which are deformations of fluid polytropic models with index $\mathrm{n}\geq5$;
    \item To the best of our knowledge, we found the first model of a relativistic star enjoying scale invariance, and found numerical evidence for its physical viability. Scale invariance implies absence of a maximum mass and a linear mass-radius relation, akin to the black hole case~\cite{Alho:2023mfc};
    \item Overall, for each model we have numerically determined the maximum compactness, which can generically exceed Buchdahl's bound and even reaches the black hole limit. However, imposing causality of wave propagation and the energy conditions reduces the maximum compactness of physically viable solutions, and radial stability reduces the maximum compactness even further;
    \item The maximum compactness of physically admissible, stable solutions was found for the ACS model, and reads ${\cal C}_{\rm PAS}^{\rm max}\approx 0.384$. Even if such model has an external light ring, its compactness is not high enough to support quasi-trapped modes and produce echoes in the ringdown~\cite{Cardoso:2016rao,Cardoso:2017cqb,Pani:2018flj}.
\end{enumerate}
It would be interesting to obtain existence theorems for static ball solutions for the present models as well as (dis)prove that the above compactness is the maximum allowed by any physically admissible elastic material in spherical symmetry.

We can envisage several important extensions of our work. First, a simple follow-up is to consider elastic deformations of the fluid case beyond the single polytropic EoS, for example by considering piecewise polytropes that approximate (tabulated) nuclear-physics based EoS~\cite{Read:2008iy}. Along a similar direction, one can attempt to build multilayer solutions, e.g. with a perfect fluid interior continuously connected with an outer elastic crust~\cite{Chamel:2008ca,Suleiman:2021hre,Raposo:2020yjy}.

A natural and important extension is to go beyond spherical symmetry, for example by allowing for rotating and/or deformed objects not allowed in the fluid limit~\cite{Raposo:2020yjy}, in an attempt to build more realistic models and investigate their phenomenology. 
For example, it would be interesting to assess whether elastic stars feature approximately universal (i.e., EoS independent) relations among their moment of inertia and multipole moments, as in the case of perfect fluid neutron stars~\cite{Yagi:2016bkt}, since these relations have important astrophysical consequences.
In this context, an important open problem consists in constructing a relativistic stored energy function without symmetry assumptions that reduces to the polytropic models and the models with constant wave speeds in the isotropic state, introduced in this work under spherical symmetry. In Appendix~\ref{existencenosym} it is shown that such stored energy functions do exist.

Finally, another outstanding extension concerns the nonlinear time evolution and the initial value problem for elastic objects, which would also be relevant to check if ultracompact elastic stars are unstable due to the presence of a inner photon sphere~\cite{Cardoso:2014sna,Cunha:2017qtt,Ghosh:2021txu}, as was recently shown for other specific models of exotic compact objects~\cite{Cunha:2022gde}. 
We have shown that the initial-value problem for elastic stars in spherical symmetry is strictly hyperbolic, at least in Minkowski spacetime. This provides a promising starting point to study the full nonlinear evolution of elastic stars within our framework, for example the coalesce of a binary system.
A well-posed time evolution beyond spherical symmetry would allow studying the tidal perturbations of elastic stars and their ringdown, both of which have direct consequences for gravitational wave astronomy~\cite{Cardoso:2019rvt,Maggio:2021ans}.

Among time-dependent solutions, the relativistic scale-invariant model introduced in this work allows for self-similar solutions. An interesting question concerns the formation of naked singularities arising from self-similar gravitational collapse, which are known to exist in the fluid case for soft equations of state, i.e. $\gamma\approx 1$~\cite{PhysRevLett.59.2137,PhysRevD.42.1068}. 
More generally, it would also be interesting to investigate the critical collapse of the elastic material models introduced in this work. This would generalize the previous studies of the collapse of perfect fluids, where the critical solution has continuous self-similarity. \cite{2007LRR....10....5G}.

\section*{Acknowledgments}
We thank Diogo Silva for carefully reading through the manuscript and suggesting several corrections. A.A.\ and J.N.\ were partially supported by FCT/Portugal through CAMGSD, IST-ID,
projects UIDB/04459/2020 and UIDP/04459/2020, and also by
the H2020-MSCA-2022-SE project EinsteinWaves, GA no.~101131233.
P.P.\ acknowledges financial support provided under the European Union's H2020 ERC, Starting 
Grant agreement no.~DarkGRA--757480, and under the MIUR PRIN and FARE programmes (GW-NEXT, CUP:~B84I20000100001), and support from the Amaldi Research Center funded by the MIUR program ``Dipartimento di Eccellenza" (CUP:~B81I18001170001). G.R.\ was supported by the Center for Research and Development in Mathematics and Applications (CIDMA) through the Portuguese Foundation for Science and Technology (FCT - Fundação para a Ciência e a Tecnologia), references UIDB/04106/2020 and UIDP/04106/2020, and by the 5th Individual CEEC program, reference 10.54499/2022.04182.CEECIND/CP1720/CT0004.

\newpage

\appendix

\section{Relativistic homogeneous and isotropic elastic bodies}\label{App:A}
There are several formulations of relativistic elasticity in the literature, starting with the foundational work of Carter \& Quintana~\cite{CarterQuintana}, and more recently by Kijowski \& Magli~\cite{KijMag92} and Beig \& Schmidt~\cite{Beig:2002pk}. Here we follow closely the formalism put forward by Beig \& Schmidt, 
to which we refer the reader for a more detailed exposition.

The configuration of a relativistic elastic body is described by a projection map (that is, a submersion) $\bm{\Pi}:\mathcal{S}\to\mathcal{B}$, mapping the (oriented, time-oriented) spacetime $(\mathcal{S},\bm{g})$ to the (oriented) 3-dimensional  Riemannian material space $(\mathcal{B},\bm{\gamma})$. The corresponding push-forward map $d\bm{\Pi}: \mathcal{T}\mathcal{S}\rightarrow \mathcal{T}\mathcal{B}$ is called the \emph{configuration gradient}. In local coordinates, $\bm{\Pi}$ is given by $X^{I}=\Pi^{I}(x^{\mu})$, and the configuration gradient by
\begin{equation}\label{DefT}
f^{\,I}_{\mu}=\partial_\mu \Pi^I,
\end{equation}
where Greek letters $\mu,\nu,\sigma,\lambda=0,1,2,3$ denote spacetime indices and capital Latin letters $I,J,K,L=1,2,3$ denote material indices.

The inverse images by $\bm{\Pi}$ of the points in $\mathcal{B}$ are assumed to form a congruence of timelike curves in $\mathcal{S}$, generated by a future-directed timelike vector field $\bm{u}$ satisfying $\bm{g}(\bm{u},\bm{u})=-1$ (the $4$-velocity of the particles making up the body), so that
\begin{equation}\label{Orthof}
u^{\mu}\partial_\mu \Pi^I =0.
\end{equation}  
Hence, $u^{\mu}$ spans the kernel of $f^{\,I}_{\mu}$, which has then full rank, determining an isomorphism between the orthogonal complement of $\bm{u}$ in $\mathcal{T}_{x}\mathcal{S}$ and $\mathcal{T}_{\Pi(x)}\mathcal{B}$ for each $x\in\mathcal{S}$.

We assume that there exists a volume form $\Omega_{IJK}$ on $\mathcal{B}$ which gives the number of particles in each subset of $\mathcal{B}$. It is related to the volume form $\eta_{IJK}$ induced by the material metric $\gamma_{IJ}$ through $\Omega_{IJK}=n_0 \eta_{IJK}$, where the function $n_0>0$ represents the \emph{number of particles density} in the reference state. The pull-back of $\Omega_{IJK}$,
\begin{equation}\label{STvol}
\omega_{\mu\nu\sigma}=f^{\,I}_{\mu}f^{\,J}_{\nu}f^{\,K}_{\sigma}\Omega_{IJK},
\end{equation}
defines a volume form on the tangent subspaces orthogonal to $u^\mu$.  
%
The dual vector field to $\omega_{\mu\nu\sigma}$ is proportional to $u^{\mu}$, and it is automatically conserved:
\begin{equation}\label{Current}
J^{\mu}=\frac{1}{3!}\epsilon^{\mu\nu\sigma\lambda}\omega_{\nu\sigma\lambda} = n u^\mu, \qquad \qquad  \nabla_{\mu}J^\mu=\nabla_\mu(n u^\mu)=0,
\end{equation}
where $\epsilon_{\mu\nu\sigma\lambda}$ is the spacetime volume form associated with $\bm{g}$. This conservation law, which is equivalent to the identity $d\bm{\omega} = d\bm{\Pi}^*\bm{\Omega} = \bm{\Pi}^*d\bm{\Omega} = \bm{0}$, justifies the interpretation of the function $n>0$ as the \emph{number of particles density}. 

A central role is played in relativistic elasticity by the push-forward of the inverse spacetime metric $g^{\mu\nu}$, denoted by $\bm{H}$, whose components 
\begin{equation}\label{DefH}
H^{IJ}=f^{\,I}_{\mu}f^{\,J}_{\nu}g^{\mu\nu}
\end{equation}
form a symmetric, positive-definite matrix, characterizing the deformations of the elastic body\footnote{Equivalently, this role can be played by the {\em strain}, defined as
 $E^{IJ}=\frac{1}{2} \left(H^{IJ}-\gamma^{IJ}\right)$.}. We denote by $\bm{H}^{-1}$ the inverse of this push-forward metric, with components $H_{IJ}$ such that $H_{IJ} H^{JK}=\delta^{K}_{I}$. The pull-back $\bm{\Pi}^*\bm{H}^{-1}$ is a Riemannian metric on the tangent subspaces orthogonal to $\bm{u}$, with components
\begin{equation}
h_{\mu\nu} = f^{\,I}_{\mu}f^{\,J}_\nu H_{IJ} =g_{\mu\nu}+u_\mu  u_\nu .
\end{equation}
As is usual, we can construct the orthogonal projection operator $h^{\mu}_{\nu}=g^{\mu\sigma}h_{\sigma\nu}$ on the tangent subspaces orthogonal to $\bm{u}$, i.e. $h^{\mu}_{\sigma}h^{\sigma}_\nu=h^{\mu}_\nu$. Moreover, since $f^{\,I}_\mu$ is an isomorphism when restricted to these subspaces, it has an inverse, given by $f^{\mu}_{\,I}=H_{IJ} g^{\mu\nu} f^{\,J}_{\nu}$. From the orthogonality condition~\eqref{Orthof} it follows that 
$H^{IJ}$ can in fact be written as
\begin{equation}
H^{IJ}=f^{\,I}_\mu f^{\,J}_\nu h^{\mu\nu} .
\end{equation}  
Defining
\begin{equation}
\mathcal{H}^{I}_{J}=H^{IK}\gamma_{KJ}, 
\end{equation}
it follows from~\eqref{STvol} and~\eqref{Current} that the number of particles density depends only on the configuration map $X^I=\Pi^I(x^\mu)$ via the number of particles density in the material space, $n_0$, and $\mathcal{H}^{I}_J$:
\begin{equation}
n=\sqrt{-J_\mu J^\mu}=n_{0}(\bm{\Pi})\sqrt{\det(\bm{\mathcal{H}})}.
\end{equation}
%
%
%
%
The dynamics of relativistic elastic bodies can be obtained from an action principle: variation of the action
\begin{equation}
S[\bm{\Pi}] =\int_{\mathcal{S}}\rho(\bm{\Pi},d\bm{\Pi})\sqrt{-\det(\bm{g})}\,d^4x,
\end{equation}
for a given Lagrangian density $\rho=\rho(\bm{\Pi},d\bm{\Pi})$, yields the Euler-Lagrange equations
\begin{equation}
\frac{\partial\rho}{\partial\Pi^I} -\frac{1}{\sqrt{-\det(\bm{g})}}\partial_\mu \left(\sqrt{-\det(\bm{g})}\frac{\partial \rho}{\partial f^{\,I}_{\mu}}\right)=0.
\end{equation}
The stress-energy tensor is then given by
\begin{equation}\label{SET}
T_{\mu\nu}=2\frac{\partial\rho}{\partial g^{\mu\nu}}-\rho g_{\mu\nu},
\end{equation}
and its divergence is
\begin{equation}\label{consenmom}
\nabla_{\mu}T^{\,\mu}_{\nu}=-\left(\frac{\partial\rho}{\partial\Pi^I} -\frac{1}{\sqrt{-\det(\bm{g})}}\partial_\mu \left(\sqrt{-\det(\bm{g})}\frac{\partial \rho}{\partial f^{\,I}_{\mu}}\right)\right) f^{\,I}_{\nu}.
\end{equation}
Therefore, the elastic field equations are satisfied if the stress-energy tensor is divergence-free. On the other hand,~\eqref{consenmom} implies that $u^{\nu}\nabla_\mu T^{\mu}_{\nu}=0$ always holds, and so the elastic field equations are actually equivalent to energy-momentum conservation.

The relativistic analogue of the material frame indifference in Newtonian elasticity theory consists in the assumption that $\rho$ is \emph{covariant under spacetime diffeomorphisms}, meaning that it should not change under arbitrary transformations of the spacetime coordinates. This assumption entails that $\rho$ should not depend on the configuration gradient explicitly, but rather we should have
\begin{equation}
\rho=\rho(\bm{\Pi},\bm{H}).
\end{equation}
Since $n$ also depends only on $\bm{\Pi}$ and $\bm{H}$, it is convenient to decompose $\rho$ as
\begin{equation}
\rho=n e,
\end{equation}
where $e$ is the \emph{total energy per particle}. We also define the \emph{relativistic stored energy function}
\begin{equation}
\epsilon = n_0 e,
\end{equation}
so that
\begin{equation}
\rho=\frac{n}{n_0} \epsilon.
\end{equation}

An important consequence of diffeomorphism invariance is that 
\begin{equation}\label{OrthoT}
\frac{\partial \rho}{\partial g^{\mu\nu}}u^{\mu}=0.
\end{equation}
From~\eqref{SET} and~\eqref{OrthoT} it then follows that
\begin{equation}
T_{\mu\nu}u^{\nu}=-\rho u_\mu,
\end{equation}
and so the stress-energy tensor can be uniquely decomposed as
\begin{equation}\label{DecT}
T_{\mu\nu}=\rho u_\mu u_\nu +\sigma_{\mu\nu}, \qquad \qquad \sigma_{\mu\nu}u^{\nu}=0,
\end{equation}
where $\sigma_{\mu\nu}$ is the symmetric \emph{Cauchy tensor}. Using the relation $g_{\mu\nu}=h_{\mu\nu}-u_\mu u_\nu$, it follows from~\eqref{SET} that
%
\begin{equation}\label{EOS}
\sigma_{\mu\nu}= 2 \frac{\partial\rho}{\partial g^{\mu\nu}}-\rho h_{\mu\nu}.
\end{equation}
We can also write the stress tensor as
\begin{equation}
\sigma_{\mu\nu}=2\frac{n}{n_0}\frac{\partial \epsilon}{\partial h^{\mu\nu}}= \frac{n}{n_0} \Sigma_{IJ}f^{\,I}_{\mu}f^{\,J}_{\nu},
\end{equation}
where
\begin{equation}
\Sigma_{IJ}= 2 \frac{\partial \epsilon}{\partial H^{IJ}}, 
\end{equation}
is the \emph{second Piola-Kirchhoff tensor}. Projecting the divergence of the stress-energy tensor in the directions parallel and orthogonal to $u^\mu$ yields the following equations of motion:
%
%
\begin{subequations}\label{ConsEqs}
	\begin{align}
	&u^\mu\partial_\mu\rho+\rho\theta+\sigma^{\sigma\lambda}\theta_{\sigma\lambda}=0, \\
	&\rho a^{\nu}+\sigma^{\nu}_{\mu}a^{\mu}+h^{\mu}_{\sigma}h^{\nu}_{\lambda}\nabla_\mu \sigma^{\sigma\lambda}=0,
	\end{align}	
\end{subequations}
where $a^{\nu}=u^\mu\nabla_\mu u^{\nu}$ is the $4$-acceleration, $\theta_{\mu\nu}=\nabla_{(\mu}u_{\nu)}$ is the shear and $\theta=g^{\mu\nu}\theta_{\mu\nu}=h^{\mu\nu}\theta_{\mu\nu}$ is the expansion. Given the equation of state~\eqref{EOS}, the first equation is identically satisfied, reducing the elastic field equations to the components orthogonal to $u^\mu$.

A relativistic elastic material is said to be \emph{homogeneous} if neither $n$ nor $\rho$ depend on the position $X^I$, but solely on the deformation $\mathcal{H}^{I}_J$. Homogeneity therefore implies that $n_0$ is a positive constant.

A material is said to be \emph{isotropic} if $\rho$ depends on the deformation $\bm{\mathcal{H}}$ via its principal invariants, 
\begin{equation}
\rho=\rho (i_1(\bm{\mathcal{H}}),i_2(\bm{\mathcal{H}}),i_3(\bm{\mathcal{H}})),
\end{equation}
where the invariants are given in terms of the eigenvalues $h_1$, $h_2$ and $h_3$ of $\bm{\mathcal{H}}$ as
\begin{subequations}\label{PrincInv}
	\begin{align}
	i_1(\bm{\mathcal{H}}) &=h_1+h_2+h_3,
	\\
	i_2(\bm{\mathcal{H}}) &=h_1h_2+h_1h_3+h_2h_3, 
	\\
	i_3(\bm{\mathcal{H}}) &=h_1h_2h_3.
	\end{align}
\end{subequations}
These eigenvalues are positive, and can be seen as the squares of the normalized linear particle densities $n_1$, $n_2$ and $n_3$ along the principal directions spanned by the eigenvectors $e^{J}_{(i)}$, $i=1,2,3$,  since
\begin{equation}\label{PLD}
\frac{n}{n_0}=\sqrt{\det(\bm{\mathcal{H}})}=\sqrt{h_1 h_2 h_3}=n_1 n_2 n_3.
\end{equation}
Moreover, normalizing the eigenvectors $e^{J}_{(i)}$ with respect to the metric $H_{IJ}$, we can write
\begin{equation}
\begin{split}
\Sigma_{IJ}&=2\frac{\partial \epsilon}{\partial\mathcal{H}^{(I}_{K}}\gamma_{J)K}=2\sum_{i=1}^{3}e_{(i)K}e_{(i)(I}\sum_{j=1}^{3}e_{(j)J)}e^{K}_{(j)}\frac{\partial \epsilon}{\partial h_j} \\
&=2\sum_{i=1}^{3} h_i e_{(i)I}e_{(i)J}\frac{\partial \epsilon}{\partial n_i}\frac{\partial n_i}{\partial h_i}=\sum_{i=1}^{3}n_i\frac{\partial \epsilon}{\partial n_i}e_{(i)I}e_{(i)J}
    \end{split}
\end{equation}
(where we used round brackets to indicate symmetrization over the indices $I$ and $J$). Denoting by $e_{(i)\mu}$ the pull-back of the reference orthonormal coframe $e_{(i)I}$, with components
\begin{equation}
e_{(i)\mu}=f^{I}_{\mu}e_{(i)I},
\end{equation}
we see that the Cauchy tensor is diagonal,
\begin{equation}\label{CSPP}
\sigma_{\mu\nu} = \sum^{3}_{i=1} p_{i} e_{(i)\mu}e_{(i)\nu},\qquad \sigma^{\mu}_\nu e^{\nu}_{(i)}=p_i e^{\mu}_{(i)},
\end{equation}
where the principal pressures $p_i=p_i(n_1,n_2,n_3)$ are given explicitly by
\begin{equation}\label{pp}
p_i=\frac{n}{n_0} n_i \frac{\partial \epsilon}{\partial n_i}.
\end{equation}
In terms of $\rho$, this equation becomes
\begin{equation}\label{pprho}
p_i=n_i \frac{\partial\rho}{\partial n_i}-\rho.
\end{equation}
Due to the symmetries of the principal invariants on the principal linear densities, the energy density $\rho(n_1,n_2,n_3)$ is invariant under all permutations of their variables $n_i$, and the principal pressures obey the symmetry conditions
\begin{equation}
\begin{split}
&p_1(n_1,n_2,n_3)=p_2(n_2,n_1,n_3)=p_3(n_3,n_2,n_1)=p_1(n_1,n_3,n_2), \\
&p_2(n_1,n_2,n_3)=p_3(n_1,n_3,n_2)=p_1(n_2,n_1,n_3)=p_2(n_3,n_2,n_1), \\
&p_3(n_1,n_2,n_3)=p_2(n_1,n_3,n_2)=p_1(n_3,n_2,n_1)=p_3(n_2,n_1,n_3).
\end{split}
\end{equation}
%
%
As shown by Karlovini \& Samuelsson~\cite{Karlovini:2002fc} (see also~\cite{Car73}), there are exactly 9 independent wave speeds, corresponding to \emph{longitudinal} waves in $i$-th direction,
\begin{equation}
c^2_{\mathrm{L}i} =\frac{\displaystyle n_i \frac{\partial p_i}{\partial n_i}}{\rho+p_i}=\frac{\displaystyle n^2_i \frac{\partial^2\rho}{\partial n^2_i}}{\rho+p_i},
\end{equation} 
and to \emph{transverse} waves in $i$-th direction, oscillating in the $j$-th direction:
\begin{equation}\label{TWaves}
c^2_{\mathrm{T}ij} = 
\begin{cases}
\displaystyle\frac{n^2_j }{\rho+p_j}\frac{(p_j-p_i)}{n^2_j-n^2_i}, & \text{if}\ n_i\neq n_j ,\\ \\
\displaystyle\frac{\frac{1}{2}n_j}{\rho+p_j}\left(\frac{\partial p_i}{\partial n_i}-\frac{\partial p_j}{\partial n_i}\right), & \text{if}\ n_i = n_j .
\end{cases}
\end{equation}

\subsection{The degenerate case of two equal principal linear densities}\label{Deg}
If two of the principal linear densities are equal, say $n_2=n_3=n_\mathrm{t}$, then we have
\begin{equation}
p_2(n_1,n_\mathrm{t},n_\mathrm{t})=p_3(n_1,n_\mathrm{t},n_\mathrm{t})=p_1(n_\mathrm{t},n_1,n_\mathrm{t}) .
\end{equation}
In addition, the derivatives must satisfy the symmetry conditions
\begin{equation}
\left.\frac{\partial p_1}{\partial{n_2}}\right|_{(n_1,n_\mathrm{t},n_\mathrm{t})}=\left.\frac{\partial p_1}{\partial{n_3}}\right|_{(n_1,n_\mathrm{t},n_\mathrm{t})} ,\quad
\left.\frac{\partial p_2}{\partial{n_2}}\right|_{(n_1,n_\mathrm{t},n_\mathrm{t})}=\left.\frac{\partial p_3}{\partial{n_3}}\right|_{(n_1,n_\mathrm{t},n_\mathrm{t})},\quad \left.\frac{\partial p_2}{\partial{n_3}}\right|_{(n_1,n_\mathrm{t},n_\mathrm{t})}=\left.\frac{\partial p_3}{\partial{n_2}}\right|_{(n_1,n_\mathrm{t},n_\mathrm{t})}
\end{equation}
and similarly for higher-order derivatives. In this case there are only two independent longitudinal wave speeds,
\begin{subequations}\label{longdeg}
\begin{align}
&c^2_{\mathrm{L}1}(n_1,n_\mathrm{t},n_\mathrm{t})=\left[\frac{n_1}{\rho+p_1}\frac{\partial p_1}{\partial{n_1}}\right]_{(n_1,n_\mathrm{t},n_\mathrm{t})} ,\label{longdeg1}\\ &c^2_{\mathrm{L}2}(n_1,n_\mathrm{t},n_\mathrm{t})=\left[\frac{n_2}{\rho+p_2}\frac{\partial p_2}{\partial{n_2}}\right]_{(n_1,n_\mathrm{t},n_\mathrm{t})}=\left[\frac{n_3}{\rho+p_3}\frac{\partial p_3}{\partial{n_3}}\right]_{(n_1,n_\mathrm{t},n_\mathrm{t})}=c^2_{\mathrm{L}3}(n_1,n_\mathrm{t},n_\mathrm{t}),\label{longdeg2}
\end{align}
\end{subequations}
and only 3 independent transverse wave speeds,
\begin{subequations}
\begin{align}
&c^2_{\mathrm{T}12}(n_1,n_\mathrm{t},n_\mathrm{t})=\left[\frac{n^2_2}{\rho+p_2}\frac{(p_2-p_1)}{n^2_2-n^2_1}\right]_{(n_1,n_\mathrm{t},n_\mathrm{t})}=\left[\frac{n^2_3}{\rho+p_3}\frac{(p_3-p_1)}{n^2_3-n^2_1}\right]_{(n_1,n_\mathrm{t},n_\mathrm{t})} =c^2_{\mathrm{T}13}(n_1,n_\mathrm{t},n_\mathrm{t}), \label{cT12}\\
&c^2_{\mathrm{T}21}(n_1,n_\mathrm{t},n_\mathrm{t})=\left[\frac{n^2_1}{\rho+p_1}\frac{(p_1-p_2)}{n^2_1-n^2_2}\right]_{(n_1,n_\mathrm{t},n_\mathrm{t})}=\left[\frac{n^2_1}{\rho+p_1}\frac{(p_1-p_3)}{n^2_1-n^2_3}\right]_{(n_1,n_\mathrm{t},n_\mathrm{t})}=c^2_{\mathrm{T}31}(n_1,n_\mathrm{t},n_\mathrm{t}), \label{cT21}\\
&c^2_{\mathrm{T}23}(n_1,n_\mathrm{t},n_\mathrm{t})=\left[\frac{\frac{1}{2}n_3}{\rho+p_3}\left(\frac{\partial p_2}{\partial n_2}-\frac{\partial p_3}{\partial n_2}\right)\right]_{(n_1,n_\mathrm{t},n_\mathrm{t})}=\left[\frac{\frac{1}{2}n_2}{\rho+p_2}\left(\frac{\partial p_3}{\partial n_3}-\frac{\partial p_2}{\partial n_3}\right)\right]_{(n_1,n_\mathrm{t},n_\mathrm{t})}=c^2_{\mathrm{T}32}(n_1,n_\mathrm{t},n_\mathrm{t}). \label{cT23}
\end{align}
\end{subequations} 

\subsection{Isotropic states, hydrostatic stress, and perfect fluids}
In an \emph{isotropic state} $n_1=n_2=n_3=n_\mathrm{c}=(n/n_0)^{1/3}$, the principal pressures satisfy
\begin{equation}
p_{1}(n_\mathrm{c},n_\mathrm{c},n_\mathrm{c})=p_2(n_\mathrm{c},n_\mathrm{c},n_\mathrm{c})=p_3(n_\mathrm{c},n_\mathrm{c},n_\mathrm{c}),
\end{equation}
and their derivatives must satisfy the symmetry conditions
\begin{equation} \label{sym_isotropic_state}
\begin{split}
&\left.\frac{\partial p_1}{\partial{n_1}}\right|_{(n_\mathrm{c},n_\mathrm{c},n_\mathrm{c})}=\left.\frac{\partial p_2}{\partial{n_2}}\right|_{(n_\mathrm{c},n_\mathrm{c},n_\mathrm{c})}=\left.\frac{\partial p_3}{\partial{n_3}}\right|_{(n_\mathrm{c},n_\mathrm{c},n_\mathrm{c})} ,\\
&\left.\frac{\partial p_1}{\partial{n_2}}\right|_{(n_\mathrm{c},n_\mathrm{c},n_\mathrm{c})}=\left.\frac{\partial p_2}{\partial{n_1}}\right|_{(n_\mathrm{c},n_\mathrm{c},n_\mathrm{c})}=\left.\frac{\partial p_2}{\partial{n_3}}\right|_{(n_\mathrm{c},n_\mathrm{c},n_\mathrm{c})}=\left.\frac{\partial p_3}{\partial{n_2}}\right|_{(n_\mathrm{c},n_\mathrm{c},n_\mathrm{c})}=\left.\frac{\partial p_3}{\partial{n_1}}\right|_{(n_\mathrm{c},n_\mathrm{c},n_\mathrm{c})}=\left.\frac{\partial p_1}{\partial{n_3}}\right|_{(n_\mathrm{c},n_\mathrm{c},n_\mathrm{c})}.
\end{split}
\end{equation}
It follows that in an isotropic state all longitudinal wave speeds coincide,
\begin{equation}
c^2_{\mathrm{L}}=\left[\frac{n_1}{\rho+p_1}\frac{\partial p_1}{\partial{n_1}}\right]_{(n_\mathrm{c},n_\mathrm{c},n_\mathrm{c})}=\left[\frac{n_2}{\rho+p_2}\frac{\partial p_2}{\partial{n_2}}\right]_{(n_\mathrm{c},n_\mathrm{c},n_\mathrm{c})}=\left[\frac{n_3}{\rho+p_3}\frac{\partial p_3}{\partial{n_3}}\right]_{(n_\mathrm{c},n_\mathrm{c},n_\mathrm{c})},
\end{equation}
and all transverse wave speeds also coincide:
\begin{equation}
\begin{split}
&c^2_{\mathrm{T}}=\left[\frac{\frac{1}{2}n_1}{\rho+p_1}\left(\frac{\partial p_2}{\partial n_2}-\frac{\partial p_1}{\partial n_2}\right)\right]_{(n_\mathrm{c},n_\mathrm{c},n_\mathrm{c})}= \left[\frac{\frac{1}{2}n_1}{\rho+p_1}\left(\frac{\partial p_3}{\partial n_3}-\frac{\partial p_1}{\partial n_3}\right)\right]_{(n_\mathrm{c},n_\mathrm{c},n_\mathrm{c})} \\
&\quad=\left[\frac{\frac{1}{2}n_2}{\rho+p_2}\left(\frac{\partial p_1}{\partial n_1}-\frac{\partial p_2}{\partial n_1}\right)\right]_{(n_\mathrm{c},n_\mathrm{c},n_\mathrm{c})}= \left[\frac{\frac{1}{2}n_2}{\rho+p_2}\left(\frac{\partial p_3}{\partial n_3}-\frac{\partial p_2}{\partial n_3}\right)\right]_{(n_\mathrm{c},n_\mathrm{c},n_\mathrm{c})} \\
&\quad=\left[\frac{\frac{1}{2}n_3}{\rho+p_3}\left(\frac{\partial p_1}{\partial n_1}-\frac{\partial p_3}{\partial n_1}\right)\right]_{(n_\mathrm{c},n_\mathrm{c},n_\mathrm{c})}= \left[\frac{\frac{1}{2}n_3}{\rho+p_3}\left(\frac{\partial p_2}{\partial n_2}-\frac{\partial p_3}{\partial n_2}\right)\right]_{(n_\mathrm{c},n_\mathrm{c},n_\mathrm{c})}.
\end{split}
\end{equation}
Moreover, the stress is just the \emph{hydrostatic stress} $\sigma_{\mu\nu}=p_\mathrm{iso}(n)h_{\mu\nu}$, where $p_\mathrm{iso}(n)=p_i(n_\mathrm{c},n_\mathrm{c},n_\mathrm{c})$ is the \emph{isotropic pressure}. From $n=n_\mathrm{c}^3$ we obtain
\begin{equation}
n\frac{dp_\mathrm{iso}}{dn}(n)=\frac{n_\mathrm{c}}{3}\left.\left(\frac{\partial p_i}{\partial n_1}+\frac{\partial p_i}{\partial n_2}+\frac{\partial p_i}{\partial n_3}\right)\right|_{(n_\mathrm{c},n_\mathrm{c},n_\mathrm{c})},
\end{equation}
and using~\eqref{pprho} we find
\begin{equation}
n\frac{d\rho}{dn}(n)=\rho(n)+p_\mathrm{iso}(n),
\end{equation}
so that
\begin{equation}
\frac{dp_\mathrm{iso}}{d\rho}=c^2_{\mathrm{L}}-\frac{4}{3}c^2_{\mathrm{T}}.
\end{equation}
Assuming that $dp_\mathrm{iso}/d\rho>0$, we must have $c^2_{\mathrm{L}}>\frac{4}{3}c^2_{\mathrm{T}}$. 
In the case of a perfect fluid we have $c^2_\mathrm{T}=0$, and $c^2_{\mathrm{L}}=dp_\mathrm{iso}/d\rho$ is simply the squared speed of sound.

\subsection{The reference state, elastic moduli, and invariance under renormalization}
The reference state corresponds to the absence of deformation, $n_1=n_2=n_3=1$. We assume that
\begin{equation}
\rho_0 = \rho(1,1,1) = \epsilon(1,1,1)
\end{equation}
is a positive constant. The energy due to the deformation of the body can then be encoded in a \emph{deformation potential energy density function} $w(n_1,n_2,n_3)$. 
%
%
\begin{definition}[Relativistic and ultra-relativistic materials]\label{DefURM} 
A material is said to be \emph{relativistic} if the stored energy function $\epsilon$ consists of the sum of a positive constant reference state mass density $\varrho_0=\mathfrak{m}n_0$, where $\mathfrak{m}$ is the rest mass of the particles making up the body, and the deformation potential density:
\begin{equation}
\epsilon(n_1,n_2,n_3)= \varrho_0 + w(n_1,n_2,n_3).
\end{equation}
A material is said to be \emph{ultra-relativistic} if the stored energy function $\epsilon(n_1,n_2,n_3)$ consists only in the deformation potential density:
\begin{equation}
\epsilon(n_1,n_2,n_3)= w(n_1,n_2,n_3).
\end{equation}
\end{definition}
Therefore, the energy density of a relativistic material is given by
\begin{equation}\label{GenRED}
\rho =\varrho + \frac{n}{n_0} w(n_1,n_2,n_3),
\end{equation}
where $\varrho = \mathfrak{m}n$ is the \emph{baryonic mass density}, whereas the energy density of a ultra-relativistic material is given by
\begin{equation}\label{GenURED}
\rho =\frac{n}{n_0}w(n_1,n_2,n_3).
\end{equation}
\begin{definition}[Stress-free natural reference state] 
The reference state is said to be a \emph{stress-free natural reference state} if the principal pressures satisfy the \emph{stress-free reference state condition}
\begin{equation}
p_i(1,1,1)=0, \qquad i=1,2,3,
\end{equation}
and the deformation potential energy density satisfies the \emph{natural reference state condition} 
\begin{equation}
w(1,1,1)=0.
\end{equation}
\end{definition}
The natural reference state condition asserts that the undeformed reference state is a state of zero potential energy. Therefore, these materials are never ultra-relativistic, and so the reference state energy density $\rho_0$ is just the baryonic mass density $\varrho_0$. Since~\eqref{pprho} is equivalent to 
\begin{equation}
p_i=\frac{n}{n_0} n_i \frac{\partial w}{\partial n_i},
\end{equation}
we see that a stress-free reference state is a stationary point of $w(n_1,n_2,n_3)$, and compatibility with linear elasticity yields the well-known conditions
\begin{equation}
\frac{\partial^2 w}{\partial n_i\partial n_j}(1,1,1) 
=\lambda+2\mu\delta_{ij} \, , \qquad i,j=1,2,3,
\end{equation}
where $\lambda$ and $\mu$ are the \emph{Lam\'e constants}. It is easy to check that the reference state is a nondegenerate local minimum of $w$ if and only if $\mu>0$ and $\lambda + \frac23 \mu > 0$. 

It follows that the first order derivatives of the principal pressures satisfy the reference state compatibility conditions
\begin{equation}\label{PPLames}
\frac{\partial p_i}{\partial n_j}(1,1,1)=\lambda+2\mu\delta_{ij},
\end{equation}
and the wave speeds satisfy the reference state conditions
\begin{equation}\label{RSws0}
c^2_{\mathrm{L}i}(1,1,1)=\frac{\lambda+2\mu}{\rho_0},\qquad c^2_{\mathrm{T}ij}(1,1,1)=\frac{\mu}{\rho_0},
\end{equation}
and
\begin{equation}
\frac{dp_\mathrm{iso}}{d\rho}(1,1,1)=c^2_{\mathrm{L}i}(1,1,1)-\frac{4}{3}c^2_{\mathrm{T}ij}(1,1,1)=\frac{\lambda+\frac{2}{3}\mu}{\rho_0}.
\end{equation}
\begin{remark}\label{modulicond}
	Linear elasticity is fully characterized by two elastic constants. Other elastic constants of interest are the \emph{p-wave modulus} $L$, the \emph{Poisson ratio} $\nu$, the \emph{Young modulus} $E$, and the \emph{bulk modulus} $K$, given in terms of the Lam\'e parameters by, respectively,
	\begin{equation}
	L=\lambda+2\mu,\qquad \nu=\frac{\lambda}{2(\lambda+\mu)},\qquad E=\frac{(3\lambda+2\mu)\mu}{\lambda+\mu},\qquad K=\lambda+\frac{2}{3}\mu.
	\end{equation} 
  The standard physical admissibility conditions 
  \begin{equation}
      \mu\geq0,\qquad K>0
  \end{equation}
  imply
  \begin{equation}
      L>0,\qquad E\geq0,\qquad \nu\in\biggl(-1,\frac12\,\biggr],
  \end{equation}
  where $\nu=\frac{1}{2}$ and $E=0$ for $\mu=0$, while the lower bound on $\nu$ arises from the condition $K>0$.
\end{remark}
\begin{definition}[Pre-stressed reference state]
The reference state is said to be \emph{pre-stressed} if the principal pressures satisfy the \emph{pre-stressed reference state condition}
\begin{equation}
p_i(1,1,1)=p_0, \qquad i=1,2,3,
\end{equation}
for some constant $p_0\neq0$, and the potential energy density satisfies
\begin{equation}
    w(1,1,1)=w_0
\end{equation}
for some constant $w_0\neq 0$. 
\end{definition}
For relativistic materials, the reference constant energy density consists of the sum $\rho_0= \varrho_0+w_0$, and so $w_0$ can be see as the difference between the rest energy density and the rest mass density; in particular, the stress-free case corresponds to $w_0=0$. For ultra-relativistic materials, where the rest baryonic mass density does not contribute to the stored energy function, we simply have $\rho_0=w_0$. 
In the pre-stressed case, $\lambda$ and $\mu$ do not have a well defined meaning, since the reference state is arbitrary (it is not an equilibrium state, and so there is no particular property singling it out). Nevertheless, one can still take the relation~\eqref{PPLames} to be valid, independently of $p_0$, so that the wave speeds satisfy the reference state conditions
\begin{equation}\label{RSws}
c^2_{\mathrm{L}i}(1,1,1)=\frac{\lambda+2\mu}{\rho_0+p_0},\qquad c^2_{\mathrm{T}ij}(1,1,1)=\frac{\mu}{\rho_0+p_0},
\end{equation}
and
\begin{equation}
\frac{dp_\mathrm{iso}}{d\rho}(1,1,1)=c^2_{\mathrm{L}i}(1,1,1)-\frac{4}{3}c^2_{\mathrm{T}ij}(1,1,1)=\frac{\lambda+\frac{2}{3}\mu}{\rho_0+p_0}=\frac{K}{\rho_0+p_0}.
\end{equation}
\begin{remark}[Invariance under renormalization of the reference state]
If we choose a new reference state which is compressed by a linear factor $f$ with respect to an original reference state, then the new linear particle densities $\tilde{n}_i$ are related to the original linear particle densities $n_i$ by
\begin{equation}
n_i=f\tilde{n}_i .
\end{equation}
The expression $\tilde\rho$ of the Lagrangian as a function of the new linear densities must of course satisfy
\begin{equation}
    \tilde{\rho}(\tilde{n}_1,\tilde{n}_2,\tilde{n}_3) = \rho(n_1,n_2,n_3)=\rho(f\tilde{n}_1,f\tilde{n}_2,f\tilde{n}_3).
\end{equation}
The new elastic parameters $\tilde\lambda$ and $\tilde\mu$ (and others) will in general differ from $\lambda$ and $\mu$, as they are calculated with respect to a different reference state. However, it is usually possible to construct combinations of these parameters which are invariant under the renormalization of the reference state. It is these combinations, rather than the standard elastic parameters, that carry physical meaning.
\end{remark}
 \begin{remark}
     For ultra-relativistic materials (Definition~\ref{DefURM}), there is an extra symmetry due to the invariance of $\rho$ under rescalings of the rest baryonic mass density,
     \begin{equation}
      \tilde{\varrho}_0    = f \varrho_0,
     \end{equation}
     while keeping $n_i$ fixed. Notice that this is quite different from renormalizing the reference state, which is simply a change in the description of a \emph{fixed} material: here we are modifying the material itself by adding more baryons per unit volume, so that its (gauge-invariant) elastic parameters will change.
 \end{remark}
%
\subsection{Some remarks about stress-free and pre-stressed reference state materials}


    %
      %
     %
     %

\begin{proposition}
The transformation
    \begin{equation}\label{Tw}
        w^{(\mathrm{ps})}(n_1,n_2,n_3)=w^{(\mathrm{sf})}(n_1,n_2,n_3)-\alpha_0-p_0(n_1n_2n_3)^{-1},
      \end{equation}
      where $\alpha_0=-(w_0+p_0)$ (with $p_0\neq0$ and $w_0=\rho_0-\varrho_0^{(\mathrm{ps})}\neq0$), takes any given stress-free natural reference state material to a 1-parameter family (parameterized by $p_0$) of pre-stressed reference state materials (or vice-versa). This changes the energy density according to
     \begin{equation}\label{RhoTransf}
\rho^{(\mathrm{ps})}(n_1,n_2,n_3)=\rho^{(\mathrm{sf})}(n_1,n_2,n_3)+\left(-\alpha_0+\varrho^{(\mathrm{ps})}_0-\varrho^{(\mathrm{sf})}_0\right)(n_1 n_2 n_3)-p_0,
\end{equation}
     where $\varrho^{(\mathrm{ps})}_0=\varrho^{(\mathrm{sf})}_0$ for relativistic materials, and $\varrho^{(\mathrm{ps})}_0=0$ for ultra-relativistic materials.
\end{proposition}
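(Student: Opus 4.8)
The plan is to reduce everything to the single identity $\rho = (n_1n_2n_3)\,\epsilon$, which follows from the decomposition $\rho = \tfrac{n}{n_0}\epsilon$ together with $\tfrac{n}{n_0} = n_1n_2n_3$ from~\eqref{PLD}, and from the definitions~\eqref{GenRED} and~\eqref{GenURED}. Writing $\epsilon = \varrho_0 + w$, with the convention that the baryonic contribution $\varrho_0$ is present for relativistic materials and set to zero for ultra-relativistic ones, both cases are captured by $\rho = (n_1n_2n_3)(\varrho_0 + w)$. First I would compute $\rho^{(\mathrm{ps})}$ by inserting the transformation~\eqref{Tw} into $\rho^{(\mathrm{ps})} = (n_1n_2n_3)(\varrho_0^{(\mathrm{ps})} + w^{(\mathrm{ps})})$:
\begin{equation}
\rho^{(\mathrm{ps})} = (n_1n_2n_3)\bigl(\varrho_0^{(\mathrm{ps})} + w^{(\mathrm{sf})} - \alpha_0\bigr) - p_0.
\end{equation}
Substituting $(n_1n_2n_3)w^{(\mathrm{sf})} = \rho^{(\mathrm{sf})} - (n_1n_2n_3)\varrho_0^{(\mathrm{sf})}$, which is read off from $\rho^{(\mathrm{sf})} = (n_1n_2n_3)(\varrho_0^{(\mathrm{sf})}+w^{(\mathrm{sf})})$, immediately yields~\eqref{RhoTransf}, with $\varrho_0^{(\mathrm{ps})} = \varrho_0^{(\mathrm{sf})}$ in the relativistic case and $\varrho_0^{(\mathrm{ps})} = 0$ in the ultra-relativistic case. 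This step is a direct substitution.

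Next I would verify that the transformed material is genuinely a pre-stressed reference state material. The key observation is that $p_i[\rho] := n_i\,\partial_{n_i}\rho - \rho$ from~\eqref{pprho} is a linear operator in $\rho$, so the pressures of the three additive pieces of~\eqref{RhoTransf} can be computed separately. Since $n_1n_2n_3$ is homogeneous of degree one in each $n_i$ we have $n_i\,\partial_{n_i}(n_1n_2n_3) = n_1n_2n_3$, hence $p_i[(n_1n_2n_3)] = 0$; and a constant contributes $p_i[-p_0] = p_0$. Therefore
\begin{equation}
p_i^{(\mathrm{ps})} = p_i^{(\mathrm{sf})} + p_0, \qquad i=1,2,3,
\end{equation}
independently of the constant multiplying $(n_1n_2n_3)$. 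Evaluating at the reference state $n_1=n_2=n_3=1$, where the stress-free condition gives $p_i^{(\mathrm{sf})}(1,1,1)=0$, yields $p_i^{(\mathrm{ps})}(1,1,1)=p_0\neq0$, the pre-stressed reference state condition. This reproduces, without symmetry assumptions, the relations $\widehat p_\mathrm{rad}^{(\mathrm{ps})} = \widehat p_\mathrm{rad}^{(\mathrm{sf})}+p_0$ and $\widehat p_\mathrm{tan}^{(\mathrm{ps})} = \widehat p_\mathrm{tan}^{(\mathrm{sf})}+p_0$ established in the spherically symmetric proof of~\eqref{SFPSTransfw}--\eqref{TransfRho}.

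Finally I would confirm the potential-energy normalization: at $n_1=n_2=n_3=1$ the transformation~\eqref{Tw} gives $w^{(\mathrm{ps})}(1,1,1) = w^{(\mathrm{sf})}(1,1,1) - \alpha_0 - p_0 = -\alpha_0 - p_0 = w_0$, using $w^{(\mathrm{sf})}(1,1,1)=0$ and $\alpha_0 = -(w_0+p_0)$; since $w_0\neq0$ this is the pre-stressed potential condition. Invertibility of~\eqref{Tw}, namely $w^{(\mathrm{sf})} = w^{(\mathrm{ps})} + \alpha_0 + p_0(n_1n_2n_3)^{-1}$, establishes the ``or vice versa'' direction. I do not anticipate a genuine obstacle here; the computation is elementary once the identity $\rho = (n_1n_2n_3)\epsilon$ is in place. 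The only point demanding care is the bookkeeping that distinguishes relativistic from ultra-relativistic materials, which enters solely through the value of $\varrho_0^{(\mathrm{ps})}$ in~\eqref{RhoTransf}, together with the observation that the additive $(n_1n_2n_3)$-term, whatever its coefficient, drops out of every principal pressure.
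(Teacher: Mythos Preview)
Your proof is correct and follows the same idea as the paper's own proof, namely deducing $p_i^{(\mathrm{ps})} = p_i^{(\mathrm{sf})} + p_0$ from~\eqref{pprho} applied to~\eqref{RhoTransf}. You are in fact more thorough than the paper, which states only the pressure shift: you also derive~\eqref{RhoTransf} from~\eqref{Tw}, check the reference-state normalizations $p_i^{(\mathrm{ps})}(1,1,1)=p_0$ and $w^{(\mathrm{ps})}(1,1,1)=w_0$, and note the invertibility for the ``vice versa'' clause.
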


%
      %
%
%
%
%
\begin{proof}
By~\eqref{pprho}, the principal pressures satisfy
\begin{equation}
p^{(\mathrm{ps})}_i (n_1,n_2,n_3) = p^{(\mathrm{sf})}_i(n_1,n_2,n_3)+p_0.
\end{equation}
\end{proof}
\begin{definition}[The natural choice of the reference state pressure]
The natural choice of $p_0$ is the minimum value needed in order for the isotropic state condition
\begin{equation}
\widehat{p}^{(\mathrm{ps})}_{\mathrm{iso}}(n_\mathrm{c},n_\mathrm{c},n_\mathrm{c})>0\quad \text{for}\quad n_\mathrm{c}>0
\end{equation}
to hold.   
\end{definition}
%

\subsection{Examples of elastic materials}\label{ApEx}
\begin{examples}[Beig-Schmidt materials]\label{BS}
	The relativistic stored energy function of the Beig-Schmidt materials~\cite{Beig:2002pk} is
	\begin{equation}
	\begin{split}
	\epsilon(n_1,n_2,n_3)=\rho_0&+\frac{3}{8}(3\lambda+2\mu)+\frac{1}{8}(\lambda+2\mu)(n^2_1+n^2_2+n^2_3)^2-\frac{1}{4}(3\lambda+2\mu)(n^2_1+n^2_2+n^2_3) \\
	& -\frac{\mu}{2}(n^2_1n^2_2+n^2_1 n^2_3 +n^2_2 n^2_3).
	\end{split}
	\end{equation}
It is a stress-free natural reference state material, with $p_0=w_0=0$. 
\end{examples}
\begin{examples}[Quasi-Hookean materials]\label{QH}
An elastic material is said to be \emph{quasi-Hookean} if its energy density is given by
\begin{equation}\label{QHEq}
\rho(n_1,n_2,n_3)=\check{\rho}(n/n_0)+\check{\mu}(n/n_0)S(n_1,n_2,n_3),
\end{equation}
where $S$ is a \emph{shear scalar}, that is, a symmetric function of $(n_1,n_2,n_3)$ which is invariant under rescalings and vanishes on isotropic states. One possible choice for the shear scalar was given by Tahvildar-Zadeh in~\cite{Tah98}:
\begin{equation}\label{Tahshear}
S(n_1,n_2,n_3)=\frac{n^2_1+n^2_2+n^2_3}{(n_1 n_2 n_3)^{2/3}}-3.
\end{equation}
A different choice was introduced by Karlovini \& Samuelsson in~\cite{Karlovini:2002fc}:
\begin{equation}\label{KSshear}
S(n_1,n_2,n_3)=\frac{1}{12}\left[\left(\frac{n_1}{n_2}-\frac{n_2}{n_1}\right)^2+\left(\frac{n_1}{n_3}-\frac{n_3}{n_1}\right)^2+\left(\frac{n_2}{n_3}-\frac{n_3}{n_2}\right)^2\right].
\end{equation}

\begin{itemize}
    \item Unsheared relativistic polytropes:
\end{itemize}
A popular choice for the unsheared quantities is to consider the relativistic polytropic fluid,
\begin{equation}
    \check{\rho}\left(\frac{n}{n_0}\right)=\left(\rho_0-\frac{K}{\gamma(\gamma-1)}\right)\left(\frac{n}{n_0}\right)+\frac{K}{\gamma(\gamma-1)}\left(\frac{n}{n_0}\right)^{\gamma}
\end{equation}
and
\begin{equation}
    \check{\mu}\left(\frac{n}{n_0}\right)=k p_\mathrm{iso}=k\frac{K}{\gamma}\left(\frac{n}{n_0}\right)^{\gamma},
\end{equation}
where $\gamma\in(0,\infty)\setminus\{1\}$ is the \emph{adiabatic index}, $k$ is a dimensionless constant, given by
\begin{equation}
k_\mathrm{Tah}=\frac{\mu}{2p_0} \qquad \text{or} \qquad k_{\mathrm{KS}}=\frac{\mu}{p_0}
\end{equation}
for the Tahvildar-Zadeh or the Karlovini \& Samuelsson choices of the shear scalar, respectively, and
\begin{equation}
    p_0=\frac{K}{\gamma}.
\end{equation}

\begin{itemize}
\item Unsheared linear EoS:
\end{itemize}
    Other popular models consists in taking the limit
    \begin{equation}
     \varrho_0\rightarrow 0\quad \Leftrightarrow\quad   \rho_0\rightarrow\frac{K}{\gamma(\gamma-1)}
    \end{equation}
    for the unsheared polytropes, which belong to the class of ultra-relativistic materials with
    \begin{equation}
        \rho_0=w_0, \qquad w_0 =\frac{p_0}{\gamma-1}.
    \end{equation}
    Independently of the choice of the shear scalar $S(\delta,\eta)$, all these models are characterized by having a linear EoS in the isotropic state, 
    \begin{equation}
        p_\mathrm{iso}(n_\mathrm{c},n_\mathrm{c},n_\mathrm{c})=(\gamma-1)\rho(n_\mathrm{c},n_\mathrm{c},n_\mathrm{c}),
    \end{equation}
    and
    \begin{equation}
\frac{dp_\mathrm{iso}}{d\rho}=c^2_\mathrm{L}(n_\mathrm{c},n_\mathrm{c},n_\mathrm{c})-\frac{4}{3}c^2_\mathrm{T}(n_\mathrm{c},n_\mathrm{c},n_\mathrm{c})=\gamma-1.    
\end{equation}
Using the reference state identity
\begin{equation}\label{identity475}
    \rho_0+p_0 = \frac{K}{\gamma-1},
\end{equation}
the longitudinal and transverse wave speeds in the isotropic state are constant and satisfy
\begin{equation}
    c^2_\mathrm{L}(n_\mathrm{c},n_\mathrm{c},n_\mathrm{c})=\frac{L}{K}(\gamma-1),\qquad c^2_\mathrm{T}(n_\mathrm{c},n_\mathrm{c},n_\mathrm{c})=\frac{\mu}{K}(\gamma-1).
\end{equation}

\begin{itemize}
\item Unsheared affine EoS:
\end{itemize}
For the material with a stress-free reference state ($p_0=0$), the identity~\eqref{identity475} yields
\begin{equation}
    \rho_0=\frac{K}{\gamma-1},
\end{equation}
and the corresponding EoS in the isotropic state is
    \begin{equation}
        p_\mathrm{iso}(n_\mathrm{c},n_\mathrm{c},n_\mathrm{c})=(\gamma-1)\bigl(\rho(n_\mathrm{c},n_\mathrm{c},n_\mathrm{c})-\rho_0\bigr).
    \end{equation}
\end{examples}
\begin{remark}\label{ShearPol}
    Independently of the choice of the shear scalar $S$, in the limit $\mu\rightarrow 0^{+}$ the relativistic quasi-Hookean polytropes reduce to a fluid with polytropic EoS of Example~\ref{Ex1}, with the identifications 
\begin{equation}
 \mathrm{K}=\frac{\lambda}{\gamma n^{\gamma}_0},\qquad    \mathrm{C}=\left(\frac{\gamma }{\lambda}\right)^{1/\gamma}\left(\rho_0-\frac{\lambda}{\gamma(\gamma-1)}\right),
\end{equation}
and the quasi-Hookean material with unsheared linear EoS reduces to the fluid with linear EoS of Example~\ref{Ex2}, which has $\mathrm{C}=0$, that is, $\rho_0=\lambda/(\gamma(\gamma-1))$.
\end{remark}

\begin{examples}[Materials with constant longitudinal wave speeds]\label{KarlSam}
Materials with constant longitudinal wave speeds,
\begin{equation}
 c^2_{\mathrm{L}i}=\gamma-1,\qquad \gamma\in(1,2],
\end{equation}
were introduced by Karlovini \& Samuelsson in~\cite{Karlovini:2004ix}. For this family of materials, it follows from~\eqref{RSws} that
\begin{equation}\label{cLconst}
\rho_0+p_0 = \frac{\lambda+2\mu}{\gamma-1}.
\end{equation}
\begin{itemize}
    \item{Natural stress-free reference state materials:}
\end{itemize}
    The family of materials with constant longitudinal wave speeds and natural stress-free reference state has
\begin{equation}
 p_0=0, \qquad \rho_0=\frac{\lambda+2\mu}{\gamma-1}.   
\end{equation}
Its deformation potential is given by
\begin{equation}
\begin{split}
    w^{(\mathrm{sf})}=&-\frac{\lambda+2\mu}{\gamma-1}+\left(\frac{\lambda+2\mu}{\gamma}-\frac{4\mu}{\gamma^2}+C_\mathrm{KS}\right)(n_1 n_2 n_3)^{-1}+\left(\frac{\lambda+2\mu}{\gamma(\gamma-1)} - \frac{2\mu}{\gamma^2}-C_{\mathrm{KS}}\right) (n_1 n_2 n_3)^{\gamma-1} \\
    & + \left(\frac{2\mu}{\gamma^2}-C_{\mathrm{KS}}\right) \left(\frac{n_1^{\gamma-1}}{n_2 n_3} + \frac{n_2^{\gamma-1}}{n_1 n_3} + \frac{n_3^{\gamma-1}}{n_1 n_2}\right) + C_{\mathrm{KS}} \left[ \frac{(n_1 n_2)^{\gamma-1}}{n_3} + \frac{(n_2 n_3)^{\gamma-1}}{n_1} + \frac{(n_1 n_3)^{\gamma-1}}{n_2} \right],
\end{split}
\end{equation}
and its Lagrangian (energy density) is given by
\begin{equation}\label{KSSF}
\begin{split}
\rho^{(\mathrm{sf})}(n_1,n_2,n_3) =  & \,\, \frac{\lambda+2\mu}{\gamma}-\frac{4\mu}{\gamma^2}+C_{\mathrm{KS}}+ \left(\frac{\lambda+2\mu}{\gamma(\gamma-1)} - \frac{2\mu}{\gamma^2}-C_{\mathrm{KS}}\right) (n_1 n_2 n_3)^\gamma \\ 
& + \left(\frac{2\mu}{\gamma^2}-C_{\mathrm{KS}}\right) (n_1^\gamma + n_2^\gamma + n_3^\gamma) + C_{\mathrm{KS}} \left[ (n_1 n_2)^\gamma + (n_2 n_3)^\gamma + (n_3 n_1)^\gamma \right],
\end{split}
\end{equation}
where $C_\mathrm{KS}$ is a third elastic constant.

\begin{itemize}
    \item{The natural choice of pre-stressed reference state materials:}
\end{itemize}

The pre-stressed reference state family of materials with constant longitudinal wave speeds belongs to the class of ultra-relativistic materials, and can be obtained using the transformation~\eqref{Tw} with $w_0=\rho_0$, together with~\eqref{cLconst}, which results in
\begin{equation}
\begin{split}
    w^{(\mathrm{ps})}=&\left(\frac{\lambda+2\mu}{\gamma}-\frac{4\mu}{\gamma^2}+C_\mathrm{KS}-p_0\right)(n_1 n_2 n_3)^{-1}+\left(\frac{\lambda+2\mu}{\gamma(\gamma-1)} - \frac{2\mu}{\gamma^2}-C_{\mathrm{KS}}\right) (n_1 n_2 n_3)^{\gamma-1} \\
    & + \left(\frac{2\mu}{\gamma^2}-C_{\mathrm{KS}}\right) \left(\frac{n_1^{\gamma-1}}{n_2 n_3} + \frac{n_2^{\gamma-1}}{n_1 n_3} + \frac{n_3^{\gamma-1}}{n_1 n_2}\right) + C_{\mathrm{KS}} \left[ \frac{(n_1 n_2)^{\gamma-1}}{n_3} + \frac{(n_2 n_3)^{\gamma-1}}{n_1} + \frac{(n_1 n_3)^{\gamma-1}}{n_2} \right].
\end{split}
\end{equation}
The natural choice of $p_0$ corresponds to
\begin{equation}
 p_0=\frac{\lambda+2\mu}{\gamma}-\frac{4\mu}{\gamma^2}+C_{\mathrm{KS}},
\end{equation}
and from~\eqref{cLconst}
\begin{equation}
    \rho_0 = \frac{\lambda+2\mu}{\gamma(\gamma-1)}+\frac{4\mu}{\gamma^2}-C_\mathrm{KS}.
\end{equation}
The Lagrangian (energy density) is then given by 
\begin{equation}\label{KSPS}
\begin{split}
\rho^{\mathrm{(ps)}} =  & \,\, \left(\frac{\lambda+2\mu}{\gamma(\gamma-1)} - \frac{2\mu}{\gamma^2}-C_{\mathrm{KS}}\right) (n_1 n_2 n_3)^\gamma \\ & + \left(\frac{2\mu}{\gamma^2}-C_{\mathrm{KS}}\right) (n_1^\gamma + n_2^\gamma + n_3^\gamma) + C_{\mathrm{KS}} \left[ (n_1 n_2)^\gamma + (n_2 n_3)^\gamma + (n_3 n_1)^\gamma \right].
\end{split}
\end{equation}
%

\begin{itemize}
    \item Wave speeds:
\end{itemize}

For both families the transverse wave speeds~\eqref{TWaves} are given by
\begin{equation}\label{CLws}
    c^{2}_{\mathrm{T}12}=\frac{n^{2-\gamma}_2(n^{\gamma}_2-n^{\gamma}_1)\left(\frac{2\mu}{\gamma^2}-C_\mathrm{KS}+C_\mathrm{KS}n^{\gamma}_3\right)}{(n^2_2-n^2_1)\left[\left(\frac{\lambda+2\mu}{\gamma(\gamma-1)}-\frac{2\mu}{\gamma^2}-C_\mathrm{KS}\right)(n_1n_3)^\gamma+\left(\frac{2\mu}{\gamma^2}-C_\mathrm{KS}\right)+C_\mathrm{KS}(n^\gamma_1+n^\gamma_3)\right]}
\end{equation}
and similar expressions obtained by permuting $(1,2,3)$.
\end{examples}
\begin{remark}
Taking the limit $\mu\rightarrow 0^{+}$ and $C_\mathrm{KS}\rightarrow 0$, the first family reduces to the fluid with affine EoS of Example~\ref{Ex3}, and the second family reduces to the fluid with linear EoS of Example~\ref{Ex2}, with the identification
\begin{equation}
\mathrm{K}=\frac{\lambda}{\gamma n^{\gamma}_0}\, .   
\end{equation}
The linear EoS can also be obtained from the relativistic polytropic fluid by setting $\mathrm{C}=0$, that is, $\rho_0=\lambda/(\gamma(\gamma-1))$.
\end{remark}

%
%
%
%
%

\newpage

\section{Einstein-elastic equations in spherical symmetry}
\label{App:B}
The Einstein equations in spherical symmetry have been treated in full generality by e.g. Christodoulou in~\cite{Christodoulou:1995}.
Under this symmetry, the spacetime manifold $\left(\mathcal{S},\bm{g}\right)$ admits a $SO(3)$ action by isometries, whose orbits are either fixed points or $2$-spheres. The orbit space $\mathcal{Q}=\mathcal{S}/SO(3)$ is a 2-dimensional Lorentzian manifold with boundary, corresponding to the set of fixed points in $\mathcal{S}$, which form timelike curves 
(necessarily geodesics). The {\em radius function}, defined by $r(p):=\sqrt{\text{Area}({\mathcal O}_p)/4\pi}$ (where ${\mathcal O}_p$ is the orbit through $p$), is monotonically increasing along the generators of the future null cones of fixed points, at least initially. In appropriate local coordinates, the metric $\bm{g}$ on $\mathcal{S}$ has the general form
\begin{equation}\label{SSMetric}
 \bm{g}=g_{ab}(x)dx^{a}dx^{b}+r^{2}(x)\cancel{g}_{AB}(y)dy^{A}dy^{B},
\end{equation}
where $g_{ab}(x)dx^{a}dx^{b}$ (with $a,b,=0,1$) is the Lorentzian metric on $\mathcal{Q}$, and $\cancel{g}_{AB}(y) dy^{A}dy^{B}$ (with $A,B=2,3$) is the metric on the unit 2-sphere. We shall use spherical coordinates $y^{A}=(\theta,\varphi)$, with $\theta\in(0,\pi)$ and $\varphi\in(0,2\pi)$, on the unit 2-sphere, so that
\begin{equation}
\cancel{g}_{AB}(y)dy^{A}dy^{B}=d\theta^2+\sin^2{\theta}d\varphi^2.
\end{equation}
In spherical symmetry, the 4-velocity of matter can be seen as a unit vector field on $\mathcal{Q}$,
\begin{equation}
\bm{u}=u^{a}(x)\partial_a,\qquad g_{cd}u^{c}u^{d}=-1,
\end{equation}
and the Riemannian metric $\bm{h}$ on the orthogonal space to $\bm{u}$ has nonvanishing components
\begin{equation}\label{SSh}
h_{ab}(x)=u_a(x) u_b(x)+g_{ab}(x), \qquad h_{\theta\theta}=r^2(x),\qquad h_{\varphi\varphi}=r^2(x)\sin^2{\theta}.
\end{equation}
The $4$-velocity $\bm{u}$ is the eigenvector of $\bm{h}$ associated with the zero eigenvalue, while the remaining normalized eigenvectors $\bm{e}_{(i)}$ are given by
\begin{equation}
\bm{e}_{(1)}=e_{(1)}^a(x)\partial_a,\qquad \bm{e}_{(2)} =\frac{1}{r}\partial_\theta,\qquad 
\bm{e}_{(3)} =\frac{1}{r\sin{\theta}}\partial_\varphi
\end{equation}
(with $g_{ab}u^{a}e^{b}_{(1)}=0$ and $g_{ab}e^{a}_{(1)}e^{b}_{(1)}=1$). The stress-energy tensor inherits the spacetime symmetry, so that we have
\begin{equation}
 \bm{T}=T_{ab}(x)dx^{a}dx^{b}+r^2(x)p_\mathrm{tan}(x)\cancel{g}_{AB}(y)dy^{A}dy^{B}.
\end{equation}
Here
\begin{equation}
T_{ab}(x)=\rho(x) u_{a}u_{b}+\sigma_{ab}(x), \qquad \sigma_{ab}(x)=p_\mathrm{rad}(x) h_{ab}(x),
\end{equation}
or equivalently
\begin{equation}\label{MatSS}
T_{ab}(x) = (\rho(x)+p_\mathrm{rad}(x))u_a u_b + p_\mathrm{rad}(x) g_{ab}(x), \qquad g^{ab}(x)T_{ab}(x)=p_\mathrm{rad}(x)-\rho(x),
\end{equation}
where $\rho(x)$ is the \emph{energy density}, and
\begin{equation}
p_\mathrm{rad}(x)=p_1(x), \qquad p_\mathrm{tan}(x)=p_2(x)=p_3(x)
\end{equation}
denote the \emph{radial} and \emph{tangential  pressures}, respectively. On $\mathcal{Q}$ we define the {\it Hawking (or Misner-Sharp) mass} as
 \begin{equation}\label{mass}
  m(x)=\frac{r(x)}{2}\left(1-g^{cd}(x)\nabla_{c}r(x)\nabla_{d}r(x)\right).
 \end{equation}
The nonvanishing connection coefficients for the metric~\eqref{SSMetric} are
\begin{align}
 \Gamma^{a}_{bc},\qquad\Gamma^{a}_{BC}=-g^{ab}\frac{\nabla_b r}{r} r^2\cancel{g}_{AB}, \qquad \Gamma^{A}_{BC},\qquad \Gamma^{A}_{aB}=\frac{\nabla_a r}{r} \delta^{A}_{B},
\end{align} 
where $\Gamma^{a}_{bc}$ and $\Gamma^{A}_{BC}$ are the connection coeficcients  of of the Lorentzian metric on $\mathcal{Q}$ and of the round metric on the unit 2-sphere.
%
%
The Ricci tensor is then
\begin{equation}
R_{ab}=Kg_{ab}-\frac{2}{r}\nabla_{a}\nabla_{b}r ,\qquad 
R_{aA}=0,\qquad 
R_{AB}=\left(\frac{2m}{r}-rg^{ab}\nabla_{a}\nabla_{b}r\right)\cancel{g}_{AB},
\end{equation}
where $K$ is the Gauss curvature of $g_{ab}$
 and $\nabla$ is the Levi-Civita connection of the Lorentzian metric on $\mathcal{Q}$, so that the Ricci scalar is
\begin{equation}
    R=g^{ab}R_{ab}+\frac{\cancel{g}^{AB}}{r^{2}}R_{AB}=2K+\frac{2}{r^2}\left(\frac{2m}{r}-2rg^{ab}\nabla_{a}\nabla_{b}r\right).
\end{equation}
%
%
%
%
%
%
Taking the trace of the Einstein equations~\eqref{Einstein} and substituting back, the Einstein equations take the form
\begin{equation}
    R_{\mu\nu}=8\pi \left[T_{\mu\nu}-\frac{1}{2}\left(g^{\alpha\beta}T_{\alpha\beta}\right)g_{\mu\nu}\right].
\end{equation}
In spherical symmetry, these split into the two equations
\begin{subequations}\label{EinTrace}
 \begin{align}
  -\frac{2}{r}\nabla_{a}\nabla_{b}r+Kg_{ab}&=8\pi\left[(\rho+p_\mathrm{rad}) u_a u_b+\frac{1}{2}\Bigl((\rho-p_\mathrm{rad})-2(p_\mathrm{tan}-p_\mathrm{rad})\Bigr)g_{ab}\right] ,\label{EinstenQuotient}\\
 -rg^{ab}\nabla_{a}\nabla_{b}r+\frac{2m}{r}&=4\pi r^{2}(\rho-p_\mathrm{rad})  .\label{EinsteinAngular} 
 \end{align}
\end{subequations}
The trace of~\eqref{EinstenQuotient} gives
\begin{equation}\label{TraceQuotient}
 K-\frac{1}{r}g^{ab}\nabla_{a}\nabla_{b}r=-8\pi p_\mathrm{tan}.
\end{equation}
Substituting~\eqref{EinsteinAngular} into~\eqref{TraceQuotient}, and then the resulting equation into~\eqref{EinstenQuotient}, gives the equivalent system
\begin{subequations}
    \begin{align}
        \nabla_{a}\nabla_{b}r &=\frac{m}{r^2}g_{ab}-4\pi r\left[ (\rho+p_\mathrm{rad})u_a u_b+\rho g_{ab}\right], \label{Hess} \\
         K & =\frac{2m}{r^{3}}-4\pi\left(\rho-p_\mathrm{rad}+2p_\mathrm{tan}\right). \label{integrability}
    \end{align}
\end{subequations}
In spherical symmetry, the conservation of energy-momentum~\eqref{T} becomes 
\begin{equation}\label{diffeqSS}
\frac{1}{r^2}\nabla_{a}(r^2T^{a}_{\,\,\, b}) - \frac{2}{r}p_\mathrm{tan} \nabla_b r=\nabla_a T^{a}_{\,\,\,b}+\frac{2}{r}(\rho+p_\mathrm{rad})u_b u^{a}\nabla_{a}r-\frac{2}{r}(p_\mathrm{tan}-p_\mathrm{rad}) \nabla_b r =0 .
\end{equation}
Projecting in the directions parallel and orthogonal to $\bm{u}$ we obtain equations~\eqref{ConsEqs} in spherical symmetry:
\begin{subequations}\label{SSCE}
    \begin{align}
        & u^{a}\nabla_a\rho+\frac{1}{r^2}(\rho+p_\mathrm{rad})\nabla_a (r^2 u^{a})+\frac{2}{r}(p_\mathrm{tan}-p_\mathrm{rad})u^{a}\nabla_a r =0, \\
        &(\rho+p_\mathrm{rad})h^{b}_{c}a_b+h^{b}_{c}\left(\nabla_b p_\mathrm{rad}-\frac{2}{r}(p_\mathrm{tan}-p_\mathrm{rad})\nabla_b r\right)=0,
    \end{align}
\end{subequations}
where the components of the acceleration 4-vector are given by $a^{b}=u^{a}\nabla_a u^{b}$. 
It can be shown (see e.g.~\cite{Christodoulou:1995,Karlovini:2004ix}) that equation~\eqref{integrability} is the integrability condition of the Hessian system~\eqref{Hess}, and is implied by~\eqref{diffeqSS}. 
From the Hessian systen one can extract two equations for the mass function by contracting equation~\eqref{Hess} with $\nabla^b r$: 
\begin{equation}\label{massEqs}
  \nabla_{a}m(x)=4\pi r^{2}\left[(\rho+p_\mathrm{rad})u_a u_b+\rho g_{ab}\right]\nabla^{b} r.
\end{equation}
These can be also projected in the directions parallel and orthogonal to $\bm{u}$:
\begin{subequations}
\begin{align}
    u^{a}\nabla_a m &= -4\pi r^2 p_\mathrm{rad} u^a \nabla_{a} r, \\
    h^{b}_{a}\nabla_b m &=4\pi r^2 \rho h^{b}_{a}\nabla_{b} r.
\end{align}
\end{subequations}
%
%
%
Finally, given an elastic EoS, we have the conservation of particle number density:
\begin{equation}
\frac{1}{r^2}\nabla_{a}(r^2 nu^{a})\equiv u^{a}\nabla_a n+n\nabla_a u^{a}
+\frac{2}{r}nu^{a}\nabla_a r=0.
\end{equation}
%

\subsection{The material metric}\label{sec:materialmetric}
We assume that the material body $(\mathcal{B},\bm{\gamma})$, which is a Riemannian 3-dimensional manifold, is also spherically symmetric. Let $X^I=(R,\Theta,\mathcal{G})$ be spherical coordinates on the material body such that the group orbits have area $4\pi R^2$.  The material metric takes the form
\begin{equation}
\gamma_{IJ}dX^{I}dX^{J} = e^{2\Psi_0(R)}dR^2 +R^2(d\Theta^2+\sin^2\Theta \, d\mathcal{G}^2).
\end{equation}
%
%
Assuming that the projection map $\bm{\Pi}$ is equivariant with respect to the spherical symmetry, it can be expressed as
\begin{equation}\label{SSConfMap}
R=R(x),\qquad \Theta=\theta, \qquad \mathcal{G}=\varphi,
\end{equation}
so that the nonvanishing components of the configuration gradient~\eqref{DefT} are 
\begin{equation}\label{Qf}
f^{R}_{\,\,a}=\partial_a R,  \qquad f^{\Theta}_{\,\,\theta}=1,\qquad f^{\mathcal{G}}_{\,\,\varphi}=1.
\end{equation}
The nonvanishing components of $H^{IJ}$ are then given by
\begin{equation}\label{QHcomponents}
H^{RR}=g^{ab}(\partial_a R)(\partial_b R), \qquad H^ {\Theta\Theta}=\frac{1}{r^2},\qquad H^{\mathcal{G}\mathcal{G}}=\frac{1}{r^2\sin^2{\theta}},
\end{equation}
with $g^{ab}(\partial_a R)(\partial_b R)>0$, and the pull-back of $\bm{H}^{-1}$ under $\bf{\Pi}$ has components given in~\eqref{SSh}, with
\begin{equation}
h_{ab}=\frac{(\partial_{a}R)(\partial_{b}R)}{g^{cd}(\partial_{c}R)(\partial_{d}R)}.
\end{equation}

\subsection{Comoving coordinates}
Comoving coordinates consist in taking as coordinates on the quotient manifold the configuration radius $R$ and a comoving time $\tau$, i.e., the level curves of $\tau$ are the simultaneity curves of the elastic matter particles,\footnote{This defines $\tau$ up to rescaling; for asymptotically flat spacetimes, the standard normalization condition is $\lim_{r\to\infty}{\bf g}(\partial_\tau,\partial_\tau)=-1$.} and the level curves of $R$ are the corresponding flow lines. We assume that the center of symmetry in material space, $R = 0$, coincides with the center of symmetry in physical space, i.e., we assume that $R = 0$ is mapped under the configuration map to a flow line on the boundary of $\mathcal{Q}$, so that $r(\tau,0)=0$.

The metric of $\mathcal{Q}$ is given by
\begin{equation}\label{ComovingTimedep}
g_{ab}dx^adx^b =- e^{2\Phi(\tau,R)}d\tau^2+e^{2\Psi(\tau,R)}dR^2 ,
\end{equation} 
and the orthonormal frame on $\mathcal{Q}$ is of the form
\begin{equation}
u^{a}(\tau,R)\partial_a = e^{-\Phi(\tau,R)}\partial_\tau ,\qquad    e^{a}_{(1)}(\tau,R)\partial_a = e^{-\Psi(\tau,R)}\partial_R.
\end{equation}
The nontrivial components in~\eqref{Qf} are simply
\begin{equation}
f^{R}_{\,\,\tau}=0, \qquad f^{R}_{\,\,R}=1,
\end{equation}
while the nontrivial component in~\eqref{QHcomponents} is
\begin{equation}
H^{RR}=e^{-2\Psi},
\end{equation}
so that
\begin{equation}
    h_{\tau\tau}=0,\quad h_{\tau R} = 0, \quad h_{RR}=e^{2\Psi}.
\end{equation}
The Christoffel symbols $\Gamma^{a}_{bc}$ are given by
\begin{subequations}
	\begin{align}
	&\Gamma^{\tau}_{\tau\tau}=\partial_\tau \Phi,\qquad \Gamma^{\tau}_{\tau R}=\partial_R \Phi,\qquad \Gamma^{\tau}_{RR}=e^{-2(\Phi-\Psi)}\partial_\tau\Psi, \\
   &\Gamma^{R}_{\tau\tau}=e^{-2(\Psi-\Phi)}\partial_R\Phi ,\qquad \Gamma^{R}_{\tau R}=\partial_\tau\Psi,\qquad \Gamma^{R}_{RR}=\partial_R\Psi.
	\end{align}
\end{subequations}
From these, the acceleration vector on $\mathcal{Q}$ is given by
\begin{equation}
    a^{a}\partial_{a}=e^{-2\Psi}(\partial_{R}\Phi )\partial_R.
\end{equation}
The Hessian system~\eqref{Hess} results in a system of three equations for the areal radius function 
\begin{subequations}
    \begin{align}
        e^{-2\Phi}\left(\partial^2_\tau r-(\partial_\tau\Phi)\partial_\tau r\right) -e^{-2\Psi}(\partial_R\Phi)\partial_R r&= -\frac{m}{r^2} -4\pi r p_\mathrm{rad}, \\
        \partial_\tau\partial_R r-(\partial_R\Phi)\partial_\tau r -(\partial_\tau\Psi)\partial_R r &=0, \\
        e^{-2\Psi}\left(\partial^2_R r-(\partial_R\Psi)\partial_Rr\right)- e^{-2\Phi}(\partial_\tau\Psi)\partial_\tau r  &=\frac{m}{r^2}-4\pi r \rho,
    \end{align}
\end{subequations}
where the Hawking mass~\eqref{mass} is given by
\begin{equation}\label{masscom}
1-\frac{2m}{r}=-e^{-2\Phi}(\partial_\tau r)^2+e^{-2\Psi}(\partial_R r)^2.
\end{equation}
The divergence of the energy-momentum tensor~\eqref{SSCE} yields
\begin{subequations}
\begin{align}
&\partial_\tau \rho+\frac{2}{r}(\rho+p_\mathrm{rad})\left(\partial_\tau r+\frac{r}{2}\partial_\tau\Psi\right)=-\frac{2}{r}(p_\mathrm{tan}-p_\mathrm{rad}) \partial_\tau r, \\
&\partial_Rp_\mathrm{rad}=\frac{2}{r}(p_\mathrm{tan}-p_\mathrm{rad})\partial_Rr-(\rho+p_\mathrm{rad})\partial_R\Phi, \label{CauchySSCom}
\end{align}
\end{subequations}
and the equations for the mass~\eqref{massEqs} become
\begin{equation}\label{massevcom}
   e^{-\Phi}\partial_\tau m = -4\pi r^2 p_\mathrm{rad} v,\qquad \partial_R m = 4\pi r^2 \rho (\partial_Rr).
\end{equation}
The second equation for the mass can be integrated to yield
\begin{equation}
m(\tau,R)=4\pi \int^{R}_{0} \rho(\tau,S) \partial_R r(\tau,S)r^2(\tau,S)dS.
\end{equation}
The above system can be written as a first order system by introducing the radial velocity, defined as
\begin{equation}\label{DefvCom}
v(\tau,R)=e^{-\Phi(\tau,R)}\partial_\tau r(\tau,R),
\end{equation}
and a new variable
\begin{equation}
    u(\tau,R)=e^{-\Psi(\tau,R)}\partial_R r(\tau,R).
\end{equation}
From the expression~\eqref{masscom} for the Hawking mass, we can write 
\begin{equation}
    u^2=\left(1-\frac{2m}{r}\right)\langle v \rangle^2, \qquad \langle v\rangle=\left(1+\frac{v^2}{1-\frac{2m}{r}}\right)^{1/2}.
\end{equation}
The first equation of the Hessian system, together with~\eqref{CauchySSCom}, gives an evolution equation for the radial velocity:
\begin{equation}
\frac{(\rho+p_\mathrm{rad})e^{-\Phi}}{\left(1-\frac{2m}{r}+v^2\right)}\partial_\tau v +\frac{\partial_R p_\mathrm{rad}}{\partial_Rr}= \frac{2}{r}(p_\mathrm{tan}-p_\mathrm{rad})-\frac{(\rho+p_\mathrm{rad})}{\left(1-\frac{2m}{r}+v^2\right)}\left(\frac{m}{r^2}+4\pi rp_\mathrm{rad}\right).
\end{equation}
From the second equation of the Hessian system we obtain an evolution equation for $u$:
\begin{equation}
        e^{-\Phi}\partial_\tau u =e^{-\Psi} v \partial_R\Phi.
\end{equation}
%
%
%
%

\subsection{Schwarzschild coordinates}
The Schwarzschild (or Eulerian) coordinate system is $x^{a}=(t,r)$, where $r$ is area radius and $t$ is defined so that the curves of constant $t$ are orthogonal to the curves of constant $r$.\footnote{This defines $t$ up to rescaling; for asymptotically flat spacetimes, the standard normalization condition is $\lim_{r\to\infty}{\bf g}(\partial_t,\partial_t)=-1$.} In these coordinates, the line element on the orbit space reads
\begin{equation}\label{SchwaTimedep}
g_{ab}(t,r)dx^{a}dx^{b}=- e^{2\phi(t,r)}dt^2+e^{2\psi(t,r)}dr^2,
\end{equation} 
and the orthonormal frame on $\mathcal{Q}$ is of the form
\begin{equation}
u^{a}\partial_a=e^{-\phi}\langle v\rangle\partial_t +v\partial_r,\qquad e_{(1)}^a\partial_a =e^{\psi-\phi}v\partial_t+e^{-\psi}\langle v\rangle\partial_r,\qquad \langle v\rangle=\sqrt{1+e^{2\psi}v^2},
\end{equation}
where $v$ is the \emph{radial velocity}.
In Schwarzschild coordinates, it follows from~\eqref{SSConfMap} that
\begin{equation}
R=R(t,r),
\end{equation}
and so the nonvanishing components of the configuration gradient~\eqref{DefT} are 
\begin{equation}
f^{R}_{\,\,t}=\partial_t R, \qquad f^{R}_{\,\,r}=\partial_r R.
\end{equation}
The nonvanishing components of $H^{IJ}$ are then given by
\begin{equation}
H^{RR}=e^{-2\psi}(\partial_r R)^2-e^{-2\phi}(\partial_t R)^2,
\end{equation}
so that the pull-back of $\bm{H}^{-1}$ has the nonvanishing components
\begin{equation}\label{hSch}
	h_{tt} =
	e^{2(\psi+\phi)}v^2,\quad h_{tr}= 
	-e^{2\psi+\phi}\langle v\rangle v,\quad
	h_{rr} =
	e^{2\psi}\langle v\rangle^2,
\end{equation}
with the identifications
\begin{equation} \label{vand<v>}
v=-\frac{e^{-\phi}\partial_t R/\partial_r R}{\left(1-e^{2(\psi-\phi)}\left(\partial_t R/\partial_r R\right)^2\right)^{1/2}}, \qquad \langle v\rangle=\frac1{\left(1-e^{2(\psi-\phi)}\left(\partial_t R/\partial_r R\right)^2\right)^{1/2}}.
\end{equation}
The nonvanishing Christoffel symbols are given by
\begin{subequations}
	\begin{align}
	&\Gamma^{t}_{tt}=\partial_t\phi,\qquad \Gamma^{t}_{tr}=\partial_r\phi,\qquad \Gamma^{t}_{rr}=e^{2(\psi-\phi)}\partial_t\psi,      \\
    &\Gamma^{r}_{tt}= e^{2(\phi-\psi)}\partial_r\phi,\qquad \Gamma^{r}_{tr}=\partial_t\psi,\qquad \Gamma^{r}_{rr}=\partial_r\psi.
    \end{align}
\end{subequations}
%
%
%
%
%
%
%
%
Moreover, the nonvanishing components of the $4$-acceleration $a^{a}$ are~\footnote{Here we make use of the relations $\partial_t\langle v\rangle=\frac{v e^{2\psi}}{\langle v\rangle}(\partial_t v+v\partial_t\psi)$ and $\partial_r \langle v\rangle=\frac{v e^{2\psi}}{\langle v \rangle}(\partial_r v+v\partial_r\psi)$.}
\begin{subequations}
	\begin{align}
	a^{t} &= e^{-\phi}\langle v\rangle \left[v\partial_r\phi +e^{2\psi}\frac{v^2}{\langle v\rangle^2}(\partial_r v+v\partial_r\psi)+e^{2\psi-\phi}\frac{v}{\langle v \rangle}(\partial_t v+2v\partial_t\psi)\right],\\
	a^{r} &=e^{-\phi}\langle v\rangle \left(\partial_t v+2v\partial_t\psi\right)+\left(v\partial_r v+v^2\partial_r\psi\right)+e^{-2\psi}\langle v\rangle^2 \partial_r\phi.
	\end{align}
\end{subequations}
%
%
%
%
%
%
%
%
%
The Hessian system~\eqref{Hess} yields
\begin{subequations}
\begin{align}
    e^{-2\psi}\partial_r\phi &=\frac{m}{r^2}+4\pi r \left(p_\mathrm{rad} +(\rho+p_\mathrm{rad})e^{2\psi}v^2\right), \label{EE0}\\
    e^{-\phi}\langle v\rangle\partial_t\psi &= -4\pi r (\rho+p_\mathrm{rad})e^{2\psi}\langle v \rangle^2 v, \label{EE1}\\
    e^{-2\psi}\partial_r\psi&=-\frac{m}{r^2}+4\pi r\left(\rho +(\rho+p_\mathrm{rad})e^{2\psi}v^2\right), \label{EE2}
\end{align}
\end{subequations}
and equations~\eqref{SSCE} for the vanishing of the divergence of the stress-energy tensor~\eqref{ConsEqs}
become
%
%
%
\begin{subequations}\label{ConsSS}
	\begin{align}
	&e^{-\phi}\langle v\rangle\partial_t\rho+v\partial_r\rho+\frac{2}{r}(p_\mathrm{tan}-p_\mathrm{rad})v \nonumber\\
	&+(\rho+p_\mathrm{rad})\left[e^{-\phi}\langle v\rangle\left(\frac{e^{2\psi}v}{\langle v\rangle^2}\partial_t v+\left(1+\frac{e^{2\psi}v^2}{\langle v\rangle^2}\right)\partial_t\psi\right)+\partial_r v+v\left(\frac{2}{r}+\partial_r\phi+\partial_r\psi\right) \right] =0, \label{EMC0}\\
	& \frac{e^{2\psi}v}{\langle v\rangle^2}e^{-\phi}\langle v\rangle \partial_t p_\mathrm{rad}+\partial_r p_\mathrm{rad}-\frac{2}{r} (p_\mathrm{tan}-p_\mathrm{rad}) \nonumber \\ &+(\rho+p_\mathrm{rad})\left(\frac{e^{2\psi}}{\langle v\rangle^2}\left(e^{-\phi}\langle v\rangle\partial_t v+2ve^{-\phi}\langle v\rangle\partial_t \psi+v\partial_r v+v^2\partial_r\psi\right)+\partial_r\phi\right)=0. \label{EMC1}
	\end{align}
\end{subequations}
%
Finally, equation~\eqref{Current} for the conservation of the particle current is
\begin{equation}\label{PCSS}
\begin{split}
 e^{-\phi}\langle v\rangle \partial_t n+v\partial_r n +n\partial_r v + & nv\frac{e^{2\psi}}{\langle v \rangle^2} e^{-\phi}\langle v\rangle\partial_t v  \\
 &+n\left(\left(1+\frac{e^{2\psi}v^2}{\langle v\rangle^2}\right)e^{-\phi}\langle v\rangle\partial_t\psi+v\left(\frac{2}{r}+\partial_r\psi+\partial_r\phi\right)\right)=0.
\end{split}
\end{equation}
The metric function $\psi(t,r)$ is related to the Hawking mass function $m(t,r)$ by
\begin{equation}
e^{-2\psi(t,r)}=1-\frac{2m(t,r)}{r},\qquad \partial_t \psi = \frac{\partial_t m}{r\left(1-\frac{2m}{r}\right)},\qquad \partial_r\psi = \frac{\partial_r m-\frac{m}{r}}{r\left(1-\frac{2m}{r}\right)}.
\end{equation}

When written in terms of the mass function $m(t,r)$, equations~\eqref{EE1}-\eqref{EE2}  yield~\eqref{HwkMass}. Moreover, equations~\eqref{EE0} result in equation~\eqref{RelPot}. 
When written in terms of the mass function, equations~\eqref{EMC0}-\eqref{EMC1} lead to equations~\eqref{Cons1}-\eqref{Cons2}, while equation~\eqref{PCSS} for the number of particles density results in equation~\eqref{Consn}.

\subsection{Principal linear densities}\label{APpld}
For homogeneous and isotropic elastic materials, the constitutive functions depend on the principal linear densities, which are obtained by taking the square root of the eigenvalues of the linear operator $\mathcal{H}^{I}_{\,\,\,J}=H^{IK}\gamma_{KJ}$. In the spherically symmetric setting, two of the principal linear densities coincide, $n_2=n_3=n_\mathrm{t}$, corresponding to the degenerate case~\ref{Deg}, and are given by
\begin{equation}
    n_1(x)=e^{\Psi_0(R(x))}\left(g^{ab}(\partial_a R)(\partial_b R)\right)^{1/2}, \qquad n_\mathrm{t}(x)=\frac{R(x)}{r(x)}.
\end{equation}
In the special case of a spherically symmetric perfect fluid, $\rho(x)=\widehat{\rho}\left(\delta(x)\right)$, where
\begin{equation}
 \delta(x)\equiv\frac{n(x)}{n_0}=n_1(x) n^2_\mathrm{t}(x).
\end{equation}
In the spherically symmetric elastic setting, in addition to $\delta(x)$ we define the variable
\begin{equation}
    \eta(x)\equiv n^{3}_\mathrm{t}(x),
\end{equation}
so that
\begin{equation}
    \rho(x)=\widehat{\rho}\left(\delta(x),\eta(x)\right).
\end{equation}
\subsubsection{Comoving coordinates}
In comoving coordinates, the principal linear densities are given by
\begin{equation}
n_1(\tau,R) = e^{\Psi_0(R)-\Psi(\tau,R)}, \qquad 
n_\mathrm{t}(\tau,R) =\frac{R}{r(\tau,R)} .
\end{equation}
Therefore,
\begin{equation}\label{Comdefdelta}
\delta(\tau,R)=
e^{\Psi_0(R)-\Psi(\tau,R)}\left(\frac{R}{r(\tau,R)}\right)^2
\end{equation}
and
\begin{equation}\label{Comdefeta}
\eta(\tau,R)=
\left(\frac{R}{r(\tau,R)}\right)^3 .
\end{equation}
To obtain an explicit form for the function $\eta(\tau,R)$, one integrates
\begin{equation}
\partial_R(r^3\eta)=3e^{\Psi-\Psi_0} \delta r^2,
\end{equation}
which follows directly from the definitions of $\eta$ and $\delta$. This leads to 
\begin{equation}
\eta(\tau,R)=\frac{3}{r^3}\int^{R}_ {0} e^{\Psi(\tau,S)-\Psi_0(S)} \delta(\tau,S) r^2(\tau,S) dS.
\end{equation}
Differentiating~\eqref{Comdefeta} and using~\eqref{DefvCom}, as well as~\eqref{Comdefdelta}, we find that
\begin{equation}
\partial_\tau \eta=-\frac{3}{r}\eta e^{\Phi} v, \qquad \partial_R\eta = -\frac{3}{r}\left(\eta-\frac{e^{\Psi-\Psi_0}\delta}{\partial_R r}\right)\partial_R r.
\end{equation}
\subsubsection{Schwarzschild coordinates}
In Schwarzschild coordinates, the principal linear densities are given by
\begin{equation}
n_1(t,r) = \frac{e^{\Psi_0(R(t,r))-\psi(t,r)}}{\langle v\rangle(t,r)}\partial_r R(t,r) , \qquad 
n_\mathrm{t}(t,r) =\frac{R(t,r)}{r} .
\end{equation}
Therefore,
\begin{equation}\label{defdelta}
\delta(t,r)=\frac{n(t,r)}{n_0}=
\frac{e^{\Psi_0(R(t,r))-\psi(t,r)}}{\langle v\rangle(t,r) }\left(\frac{R(t,r)}{r}\right)^2 \partial_r R(t,r)
\end{equation}
and
\begin{equation}\label{defeta}
\eta(t,r)=
\left(\frac{R(t,r)}{r}\right)^3 .
\end{equation}
To obtain an explicit form for the function $\eta(t,r)$ in terms of the variables $(\phi,\psi,v,n)$, one needs to integrate the equation
\begin{equation}
e^{\Psi_0(R)}\partial_r (r^3\eta)=3e^{\Psi_0(R)}R^2 \partial_r R = 3 e^{\psi}\langle v\rangle \delta r^2,
\end{equation}
which follows directly from the definitions of $\eta$ and $\delta$. For ball solutions, and under the assumption that $R(t,0)=0$, i.e. that a ball in physical space must originate from a ball in material space, we integrate from $r=0$:
\begin{equation}
P\left(r\eta^\frac13(t,r)\right)~\equiv P(R(t,r))=3\int^{r}_0 e^{\psi(t,s)}\langle v\rangle(t,s) \delta(t,s) s^2 ds ,
\end{equation}
where
\begin{equation}
P(u)=\int^{u}_0 3e^{\Psi_0(s)} s^2 ds .
\end{equation}
Using~\eqref{vand<v>} and~\eqref{defdelta}, we find that
\begin{equation}
    \partial_t P =-3e^{\phi+\psi}v\delta r^2, \qquad \partial_r P= 3 e^{\psi}\langle v\rangle \delta r^2,
\end{equation}
that is,
\begin{equation}
\partial_t \eta=-\frac{3}{r} e^{\phi+\psi-\Psi_0}v\delta, \qquad \partial_r\eta = -\frac{3}{r}\left(\eta-e^{\psi-\Psi_0}\langle v\rangle \delta\right),
\end{equation}
where it should keep in mind that $\Psi_0=\Psi_0(r\eta^{1/3})$. For a flat material metric, $\Psi_0\equiv 0$, we have $P(u)=u^3$, and so we obtain the explicit formula
\begin{equation}
\eta(t,r)=\frac{3}{r^3}\int^{r}_0 e^{\psi(t,s)}\langle v\rangle(t,s) \delta(t,s) s^2 ds .
\end{equation}
%
%
%
%
%
%
%
%
%
\subsection{Derivatives of the energy density and the principal pressures}
%
%
A choice of $\widehat\rho=\widehat\rho(\delta,\eta)$ does not determine $\rho(n_1,n_2,n_3)$, but only
\begin{equation}
\rho(n_1,n_2,n_2) = \widehat\rho(n_1n_2^2, n_2^3).
\end{equation}
From this formula, and the symmetry of $\rho(n_1,n_2,n_3)$ with respect to $n_2$ and $n_3$, we easily deduce, by taking derivatives with respect to $n_1$ and $n_2$, the following relations:
\begin{subequations}
\begin{align}
n_1\frac{\partial \rho}{\partial n_1} &= \delta \partial_\delta \widehat\rho, \label{firstorder1}\\
n_2\frac{\partial \rho}{\partial n_2} &= \delta \partial_\delta \widehat\rho + \frac32 \eta \partial_\eta \widehat\rho , \label{firstorder2}\\
n_1^2\frac{\partial^2 \rho}{\partial n_1^2} &= \delta^2 \partial^2_\delta \widehat\rho,  \label{2ndn1}\\
n_2^2\frac{\partial^2 \rho}{\partial n_2^2} + n_2^2\frac{\partial^2 \rho}{\partial n_2 \partial n_3} &= 2\delta^2 \partial^2_\delta \widehat\rho + 6 \delta \eta \partial_\delta \partial_\eta \widehat\rho + \frac92 \eta^2 \partial^2_\eta \widehat\rho + \delta \partial_\delta \widehat\rho + 3 \eta \partial_\eta \widehat\rho.\label{2ndn2}
\end{align}
\end{subequations}
However, it is not possible to obtain the second order partial derivatives in~\eqref{2ndn2} separately. From~\eqref{pprho} and~\eqref{firstorder1}-\eqref{firstorder2}, the principal pressures,
\begin{equation}
        \widehat{p}_\mathrm{rad}(\delta,\eta)=p_1(n_1,n_2,n_2),\qquad \widehat{p}_\mathrm{tan}(\delta,\eta)=p_2(n_1,n_2,n_2)=p_3(n_1,n_2,n_2),
\end{equation}
are written respectively as
\begin{subequations}
	\begin{align}
	\widehat{p}_\mathrm{rad}(\delta,\eta) &=\delta\partial_\delta\widehat{\rho}(\delta,\eta)-\widehat{\rho}(\delta,\eta), \label{prad} \\
	\widehat{p}_\mathrm{tan}(\delta,\eta) &=\widehat{p}_\mathrm{rad}(\delta,\eta)+\frac{3}{2}\partial_\eta\widehat{\rho}(\delta,\eta) \label{ptan}.
	\end{align}
\end{subequations}
From~\eqref{prad} and~\eqref{ptan} we obtain
\begin{subequations}
\begin{align}
\delta \partial_\delta \widehat{p}_{\rm rad} &= \delta^2 \partial^2_\delta \widehat\rho, \\
\delta \partial_\delta \widehat{p}_{\rm tan} &= \delta^2 \partial^2_\delta \widehat\rho + \frac32 \delta \eta \partial_\delta \partial_\eta \widehat\rho, \\
\eta \partial_\eta \widehat{p}_{\rm rad} &= \delta \eta \partial_\delta \partial_\eta \widehat\rho - \eta \partial_\eta \widehat\rho,  \\
\eta \partial_\eta \widehat{p}_{\rm tan} &= \delta \eta \partial_\delta \partial_\eta \widehat\rho + \frac32 \eta^2 \partial^2_\eta \widehat\rho + \frac12 \eta \partial_\eta \widehat\rho.
\end{align}
\end{subequations}
Differentiating the identities
\begin{equation}
p_1(n_1,n_2,n_2) = \widehat{p}_\mathrm{rad}(n_1n_2^2,n_2^3),\qquad p_2(n_1,n_2,n_2)=\widehat{p}_\mathrm{tan}(n_1n_2^2,n_2^3),
\end{equation}
we obtain, in the isotropic state $n_1=n_2=n_\mathrm{c}$,
\begin{subequations}
\begin{align}
\delta\partial_\delta \widehat{p}_{\rm rad}(\delta,\delta) &= n_\mathrm{c} \frac{\partial p_1}{\partial n_1}(n_\mathrm{c}, n_\mathrm{c}, n_\mathrm{c}), \label{dpiso1}\\
2\delta\partial_\delta \widehat{p}_{\rm rad}(\delta,\delta) + 3\delta\partial_\eta \widehat{p}_{\rm rad}(\delta,\delta) &= 2n_\mathrm{c} \frac{\partial p_1}{\partial n_2}(n_\mathrm{c}, n_\mathrm{c}, n_\mathrm{c}), \\
\delta\partial_\delta \widehat{p}_{\rm tan}(\delta,\delta) &= n_\mathrm{c} \frac{\partial p_1}{\partial n_2}(n_\mathrm{c}, n_\mathrm{c}, n_\mathrm{c}), \\
2\delta\partial_\delta \widehat{p}_{\rm tan}(\delta,\delta) + 3\delta\partial_\eta \widehat{p}_{\rm tan}(\delta,\delta) &= n_\mathrm{c} \frac{\partial p_1}{\partial n_1}(n_\mathrm{c}, n_\mathrm{c}, n_\mathrm{c}) + n_\mathrm{c} \frac{\partial p_1}{\partial n_2}(n_\mathrm{c}, n_\mathrm{c}, n_\mathrm{c}),\label{dpiso4}
\end{align}
\end{subequations}
where we used~\eqref{sym_isotropic_state}. In particular, we have the identities
\begin{subequations}
\begin{align}
& 2\partial_\delta \widehat{p}_{\rm rad}(\delta,\delta) + 3\partial_\eta \widehat{p}_{\rm rad}(\delta,\delta) = 2\partial_\delta \widehat{p}_{\rm tan}(\delta,\delta), \\
& \partial_\delta \widehat{p}_{\rm tan}(\delta,\delta) + 3\partial_\eta \widehat{p}_{\rm tan}(\delta,\delta) = \partial_\delta \widehat{p}_{\rm rad}(\delta,\delta).
\end{align}
\end{subequations}
If we particularize~\eqref{dpiso1}-\eqref{dpiso4} to the reference state $n_\mathrm{c}=1$, we easily obtain from~\eqref{PPLames} the relations
\begin{subequations}
\begin{align}
\partial_\delta \widehat{p}_{\rm rad}(1,1) = \lambda + 2\mu, \\
\partial_\eta \widehat{p}_{\rm rad}(1,1) = - \frac43\mu, \\
\partial_\delta \widehat{p}_{\rm tan}(1,1) = \lambda, \\
\partial_\eta \widehat{p}_{\rm tan}(1,1) = \frac23\mu.
\end{align}
\end{subequations}
\subsection{The wave speeds}\label{WSetadelta}
Using~\eqref{2ndn1}, we obtain from~\eqref{longdeg} that the longitudinal wave speed in the radial direction is given by
\begin{equation} \label{cL}
c_\mathrm{L}^2(\delta,\eta) =  \frac{\delta \partial_\delta \widehat{p}_{\rm rad}(\delta,\eta)}{\widehat\rho (\delta,\eta)+ \widehat{p}_{\rm rad}(\delta,\eta)}=\frac{\delta^2 \partial^2_\delta \widehat\rho(\delta,\eta)}{\widehat\rho(\delta,\eta) + \widehat{p}_{\rm rad}(\delta,\eta)} .
\end{equation}
The speed of transverse waves in the radial direction, given by equation~\eqref{cT12}, reads
\begin{equation}
c_\mathrm{T}^2(\delta,\eta) = \frac{\widehat{p}_{\rm tan}(\delta,\eta) - \widehat{p}_{\rm rad}(\delta,\eta)}{(1 - \delta^2/\eta^2)(\widehat\rho(\delta,\eta) + \widehat{p}_{\rm tan}(\delta,\eta))},
\end{equation}
while the speed of transverse waves in the tangential direction oscillating in the radial direction, obtained from~\eqref{cT21}, is given by
\begin{equation}
\tilde{c}_\mathrm{T}^2(\delta,\eta) = \frac{\widehat{p}_{\rm rad}(\delta,\eta) - \widehat{p}_{\rm tan}(\delta,\eta)}{(1 - \eta^2/\delta^2)(\widehat\rho(\delta,\eta) + \widehat{p}_{\rm rad}(\delta,\eta))}.
\end{equation}
We cannot obtain the speed of longitudinal waves in the tangential direction, given by
\begin{equation} \label{cannot1}
(\widehat\rho(\delta,\eta) + \widehat{p}_{\rm tan}(\delta,\eta))\tilde{c}_\mathrm{L}^2(\delta,\eta) = n_2^2\frac{\partial^2 \rho}{\partial n_2^2},
\end{equation}
because of the degeneracy in the second order partial derivatives. The speed of transverse waves in the tangential direction oscillating in the tangential direction, given in equation~\eqref{cT23}, can be written as
\begin{equation}\label{cannot2}
\begin{split}
(\widehat\rho(\delta,\eta) + \widehat{p}_{\rm tan}(\delta,\eta))\tilde{c}_{\mathrm{TT}}^2 &=  - \frac12 n_2 \frac{\partial p_2}{\partial n_3} + \frac12 n_2 \frac{\partial p_2}{\partial n_2}  \\
& =  \frac12 n_2^2 \frac{\partial^2 \rho}{\partial n_2^2} - \frac12 n_2^2 \frac{\partial^2 \rho}{\partial n_2\partial n_3} + \frac12 n_2 \frac{\partial \rho}{\partial n_2}.
\end{split}
\end{equation}
Combining Equations~\eqref{2ndn2},~\eqref{cannot1}, and~\eqref{cannot2}, we obtain the expression
\begin{equation}
\begin{split}
\tilde{c}^2_\mathrm{L}(\delta,\eta)-\tilde{c}^2_\mathrm{TT}(\delta,\eta) 
&=\frac{\frac{3}{4}\eta\partial_{\eta}\widehat{\rho}+\delta^2\partial^2_\delta \widehat{\rho}+3\delta\eta\partial^2_{\eta\delta}\widehat{\rho}+\frac{9}{4}\eta^2\partial^2_\eta\widehat{\rho}}{\widehat{\rho}(\delta,\eta)+\widehat{p}_\mathrm{tan}(\delta,\eta)}. 
\end{split}
\end{equation}
\subsection{Examples of elastic materials in spherical symmetry}\label{ApExSS}
\begin{examples}[Beig-Schmidt materials]
	In spherical symmetry, the Lagrangian (energy density) for the Beig-Schmidt materials of Example~\ref{BS} takes the form 
\begin{align}
\widehat{\rho}(\delta,\eta) &= \delta \Bigg[\rho_0  +\frac{3}{8}(3\lambda+2\mu)+\frac{1}{8}(\lambda+2\mu)\eta^{\frac{4}{3}}\left(1+4\left(\frac{\delta}{\eta}\right)^2+\left(\frac{\delta}{\eta}\right)^4\right) \nonumber \\
& \qquad \qquad -\frac{1}{4}(3\lambda+2\mu)\eta^{\frac{2}{3}}\left(2+\left(\frac{\delta}{\eta}\right)^2\right)-\frac{\mu}{2}\eta^{\frac{4}{3}}\left(2+\left(\frac{\delta}{\eta}\right)^2 \right)\Bigg]. \label{BSSS}
		\end{align}
These materials were analyzed by Kabobel \& Fraundiener in~\cite{FK07}.
\end{examples}
\begin{examples}[Spherically symmetric quasi-Hookean materials]\label{QHSS}
In spherical symmetry, the shear scalar for the quasi-Hookean materials of Example~\ref{QH} takes the form 
\begin{equation}
S\left(\frac{\delta}{\eta}\right)=\left(\frac{\delta}{\eta}\right)^{-2/3}\left(2+\left(\frac{\delta}{\eta}\right)^2\right)-3
\end{equation}
for the choice~\eqref{Tahshear}, while for the choice~\eqref{KSshear} it reduces to
\begin{equation}
S\left(\frac{\delta}{\eta}\right)=\frac{1}{6}\left(\frac{\delta}{\eta}\right)^{-2}\left(1-\left(\frac{\delta}{\eta}\right)^2\right)^2.
\end{equation}
The cases in which the unsheared quantities are given as in Remark~\ref{ShearPol} were analyzed by Andreasson \& Calogero in~\cite{Andreasson:2014lka}, and by Karlovini \& Samuelsson in~\cite{Karlovini:2002fc}.
\end{examples}
%
\begin{examples}[Materials with constant longitudinal wave speeds]\label{KarlSamSS}
In spherical symmetry, the Lagrangian (energy density) for materials with constant longitudinal wave speeds and a natural stress-free reference state (equation~\eqref{KSSF}) reduces to
\begin{equation}
\begin{split}
\widehat{\rho}^{(\mathrm{sf})}(\delta,\eta) =  & \,\,\frac{\lambda+2\mu}{\gamma}-\frac{4\mu}{\gamma^2}+C_{\mathrm{KS}}+ \left(\frac{\lambda+2\mu}{\gamma(\gamma-1)} - \frac{2\mu}{\gamma^2}-C_{\mathrm{KS}}\right) \eta^\gamma \left(\frac{\delta}{\eta}\right)^{\gamma}   \\ 
& + \left(\frac{2\mu}{\gamma^2}-C_{\mathrm{KS}}\right) \eta^{\gamma/3}\left(2+\left(\frac{\delta}{\eta}\right)^{\gamma}\right)+C_{\mathrm{KS}} \eta^{2\gamma/3}\left(1+2\left(\frac{\delta}{\eta}\right)^{\gamma}\right),
\end{split}
\end{equation}
%
%
%
while the Lagrangian (energy density) for the pre-stressed reference state material (equation~\eqref{KSPS}) is given in spherical symmetry by
\begin{equation}\label{PSCLSSS}
\begin{split}
\widehat{\rho}^{(\mathrm{ps})}(\delta,\eta) =  & \left(\frac{\lambda+2\mu}{\gamma(\gamma-1)} - \frac{2\mu}{\gamma^2}-C_{\mathrm{KS}}\right) \delta^\gamma    \\ 
& + \left(\frac{2\mu}{\gamma^2}-C_{\mathrm{KS}}\right) \eta^{\gamma/3}\left(2+\left(\frac{\delta}{\eta}\right)^{\gamma}\right)+C_{\mathrm{KS}} \eta^{2\gamma/3}\left(1+2\left(\frac{\delta}{\eta}\right)^{\gamma}\right).
\end{split}
\end{equation}
%
%
For both materials, the transverse wave speeds given in eqwuation~\eqref{CLws} reduce, in spherical symmetry, to~\eqref{cT12},~\eqref{cT21}, i.e.,
\begin{equation}\label{cT1}
c^{2}_\mathrm{T}(\delta,\eta)=\frac{\left[\left(\frac{2\mu}{\gamma^2}-C_\mathrm{KS}\right)\eta^{\gamma/3}+C_\mathrm{KS}\eta^{2\gamma/3}\right]\left(1-\left(\frac{\delta}{\eta}\right)^{\gamma}\right)}{\left[\left(\frac{\lambda+2\mu}{\gamma(\gamma-1)}-\frac{2\mu}{\gamma^2}-C_\mathrm{KS}\right)\delta^{\gamma}+\left(\frac{2\mu}{\gamma^2}-C_\mathrm{KS}\right)\eta^{\gamma/3}+C_\mathrm{KS}\eta^{2\gamma/3}\left(1+\left(\frac{\delta}{\eta}\right)^{\gamma}\right)\right]\left(1-\left(\frac{\delta}{\eta}\right)^{2}\right)},
\end{equation}
\begin{equation}\label{cT2}
\tilde{c}^{2}_\mathrm{T}(\delta,\eta)=\left(\frac{\delta}{\eta}\right)^{2-\gamma}\frac{\left[\left(\frac{2\mu}{\gamma^2}-C_\mathrm{KS}\right)\eta^{\gamma/3}+C_\mathrm{KS}\eta^{2\gamma/3}\right]\left(1-\left(\frac{\delta}{\eta}\right)^{\gamma}\right)}{\left[\left(\frac{\lambda+2\mu}{\gamma(\gamma-1)}-\frac{2\mu}{\gamma^2}-C_\mathrm{KS}\right)\eta^{\gamma}+\left(\frac{2\mu}{\gamma^2}-C_\mathrm{KS}\right)\eta^{\gamma/3}+2C_\mathrm{KS}\eta^{2\gamma/3}\right]\left(1-\left(\frac{\delta}{\eta}\right)^{2}\right)}.
\end{equation}
Using~\eqref{cT23} we obtain for the degenerate transverse wave speed
\begin{equation}\label{cT3}
\tilde{c}^2_\mathrm{TT}(\delta,\eta)=\frac{\gamma}{2}\frac{\left(\frac{2\mu}{\gamma^2}-C_\mathrm{KS}\right)\eta^{\gamma/3}+C_\mathrm{KS}\eta^{2\gamma/3}\left(\frac{\delta}{\eta}\right)^{\gamma}}{\left(\frac{\lambda+2\mu}{\gamma(\gamma-1)}-\frac{2\mu}{\gamma^2}-C_\mathrm{KS}\right)\delta^{\gamma}+\left(\frac{2\mu}{\gamma^2}-C_\mathrm{KS}\right)\eta^{\gamma/3}+C_\mathrm{KS}\eta^{2\gamma/3}\left(1+\left(\frac{\delta}{\eta}\right)^{\gamma}\right)}.
\end{equation}
\end{examples}
\subsection{Existence of spherically symmetric energy density functions}\label{existencenosym}
One may wonder if any choice of spherically symmetric energy density function $\widehat{\rho}(\delta,\eta)$ is possible, that is, if any such function  arises from a general energy density function $\rho(n_1,n_2,n_3)$ which is invariant under permutation of its variables. While $\delta$ can readily be replaced by $n_1n_2n_3$ in the expression for $\widehat{\rho}$, it is not obvious that a similar substitution is available for $\eta$. The problem is then whether one can find a function $\eta(n_1,n_2,n_3)$, invariant under permutation of its variables, such that $\eta(n_1,n_2,n_2)=n_2^3$.

To show that such a function exists, we consider the triangles
\begin{equation}
\Delta_c = \{ (n_1,n_2,n_3) \in \R^+\times\R^+\times\R^+ : n_1+n_2+n_3=c\}
\end{equation}
for $c>0$. The action of the permutation group $S_3$ on these triangles is precisely the action of the dihedral group $D_3$; the function $\eta$ is known in the three heights of the triangle (where $n_2=n_3$, or $n_1=n_2$, or $n_1=n_3$), and it is invariant under this action. To construct $\eta$, we just have to define it on one of the six regions determined by the heights of the triangle, and then extend it by symmetry. This can be done by fixing a diffeomorphism $\Phi:\Delta_1\to\R^2$, equivariant for the action of $D_3$ on both $\Delta_1$ and $\R^2$; for concreteness, let us assume that $\Phi$ maps the center of $\Delta_1$ to the origin, and one of the heights of $\Delta_1$ to the $x$-axis. The choice of $\Phi$ induces polar coordinates $(r,\theta) \in \R^+ \times (-\pi,\pi)$ on $\Delta_1$, and when written in these coordinates the functions $\eta(r,0)\equiv f(r)$ and $\eta(r,\frac\pi3)\equiv g(r)$ are known. Choosing a smooth function $\alpha:\R \to \R$ such that $\alpha(t)\equiv 1$ for $t<\frac\pi9$ and $\alpha(t)\equiv 0$ for $t>\frac{2\pi}9$, we can define
\begin{equation}
\eta(r,\theta)=f(r)\alpha(\theta)+g(r)(1-\alpha(\theta))
\end{equation}
for $\theta\in[0,\frac\pi3]$. Extending this function to $\Delta_1$ by the action of the dihedral group $D_3$ results in a smooth function $\eta:\Delta_1\to\R$, invariant under permutation of its variables, and satisfying $\eta(n_1,n_2,n_2)=n_2^3$. Finally, the same procedure can be applied to all triangles $\Delta_c$ by using the diffeomorphism $\Phi_c:\Delta_c\to\R^2$ given by $\Phi_c=\Phi \circ h_c$, where $h_c:\Delta_c\to\Delta$ is the dilation by $c^{-1}$ with respect to the center of the triangle. The function $\eta:\R^+\times\R^+\times\R^+$ thus obtained is clearly smooth, invariant under permutation of its variables, and satisfying $\eta(n_1,n_2,n_2)=n_2^3$.

It is possible to show that this construction can be modified so that the second partial derivative 
\begin{equation}
\frac{\partial^2 \rho}{\partial n_2^2}(n_1,n_2,n_2)
\end{equation}
can be freely chosen, corresponding to the freedom in choosing the unspecified wave velocities for spherically symmetric materials. However, we will not present the (considerably longer) proof here.

\newpage

\bibliographystyle{utphys}
\bibliography{biblio}

\end{document}